\definecolor{webbrown}{rgb}{0.65, 0.16, 0.16}
\numberwithin{equation}{section}
\def\l@subsection{\@tocline{2}{0pt}{2.5pc}{5pc}{}}
\def\csname ver@etex.sty\endcsname{3000/12/31}
\crefname{lemma}{lemma}{lemmata}
\Crefname{lemma}{Lemma}{Lemmata}
\crefname{subsection}{subsection}{subsections}
\Crefname{subsection}{Subsection}{Subsections}
\crefname{conjecture}{conjecture}{conjectures}
\Crefname{conjecture}{Conjecture}{Conjectures}
\crefname{question}{question}{questions}
\Crefname{question}{Question}{Questions}
\crefname{warning}{warning}{warnings}
\Crefname{warning}{Warning}{Warnings}
\newcommand{\de}{\partial}
\newcommand{\N}{\mathbb{N}}
\newcommand{\Z}{\mathbb{Z}}
\newcommand{\Q}{\mathbb{Q}}
\newcommand{\R}{\mathbb{R}}
\newcommand{\C}{\mathbb{C}}
\renewcommand{\P}{\mathbb{P}}
\newcommand{\ch}{\mathrm{ch}}
\newcommand{\bbraket}[1]{\llbracket #1 \rrbracket}
\newcommand{\mf}[1]{\mathfrak{#1}}
\newcommand{\mc}[1]{\mathcal{#1}}
\newcommand{\into}{\hookrightarrow}
\renewcommand{\theta}{\vartheta}
\newcommand*{\iso}{
   \mathrel{\vcenter{\offinterlineskip
   \hbox{\(\hspace{1pt} \sim\)}\vskip-.35ex\hbox{\( \rightarrow \)}}}}
\newcommand{\<}{\langle}
\renewcommand{\>}{\rangle}
\renewcommand{\epsilon}{\varepsilon}
\renewcommand{\phi}{\varphi}
\newcommand{\plh} {\mathord{\cdot}}
\newcommand{\del}{\partial}
\newcommand{\Id}{{\mathord{Id}}}
\newcommand{\normord}[1]{{\vcentcolon}\!\mathrel{#1}\!{\vcentcolon}}
\newcommand{\M}{\overline{\mc{M}}}
\DeclareSymbolFont{greekletters}{OML}{cmr}{m}{it}
\DeclareMathSymbol{\varsigma}{\mathalpha}{greekletters}{"26}
\newcommand{\corr}[1]{\bigg{\langle} \, #1 \,  \bigg{\rangle}}
\DeclareMathOperator{\ad}{ad}
\DeclareMathOperator\SP{\mc{SP}}
\DeclareMathOperator\OP{\mc{OP}}
\newcommand{\Aut}{\mathord{\mathrm{Aut}}}
\DeclareMathOperator{\End}{End}
\DeclareMathOperator{\Spec}{Spec}
\DeclareMathOperator{\Pic}{Pic}
\DeclareMathOperator{\Pf}{Pf}
\DeclareMathOperator*{\Res}{Res}
\DeclareMathOperator{\Ch}{Ch}
\DeclareMathOperator{\Cl}{\mc{C}\ell}
\DeclarePairedDelimiter{\ceil}{\lceil}{\rceil}
\newcommand{\iu}{\mathrm{i}}
\newcommand{\CP}{\mathord{\mathrm{CP}}}
\newcommand{\WC}{\mathord{\mathrm{WC}}}
\theoremstyle{plain}
\newtheorem{theorem}{Theorem}[section]
\newtheorem{proposition}[theorem]{Proposition}
\newtheorem{lemma}[theorem]{Lemma}
\newtheorem{corollary}[theorem]{Corollary}
\newtheorem{conjecture}[theorem]{Conjecture}
\newtheorem*{question*}{\textcolor{BrickRed}{Question}}
\theoremstyle{definition}
\newtheorem{definition}[theorem]{Definition}
\theoremstyle{remark}
\newtheorem{remark}[theorem]{Remark}
\newtheorem{example}[theorem]{Example}
\newcommand{\bvs}[8]{\varsigma ' \left(\begin{smallmatrix} #1 & #2 & #3 & #4 \\ #5 & #6 & #7 & #8 \end{smallmatrix}\right)}
\DeclareFontFamily{U}{cbgreek}{}
\DeclareFontShape{U}{cbgreek}{m}{n}{
        <-6>    grmn0500
        <6-7>   grmn0600
        <7-8>   grmn0700
        <8-9>   grmn0800
        <9-10>  grmn0900
        <10-12> grmn1000
        <12-17> grmn1200
        <17->   grmn1728
      }{}
\DeclareFontShape{U}{cbgreek}{bx}{n}{
        <-6>    grxn0500
        <6-7>   grxn0600
        <7-8>   grxn0700
        <8-9>   grxn0800
        <9-10>  grxn0900
        <10-12> grxn1000
        <12-17> grxn1200
        <17->   grxn1728
      }{}
\DeclareRobustCommand{\qoppa}{%
  \text{\usefont{U}{cbgreek}{\normalorbold}{n}\symbol{19}}%
}
\newcommand{\normalorbold}{%
  \ifnum\pdf@strcmp{\math@version}{bold}=\z@ bx\else m\fi
}
\title{
	A new spin on Hurwitz theory and ELSV via theta characteristics
}
\author[A.~Giacchetto]{Alessandro Giacchetto}
\address[A.G.]{Max-Planck-Institut f\"ur Mathematik, Vivatsgasse 7, 53111 Bonn, Germany}
\curraddr{Departement Mathematik, ETH Zürich, Rämisstrasse 101, Zürich 8044, Switzerland}
\email{alessandro.giacchetto@math.ethz.ch}
\author[R.~Kramer]{Reinier Kramer}
\address[R.K.]{Max-Planck-Institut f\"ur Mathematik, Vivatsgasse 7, 53111 Bonn, Germany}
\curraddr{Department of Mathematical {\&} Statistical Sciences, University of Alberta, 632 CAB, Edmonton, Alberta, Canada, T6G 2G1}
\email{reinier@ualberta.ca}
\author[D.~Lewański]{Danilo Lewański}
\address[D.L.]{%
Inst. Hautes Études Scientifiques (IHES), 35 Route de Chartres, 91440 Bures-sur-Yvette, Paris, France \& Inst. de Physique Théorique (IPhT), Commissariat à l'énergie atomique, Orme des Merisiers Bât. 774, 91191 Gif-sur-Yvette, Paris, France}
\curraddr{Dipartimento di Matematica, Informatica e Geoscienze, Università degli Studi di Trieste, Via Weiss 2, 34128 Trieste, Italy}
\email{danilo.lewanski@units.it}
\subjclass[2010]{14N10, 14H10, 05A15, 14N35, 81R10, 05E10, 20C35}
\let\runauthor\@author
\let\runtitle\@title
\begin{document}

\begin{abstract}
	We study spin Hurwitz numbers, which count ramified covers of the Riemann sphere with a sign coming from a theta characteristic. These numbers are known to be related to Gromov--Witten theory of Kähler surfaces and to representation theory of the Sergeev group, and are generated by BKP tau-functions. We use the latter interpretation to give polynomiality properties of these numbers and we derive a spectral curve which we conjecture computes spin Hurwitz numbers via a new type of topological recursion. We prove that this conjectural topological recursion is equivalent to an ELSV-type formula, expressing spin Hurwitz numbers in terms of the Chiodo class twisted by the $2$-spin Witten class.
\end{abstract}

\maketitle
\vspace{-.5cm}
\tableofcontents

\section{Introduction}

Over the last decades, a deep interaction between the mathematical physics of integrable systems, the algebraic geometry of moduli spaces of curves, and models arising from theoretical physics has been established. Hurwitz theory provides a rich and insightful variety of examples of this interplay. 

\smallskip

Hurwitz numbers, the counts of covers of a Riemann surface which satisfy given conditions on their ramifications, were defined by Hurwitz~\cite{Hur91,Hur01} and have been studied since then. Through the monodromy representation, they can be seen to count decompositions of the identity in the symmetric group (the Frobenius presentation~\cite{Fro00}) and hence can be given in terms of Schur functions~\cite{Sch01}. In more recent years, Hurwitz numbers have again become an object of interest, due to strong ties with the integrable hierarchies of the Kyoto school (see e.g.~\cite{JM83,MJD00,Oko00}), the intersection theory of the moduli spaces of curves via Ekedahl--Lando--Shapiro--Vainshtein type formulae~\cite{ELSV01}, and topological recursion~\cite{EO07,BM08,BEMS11}.

\smallskip

These results have been generalised in many directions, but the type of Hurwitz numbers considered is essentially the following: the covers counted have target $\P^1$, and specified variable ramifications over one or two points, while all other ramification profiles are described by some uniform condition.

\smallskip

In this paper, we consider a type of Hurwitz numbers called \emph{spin Hurwitz numbers}, introduced by Eskin--Okounkov--Pandharipande~\cite{EOP08}. The defining feature of these numbers is the presence of a spin structure (or theta characteristic) on the source, and the count is weighted by the parity of this theta characteristic. We denote any object related to spin Hurwitz numbers with a superscript $\theta$, to emphasise the role of the theta characteristic.

\smallskip

We clarify that the term `spin Hurwitz numbers' is also used (in e.g.~\cite{MSS13,SSZ15,BKLPS21,KLPS19,DKPS23}) for what we call `Hurwitz numbers with completed cycles' (in fact we consider more properly `spin Hurwitz numbers with completed cycles'). We hope this does not lead to confusion.

\subsubsection*{Results concerning Hurwitz theory}

One of the purposes of this paper is to analyse various features of spin Hurwitz numbers with completed cycles, in analogy with their non-spin counterparts. We summarise here their similarities and their differences.

\smallskip

The double Hurwitz numbers with $(r+1)$-completed cycles are defined for a fixed genus $g$ and partitions $\mu, \nu \vdash d$ as 
\begin{equation}
	h_{g;\mu,\nu}^r = \sum_{[f]} \frac{1}{|\Aut(f)|},
\end{equation}
where the finite summation is taken over isomorphism classes of degree $d$ branched covers $f \colon C \to \P^1$ from a Riemann surface of genus $g$ to the Riemann sphere with ramification profile $\mu$ over zero, $\nu$ over infinity, and all other ramifications are of $(r+1)$-completed cycle type, in number $b = (2g - 2 + \ell(\mu) + \ell(\nu))/r$. The completed cycles can be thought as ramifications of order $r+1$, plus a particular linear combination of lower order ramifications that take into account degenerations of coverings (for a detailed definition, see \cite{OP06,SSZ12}). For example, the case $r=1$ recovers simple ramifications, i.e. represented by a transposition of the cover sheets.

\smallskip

Spin Hurwitz numbers, on the other hand, are defined for a fixed genus $g$ and odd partitions $\mu$, $\nu$ as 
\begin{equation}
	h_{g;\mu,\nu}^{r,\theta}
	=
	\sum_{[f]} \frac{(-1)^{p(f)}}{|\Aut(f)|} ,
\end{equation}
with a similar $f$ summation, although each contribution gets a sign from the parity of a twisted pullback of $ \mc{O}(-1)$ along $f$ (for a precise definition, see \cref{defn:general:spin:HNs,defn:spin:HNs}), and the completion of the $(r+1)$-cycles is defined slightly differently (here, $r$ is required to be even, so that $r+1$ is also odd).

\smallskip

Ordinary double Hurwitz numbers with completed cycles $h_{g;\mu,\nu}^r$ have a formulation as vacuum expectations of bosonic operators formed from charged fermions and acting on the Fock space \cite{Oko00}:
\begin{equation}
	h_{g;\mu,\nu}^r
	=
	\frac{1}{b!} \left\langle{ \,
		\prod_{i=1}^{\ell(\mu)} \frac{\alpha_{\mu_i}}{\mu_i}
		\prod_{k=1}^b \frac{\mc{F}_{r+1}}{r+1}
		\prod_{j=1}^{\ell(\nu)} \frac{\alpha_{-\nu_j}}{\nu_j}
	}\right\rangle,
\end{equation}
which immediately implies that their partition function is a tau-function of the KP hierarchy. Both $\alpha_m$ and $\mc{F}_{r+1}$ are specialisations of some operators $\mc{E}_m(z)$, defined by Okounkov and Pandharipande \cite{OP06}, whose commutation relations are given by
\begin{equation}
	\bigl[ \mc{E}_m(z), \mc{E}_n(w) \bigr] = 2\sinh\bigl( \tfrac{mw - nz}{2} \bigr) \mc{E}_{m+n}(z + w).
\end{equation}
The fact that the algebra of coefficients of the operators $\mc{E}_m(z)$ is closed under commutations is quite a non-trivial fact and it allows explicit computations of the vacuum expectations, showing several peculiar properties of these numbers such as their quasi-polynomiality, their chamber polynomiality, and the computation of their wall-crossing formulae.

\smallskip

Spin Hurwitz numbers $h_{g;\mu,\nu}^{r,\theta}$ are also known to have a vacuum expectation formulation (see e.g. \cite{Lee19})
\begin{equation}
	h_{g;\mu,\nu}^{r,\theta}
	=
  \frac{2^{1-g}}{b!}
	 \left\langle{ \,
		\prod_{i=1}^{\ell(\mu)} \frac{\alpha_{\mu_i}^{B}}{\mu_i}
		\prod_{k=1}^b \frac{\mc{F}_{r+1}^{B}}{r+1}
		\prod_{j=1}^{\ell(\nu)} \frac{\alpha_{-\nu_j}^{B}}{\nu_j}
	}\right\rangle \, ,
\end{equation}
where the bosonic operators are this time given in terms of uncharged fermions and, as a result, the spin Hurwitz numbers partition function is a tau-function of the BKP integrable hierarchy. Analogously to the non-spin case, we define an algebra of operators $\mc{E}_m^{B}(z)$ such that
\begin{equation}
  \mc{F}_{r+1}^{\textup{B}} = (r+1)![z^{r+1}]. \hat{\mc{E}}_0^{\textup{B}}(z),
	\qquad
	\alpha^{B}_n = \mc{E}^{B}_n(0),
\end{equation}
and, most importantly, such that it is closed under commutation:
\begin{equation}
	\bigl[ \mc{E}^{B}_m(z),\mc{E}^{B}_n(w) \bigr]
	=
	\sinh\bigl( \tfrac{mw - nz}{2} \bigr) \mc{E}^{B}_{m+n}(z + w)
	+ (-1)^n
	\sinh\bigl( \tfrac{mw + nz}{2} \bigr) \mc{E}^{B}_{m+n}(z - w).
\end{equation}

 This allows us to unveil similar polynomiality properties.

\begin{theorem}[\Cref{cor:quasi-poly,thm:piecewise,thm:wall:crossing}]
	Single spin Hurwitz numbers are quasi-po\-ly\-no\-mi\-al; spin double Hurwitz numbers are piecewise polynomials with explicit wall-crossing formulae.
\end{theorem}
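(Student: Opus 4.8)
The plan is to mirror the strategy established in the non-spin case by Okounkov--Pandharipande and Shadrin--Spitz--Zvonkine, replacing the KP operators by their BKP analogues $\mc{E}^{B}_m(z)$. The two essential ingredients, both supplied above, are the vacuum expectation formula for $h^{r,\theta}_{g;\mu,\nu}$ and the closedness of the algebra generated by the coefficients of the $\mc{E}^{B}_m(z)$ under the commutation relation. In this way the computation of a spin Hurwitz number is reduced to the evaluation of a single vacuum expectation of a product of such operators.

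First I would expand the completed-cycle operators $\mc{F}^{B}_{r+1}$ as coefficients of $\hat{\mc{E}}^{B}_0(z)$, so that the vacuum expectation becomes a product of operators $\mc{E}^{B}_{m}(z)$ with the creation operators $\alpha^{B}_{\mu_i} = \mc{E}^{B}_{\mu_i}(0)$ on the left and the annihilation operators $\alpha^{B}_{-\nu_j} = \mc{E}^{B}_{-\nu_j}(0)$ on the right. I would then evaluate this expectation by normal ordering: repeatedly applying the commutation relation to move operators past one another until the annihilation operators reach the vacuum and kill it. Because each bracket produces only operators $\mc{E}^{B}_{m+n}$ of the same type, with scalar coefficients built from $\sinh$, the procedure terminates and yields a finite sum of explicit $\sinh$-products.

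For quasi-polynomiality of the single numbers, I would track the dependence on the parts $\mu_i$: expanding each factor $\sinh\bigl(\tfrac{\cdot}{2}\bigr)$ as a power series in $z,w$ produces coefficients that are polynomials in the $\mu_i$, and summing over the terms generated in the normal-ordering gives a quasi-polynomial, where the ``quasi'' reflects a dependence on the $\mu_i$ modulo $2$ coming from the sign $(-1)^n$ and the shifted argument $z-w$ of the second term in the bracket. For the double numbers the same normal-ordering is carried out with both $\mu$ and $\nu$ present; the commutator of a creation past an annihilation operator becomes singular precisely when a partial sum $\sum_{i \in S}\mu_i - \sum_{j \in T}\nu_j$ vanishes, and these hyperplanes cut the parameter space into chambers on which the number is genuinely polynomial. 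The wall-crossing formula is then obtained by isolating the contribution that changes across a given wall, i.e. by a residue computation in the corresponding resonance variable.

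The main obstacle, and the source of all the genuinely new work compared to the KP case, is the second summand $(-1)^n \sinh\bigl(\tfrac{mw+nz}{2}\bigr)\,\mc{E}^{B}_{m+n}(z-w)$ of the commutation relation, which has no counterpart for ordinary Hurwitz numbers. This term both roughly doubles the bookkeeping in the normal-ordering and is responsible for the appearance of parity, hence for quasi-polynomiality rather than honest polynomiality; I expect the bulk of the effort to lie in controlling its contributions uniformly and in combining them with those of the first term to extract a clean closed-form wall-crossing formula.
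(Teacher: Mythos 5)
Your treatment of the double numbers is essentially the paper's own approach: \cref{thm:piecewise,thm:wall:crossing} are proved by writing $h^{r,\theta}_{g;\mu,\nu}$ as a vacuum expectation of the operators $\mc{E}^{B}_m(z)$ (\cref{prop:spin:double:HNs:VEV}), running the commutation-pattern algorithm of Shadrin--Spitz--Zvonkine and Johnson, and checking that the only new features are the extra signed summand in each commutator and the replacement of $1/\varsigma(z)$ by $\qoppa(z)/\varsigma(z)=\tfrac{1}{4}\coth(z/2)$, neither of which spoils the polynomiality, degree, or parity arguments (\cref{thm:HNgenseries} and the proof of \cref{thm:piecewise}); the wall-crossing formula is likewise an adaptation of \cite{SSZ12}. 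So that half of your plan is sound and matches the paper.

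The genuine gap is in the single numbers. First, your description of the statement is off: quasi-polynomiality here means that for fixed remainders $\braket{\mu_i}$ modulo $r$ (not modulo $2$ --- the $\mu_i$ are odd to begin with) one has $h^{r,\theta}_{g;\mu}=\bigl(\prod_{i}\mu_i^{[\mu_i]}/[\mu_i]!\bigr)\,P(\mu_1,\dots,\mu_n)$ with $P$ a polynomial of degree $3g-3+n$; the non-polynomial prefactor, which your sketch never produces, is the whole point, and it is not caused by the extra commutator term (the same quasi-polynomial shape already occurs in the non-spin case). Second, and more seriously, quasi-polynomiality does not follow from the chamber analysis of the doubles: a single number is a double number with $\nu=(1^{|\mu|})$, so $\ell(\nu)$ grows with $|\mu|$, and piecewise polynomiality (a statement for fixed $\ell(\mu)$, $\ell(\nu)$) says nothing about this regime. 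In the non-spin case this is exactly why quasi-polynomiality of completed-cycles single Hurwitz numbers required the heavy $\mc{A}$-operator machinery of \cite{SSZ15,DKPS19,BKLPS21}, not a normal-ordering computation. The paper does not attempt your route at all: \cref{cor:quasi-poly} is a \emph{conditional} statement, obtained as an immediate corollary of the ELSV-type formula \cref{thm:ELSV:CohFT}, which is equivalent to \cref{conj:spin:HNs:TR}. There, polynomiality in the $\mu_i$ comes from expanding $\prod_i(1-\tfrac{\mu_i}{r}\psi_i)^{-1}$ against a CohFT depending only on the remainders, and the prefactor comes out of the Eynard--DOSS dictionary. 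An unconditional operator-theoretic proof along your lines would require developing a BKP analogue of the $\mc{A}$-operators, which is substantial new work rather than bookkeeping.
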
	 

\subsubsection*{Results concerning topological recursion}

The method of topological recursion \cite{EO07} can be thought of as an algorithm which recursively generates solutions to enumerative problems from the initial data of a spectral curve (see \cref{sec:TR:CohFTs}). It produces a standard output of multidifferentials $\omega_{g,n}$ that encode the numbers of interest as expansion coefficients near a particular point on the curve.

\smallskip

For ordinary single Hurwitz numbers with completed cycles, the following result is known.

\begin{theorem}[\cite{SSZ15,DKPS23}]
	The spectral curve $(\Sigma,x,y,B)$ given by $\Sigma = \P^1$
	\begin{equation}
		x(z) = \log(z) - z^{r}, 
		\qquad\quad 
		y(z) = z \,,
		\qquad\quad
		B(z_1, z_2) 
		=
		\frac{dz_1 dz_2}{(z_1 - z_2)^2} 
	\end{equation}
	generates via topological recursion the numbers $h_{g;\mu}^{r}$ as
	\begin{equation}
		\omega_{g,n}(z_1,\dots,z_n)
		=
		\sum_{\mu_1,\dots,\mu_n} h_{g;\mu}^{r} \prod_{i=1}^n \mu_i e^{\mu_i x(z_i)} dx(z_i) \, .
	\end{equation}
\end{theorem}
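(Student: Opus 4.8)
The plan is to characterise topological recursion through the loop equations. By the equivalence of Borot--Eynard--Orantin (and Borot--Shadrin), a symmetric system of multidifferentials on $\P^1$ with poles only at the zeros of $dx$ is produced by topological recursion from a spectral curve $(\Sigma,x,y,B)$ precisely when it satisfies $\omega_{0,1}=y\,dx$, $\omega_{0,2}=B$, together with the linear and quadratic loop equations at each ramification point. It therefore suffices to (i) obtain workable expressions for the proposed correlators, (ii) match the unstable data to the curve $x=\log z-z^r$, $y=z$, and (iii) verify the two families of loop equations.

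First I would translate the numbers $h^r_{g;\mu}$ into generating series. The vacuum-expectation formulation writes the disconnected $n$-point function as an expectation of a product of the operators $\mc{E}_m(z)$, with the completed-cycle Hamiltonian $\mc{F}_{r+1}$ inserted $b=(2g-2+\ell(\mu))/r$ times. Iterating the commutation relation $[\mc{E}_m(z),\mc{E}_n(w)]=2\sinh\!\big(\tfrac{mw-nz}{2}\big)\,\mc{E}_{m+n}(z+w)$ collapses this expectation to an explicit combinatorial sum; this is the computational engine of the argument, the same mechanism responsible for quasi-polynomiality. Passing to connected correlators (cumulants) and setting $x_i=\log z_i-z_i^r$, i.e. $e^{x_i}=z_i\,e^{-z_i^r}$, produces the candidate $\omega_{g,n}$.

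Next I would read off the spectral curve from the genus-zero data. The change of variables $X=e^x=z\,e^{-z^r}$ is exactly the Lagrange-inversion relation governing the genus-zero one-point function, in which the $\log z$ contribution comes from the $\alpha_{\pm m}$ insertions and the $z^r$ from the completed-cycle Hamiltonian; this fixes $x(z)=\log z-z^r$ and $y(z)=z$ as the uniformisation. The genus-zero two-point function then reproduces the standard Bergman kernel $B=dz_1\,dz_2/(z_1-z_2)^2$. The zeros of $dx=\tfrac{1-rz^r}{z}\,dz$ occur at the $r$ simple points $z^r=1/r$, which are the ramification points where the loop equations are imposed.

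Finally I would verify the loop equations. The linear loop equation is the regularity of $\omega_{g,n}$ under the local involution exchanging the two sheets at each ramification point, and follows from the symmetry of the $\mc{E}_m$-expectations. The quadratic loop equation is the main obstacle: one must show that the symmetric quadratic differential built from the $\omega$'s is holomorphic at each ramification point, and this cancellation of principal parts is dictated precisely by the $\sinh$-factor in the commutation relation encoding the $(r+1)$-completed cycles. Organising the telescoping of these $\sinh$-contributions so that the poles cancel is where the real work lies. An equivalent route avoids the loop equations altogether: combine the ELSV-type formula expressing $h^r_{g;\mu}$ through the $r$-spin Chiodo class on $\M_{g,n}$ with the identification, due to Dunin-Barkowski--Orantin--Shadrin--Spitz, of topological recursion with the Givental formula of the corresponding cohomological field theory; there the hard step is instead recognising the Chiodo twist in the $R$-matrix read off from $(x,y)$.
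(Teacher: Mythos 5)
The paper does not actually prove this statement: it is quoted as a background theorem, with the proof living in the cited references \cite{SSZ15,DKPS19}. The paper's own work is on the spin analogue, where it proves only the \emph{equivalence} of topological recursion with an ELSV-type formula via the Eynard/DOSS correspondence (\cref{sec:spin:HNs:ELSV}, \cref{thm:ELSV:CohFT}) and defers the recursion itself to forthcoming work \cite{AS}. Measured against the actual literature proof, your strategy is the right one, and in fact your two routes are exactly its two halves: the loop-equation verification in the semi-infinite wedge/operator formalism is the approach of \cite{DKPS19}, and the ELSV-plus-Givental route is that of \cite{SSZ15} (mirrored, for the spin case, by the paper's \cref{sec:spin:HNs:ELSV}).

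As a proof, however, the proposal leaves unexecuted precisely the steps that constitute the content of those papers. First, the loop-equation characterisation you invoke only applies once one knows that the $\omega_{g,n}$ assembled from the Hurwitz generating series are \emph{global meromorphic} forms on $\P^1$ with poles only at the zeros of $dx$ and vanishing residues there (on a genus-zero curve this makes the projection property automatic, which you should state, since loop equations alone characterise only blobbed topological recursion). That rationality statement is exactly quasi-polynomiality, which is itself a hard theorem \cite{DKPS19,KLS19}; it cannot be absorbed into ``the same mechanism responsible for quasi-polynomiality.'' Second, the quadratic loop equation — which you correctly identify as the main obstacle — is where essentially all of \cite{DKPS19} is spent; saying the $\sinh$-telescoping ``is where the real work lies'' names the gap rather than closing it. The same caveat applies to your alternative route: recognising the Chiodo class in the $R$-matrix read off from $(x,y)$ is the content of \cite{SSZ15,LPSZ17}, not a corollary. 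So: correct architecture, faithful to the cited proofs, but the two pillars supporting it are asserted rather than built.
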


In this paper, we find a conjectural spectral curve for spin single Hurwitz numbers (\cref{conj:spin:HNs:TR}).

\begin{conjecture}\label{intro:conj:spin:HNs:TR}
	Let $r$ be a positive even integer. The spectral curve $(\Sigma,x,y,B)$ given by $\Sigma = \P^1$
	\begin{equation}
		x(z) = \log(z) - z^{r}, 
		\qquad\quad 
		y(z) = z \,,
		\qquad\quad
		B(z_1, z_2) 
		=
		\frac{1}{2} \left( \frac{1}{(z_1 - z_2)^2} + \frac{1}{(z_1 + z_2)^2} \right) dz_1 dz_2
	\end{equation}
	generates via topological recursion the numbers $h_{g;\mu}^{r,\theta}$ as
	\begin{equation}
		\omega_{g,n}(z_1,\dots,z_n)
		=
		\sum_{\substack{\mu_1,\dots,\mu_n \\ \textup{odd}}} h_{g;\mu}^{r,\theta} \prod_{i=1}^n \mu_i e^{\mu_i x(z_i)} dx(z_i) \, .
	\end{equation}
\end{conjecture}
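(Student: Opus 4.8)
The plan is to follow the strategy that established topological recursion for ordinary completed-cycle Hurwitz numbers (the theorem of \cite{SSZ15,DKPS19}), adapting each step to the BKP/spin setting. First I would use the vacuum-expectation formula for $h^{r,\theta}_{g;\mu}$ together with the commutation relation for the operators $\mc{E}^B_m(z)$ to produce closed-form expressions for the connected $n$-point generating functions of the numbers $h^{r,\theta}_{g;\mu}$. The key structural feature to exploit is that the BKP commutator carries, besides the usual $\sinh(\tfrac{mw-nz}{2})\mc{E}^B_{m+n}(z+w)$ term, the extra \emph{reflected} term $(-1)^n\sinh(\tfrac{mw+nz}{2})\mc{E}^B_{m+n}(z-w)$; I expect this reflection to be exactly what produces the symmetrised Bergman kernel $\tfrac12\bigl((z_1-z_2)^{-2}+(z_1+z_2)^{-2}\bigr)$ appearing in the conjectural spectral curve, since the $z\mapsto -z$ symmetry of the spin data should already be visible at the level of the two-point function.

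Second, I would transform these $n$-point functions into the multidifferentials $\omega_{g,n}$ through the change of variables $x(z)=\log z - z^r$, $y(z)=z$, matching the prescribed normalisation. One then has two routes to conclude. The direct route is to verify that the resulting $\omega_{g,n}$ satisfy the abstract linear and quadratic loop equations at the ramification point $z=0$ of $x$, now with respect to the modified Bergman kernel, and to check that they have the correct pole structure; uniqueness of solutions to the loop equations then identifies them with the topological recursion output. The indirect route, which the paper's abstract signals, is to pass through an ELSV-type formula writing $h^{r,\theta}_{g;\mu}$ as an intersection number on $\M_{g,n}$ of the Chiodo class twisted by the $2$-spin Witten class, and then invoke Eynard's theorem expressing topological recursion of a spectral curve in terms of the associated cohomological field theory.

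The hard part, in either route, is that the spectral curve is not of standard type: the non-diagonal Bergman kernel encodes a genuinely new, $B$-twisted version of topological recursion, so the usual Eynard--Givental dictionary between spectral curves and CohFTs does not apply verbatim. For the direct route the obstacle is proving the quadratic loop equation with the reflected kernel, where the cross-term $(z_1+z_2)^{-2}$ obstructs the standard residue manipulations and must be controlled using the precise $\sinh$-structure of the BKP commutator. For the indirect route the obstacle is the ELSV formula itself: establishing it requires virtual localisation on a moduli space of spin stable maps and a delicate identification of the resulting contributions with the $2$-spin-twisted Chiodo class. This is why I expect the statement to remain conjectural pending one of these two technical inputs, and why the most tractable outcome is to prove the two conjectures equivalent rather than either one outright.
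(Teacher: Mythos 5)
Your proposal addresses a statement that the paper itself does not prove: \cref{intro:conj:spin:HNs:TR} remains a conjecture there, supported by a proof in genus zero (\cref{cor:free:energy:genus:zero}), an explicit check for $(g,n)=(1,1)$ against the closed formula \cref{eqn:g1:n1:closed}, and a pointer to forthcoming work of Alexandrov--Shadrin for a full proof. Your closing assessment --- that the tractable outcome is to prove the topological recursion statement \emph{equivalent} to a spin ELSV formula rather than to prove either outright --- is exactly what the paper does (\cref{thm:ELSV:CohFT,cor:spin:ELSV:with:Chiodo}), and your first step (vacuum expectations plus the $\mc{E}^B$-commutator algebra) is indeed how the paper generates its low-genus evidence and free energies.

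The genuine gap is in how you propose to establish that equivalence. You correctly flag the non-standard Bergman kernel as the obstruction to applying the Eynard/DOSS dictionary, but you leave it unresolved, and you then assert that the indirect route requires virtual localisation on a moduli space of spin stable maps to connect topological recursion with the $2$-spin-twisted Chiodo class. Neither is what is actually needed. The resolution of the kernel problem is the paper's key new idea: the curve is a $\Z/2\Z$-equivariant spectral curve ($\chi$ trivial, $\upsilon = \beta$ the sign representation), and \cref{prop:B:equal:reduceded} shows that its correlators equal $|G|^{2g-2+n}$ times those of the quotient curve, obtained by keeping one ramification point from each $\Z/2\Z$-orbit; the quotient is a bona fide local spectral curve, so \cref{thm:Eyn:DOSS} applies to it essentially verbatim after the rescaling of $B$ in \cref{thm:spin:HNs:SC:red:norm}. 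The identification of the resulting CohFT with $C^{\theta}(r,1;\braket{\mu})$, the Chiodo class twisted by the Witten $2$-spin class, is then purely cohomological: a JPPZ-style sum over stable graphs combined with Cornalba's parity analysis of theta characteristics on nodal curves, with no localisation anywhere. Localisation would only enter if one wanted to prove the ELSV formula (and hence the conjecture) unconditionally, which neither you nor the paper attempts; so your plan both overestimates the geometric input needed for the equivalence and omits the equivariant-reduction mechanism that turns it into a finite, checkable Givental computation.
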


After we communicated this conjecture to them, Alexandrov and Shadrin soon proved it.

\begin{theorem}[{\cite[Theorem~7.1]{AS23}}]\label{intro:thm:AS}
	\Cref{intro:conj:spin:HNs:TR} holds.
\end{theorem}

This conjectural spectral curve does not satisfy one canonical requirement imposed by the theory of topological recursion in the classical sense: the $(0,2)$-correlator $B(z_1, z_2)$ has poles when $z_1$ and $z_2$ approach two distinct ramification points of $x$. However, the way these extra poles arise encode in a beautiful way some extra structure of this curve. First of all, these new poles are of order two and with the same biresidue, as in the case when both variables approach the same critical point. More importantly, the number of critical points is even and they come in pairs because of an underlying $\Z/2\Z$-action: the double poles only arise for $z_2$ approaching either the same critical point $z_1$, or its conjugate with respect to the group action. We then realise the quotient of the conjectural spectral curve modulo the group action, reducing by half the number of ramification points. Surprisingly, we find that the correlators $\omega_{g,n}$ of the quotient spectral curve differ by the initial correlators just by some simple prefactor.

\smallskip

We export this principle to the more general setting of spectral curves with a finite group acting on them.

\subsubsection*{Results concerning algebraic geometry of moduli spaces of curves}

There is a general statement of topological recursion at the service of algebraic geometry (\cref{thm:Eyn:DOSS}): any enumerative problem generated by topological recursion has a representation in terms of the intersection theory on the moduli spaces of curves. In general, the intersection class is described quite abstractly as the action of an element of the Givental group on a base point given by $1$, but in particular cases, it turns out to also have a more geometric interpretation related to a specific moduli problem.

\smallskip

In the case of ordinary Hurwitz numbers with completed cycles, a cohomological representation motivated by Gromov--Witten theory was proposed by Zvonkine \cite{Zvo06}.

\begin{theorem}[\cite{SSZ15,DKPS23}]\label{thm:Zvonkine:conj}
	Single Hurwitz numbers with $(r+1)$-completed cycles are given by
	\begin{equation}\label{eqn:Zvonkine:conj}
		h_{g;\mu}^{r}
		=
		r^{2g-2+n+b}
		\left( \prod_{i=1}^n \frac{\left( \frac{\mu_i}{r} \right)^{[\mu_i]}}{[\mu_i]!} \right)
		\int_{\overline{\mathcal{M}}_{g,n}}
			\frac{C(r,1;\braket{\mu})}{\prod_{i=1}^n(1 - \frac{\mu_i}{r}\psi_i)} \, ,
	\end{equation}
	where $n = \ell(\mu)$, $b = (2g-2+n+|\mu|)/r$, $\mu_i = r[\mu_i] + r- \langle \mu_i \rangle$, and most importantly $C$ is the Chiodo cohomological field theory \cite{Chi08}.
\end{theorem}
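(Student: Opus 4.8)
The plan is to deduce the cohomological formula from topological recursion via the Givental--Teleman reconstruction, thereby reducing the statement to an identification of two cohomological field theories. First I would invoke the topological recursion statement recalled just above: on the spectral curve $\Sigma = \P^1$ with $x(z) = \log(z) - z^r$, $y(z) = z$, and Bergman kernel $B = dz_1\,dz_2/(z_1-z_2)^2$, the correlators $\omega_{g,n}$ expand as $\sum_\mu h^r_{g;\mu} \prod_i \mu_i e^{\mu_i x(z_i)}\, dx(z_i)$. This transfers the problem of computing the numbers $h^r_{g;\mu}$ to that of computing the $\omega_{g,n}$ intersection-theoretically, so that the remaining work is purely on the side of moduli spaces.

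Next I would apply the Eynard--DOSS theorem (\cref{thm:Eyn:DOSS}), which represents the correlators of a spectral curve with simple ramifications as a sum over stable graphs of intersection numbers against a cohomological field theory. The defining data of this CohFT --- the Givental $R$-matrix together with the translation and dilaton shifts --- are read off from the local behaviour of $y\,dx$ near the $r$ critical points of $x$. Concretely, I would Taylor-expand $x$ and $y$ in local coordinates around each ramification point, Laplace-transform the resulting genus-zero data, and extract the $R$-matrix as the asymptotic expansion of the relevant exponential (Bessel/Airy-type) integral. The $\Z/r\Z$-symmetry $z \mapsto \zeta_r z$ permuting the $r$ critical points should correspond to the decoration by the residues $\langle \mu_i\rangle$ that appears in the Chiodo class.

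The heart of the argument is then to identify this $R$-matrix with the one governing the Chiodo class $C(r,1;\langle\mu\rangle)$. I would compare the explicit asymptotics obtained above with the known $R$-matrix of the Chiodo CohFT (as computed by Lewa\'nski--Popolitov--Shadrin--Zvonkine), verifying that they coincide after the change of variables relating $z$ to the flat coordinate. Given the agreement of the $R$-matrix, dilaton, and translation data, the two CohFTs agree by the uniqueness part of Givental--Teleman, and hence the two stable-graph sums agree. Finally, the explicit prefactors in \eqref{eqn:Zvonkine:conj} --- the global factor $r^{2g-2+n+b}$, the combinatorial weights $(\mu_i/r)^{[\mu_i]}/[\mu_i]!$, and the $\psi$-class denominators $1/(1-\frac{\mu_i}{r}\psi_i)$ --- are produced by unwinding the principal parts of $\omega_{0,1}$ and the Laplace transform, which converts the expansion coefficient of $e^{\mu_i x}$ at each leaf into the geometric series $\sum_d (\mu_i/r)^d \psi_i^d = 1/(1-\frac{\mu_i}{r}\psi_i)$.

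The main obstacle is the $R$-matrix identification: matching the asymptotics of the spectral-curve data to the Chiodo $R$-matrix is a delicate explicit computation, requiring careful control of the $r$-fold symmetry among the ramification points and of the precise normalisations in the Laplace/Givental dictionary. Once this identification is secured, the passage from $\omega_{g,n}$ to the cohomological formula, together with the bookkeeping of the prefactors, is formal.
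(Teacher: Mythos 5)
Your proposal is correct and follows essentially the same route as the paper and its references: the paper cites this theorem from \cite{SSZ15,DKPS19} without reproving it, and those works (together with \cite{LPSZ17}, whose $R$-matrix computation you correctly identify as the key input) establish it exactly as you describe, by combining topological recursion on the given spectral curve with the Eynard--DOSS/Givental correspondence and matching the resulting $R$-matrix, translation, and auxiliary differentials with the Chiodo-class data. Indeed, the paper's own proof of the spin analogue (\cref{thm:ELSV:CohFT} and \cref{prop:CohFT:sum:stable:graphs}, via \cref{thm:Eyn:DOSS}) is a direct adaptation of precisely this strategy, including the flat-basis change that converts the sum over ramification points into the decoration by remainders $\braket{\mu_i}$ and the Lambert-function expansion that produces the prefactors and the $1/(1-\tfrac{\mu_i}{r}\psi_i)$ denominators.
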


\begin{remark}
  For $r=1$ the Chiodo class becomes the total Chern class of the dual Hodge bundle of abelian differentials: $C(1,1;\braket{\mu}) = \Lambda(-1) = 1 - \lambda_1 + \lambda_2 - \dots + (-1)^g \lambda_g$. In this case, \cref{eqn:Zvonkine:conj} reduces to the ELSV formula~\cite{ELSV01}, the first formula relating Hurwitz theory with intersections on moduli of curves.
\end{remark}

We exploit the relation between Givental group action and topological recursion to compute the cohomological field theory representing spin Hurwitz numbers.

\begin{theorem}\label{intro:spin:Zvonkine:conj}
	\Cref{intro:conj:spin:HNs:TR} is equivalent to
	\begin{equation}
		h_{g;\mu}^{r,\theta}
		=
		2^{1-g} r^{2g-2+n+b}
		\left( \prod_{i=1}^n \frac{\left( \frac{\mu_i}{r} \right)^{[\mu_i]}}{[\mu_i]!} \right)
		\int_{\overline{\mathcal{M}}_{g,n}}
		\frac{C^{\theta}(r,1;\braket{\mu})}{\prod_{i=1}^n(1 - \frac{\mu_i}{r}\psi_i)} \, ,
	\end{equation}
	where $n = \ell(\mu)$, $b = (2g-2+n+|\mu|)/r$, $\mu_i = r[\mu_i] + r - (2\langle \mu_i \rangle +1)$, and most importantly $C^{\theta}$ is the Chiodo cohomological field theory twisted by the Witten $2$-spin class $c_{\textup{W},2}$ (see \cref{cor:spin:ELSV:with:Chiodo} for the precise definition).\par
	By the result of \cite{AS23}, \cref{intro:thm:AS}, the formula holds unconditionally.
\end{theorem}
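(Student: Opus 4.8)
The plan is to route the equivalence through the Eynard--Orantin/DOSS correspondence of \Cref{thm:Eyn:DOSS}: for any spectral curve on which topological recursion is well-posed, the correlators $\omega_{g,n}$ admit an unconditional closed form as a sum over stable graphs, with vertices weighted by a cohomological field theory and legs and edges dressed by a Givental $R$-matrix extracted from the local spectral data $(y\,dx, B)$. Granting that topological recursion on the curve of \Cref{intro:conj:spin:HNs:TR} is well-posed, both sides of the asserted equivalence become two readings of the \emph{same} expansion coefficients of $\omega_{g,n}$: the conjectural enumerative meaning on one side and the ELSV-type integral on the other. Hence it suffices to prove a single cohomological identity, namely that the DOSS class produced by the spin spectral curve equals $2^{1-g}$ times the Witten-$2$-spin-twisted Chiodo class $C^\theta(r,1;\braket{\mu})$; this establishes the equivalence in both directions at once.

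First I would record the local data. The ramification points of $x$ are the zeros of $x'(z) = \tfrac1z - r z^{r-1}$, i.e. the solutions of $r z^r = 1$; since $r$ is even they come in conjugate pairs $\pm z_0$ under the involution $z \mapsto -z$, which is exactly the symmetry responsible for the anti-diagonal pole of $B$. Crucially, the functions $x$ and $y$ coincide with those of the non-spin curve in \Cref{thm:Zvonkine:conj}, so the shape of the local spectral data---and hence the Chiodo class $C(r,1;\braket{\mu})$ together with the prefactors $r^{2g-2+n+b}$ and $\prod_i (\mu_i/r)^{[\mu_i]}/[\mu_i]!$---is already fixed by the unspun computation; only the bilinear differential $B$ differs, and its effect is what must account for both the extra $2^{1-g}$ and the $2$-spin twist.

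The entire deviation from the non-spin case is therefore carried by $B$, whose two halves I would treat by passing to the quotient curve announced in the introduction. Descending along $w = z^2$ halves the number of ramification points and turns $B$---including both its diagonal and anti-diagonal double poles---into an honest Bergman kernel with a single diagonal pole on the quotient, while relating the original and quotient correlators by the explicit prefactor responsible for the global $2^{1-g}$ of the BKP vacuum-expectation formula. On the quotient the superpotential becomes a square root $y \sim \sqrt{w}$, and this branching of $y$ is the local signature of a theta characteristic: in the Givental dictionary it inserts the Witten $2$-spin class $c_{\textup{W},2}$ as an additional twist of the vertex CohFT, upgrading $C(r,1;\braket{\mu})$ to $C^\theta(r,1;\braket{\mu})$ as defined in \Cref{cor:spin:ELSV:with:Chiodo}.

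I expect the crux to be this last identification. It demands, first, a careful formulation of topological recursion for a spectral curve whose Bergman kernel carries the non-classical anti-diagonal pole and which is equipped with a $\Z/2\Z$-action, and second, a proof that the local Airy-type model attached to the branched superpotential $\sqrt{w}$ is governed by the $2$-spin Witten TQFT rather than the trivial one---equivalently, that the sign coming from the parity of the theta characteristic is genuinely reproduced at the level of intersection numbers. Everything else is bookkeeping: once the local $2$-spin statement is secured, I would assemble the stable-graph sum from the quotient data and verify that it equals $2^{1-g}$ times the integral of \Cref{intro:spin:Zvonkine:conj}, following the template of \cite{SSZ15,DKPS19}.
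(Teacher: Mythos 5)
Your overall skeleton — make sense of topological recursion on the equivariant curve, reduce to a quotient where classical TR applies, then apply the Eynard--DOSS correspondence of \cref{thm:Eyn:DOSS} so that the equivalence collapses to a single cohomological identification of the resulting Givental class — is exactly the paper's architecture (\cref{sec:equiv:TR,sec:spin:HNs:ELSV}). But two of your key mechanisms fail. First, the quotient: the spin kernel $B$ is \emph{odd} under $z \mapsto -z$ (in the paper's language $\beta$ is the sign representation), so it does not descend along $w = z^2$ at all; explicitly, rewriting it in the variables $w_i = z_i^2$ gives
\begin{equation*}
	\frac{1}{2}\left( \frac{1}{(z_1-z_2)^2} + \frac{1}{(z_1+z_2)^2} \right) dz_1\, dz_2
	=
	\frac{(w_1+w_2)\, dw_1\, dw_2}{4\sqrt{w_1 w_2}\,(w_1-w_2)^2},
\end{equation*}
which is multivalued on the quotient, whereas the kernel that does push down to the honest Bergman kernel is the \emph{anti}-symmetrised one. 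The paper's ``quotient'' (\cref{prop:B:equal:reduceded,thm:spin:HNs:SC:red:norm}) is not a geometric quotient of the curve: it is a restriction of the residue sum to a set $\bar{I}$ of representative ramification points, legitimate because the correlators are $\beta$-equivariant in each variable (\cref{lem:antisymmetry:correlators}) and because the anti-diagonal poles of $B$ only pair a ramification point with its conjugate, which is excluded once both indices lie in $\bar{I}$.

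Second, and more seriously, your proposed origin of the Witten $2$-spin class is a phantom. The involution acts freely near the ramification points ($rz^r = 1$; the fixed points are $0$ and $\infty$), so the quotient map is a local isomorphism there: $y$ is unbranched at every ramification point, the local models are the same simple (Airy-type) models as in the non-spin case, and no ``local $2$-spin statement'' exists to be proved. Here $c_{\textup{W},2}$ is zero-dimensional — it is the sign $(-1)^{p(\vartheta)}$ on parity components — not the higher-ramification Witten class of Bouchard--Eynard-type recursions, and it cannot be produced by any local spectral-curve model. In the paper it enters only a posteriori: one computes the full Givental data of the reduced curve explicitly (the rank-$s$ TFT with constraint $\delta_{a_1+\cdots+a_n \equiv g-1}$, the $R$-matrix with Bernoulli polynomials at the odd arguments $\tfrac{2a+1}{2s}$), writes the CohFT as a stable-graph sum (\cref{prop:CohFT:sum:stable:graphs}, giving \cref{thm:ELSV:CohFT}), and then matches that sum against the JPPZ expansion of the Chiodo class via Cornalba's parity analysis of theta characteristics at nodes: contributions with even node-twist cancel in pairs of opposite parity, and the signed count of spin structures per vertex produces $s^{2g(v)-1}2^{g(v)-1}$, identifying the class as $2^{g-1}C^{\theta}(r,1;\braket{\mu})$ with $C^{\theta}$ the Witten-twisted Chiodo class (\cref{prop:spin:Chiodo:as:Chiodo:times:Witten,cor:spin:ELSV:with:Chiodo}). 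Relatedly, your assertion that the Chiodo-class part and the prefactors are ``already fixed by the unspun computation'' is not right: the reduced curve has only $s = r/2$ ramification points, so the TFT, the $R$-matrix, and hence the entire CohFT differ from the rank-$r$ non-spin data and must be recomputed from scratch.
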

		
It is worth remarking that the product of the Chiodo class and the Witten $2$-spin class has a particularly clean expression as a sum over stable graphs, and we use this expression in the proof.

\subsubsection*{Results concerning Gromov--Witten theory of Kähler surfaces}

Hurwitz numbers are intrinsically related to topological string theory. Simple Hurwitz numbers arise as a framing limit of the topological vertex and, for general $r$, completed cycles Hurwitz numbers encode the Gromov--Witten theory of $\P^1$ via the celebrated Gromov--Witten/Hurwitz (GW/H) correspondence of Okounkov--Pandharipande \cite{OP06}.

\smallskip

For instance, the statement of Zvonkine's conjecture is equivalent to the following relation with connected descendant GW invariants of equivariant $\P^1$ relative to a partition $\mu$:
\begin{equation}
	\corr{
    \frac{(r! \, \tau_{r}(\omega ))^b}{b!},\mu
  	}^{\P^1}_{|\mu|} 
	=
	r^{2g-2+n+b}
	\left( \prod_{i=1}^n \frac{\left( \frac{\mu_i}{r} \right)^{[\mu_i]}}{[\mu_i]!}\right)
	\int_{\overline{\mathcal{M}}_{g,n}}
	\frac{C_{g,n}(r,1;\braket{\mu})}{\prod_{i=1}^n(1 - \frac{\mu_i}{r}\psi_i)} \, ,
\end{equation}
where $ \omega$ is the class dual to a point.

\smallskip

On the other hand, the GW theory of spin curves, i.e. curves together with a theta characteristic, is believed to correspond to spin Hurwitz numbers via a spin analogue of the GW/H correspondence \cite{MP08,Lee19}. Base cases of this correspondence have been proved for degree $1$ and $2$, and conjectured for $d\geq 3$.

\smallskip

The motivation to understand this correspondence finds its roots in the GW theory of Kähler surfaces with a smooth canonical divisor, which has been proved to be entirely determined by the GW theory of spin curves, see \cite{LP07,LP13}. Assuming the spin GW/H correspondence, our results on the moduli spaces of curves imply that the diagonal connected descendent GW invariants of $(\P^1,\mc{O}(-1))$ are given by
\begin{equation}\label{eqn:spin:GW:H}
	\corr{
   	 \frac{( (-1)^{r/2} 2^r (r-1)!! \, \tau_{r/2}(\omega) )^b}{b!},\mu
 	 }^{\P^1,+}_{|\mu|}
	=
	2^{1-g} r^{2g-2+n+b}
	\left( \prod_{i=1}^n \frac{\left( \frac{\mu_i}{r} \right)^{[\mu_i]}}{[\mu_i]!}\right)
	\int_{\overline{\mathcal{M}}_{g,n}}
	\frac{C_{g,n}^{\theta}(r,1;\braket{\mu})}{\prod_{i=1}^n(1 - \frac{\mu_i}{r}\psi_i)} \, .
\end{equation}
The twist obtained by intersecting with the Witten class in the GW potential for Kähler targets already appeared e.g. in \cite{JKV01,CZ09}. However, both these appearances in a sense take place on the Gromov--Witten side of the GW/H correspondence, whereas our result reveals the Witten class on the Hurwitz theory side.

\subsection{Related developments since prepublication}

As stated before, Alexandrov--Shadrin \cite{AS23} proved \cref{intro:conj:spin:HNs:TR} soon after we stated it. The techniques they use are very different from the considerations we give in this paper that led to our conjecture.

\smallskip

The main strength of this article relies instead in the analysis of one particular enumerative problem, the spin Hurwitz numbers, and its realisation and novelty in different fields: the appearance of an equivariant spectral curve in topological recursion theory, the appearance of a spin analogous of the Okounkov--Pandharipande algebra of operators, and the cohomological representation (ELSV-type formula) of these numbers by a peculiar class $C^{\vartheta}(r,k;A) \in H^{\bullet}(\M_{g,n})$, obtained from theta characteristics weighted with their parity on the space of spin curves.

\smallskip

In fact,  the class $C^{\vartheta}(r,k;A)$ is related to another recent work.  For a fixed value of $k$ and $A$, the Chiodo class $C(r,k;A)$ (after multiplying by a certain power of $r$) is a polynomial in $r$ for larges values of $r$. The degree zero of this polynomial is denoted by $DR_g(k,A)$. The restriction to $\mc{M}_{g,n}$ of the Poincaré dual of this class is the locus
$
\mc{M}_{g}(k,A) = \{ (C,x_1,\dotsc,x_n) \; {|} \; \omega_{\log}^{\otimes k} \cong \mathcal{O}(\sum_i a_i x_i ) \} . 
$

It is expected that the classes $C(r,k;A)^\theta$ satisfy the same polynomiality in $r$, thus allowing one to define a class $DR_{g}(k,A)^\theta$. In \cite{CSS21},  the authors propose some conjectural properties of this class allowing them to compute the intersection numbers $DR_{g}(k,A)^\theta \cdot \psi_{1}^{2g-3+n}$. In particular, if $k$ and the parts of $A$ are odd,  then the space $\mathcal{M}_{g}(k,A)$ splits into components with a constant parity of the spin structure $\mathcal{O}(\sum_i \frac{(a_i-1)}{2} x_i ) \otimes \omega_C^{(1-k)/2}$. Costantini--Sauvaget--Schmitt conjecture that the restriction of $DR_{g}(k,A)^\theta$ to $\mathcal{M}_{g,n}$ is a sum of the classes of components of $\mc{M}_{g}(k,A)$ with a sign determined by the parity.

\smallskip

In \cite{GKLS22}, Sauvaget and the authors of this paper proved the spin Gromov--Witten/Hurwitz correspondence for $ \P^1$. \Cref{intro:spin:Zvonkine:conj} is an essential step in the proof, required for interpreting the localised intersection theory involved in spin Gromov--Witten theory in terms of spin Hurwitz numbers. The correspondence for positive-genus curves is still open.

\subsection{Open questions}

This paper gives a conjectural example of topological recursion related to a B-type problem. We do not know whether other B-type enumerative problems satisfy a form of topological recursion similar to our case. This is a possible direction for future research.

\smallskip

Lee \cite[Theorem~1.1]{Lee19} has found that a generating series for $3$-completed spin double Hurwitz numbers squares to the generating series of $3$-completed non-spin double Hurwitz numbers (after some tuning of the weights). A natural question would be whether this generalises to higher $r$.

\smallskip

A more outlandish question is inspired by \cref{intro:spin:Zvonkine:conj}, cf. \cref{p-SpinChiodo?}: the intersection class is the usual Chiodo class twisted by the Witten $2$-spin class. In this particular case, Witten's class is zero-dimensional, and gives the parity of the associated theta characteristic. What if we twisted the Chiodo class by the Witten $p$-spin class for $ p >2$? This would require $ p \mid r$, and furthermore $r \nmid \mu_i $. But because this class is no longer zero-dimensional, how should we define the associated Hurwitz theory? Does it satisfy some integrable hierarchy?

\smallskip

Witten class, or Witten class twisted by Chiodo class, is shown to play a key role in relation with the Gromov--Witten theory of Kähler targets \cite{CZ09,JKV01}. The theory for Kähler surfaces is determined by the theory of spin curves, which are expected to correspond to spin Hurwitz theory as explained above. This work shows for the first time the appearance of Witten class (or more precisely Chiodo class twisted by Witten class) on the other side of the GW/H correspondence. The open problem is to give a natural and geometric explanation of its appearance.

\addtocontents{toc}{\protect\setcounter{tocdepth}{1}}
\subsection*{Outline of the paper}

The article is divided into three parts.

\smallskip

\emph{Background material.} In \cref{sec:TR:CohFTs} we review the definition of topological recursion, as well as its identification with the Givental group action on cohomological field theories. In \cref{sec:spin:algebra} we review the representation theory of the spin algebras, which allows to represent spin Hurwitz numbers in terms of characters of the Sergeev group in \cref{sec:spin:HNs}.

\smallskip

\emph{Properties of double spin Hurwitz numbers: polynomiality structure and wall-crossing formulae.} In \cref{sec:CJ:OP:operators} we derive the spin analogue of the Okounkov--Pandharipande operators on uncharged fermions, which is then employed in \cref{sec:properties:spin:HNs} for the analysis of the polynomiality structure of spin double Hurwitz numbers and their wall-crossing formulae. 

\smallskip

\emph{Properties of single spin Hurwitz numbers: topological recursion and ELSV formula.} \Cref{sec:conj} contains our main conjecture: single spin Hurwitz numbers are generated by topological recursion for a specific spectral curve. We also give evidence for this conjecture. Since the conjectural spectral curve differs from the usual definition, in \cref{sec:equiv:TR} we define and analyse $G$-quotients of spectral curves, and reduce them to the usual setting of topological recursion. We then employ the correspondence with cohomological field theories in \cref{sec:spin:HNs:ELSV} to derive the representation of spin Hurwitz numbers as intersection numbers on $\M_{g,n}$.

\smallskip

We append two sections supporting the main body of the work containing, respectively, useful computations with uncharged fermions to build the bosonic operators and numerics generated by the implementation of topological recursion.

\subsection*{Notation}

We work over the complex numbers. For a natural number $n$, we write $ \bbraket{n} \coloneqq \{ 1, \dotsc, n\}$. We also define the following functions:
\begin{equation}
  \varsigma (z) = 2\sinh \left(\frac{z}{2}\right),
  \qquad 
  \mc{S}(z) = \frac{\sinh (\frac{z}{2})}{(\frac{z}{2})},
  \qquad 
  \qoppa(z) = \frac{\cosh(\frac{z}{2})}{2},
  \qquad
  \mc{K}(z) = \frac{\cosh(\frac{z}{2})}{(2z)}.
\end{equation}

\subsection*{Acknowledgements}
\addtocontents{toc}{\protect\setcounter{tocdepth}{2}}

We would like to thank Alexander Alexandrov, Gaëtan Borot, Alessandro Chi\-o\-do, Bertrand Eynard, John Harnad, Junho Lee, Kento Osuga, Adrien Sauvaget, Johannes Schmitt, Sergey Shadrin, and Dimitri Zvonkine for useful discussions.
A.~G. and R.~K. were supported by the Max-Planck-Gesell\-schaft. D.~L. was supported by the ERC grant described below and by the grant host institutes IHES and IPhT. D.~L. was moreover supported by the INdAM group GNSAGA for research visits. This paper is partly a result of the ERC-SyG project, Recursive and Exact New Quantum Theory (ReNewQuantum) which received funding from the European Research Council (ERC) under the European Union's Horizon 2020 research and innovation programme under grant agreement No 810573.

\section{Topological recursion and cohomological field theories}
\label{sec:TR:CohFTs}

In this section, we recall the basic theory of topological recursion (TR), first developed by Eynard--Oran\-tin~\cite{EO07}, and in particular its relation to cohomological field theories (CohFTs). This relation, given by Dunin--Barkowski--Orantin--Shadrin--Spitz and Eynard~\cite{DOSS14,Eyn11}, identifies the graph expansion of TR with the Givental--Teleman action~\cite{Giv01,Tel12} on semisimple CohFTs.

\subsection{Topological recursion}

Topological recursion is a universal procedure that associates a collection of symmetric multidifferentials to a spectral curve, a curve with some extra data. What makes TR especially useful is its applications to enumerative geometry: many counting problems are solved by TR, in the sense that the sought numbers are Taylor coefficients of the multidifferentials when expanded around a specific point.

\begin{definition}[\cite{EO07}]\label{defn:spectral:curve}
	A \emph{spectral curve} $\mathcal{S} = (\Sigma,x,y,B)$ consists of
	\begin{itemize}
		\item a Riemann surface $\Sigma$ (not necessarily compact or connected);
		\item a function $x \colon \Sigma \to \C$ such that its differential $dx$ is meromorphic and has finitely many zeros (called ramification points) $a_1,\dots,a_r$ that are simple;
		\item a meromorphic function $y \colon \Sigma \to \C$ such that $dy$ does not vanish at the zeros of $dx$;
		\item a symmetric bidifferential $B$ on $\Sigma \times \Sigma$, with a double pole on the diagonal with biresidue $1$, and no other poles.
	\end{itemize}
	The \emph{topological recursion} produces symmetric multidifferentials (also called \emph{correlators}) $\omega_{g,n}$ on $\Sigma^n$, defined recursively on $2g-2+n > 0$ as
	\begin{equation}\label{eqn:TR}
	\begin{split}
		\omega_{g,n}(z_1,\dots,z_n) \coloneqq \sum_{i=1}^r \Res_{z = a_i} K_i(z_1,z) \bigg( &
			\omega_{g-1,n+2}(z,\sigma_i(z),z_2,\dots,z_n) \\
			& +
			\sum_{\substack{g_1+g_2 = g \\ J_1 \sqcup J_2 = \{2,\dots,n\}}}^{\text{no $(0,1)$}}
				\omega_{g_1,1+|J_1|}(z,z_{J_1})
				\omega_{g_2,1+|J_2|}(\sigma_i(z),z_{J_2})
		\bigg),
	\end{split}
	\end{equation}
	where $K_i$, called the topological recursion kernels, are locally defined in a neighbourhood $U_i$ of $a_i$ as
	\begin{equation}\label{eqn:TR:kernel}
		K_i(z_1,z) \coloneqq \frac{\frac{1}{2} \int_{w = \sigma_i(z)}^z B(z_1,w)}{\bigl( y(z) - y(\sigma_i(z)) \bigr) d x(z)},
	\end{equation}
	and $\sigma_i \colon U_i \to U_i$ is the Galois involution near the ramification point $a_i \in U_i$. It can be shown that $\omega_{g,n}$ is a symmetric meromorphic multidifferential on $\Sigma^n$, with poles only at the ramification points.
\end{definition}


\begin{remark}\label{rem:B:rescaling:local:TR}
	The topological recursion depends only on the local behaviour of the spectral curve around the ramification points. In particular, the bidifferential $B$ is allowed to have more poles, \emph{as long as there are no poles when the two arguments approach different ramification points}. More formally, given a spectral curve as in \cref{defn:spectral:curve}, we get $r$ formal neighbourhoods $ U_i \coloneqq \Spec \C \bbraket{z-a_i}$ of the ramification points, and the topological recursion correlators only depends on the \emph{local spectral curve} $ \mc{S}_{\textup{loc}} = (U = \bigsqcup_{i=1}^r U_i, x |_U, y|_U, B|_U)$. The requirements in \cref{defn:spectral:curve} need only hold for $\mc{S}_{\textup{loc}}$.\par
	Moreover, the bidifferential $B$ is allowed to have double pole on the diagonal with arbitrary non-vanishing biresidue. A rescaling of the form $B \mapsto \lambda B$ corresponds to a rescaling of the correlators: $\omega_{g,n} \mapsto \lambda^{3g-3+2n} \omega_{g,n}$.
\end{remark}

\subsection{Cohomological field theories and Givental action}

Let us recall some facts about the Givental action on CohFTs. More details can be found in \cite{PPZ15}.

\begin{definition}[\cite{KM94}]
	Let $V$ be a finite dimensional $\C$-vector space with a non-degenerate symmetric $2$-form $\eta$. A \emph{cohomological field theory} on $(V,\eta)$ consists of a collection $\Omega = (\Omega_{g,n})_{2g-2+n>0}$ of elements
	\begin{equation}
		\Omega_{g,n} \in H^{\bullet}(\overline{\mathcal{M}}_{g,n}) \otimes (V^{\ast})^{\otimes n}
	\end{equation}
	satisfying the following axioms.
	\begin{enumerate}
		\item[i)] Each $\Omega_{g,n}$ is $\mf{S}_n$-invariant, where the action of the symmetric group $\mf{S}_n$ permutes both the marked points of $\overline{\mathcal{M}}_{g,n}$ and the copies of $(V^{\ast})^{\otimes n}$.
		\item[ii)] Consider the gluing maps
		\begin{equation}
		\begin{aligned}
			q \colon& \overline{\mathcal{M}}_{g-1,n+2} \longrightarrow \overline{\mathcal{M}}_{g,n}, \\
			r_{h,I} \colon& \overline{\mathcal{M}}_{h,|I|+1} \times \overline{\mathcal{M}}_{g-h,|J|+1}  \longrightarrow \overline{\mathcal{M}}_{g,n},
			\qquad
			I \sqcup J = \set{1,\dots,n}.
		\end{aligned}
		\end{equation}
		Then
		\begin{equation}
		\begin{aligned}
			q^{\ast}\Omega_{g,n}(v_1 \otimes \cdots \otimes v_n)
			& =
			\Omega_{g-1,n+2}(v_1 \otimes \cdots \otimes v_n \otimes \eta^{\dag}), \\
			r_{h,I}^{\ast} \Omega_{g,n}(v_1 \otimes \cdots \otimes v_n)
			& =
			(\Omega_{h,|I|+1} \otimes \Omega_{g-h,|J|+1}) \Biggl( \bigotimes_{i \in I} v_i \otimes \eta^{\dag} \otimes \bigotimes_{j \in J} v_j \Biggr)
		\end{aligned}
		\end{equation}
		where $\eta^{\dag} \in V^{\otimes 2}$ is the bivector dual to $\eta$.
	\end{enumerate}
	If the vector space comes with a distinguished element $\mathbb{1} \in V$, we can also ask for a third axiom:
	\begin{enumerate}
		\item[iii)] Consider the forgetful map
		\begin{equation}
			p \colon \overline{\mathcal{M}}_{g,n+1} \longrightarrow \overline{\mathcal{M}}_{g,n}.
		\end{equation}
		Then
		\begin{equation}
		\begin{aligned}
			p^{\ast} \Omega_{g,n}(v_1 \otimes \cdots \otimes v_n)
			& =
			\Omega_{g,n}(v_1 \otimes \cdots \otimes v_n \otimes \mathbb{1}), \\
			\Omega_{0,3}(v_1 \otimes v_2 \otimes \mathbb{1})
			& =
			\eta(v_1,v_2).
		\end{aligned}
		\end{equation}
	\end{enumerate}
	In this case, $\Omega$ is called a \emph{cohomological field theory with flat unit}.
\end{definition}

A cohomological field theory defines a quantum product $\star$ on $V$:
\begin{equation}
	\eta(v_1 \star v_2,v_3) = \Omega_{0,3}(v_1 \otimes v_2 \otimes v_3).
\end{equation}
Commutativity follows from (i), associativity from (ii). If the cohomological field theory has a flat unit, the quantum product is unital, with $\mathbb{1} \in V$ being the identity by (iii).

\smallskip

The degree $0$ part of a CohFT
\begin{equation}
	\deg_0{\Omega_{g,n}} \in H^{0}(\overline{\mathcal{M}}_{g,n}) \otimes (V^{\ast})^{\otimes n} \cong (V^{\ast})^{\otimes n}
\end{equation}
is a $2d$ topological field theory (TFT), and is uniquely determined by the values of $\deg_0{\Omega_{0,3}}$ and by the bilinear form $\eta$ (or equivalently, by the associated quantum product). In particular, we will refer to a TFT as being unital and/or semisimple when the associated algebra is so.

\smallskip

In \cite{Giv01}, Givental defined a certain action on Gromov--Witten potentials, and this action was lifted to CohFTs in the works of Teleman \cite{Tel12} and Shadrin \cite{Sha09}. Here we recall the basic definitions.

\smallskip

{\sc $R$-matrix action.}
Fix a vector space $V$ with a non-degenerate symmetric bilinear form $\eta$. An $R$-matrix is an element $ R(u) = \Id + \sum_{k=1}^\infty R_k u^k \in \Id + u \End (V) \bbraket{u}$ satisfying the symplectic condition:
\begin{equation}
	R(u) R^{\dag}(-u) = \Id.
\end{equation}
Here $R^{\dag}$ is the adjoint with respect to $\eta$. The inverse matrix $R^{-1}(u)$ also satisfies the symplectic condition. In particular, we can consider the $V^{\otimes 2}$-valued power series
\begin{equation}
	E(u,v)
	\coloneqq
	\frac{\Id \otimes \Id - R^{-1}(u) \otimes R^{-1}(v)}{u + v} \eta^{\dag} \in V^{\otimes 2}\bbraket{u,v}.
\end{equation}
We will write $E(u,v) = \sum_{k,\ell \ge 0} E_{k,\ell} u^k v^{\ell}$, with $E_{k,\ell} \in V^{\otimes 2}$.

\smallskip

We can write the above condition in components. Fix a basis $(e_i)$ of $V$, denote by $(e^i)$ the dual basis of $V^{\ast}$ and by $\braket{\,,\,} \colon V^{\ast} \times V \to \C$ the canonical pairing. The symplectic condition and the bivector $E$ can be written as
\begin{equation}
	\sum_{a,b} R^{i}_{a}(u) \eta^{a,b} R^{j}_{b}(-u) = \eta^{i,j},
	\qquad
	E(u,v) = \sum_{i,j} \frac{\eta^{i,j} - \sum_{a,b}(R^{-1})^{i}_a(u) \eta^{a,b} (R^{-1})^{j}_{b}(v)}{u + v} e_{i} \otimes e_j,
\end{equation}
where $R^{k}_{\ell}(u) = \braket{e^k,R(u) e_{\ell}}$ and $\eta^{\dag} = \sum_{i,j} \eta^{i,j} e_i \otimes e_j$. We will denote by $E^{i,j}(u,v) = \sum_{k,\ell \ge 0} E^{i,j}_{k,\ell} u^{k} v^{\ell}$ the formal power series
\begin{equation}
	E^{i,j}(u,v) = \frac{\eta^{i,j} - \sum_{a,b}(R^{-1})^{i}_a(u) \eta^{a,b} (R^{-1})^{j}_{b}(v)}{u + v}
	\in
	\C\bbraket{u,v}.
\end{equation}

\begin{definition}
	Consider a CohFT $\Omega$ on $(V,\eta)$, together with an $R$-matrix. We define a collection of cohomology classes
	\begin{equation}
		R\Omega_{g,n} \in H^{\bullet}(\overline{\mathcal{M}}_{g,n}) \otimes (V^{\ast})^{\otimes n}
	\end{equation}
	as follows. Let $\mathcal{G}_{g,n}$ be the finite set of stable graphs of genus $g$ with $n$ leaves (we refer to \cite{PPZ15} for definition and notation). For each $\Gamma \in \mathcal{G}_{g,n}$, define a contribution	$\mathrm{Cont}_{\Gamma} \in H^{\bullet}(\overline{\mathcal{M}}_{\Gamma}) \otimes (V^{\ast})^{\otimes n}$ by the following construction:
	\begin{itemize}
		\item place $\Omega_{g(v),n(v)}$ at each vertex $v$ of $\Gamma$;
		\item place $R^{-1}(\psi_{\lambda})$ at each leaf $\lambda$ of $\Gamma$;
		\item at every edge $e = (h,h')$ of $\Gamma$, place $E(\psi_{h},\psi_{h'})$;
		\item for each half-edge, contract the contributions from the adjacent vertex and edge or leaf.
	\end{itemize}
	Define $R\Omega_{g,n}$ to be the sum of contributions of all stable graphs:
	\begin{equation}
		R\Omega_{g,n} \coloneqq \sum_{\Gamma \in \mathcal{G}_{g,n}} \frac{1}{|\mathrm{Aut}(\Gamma)|} \, \xi_{\Gamma,\ast}\mathrm{Cont}_{\Gamma}.
	\end{equation}
	Here, $\xi_\Gamma$ denotes the natural composition of gluing maps with image the stratum indexed by $ \Gamma$.
\end{definition}

\begin{proposition}
	The data $R\Omega = (R\Omega_{g,n})_{2g-2+n > 0}$ form a CohFT on $(V,\eta)$. Moreover, the $R$-matrix action on CohFTs is a left group action.
\end{proposition}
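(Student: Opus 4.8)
The plan is to check the two gluing axioms of a CohFT directly on the stable-graph expansion defining $R\Omega$, and then to verify the two identities $\Id\cdot\Omega=\Omega$ and $(R_1R_2)\Omega=R_1(R_2\Omega)$ that make the action a left action. Axiom (i) is the easy one: relabelling the leaves induces a bijection of the index set $\mc{G}_{g,n}$ preserving $|\Aut(\Gamma)|$ and matching the contributions leaf-by-leaf, so the $\mf{S}_n$-invariance of $R\Omega_{g,n}$ follows from that of each vertex class $\Omega_{g(v),n(v)}$.

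For axiom (ii) I would pull back the sum $\sum_\Gamma|\Aut(\Gamma)|^{-1}\xi_{\Gamma,\ast}\mathrm{Cont}_\Gamma$ along each gluing map, whose image is a boundary divisor. The pullback of a boundary-supported class $\xi_{\Gamma,\ast}\mathrm{Cont}_\Gamma$ is governed by the intersection theory of the maps $\xi_\Gamma$, and splits into two mechanisms. When the stratum of $\Gamma$ meets the divisor transversally, the restriction uses the CohFT gluing axiom already satisfied by the vertex class $\Omega$, producing a bare $\eta^{\dag}$ at the new node. When the stratum lies inside the divisor, the self-intersection (excess) formula contributes the normal-bundle factor $-(\psi_h+\psi_{h'})$ at the node carrying the edge term $E(\psi_h,\psi_{h'})$. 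The point of the definition of $E$ is that $(u+v)E(u,v)=\eta^{\dag}-(R^{-1}(u)\otimes R^{-1}(v))\eta^{\dag}$, so multiplying the edge term by the normal-bundle factor cancels the denominator and splits into the bare $\eta^{\dag}$ and the dressed term $(R^{-1}(\psi_h)\otimes R^{-1}(\psi_{h'}))\eta^{\dag}$; the latter is exactly the pair of leaf insertions that the two half-edges would carry as ordinary legs. Reassembling both mechanisms, the pulled-back sum becomes the graph expansion of $R\Omega_{g-1,n+2}$ (respectively $R\Omega_{h,|I|+1}\otimes R\Omega_{g-h,|J|+1}$) with $\eta^{\dag}$ inserted at the new marked points, which is the right-hand side of axiom (ii). For this to be consistent I need $E(u,v)$ to be a genuine element of $V^{\otimes2}\bbraket{u,v}$, with no pole along $u+v=0$; this is guaranteed by the symplectic condition, since it forces $(R^{-1}(u)\otimes R^{-1}(-u))\eta^{\dag}=\eta^{\dag}$ and hence the vanishing of the numerator of $E$ at $v=-u$.

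For the group-action statement, the unit is immediate: when $R=\Id$ every leaf term is $\Id$ and $E(u,v)=(u+v)^{-1}(\Id\otimes\Id-\Id\otimes\Id)\eta^{\dag}=0$, so only the edgeless one-vertex graph survives and $\Id\cdot\Omega=\Omega$. For associativity I would expand $R_1(R_2\Omega)$ as a two-level sum: an outer stable graph whose vertices are themselves decorated by the graph expansion of $R_2\Omega$. Flattening this nested graph into a single stable graph, the claim reduces to two local identities on the decorations, namely that the composite of the outer and inner leaf insertions equals the leaf term $(R_1R_2)^{-1}(\psi)$ of the product matrix, and that the outer edge/leaf contributions recombine with the inner ones into the single edge term $E_{R_1R_2}$; the automorphism weights match because flattening is compatible with the relevant automorphism groups. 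These identities follow from associativity of multiplication in $\End(V)\bbraket{u}$ together with the symplectic condition.

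The main obstacle is the gluing axiom (ii): controlling the pullback of the boundary-supported classes $\xi_{\Gamma,\ast}\mathrm{Cont}_\Gamma$ requires the excess-intersection analysis at the nodes and the precise matching of the normal-bundle $\psi$-factors against the denominator of $E$. The regularity of $E$ forced by the symplectic condition is exactly what makes this matching close up, and it is the one place where the defining hypothesis on the $R$-matrix is genuinely used.
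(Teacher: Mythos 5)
The paper gives no proof of this proposition at all: it is recalled as standard background on the Givental--Teleman action, with the reader referred to \cite{PPZ15}, so there is no in-paper argument to compare against. Your sketch is precisely the standard proof from that literature and is sound --- the excess-intersection factor $-(\psi_h+\psi_{h'})$ cancelling the denominator of $E$, the regularity of $E$ forced by the symplectic condition, the cancellation of bare $\eta^{\dag}$ terms between transversal and excess contributions, and the flattening of nested graph sums (with the edge identity $E_{R_1R_2}(u,v)=(R_2^{-1}(u)\otimes R_2^{-1}(v))E_{R_1}(u,v)+E_{R_2}(u,v)$) for the left-action property are exactly the ingredients of the published argument.
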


{\sc Translations.}
There is also another action on the space of CohFTs: a translation is a $V$-valued power series vanishing in degree $0$ and $1$:
\begin{equation}
	T(u) = \sum_{d \ge 1} T_d u^{d+1},
	\qquad
	T_d \in V.
\end{equation}

\begin{definition}
	Consider a CohFT $\Omega$ on $(V,\eta)$, together with a translation $T$. We define a collection of cohomology classes
	\begin{equation}
		T\Omega_{g,n} \in H^{\bullet}(\overline{\mathcal{M}}_{g,n}) \otimes (V^{\ast})^{\otimes n}
	\end{equation}
	by setting
	\begin{equation}
		T\Omega_{g,n}(v_1 \otimes \cdots \otimes v_n)
		\coloneqq
		\sum_{m \ge 0} \frac{1}{m!} p_{m,\ast} \Omega_{g,n+m} \bigl(
				v_1 \otimes \cdots \otimes v_n \otimes T(\psi_{n+1}) \otimes \cdots \otimes T(\psi_{n+m})
			\bigr).
	\end{equation}
	Here $p_{m} \colon \overline{\mathcal{M}}_{g,n+m} \to \overline{\mathcal{M}}_{g,n}$ is the map forgetting the last $m$ marked points. The vanishing of $T$ in degree $0$ and $1$ ensures that the above sum is actually finite, for dimension reasons.
\end{definition}

\begin{proposition}
	The data $T\Omega = (T\Omega_{g,n})_{2g-2+n > 0}$ form a CohFT on $(V,\eta)$. Moreover, translations form an abelian group action on CohFTs.
\end{proposition}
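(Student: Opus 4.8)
The plan is to check the two CohFT axioms that are not immediate from the construction---$\mf{S}_n$-invariance and the gluing relations---and then to establish the group law; finiteness of the defining sum is already guaranteed by the vanishing of $T$ in degrees $0$ and $1$. The $\mf{S}_n$-invariance of $T\Omega_{g,n}$ in its first $n$ arguments is the easy part: the forgetful map $p_m$ is equivariant under permutations of the first $n$ points, $\Omega_{g,n+m}$ is $\mf{S}_{n+m}$-invariant by hypothesis, and since the $m$ appended insertions $T(\psi_{n+j})$ are all equal, permuting them is compensated by the factor $\tfrac{1}{m!}$.

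For the gluing axiom I would fix one of the gluing maps, say $r_{h,I}$, and compute $r_{h,I}^\ast T\Omega_{g,n}$ by commuting the forgetful pushforward past the gluing pullback. The geometric input is the standard compatibility of forgetful pushforwards with boundary restriction: the preimage under $p_m$ of a boundary stratum decomposes, with the appropriate combinatorial multiplicities, into the boundary strata of $\overline{\mathcal{M}}_{g,n+m}$ obtained by distributing the $m$ forgotten points across the two components, so that base change gives $r_{h,I}^\ast p_{m,\ast} = \sum_{m_1+m_2=m}\binom{m}{m_1}\,(p_{m_1,\ast}\otimes p_{m_2,\ast})\,(r')^\ast$, where $r'$ glues into $\overline{\mathcal{M}}_{g,n+m}$. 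Crucially, gluing preserves the cotangent line at each marked point, so $(r')^\ast \psi_{n+j}$ is exactly the intrinsic $\psi$-class on the component carrying that point and $(r')^\ast T(\psi_{n+j}) = T(\psi_{n+j})$ with no correction; applying the gluing axiom for $\Omega$ puts $\eta^{\dag}$ at the new node. Since $\binom{m}{m_1}/m! = 1/(m_1!\,m_2!)$, the double sum factorises as $(T\Omega_{h,|I|+1}\otimes T\Omega_{g-h,|J|+1})$ contracted against $\eta^{\dag}$, which is precisely the right-hand side of axiom (ii); the non-separating case via $q$ is identical, with the two half-edges of the single new node carrying $\eta^{\dag}$.

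Finally, for the group law I would prove $T_1(T_2\Omega) = (T_1+T_2)\Omega$, from which commutativity follows since $T_1+T_2 = T_2+T_1$, and $T=0$ clearly acts as the identity; thus the action is that of the abelian group $(u^2 V\bbraket{u},+)$. Expanding both layers, $T_1(T_2\Omega)_{g,n}$ becomes a double sum over forgetting $m$ points carrying $T_1(\psi)$ and a further $m'$ points carrying $T_2(\psi)$, realised by the two successive forgetful maps $p_m$ and an inner forgetful map $\pi_{m'}$. The one delicate point is that the $T_1$-insertions live on the intermediate space $\overline{\mathcal{M}}_{g,n+m}$, so when the two forgetful maps are combined into the single map $\overline{\mathcal{M}}_{g,n+m+m'}\to\overline{\mathcal{M}}_{g,n}$ I must compare $\psi$-classes via $\pi_{m'}^\ast\psi_{n+j} = \psi_{n+j} - (\text{boundary})$ and use the projection formula; here the divisibility of $T_1$ by $u^2$, together with the relations $\psi_{n+j}\cdot D = 0$ on the divisors $D$ where point $n+j$ collides with a forgotten point, is exactly what forces the boundary corrections to drop out, so that every forgotten point may be taken to carry its intrinsic top-space $\psi$-class. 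Once this is in place, the forgotten points are interchangeable under the single pushforward, and summing over which of the $k=m+m'$ points are of $T_1$- versus $T_2$-type, with weight $\tfrac{1}{m!\,m'!} = \tfrac{1}{k!}\binom{k}{m}$, turns the combined insertion into $\bigl(T_1(\psi)+T_2(\psi)\bigr)^{\otimes k} = \bigl((T_1+T_2)(\psi)\bigr)^{\otimes k}$ with overall weight $\tfrac{1}{k!}$, recovering $(T_1+T_2)\Omega_{g,n}$. I expect this $\psi$-class comparison under the combined forgetful map to be the main obstacle, as it is where the degree-$\ge 2$ vanishing of $T$ does its real work; the surrounding combinatorics are then routine.
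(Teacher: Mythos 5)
The paper itself contains no proof of this proposition: it sits in the background section on the Givental action, which is explicitly recalled from the literature (the section defers details to \cite{PPZ15}), so there is no in-paper argument to measure yours against. Taken on its own terms, your proposal follows the standard route and its skeleton is sound: the $\mf{S}_n$-invariance argument, the base-change decomposition $r_{h,I}^{\ast}p_{m,\ast} = \sum_{m_1+m_2=m}\binom{m}{m_1}(p_{m_1,\ast}\otimes p_{m_2,\ast})(r')^{\ast}$ with the bookkeeping $\binom{m}{m_1}/m! = 1/(m_1!\,m_2!)$, the fact that boundary gluing maps pull back $\psi$-classes at marked points to the intrinsic $\psi$-classes of the factors, and the reduction of the group law $T_1(T_2\Omega) = (T_1+T_2)\Omega$ to exchanging pulled-back for intrinsic $\psi$-classes at the forgotten points are precisely the ingredients of the usual proof.

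The one step that needs repair is the one you yourself flag as the main obstacle. The boundary corrections in the group law are not killed by ``$T_1$ divisible by $u^2$ together with $\psi_{n+j}\cdot D = 0$'' alone. For a single forgetful map, writing $\pi^{\ast}\psi_{n+j} = \psi_{n+j} - D$ and using $\psi_{n+j}\cdot D = 0$, the discrepancy between $(\pi^{\ast}\psi_{n+j})^{d+1}$ and $\psi_{n+j}^{d+1}$ is $(-D)^{d+1}$, which is a nonzero class supported on $D$; and on divisors where $n+j$ bubbles off with \emph{several} forgotten points, the relation $\psi_{n+j}\cdot D = 0$ fails outright. What actually kills these terms is the rest of the integrand: $D$ parametrises curves where inner ($T_2$-type) forgotten points sit on a rational bubble contracted by the forgetful map, and restricting to $D$ the $T_2(\psi)$-insertions at those points become classes on that bubble --- zero on a three-pointed bubble because $T_2(0)=0$, and of total degree exceeding the bubble's dimension on larger bubbles because each insertion contributes $\psi^{k}$ with $k\ge 2$. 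So the vanishing is a restriction-to-the-boundary argument (projection formula, together with the gluing axiom for $\Omega$ and a dimension count if one wants to phrase it that way), driven by the low-degree vanishing of the \emph{inner} translation rather than by the divisibility of the outer one. With this substituted for your stated mechanism, the argument closes and agrees with the standard proof in the Givental--Teleman literature.
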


In general, if we act by translation on a unital TFT on $(V,\eta,\mathbb{1})$, we can express the result as multiplication by an exponential of $\kappa$-classes.

\begin{lemma}\label{lem:translation:hat}
	Let $\varpi$ be a unital TFT on $(V,\eta,\mathbb{1})$. Let $T$ be a translation, and define $\hat{T}(u) = \sum_{m \ge 1} \hat{T}_{m} u^{m} \in u V\bbraket{u}$ by setting
	\begin{equation}
		T(u) \eqqcolon u\left( \mathbb{1} - \exp\bigl(- \hat{T}(u)\bigr) \right).
	\end{equation}
	Here $\exp(-\hat{T}(u)) = \sum_{k \ge 0} \frac{(-1)^k}{k!} \hat{T}(u)^{\star k}$. Then the following relation holds on $H^{\bullet}(\overline{\mathcal{M}}_{g,n}) \otimes (V^{\ast})^{\otimes n}$:
	\begin{equation}
		T\varpi_{g,n} = \varpi_{g,n} \cdot \exp\Biggl( \sum_{m \ge 1} \hat{T}_{m} \kappa_{m} \Biggr).
	\end{equation}
\end{lemma}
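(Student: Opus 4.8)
The plan is to unwind the definition of the translation action, use that $\varpi$ sits in degree $0$ to reduce everything to pushforwards of monomials in the $\psi$-classes of the forgotten points, and then match the resulting sum over set partitions with the $\star$-exponential of $\kappa$-classes. First I would expand $T(\psi)=\sum_{d\ge1}T_d\psi^{d+1}$ and pull the $\psi$-powers out of each (multilinear, degree-$0$) evaluation of $\varpi$, so that
\begin{equation*}
	T\varpi_{g,n}(v_1\otimes\cdots\otimes v_n)
	=
	\sum_{m\ge0}\frac{1}{m!}\sum_{d_1,\dots,d_m\ge1}
	p_{m,*}\Bigl(\textstyle\prod_{j=1}^m\psi_{n+j}^{d_j+1}\Bigr)\,
	\varpi_{g,n+m}(v_1\otimes\cdots\otimes v_n\otimes T_{d_1}\otimes\cdots\otimes T_{d_m}).
\end{equation*}
The two ingredients are then the projection formula for $\psi$-monomials under $p_m\colon\M_{g,n+m}\to\M_{g,n}$ and the Frobenius structure of the TFT.

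For the pushforward I would use the standard identity
\begin{equation*}
	p_{m,*}\Bigl(\prod_{j=1}^m\psi_{n+j}^{a_j+1}\Bigr)
	=
	\sum_{P\vdash\bbraket{m}}\prod_{B\in P}(|B|-1)!\,\kappa_{\sum_{j\in B}a_j},
\end{equation*}
the sum running over set partitions $P$ of $\bbraket{m}$. To handle the Frobenius evaluations I would diagonalise: passing to the semisimple idempotent basis $(e_i)$ with $e_i\star e_j=\delta_{ij}e_i$ and $\eta(e_i,e_j)=\delta_{ij}\Delta_i^{-1}$, the gluing and unit axioms force $\varpi_{g,N}(e_{i_1}\otimes\cdots\otimes e_{i_N})=\delta_{i_1\cdots i_N}\Delta_i^{g-1}$, independently of $N$. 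Writing $v_k=\sum_i v_k^i e_i$ and $T_d=\sum_i T_d^i e_i$, the whole expression factorises over idempotents as
\begin{equation*}
	T\varpi_{g,n}(v_1\otimes\cdots\otimes v_n)
	=
	\sum_i\Bigl(\prod_{k=1}^n v_k^i\Bigr)\Delta_i^{g-1}\,\Phi_i,
	\qquad
	\Phi_i=\sum_{m\ge0}\frac{1}{m!}\,p_{m,*}\Bigl(\prod_{j=1}^m T^i(\psi_{n+j})\Bigr),
\end{equation*}
where $T^i(u)=\sum_{d\ge1}T^i_d u^{d+1}$ is the $i$-th component of $T$.

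Next I would evaluate $\Phi_i$ by the exponential formula for set partitions: the block sum exponentiates, a block of size $b$ contributing $(b-1)!\sum_{d_1,\dots,d_b\ge1}T^i_{d_1}\cdots T^i_{d_b}\kappa_{d_1+\cdots+d_b}$, so that
\begin{equation*}
	\Phi_i
	=
	\exp\Bigl(\sum_{b\ge1}\frac{1}{b}\sum_{d_1,\dots,d_b\ge1}T^i_{d_1}\cdots T^i_{d_b}\,\kappa_{d_1+\cdots+d_b}\Bigr)
	=
	\exp\Bigl(\sum_{a\ge1}\hat T^i_a\,\kappa_a\Bigr).
\end{equation*}
The last equality is a generating-function identification: setting $\tau_i(u)=\sum_{d\ge1}T^i_d u^d=T^i(u)/u$, the exponent is $\sum_a\kappa_a[u^a]\bigl(-\log(1-\tau_i(u))\bigr)$, while the defining relation $T(u)=u\bigl(\mathbb{1}-\exp_\star(-\hat T(u))\bigr)$ reads componentwise $\tau_i(u)=1-e^{-\hat T^i(u)}$, i.e. $\hat T^i(u)=-\log(1-\tau_i(u))$. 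Reassembling, $T\varpi_{g,n}$ is $\varpi_{g,n}$ with each idempotent component multiplied by $\exp(\sum_a\hat T^i_a\kappa_a)$, which is precisely the $\star$-multiplication of $\varpi_{g,n}$ by $\exp_\star(\sum_m\hat T_m\kappa_m)$, as claimed.

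The main obstacle I anticipate is purely bookkeeping: getting the set-partition pushforward formula with the correct factor $(|B|-1)!$ and $\kappa$-index, and then matching the cumulant weights produced by the exponential formula to the coefficients of $-\log(1-\tau)$, i.e. to $\hat T$. A secondary point is that the lemma is stated for an arbitrary unital TFT, whereas the diagonalisation presupposes semisimplicity; since both sides are polynomial in the structure constants and truncate by dimension, this gap is closed by the Zariski-density of semisimple Frobenius algebras among unital ones, or alternatively by reading the same computation geometrically, where a cluster of colliding forgotten points produces a genus-$0$ vertex whose legs are combined by the quantum product.
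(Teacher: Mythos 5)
The paper states this lemma without proof (it is quoted as a standard fact from the Givental--Teleman circle of ideas), so there is no in-text argument to compare against; your proposal has to stand on its own. Its computational core does stand: the multilinear expansion of the translation action over the degree-$0$ TFT, the set-partition pushforward formula $p_{m,*}\bigl(\prod_{j}\psi_{n+j}^{a_j+1}\bigr)=\sum_{P\vdash\bbraket{m}}\prod_{B\in P}(|B|-1)!\,\kappa_{\sum_{j\in B}a_j}$ (which applies here precisely because the translation vanishes in degrees $0$ and $1$, so all $a_j\ge 1$), the value $\delta_{i_1\cdots i_N}\Delta_i^{g-1}$ of a semisimple TFT on idempotents, the exponential formula over set partitions, and the matching of the cumulant generating series with $\hat T^i(u)=-\log\bigl(1-T^i(u)/u\bigr)$ are all correct, and in the semisimple case they assemble into a complete proof.

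The genuine gap is your first justification for removing the semisimplicity hypothesis. Zariski density of semisimple Frobenius algebras among all commutative Frobenius algebras is false: a commutative Frobenius structure is an Artinian Gorenstein algebra with a choice of trace, and there exist non-smoothable Gorenstein algebras (Iarrobino--Emsalem; the classical examples have length $14$), i.e.\ algebras that are not flat limits of semisimple ones. The variety of such structures is not irreducible, so openness of the semisimple locus does not give density, and a polynomial identity verified on semisimple points need not propagate to every component. Fortunately your parenthetical second route is the right one, and it needs neither collision geometry nor semisimplicity: the gluing axiom implies that a unital TFT computes correlators through the Frobenius algebra alone, $\varpi_{g,N}(w_1\otimes\cdots\otimes w_N)=\eta\bigl(w_1\star\cdots\star w_N\star H^{\star g},\mathbb{1}\bigr)$, where $H$ is the handle element obtained by $\star$-multiplying the two legs of $\eta^{\dag}$. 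Hence the inserted vectors $T_{d_1},\dots,T_{d_m}$ enter only through their $\star$-product, and your entire set-partition/exponential-formula computation runs verbatim in the commutative ring $\bigl(V\otimes H^{\bullet}(\overline{\mathcal{M}}_{g,n}),\star\bigr)$: the blocks produce the exponent $\sum_{a\ge1}\kappa_a\,[u^a]\sum_{b\ge 1}\tfrac{1}{b}\,\tau(u)^{\star b}$ with $\tau(u)=T(u)/u$, and the formal $\star$-logarithm/exponential inversion turns this into $\sum_{a\ge 1}\hat T_a\kappa_a$, with no diagonalisation at all. I would replace the density claim by this argument; for the paper's own purposes the point is in any case harmless, since the lemma is only ever applied to the manifestly diagonal, hence semisimple, TFT built from the spectral curve.
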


{\sc The flat case.}
In general, if we start from a CohFT with unit on $(V,\eta,\mathbb{1})$, acting by an $R$-matrix or by translation does not preserve flatness. However, there is a specific combination of the two which does achieve this.

\begin{proposition}\label{prop:TL:TR}
	Let $\Omega$ be a CohFT on $(V,\eta,\mathbb{1})$ with flat unit. Let $R$ be an $R$-matrix, and consider the $V$-valued power series
	\begin{equation}\label{eqn:TL:TR}
		T_{\textup{L}}(u) \coloneqq u \bigl( R^{-1}(u)\mathbb{1} - \mathbb{1} \bigr),
		\qquad
		T_{\textup{R}}(u) \coloneqq	u \bigl( \mathbb{1} - R^{-1}(u)\mathbb{1} \bigr).
	\end{equation}
	Then $T_{\textup{L}}R\Omega$ and $RT_{\textup{R}}\Omega$ coincide, and form a CohFT with flat unit.
\end{proposition}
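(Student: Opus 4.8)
The plan is to compare the two sides through their expansions over stable graphs and then to verify the unit axioms separately. First observe that since $R^{-1}(u) = \Id - R_1 u + \cdots$ we have $R^{-1}(u)\mathbb{1} - \mathbb{1} = O(u)$, so both $T_{\textup{L}}$ and $T_{\textup{R}}$ lie in $u^2 V\bbraket{u}$ and are genuine translations; moreover $T_{\textup{R}}(u) = -T_{\textup{L}}(u)$ as power series, a relation that will matter for the bookkeeping (the two are introduced separately only because they are applied on opposite sides of the $R$-action). I would then write out both collections as sums over stable graphs in which every vertex carries a value of $\Omega$, every edge carries $E$, and the $n$ external leaves carry $R^{-1}(\psi)$.

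For the equality $T_{\textup{L}}R\Omega = RT_{\textup{R}}\Omega$, the two constructions differ only in their treatment of the ``translation legs''. In $RT_{\textup{R}}\Omega$ these legs are decorated by $T_{\textup{R}}(\psi)$ and forgotten at the level of each individual vertex, inside the definition of $T_{\textup{R}}\Omega$ via $p_{m,\ast}$; in $T_{\textup{L}}R\Omega$ they are external leaves of the $R$-graph, hence carry the extra factor $R^{-1}(\psi)$, giving $R^{-1}(\psi)T_{\textup{L}}(\psi)$, and are forgotten at the level of the full surface $\M_{g,n+m}\to\M_{g,n}$. The reconciliation rests on two geometric inputs: that a $\psi$-class at a marked point is unchanged by the gluing maps at the other nodes, so that the full-level forgetful pushforward localizes onto the single vertex supporting the leg; and a treatment of the destabilizing contributions, where forgetting a leg collapses a rational vertex and effectively transports the leg across an adjacent edge. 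The string and dilaton equations govern the resulting $\psi$-class pushforwards, while the edge bivector $E$ together with the symplectic identity $R(u)R^{\dag}(-u) = \Id$ is exactly what makes the transported contributions reassemble. The net effect is to convert the full-level decoration $R^{-1}(\psi)T_{\textup{L}}(\psi)$ into the vertex-level decoration $T_{\textup{R}}(\psi)$, with the sign $T_{\textup{R}} = -T_{\textup{L}}$ absorbed by the destabilization terms.

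The main obstacle is precisely this destabilization analysis: commuting the forgetful pushforward past the gluing maps (and past the $R^{-1}$ placed on the translation leaves) produces boundary correction terms, and one must check that every such term is matched by a term arising from an edge contraction in the companion graph sum, the match being controlled by the symplectic condition. This is the technical heart of the argument, and I would organize it as a term-by-term cancellation after grouping graphs according to the vertex or edge to which each translation leg is attached.

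Finally, to see that the common theory $\Omega' \coloneqq T_{\textup{L}}R\Omega$ carries a flat unit, I would verify the two unit axioms directly. The $R$-action alone destroys flatness: inserting $\mathbb{1}$ into a leaf of $R\Omega$ replaces $R^{-1}(\psi)\mathbb{1}$ by its value, and the discrepancy with the forgetful pullback $p^{\ast}$ is measured precisely by $R^{-1}(\psi)\mathbb{1} - \mathbb{1}$. The translations $T_{\textup{L}}$ and $T_{\textup{R}}$ are built from exactly this quantity, so applying the string equation on $\M_{g,n+1}\to\M_{g,n}$ and telescoping against the inserted translation legs cancels the discrepancy and recovers $p^{\ast}\Omega'_{g,n} = \Omega'_{g,n+1}(\,\cdot\,\otimes\mathbb{1})$; the normalization $\Omega'_{0,3}(v_1\otimes v_2\otimes\mathbb{1}) = \eta(v_1,v_2)$ then follows from the corresponding axiom for $\Omega$ together with $R(0) = \Id$. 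At the level of the underlying TFT this is consistent with \cref{lem:translation:hat}, which identifies the translation action with multiplication by an exponential of $\kappa$-classes and thereby pins down the degree-zero part.
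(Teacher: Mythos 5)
The paper itself offers no proof of this proposition: \cref{sec:TR:CohFTs} is recalled background on the Givental--Teleman action, and this statement is quoted from the literature (see \cite{PPZ15,Sha09}), so there is no in-paper argument to compare yours against. Judged on its own terms, your outline names the correct ingredients: both series are honest translations since $R^{-1}(u)\mathbb{1} - \mathbb{1} = O(u)$; $\psi$-classes at markings pull back to $\psi$-classes under the gluing maps $\xi_{\Gamma}$, so translation legs localize onto vertices; forgetful pushforwards produce boundary corrections and destabilizing rational vertices; and the symplectic condition is what makes the corrections reassemble. This is indeed the architecture of the standard proof.

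However, as written the proposal stops exactly where the proof begins. The point that makes the statement nontrivial is that $T_{\textup{L}} = -T_{\textup{R}}$ and yet $T_{\textup{L}}R\Omega = RT_{\textup{R}}\Omega$: on a translation leg of $T_{\textup{L}}R\Omega$ the decoration is
\begin{equation*}
	R^{-1}(\psi)\,T_{\textup{L}}(\psi)
	=
	\psi \bigl( R^{-1}(\psi)^{2}\mathbb{1} - R^{-1}(\psi)\mathbb{1} \bigr),
\end{equation*}
which is not $T_{\textup{R}}(\psi) = \psi\bigl(\mathbb{1} - R^{-1}(\psi)\mathbb{1}\bigr)$, nor equal to it up to sign. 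The discrepancy must be compensated exactly by (i) the comparison $\psi_i = p^{\ast}\psi_i + D_i$ on the retained half-edges when the full-level forgetful pushforward is traded for vertex-level ones, and (ii) the contributions of rational vertices that destabilize when a translation leg is forgotten, which regroup into edge terms through $E$ and the identity $R(u)R^{\dag}(-u) = \Id$. Your text asserts that this cancellation happens (``the sign \dots absorbed by the destabilization terms'') but never exhibits it, and this graph-by-graph bookkeeping is the entire content of the proposition; the flat-unit part has the same status, since the telescoping of the string-equation discrepancy against the translation legs is again asserted rather than carried out. So what you have is a correct plan rather than a proof: to complete it you would need to perform the term-by-term matching you describe, or simply cite the literature, as the paper does.
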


\begin{definition}
	Let $\Omega$ be a CohFT on $(V,\eta,\mathbb{1})$ with flat unit. Let $R$ be an $R$-matrix. We define the \emph{unit-preserving action} by $R$ as
	\begin{equation}
		R.\Omega \coloneqq T_{\textup{L}}R\Omega = RT_{\textup{R}}\Omega.
	\end{equation}
\end{definition}

\subsection{Identification of the two theories}
This section mostly follows \cite{LPSZ17}. Consider a local spectral curve $(\Sigma,x,y,B)$ with $r$ simple ramification points. Under some conditions (see \cite{Eyn11,DOSS14,DNOPS18}), we can represent the topological recursion correlators on a basis of auxiliary differentials, with coefficients given by intersection numbers on the moduli space of curves of a CohFT and $\psi$-classes.

\smallskip

To be more precise, fix some constants $C_1,\dots,C_r \in \C^{\times}$ and an additional constant $C \in \C^{\times}$. Choose local coordinates $\zeta_i$ on $U_i$ such that $\zeta_i(a_i) = 0$ and $x = (C_i\zeta_i)^2 + x(a_i)$. Consider the auxiliary functions $\xi_i \colon \Sigma \to \C$ and the associated differentials $d\xi_{k,i}$ defined as
\begin{equation}
	\xi_i(z)
	\coloneqq
	\int^z \left.\frac{B(\zeta_{i},\cdot)}{d\zeta_{i}}\right|_{\zeta_{i} = 0},
	\qquad
	d\xi_{k,i}(z)
	\coloneqq
	d\biggl( \left( - \frac{1}{\zeta_i} \frac{d}{d\zeta_i} \right)^{k} \xi_{i}(z) \biggr).
\end{equation}
We also set $\Delta_i \coloneqq \frac{dy}{d\zeta_i}(0)$ and $t_i \coloneqq -2C_i^2 C \Delta_i$. We define a unital, semisimple TFT on $V \coloneqq \C\braket{e_1,\dots,e_r}$ by setting
\begin{equation}
	\eta(e_i,e_j) \coloneqq \delta_{i,j},
	\qquad
	\mathbb{1}
	\coloneqq
	\sum_{i=1}^r t_i e_i,
	\qquad
	\varpi_{g,n}(e_{i_1} \otimes \cdots \otimes e_{i_n})
	\coloneqq
	\delta_{i_1,\ldots,i_n} t_i^{-2g+2-n}.
\end{equation}
Define the $R$-matrix $R \in \End(V)\bbraket{u}$ and the translation $\hat{T} \in V\bbraket{u}$ (given by \cref{lem:translation:hat}) by setting
\begin{align}
	\label{eqn:R:matrix:TR}
	R^{-1}(u)^j_i
	& \coloneqq
	- \sqrt{\frac{u}{2\pi}} \int_{\R}
 		d\xi_i(\zeta_j)
 		\,
 		e^{-\frac{1}{2u} \zeta_{j}^2}, \\
	\label{eqn:translation:TR}
 	\exp\bigl(-\hat{T}^i(u)\bigr)
 	& \coloneqq \frac{1}{\Delta_i \sqrt{2\pi u}} \int_{\R} dy(\zeta_i) \, e^{-\frac{1}{2u} \zeta_{i}^2}.
\end{align}
The equations are intended as equalities between formal power series in $u$, where on the {\sc rhs} we take the asymptotic expansion as $u \to 0$. Through the Givental action, we can then define a cohomological field theory
\begin{equation}
	\Omega_{g,n} \coloneqq RT\varpi_{g,n} \in H^{\bullet}(\overline{\mathcal{M}}_{g,n}) \otimes (V^{\ast})^{\otimes n}
\end{equation}
from the data $(\varpi,R,T)$, through a sum over stable graphs $\Gamma \in \mathcal{G}_{g,n}$ as explained in the previous section. The link with the topological recursion correlators is given by the following theorem.

\begin{theorem}[\cite{Eyn11,DOSS14}]\label{thm:Eyn:DOSS}
	Suppose we have a compact spectral curve $ (\Sigma, x,y,B)$. Then its topological recursion correlators are given by
	\begin{equation}
		\omega_{g,n}(z_1,\dots,z_n)
		=
		C^{2g-2+n}
		\sum_{i_1,\dots,i_n = 1}^r \int_{\overline{\mathcal{M}}_{g,n}}
			\Omega_{g,n}(e_{i_1} \otimes\cdots\otimes e_{i_n}) \prod_{j=1}^n \sum_{k_j \ge 0}
				\psi_j^{k_j} d\xi_{k_j,i_j}(z_j).
	\end{equation}
\end{theorem}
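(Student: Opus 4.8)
The plan is to compare the two graph expansions computing the two sides of the identity. Unfolding the recursion \eqref{eqn:TR} expresses $\omega_{g,n}$ as a finite iterated sum of residues at the ramification points, which reorganizes into Eynard's graph expansion: a sum over connected graphs of genus $g$ with $n$ external legs whose internal vertices are decorated by the points $a_1,\dots,a_r$, with the recursion kernel $K_i$ sitting at each vertex and the bidifferential $B$ along each edge. On the other side, $\Omega_{g,n} = RT\varpi_{g,n}$ is by construction a sum over stable graphs $\Gamma \in \mathcal{G}_{g,n}$ carrying $\varpi$ at vertices, $R^{-1}(\psi)$ at leaves, and $E(\psi_h,\psi_{h'})$ at edges. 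The goal is to set up a dictionary between these two families under which the local contributions agree after integration against $\psi$-classes.

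The first structural input is that, by \cref{rem:B:rescaling:local:TR}, the correlators depend only on the local spectral curve and have poles only at the ramification points; hence near each $a_i$ they can be expanded in the auxiliary differentials $d\xi_{k,i}$, and for fixed $(g,n)$ this expansion is finite because $\psi$-monomials vanish above degree $\dim \overline{\mathcal{M}}_{g,n}$. This makes the right-hand side a well-defined finite sum and fixes the basis in which the matching takes place. In the coordinate $\zeta_i$ with $x = (C_i\zeta_i)^2 + x(a_i)$ the Galois involution $\sigma_i$ becomes $\zeta_i \mapsto -\zeta_i$, so the odd part of $y$, controlled by $\Delta_i = \frac{dy}{d\zeta_i}(0)$, governs the denominator of the kernel $K_i$ in \eqref{eqn:TR:kernel}, while its numerator is assembled from the $\xi_i$.

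The technical core is a Laplace transform turning the residues at the $a_i$ into Gaussian integrals. Applying the asymptotic expansion as $u \to 0$ to the local expansion of $d\xi_i$ reproduces exactly the $R$-matrix entries \eqref{eqn:R:matrix:TR}, and the same principle applied to $dy$ yields the translation \eqref{eqn:translation:TR}; the purely local recursion at a single ramification point is the Airy/Kontsevich recursion, whose correlators are the $\psi$-intersection numbers, so each vertex reproduces the diagonal semisimple TFT $\varpi_{g,n}(e_{i_1}\otimes\cdots\otimes e_{i_n}) = \delta_{i_1,\ldots,i_n}\,t_i^{-2g+2-n}$ with $t_i = -2C_i^2 C\Delta_i$. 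Expanding $B$ simultaneously near two ramification points in the $d\xi_{k,i}$ basis produces the edge factor $E(\psi_h,\psi_{h'})$, and the double-pole normalization of $B$ together with the symplectic relation $R(u)R^{\dag}(-u) = \Id$ is precisely what makes this expansion consistent. With this dictionary the two graph sums coincide term by term; alternatively one checks that the right-hand side satisfies the recursion \eqref{eqn:TR} with the correct base data and concludes by uniqueness of the topological recursion solution. Throughout I would follow the streamlined presentation of \cite{LPSZ17} of the original arguments of \cite{Eyn11,DOSS14}.

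The main obstacle is exactly this Laplace-transform step: one must verify that converting the iterated residue calculus of topological recursion into the Gaussian integrals of the Givental formula produces an $R$-matrix that both satisfies the symplectic condition and agrees with \eqref{eqn:R:matrix:TR}, term by term in the asymptotic expansion. The subtlety is that the saddle-point analysis at each ramification point must be shown compatible with the global residue operations, and that the matching of the off-diagonal, edge-type contributions of $B$ is not disturbed by the behaviour of $B$ away from the diagonal. Once this compatibility is established, the remaining combinatorics identifying Eynard graphs with stable graphs is routine.
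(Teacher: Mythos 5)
The paper does not prove this theorem at all: it is imported verbatim from \cite{Eyn11,DOSS14}, with the surrounding setup (auxiliary differentials, $R$-matrix, translation, TFT) taken from the streamlined presentation in \cite{LPSZ17}. Your sketch correctly reproduces the strategy of exactly those references — matching Eynard's graph expansion of the recursion against the Givental stable-graph expansion via the Laplace transform at the ramification points — so there is nothing to compare it against within the paper itself; it agrees with the approach of the sources the paper cites, and you even name \cite{LPSZ17} as the presentation you would follow. The one caveat is that your text is a plan rather than a proof: the step you correctly flag as the main obstacle (verifying that the residue calculus converts into the Gaussian/Laplace-transform form of the $R$-matrix \eqref{eqn:R:matrix:TR}, including the symplectic condition and the off-diagonal edge contributions) is precisely the technical content of \cite{Eyn11,DOSS14}, and deferring it means the proposal, like the paper, ultimately rests on those works.
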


\begin{remark}
	The compatibility between the translation and the $R$-matrix given by \cref{prop:TL:TR} is equivalent to the following condition (cf. \cite[equation~(17)]{LPSZ17}):
	\begin{equation}\label{eqn:y:B}
		t_j \exp\bigl(- \hat{T}^j(u)\bigr) = \sum_{i=1}^r R^{-1}(u)^j_i \, t_i \,.
	\end{equation}
	If such equation (which can be seen as a compatibility condition between $y$ and $B$) holds, the resulting CohFT coincides with $R.\varpi$, which is flat.
\end{remark}

\section{Preliminaries on spin algebra}
\label{sec:spin:algebra}
\subsection{Spin representations}

It is a classical fact that Hurwitz numbers are related to representation theory of the symmetric group via the monodromy representation. The analogous result in the context of spin Hurwitz numbers can be expressed in terms of representations of the spin-symmetric group. Because of this, we recall here some facts about spin representations, the Sergeev algebra, and supersymmetric functions. It contains no new results, but as the theory of spin representations is not as well-known as that of symmetric group representations, we felt it useful to include it here. This exposition is based on \cite{Joz88,Joz00,Iva04,EOP08,Gun16,McD98}, but beware that many definitions may vary in normalisation between these texts. 

\smallskip

By a result of Schur, $ H^2(\mf{S}_d; \C^{\times}) = \Z/2\Z$ for $d > 3$, which means there is one non-trivial central extension of $\mf{S}_d$, called the \emph{spin-symmetric group}
\begin{equation}
	1 \to \Z/2\Z \to \tilde{\mf{S}}_d \to \mf{S}_d \to 1.
\end{equation}
Explicitly, the spin-symmetric group can be presented as follows.
\begin{equation}
	\tilde{\mf{S}}_d
	=
	\bigl\< t_1, \dotsc, t_{d-1}, \epsilon \bigm|  \epsilon^2 = 1, t_j^2 = \epsilon, (t_jt_{j+1})^3 = \epsilon, (t_jt_k)^2 = \epsilon \text{ for } |j-k|>1 \bigr\>\,.
\end{equation}
It has a $\Z/2\Z$ grading given by $\deg \epsilon = 0$, $\deg t_j = 1$. In order to fix notation, we also give a presentation of the symmetric group:
\begin{equation}
	\mf{S}_d
	=
	\bigl\< \sigma_1, \dotsc, \sigma_{d-1} \bigm| \sigma_j^2 = 1, (\sigma_j\sigma_{j+1})^3 = 1, (\sigma_j\sigma_k)^2 = 1 \text{ for } |j-k| >1 \bigr\>\,.
\end{equation}
The map $ \tilde{\mf{S}}_d \to \mf{S}_d$ is then given by $ \epsilon \mapsto 1, t_j \mapsto \sigma_j$.

\smallskip

The representations of $\tilde{\mf{S}}_d$ that do not factor through $\mf{S}_d$ are called \emph{spin representations}. First we will recall the basics of their theory.

\begin{lemma}
	Spin representations are exactly the representations of the \emph{twisted group algebra}
	\begin{equation}
	\mc{S}_d \coloneqq \C[\tilde{\mf{S}}_d]/(\epsilon +1)\,,
	\end{equation}
	where $\epsilon $ is the added central element. It inherits a $\Z/2\Z$ grading, and hence is a superalgebra.
\end{lemma}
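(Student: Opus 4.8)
The plan is to exploit that $\epsilon$ is a central element of order two and to split the group algebra $\C[\tilde{\mf{S}}_d]$ accordingly. Since $\epsilon^2 = 1$, the elements $e_{\pm} \coloneqq \frac{1}{2}(1 \pm \epsilon)$ are orthogonal central idempotents with $e_+ + e_- = 1$, so the algebra decomposes as a product of two-sided ideals $\C[\tilde{\mf{S}}_d] = \C[\tilde{\mf{S}}_d]\,e_+ \times \C[\tilde{\mf{S}}_d]\,e_-$, on the first factor of which $\epsilon$ acts as $+1$ and on the second as $-1$. I would then identify $\C[\tilde{\mf{S}}_d]\,e_+ \cong \C[\tilde{\mf{S}}_d]/(\epsilon - 1) \cong \C[\mf{S}_d]$, the ordinary group algebra in which $\epsilon$ is sent to $1$, and $\C[\tilde{\mf{S}}_d]\,e_- \cong \C[\tilde{\mf{S}}_d]/(\epsilon + 1) = \mc{S}_d$.

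Next I would match this block decomposition with the definition of spin representations. For an irreducible representation $\rho$, Schur's lemma forces the central element $\epsilon$ to act by a scalar, which squares to $\Id$ and is therefore $\pm 1$. If $\rho(\epsilon) = +1$ then $\epsilon$ acts trivially and $\rho$ factors through $\tilde{\mf{S}}_d/\langle \epsilon \rangle = \mf{S}_d$; if $\rho(\epsilon) = -1$ then $\rho$ does not factor through $\mf{S}_d$ and is precisely a module over $\mc{S}_d = \C[\tilde{\mf{S}}_d]/(\epsilon+1)$. More generally, any $\tilde{\mf{S}}_d$-module splits along the idempotents $e_{\pm}$ into a part factoring through $\mf{S}_d$ and a genuine spin part on which $\epsilon$ acts as $-1$, the latter being exactly an $\mc{S}_d$-module. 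This identifies the spin representations with the $\mc{S}_d$-modules.

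For the grading statement I would observe that the assignment $\deg \epsilon = 0$, $\deg t_j = 1$ defines a group homomorphism $\tilde{\mf{S}}_d \to \Z/2\Z$: each defining relation ($\epsilon^2 = 1$, $t_j^2 = \epsilon$, $(t_jt_{j+1})^3 = \epsilon$, $(t_jt_k)^2 = \epsilon$) involves an even number of the generators $t_j$ and is hence compatible with this degree. This endows $\C[\tilde{\mf{S}}_d]$ with a $\Z/2\Z$-grading making it a superalgebra. Since the generator $\epsilon + 1$ of the defining ideal is homogeneous of degree $0$ and central, the ideal $(\epsilon + 1)$ is a graded two-sided ideal, so the quotient $\mc{S}_d$ inherits the grading and is itself a superalgebra.

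The main obstacle is not any single computation but the correct formulation: a spin representation is defined as one not factoring through $\mf{S}_d$, and for reducible representations this is not literally the same as being an $\mc{S}_d$-module, since a representation mixing both eigenvalues of $\epsilon$ fails to factor through $\mf{S}_d$ yet is not a pure $\mc{S}_d$-module. The honest statement is the block decomposition above, restricted to irreducibles or phrased at the level of categories; the care lies in pinning down this convention rather than in the algebra itself.
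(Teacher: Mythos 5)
Your proof is correct, and it is the standard argument: the paper itself states this lemma without proof, as classical background drawn from the references (Schur, J\'ozefiak, et al.), so there is no in-paper proof to compare against. Your route --- splitting $\C[\tilde{\mf{S}}_d]$ along the central idempotents $\tfrac{1}{2}(1\pm\epsilon)$ into $\C[\mf{S}_d]\times\mc{S}_d$, invoking Schur's lemma to pin $\epsilon$ to $\pm 1$ on irreducibles, and observing that the degree assignment kills every defining relation so that $(\epsilon+1)$ is a homogeneous central ideal --- is exactly the argument the authors are implicitly relying on.

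Your closing caveat is also well placed: as literally stated, the lemma identifies ``representations not factoring through $\mf{S}_d$'' with $\mc{S}_d$-modules, which fails for a reducible representation mixing the two eigenvalues of $\epsilon$. The paper handles this the same way you suggest, by immediately restricting to irreducibles in the sentence following the lemma (``irreducible modules of the twisted group algebra correspond exactly to irreducible representations of the spin-symmetric group where $\epsilon$ acts as $-1$''). So your proposal not only proves the intended statement but also flags, and resolves, the same imprecision the authors quietly correct.
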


Note that irreducible modules of the twisted group algebra correspond exactly to irreducible representations of the spin-symmetric group where $ \epsilon$ acts as $ -1$.

%
%

\smallskip

For many explicit computations, it is easier not to work with the twisted symmetric group algebra, but with the Sergeev algebra, which we now introduce. 
Before doing so, let us introduce the Clifford algebra. There is a central extension of $ (\Z/2\Z)^d$ called the \emph{Clifford group}, given by
\begin{equation}
	\mathsf{Cl}_d = \bigl\< \xi_1, \dotsc, \xi_d, \epsilon \bigm| \xi_i^2 = \epsilon^2 = 1,  \xi_i \xi_j  = \epsilon \xi_j \xi_i \textup{ for } i \neq j \bigr\>.
\end{equation}
The Clifford group has a $\Z/2\Z$ grading given by $\deg \epsilon = 0$, $\deg \xi_j = 1$.

\begin{definition}\label{CliffAlg}
	The \emph{Clifford algebra} is the twisted group algebra of the Clifford group:
	\begin{equation}
		\Cl_d \coloneqq \C [\mathsf{Cl}_d]/(\epsilon + 1) \, .
	\end{equation}
	It inherits a $\Z/2\Z$ grading from its group, and hence is a superalgebra.
\end{definition}

\begin{proposition}[{\cite[proposition~3.1]{Joz88}}]
	All Clifford superalgebras are simple.
\end{proposition}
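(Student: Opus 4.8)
The plan is to prove super-simplicity directly, by diagonalising the conjugation action of the generators on the monomial basis of $\Cl_d$ and using the resulting spectral projections to extract an invertible element from any nonzero graded ideal. First I would fix the basis $\xi_S = \xi_{i_1}\cdots\xi_{i_k}$ indexed by subsets $S = \{i_1 < \cdots < i_k\} \subseteq \bbraket{d}$, so that $\dim \Cl_d = 2^d$. Using the relations $\xi_i^2 = 1$ and $\xi_i\xi_j = -\xi_j\xi_i$ for $i\neq j$ (recall $\epsilon$ acts as $-1$), a short computation shows that conjugation $c_i(x) \coloneqq \xi_i\, x\, \xi_i^{-1} = \xi_i\, x\, \xi_i$ acts diagonally on this basis, namely
\begin{equation}
	c_i(\xi_S) = (-1)^{\lvert S\setminus\{i\}\rvert}\,\xi_S = (-1)^{\lvert S\rvert + [i\in S]}\,\xi_S .
\end{equation}
Thus the $\xi_S$ are simultaneous eigenvectors of the commuting involutions $c_1,\dots,c_d$, with sign pattern $\eta_i(S) \coloneqq (-1)^{\lvert S\rvert + [i\in S]} \in \{\pm 1\}$.

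The crux of the argument is the claim that, \emph{within each homogeneous component}, these joint eigenspaces are one-dimensional. Comparing patterns, $\eta_i(S) = \eta_i(S')$ for all $i$ forces $[i\in S]-[i\in S']$ to be a constant mod $2$ independent of $i$; this leaves only two possibilities, either $S = S'$, or $d$ is odd and $S' = S^{c}$ is the complement. In the second case $\lvert S\rvert$ and $\lvert S^{c}\rvert = d - \lvert S\rvert$ have opposite parities, so $\xi_S$ and $\xi_{S^{c}}$ sit in opposite graded pieces. Hence two distinct monomials of the same parity never share an eigenvalue pattern. The main obstacle—and the reason the statement is about \emph{super}-algebras—lives precisely here: for $d$ odd the ungraded algebra $\Cl_d$ is not simple, since $\xi_1\cdots\xi_d$ is central; but that element is odd, so it does not enter the even super-centre, and the parity shift under complementation $S \mapsto S^{c}$ is exactly what salvages super-simplicity.

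Finally I would run the ideal argument. Let $I$ be a nonzero graded two-sided ideal. The operators $P_i^{\pm} \coloneqq \tfrac12(\mathrm{id} \pm c_i)$ are the complementary spectral projections of the involution $c_i$; being built from left and right multiplication by $\xi_i$, they preserve $I$, and they commute with one another. Since $I$ is graded and nonzero, multiplying a nonzero odd element by $\xi_1$ if necessary shows that $I$ contains a nonzero even element $a = \sum_{\lvert S\rvert \textup{ even}} c_S\, \xi_S$. Choosing $S_0$ with $c_{S_0} \neq 0$ and applying the composite projection $\prod_{i=1}^d P_i^{\eta_i(S_0)}$ onto the joint eigenspace of $\xi_{S_0}$, the one-dimensionality established above yields $c_{S_0}\xi_{S_0} \in I$, hence $\xi_{S_0} \in I$. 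As $\xi_{S_0}$ is a product of the invertible generators $\xi_i$, it is invertible, so $I = \Cl_d$, proving that $\Cl_d$ is simple as a superalgebra. The only genuinely delicate step is the eigenspace count of the second paragraph; everything else is bookkeeping with the monomial basis.
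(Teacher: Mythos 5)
Your proof is correct. The eigenvalue formula $c_i(\xi_S) = (-1)^{|S|+[i\in S]}\xi_S$, the observation that two monomials share a full sign pattern only if $S'=S$ or ($d$ odd and $S'=S^c$, which flips parity), and the projection argument extracting an invertible monomial from any nonzero graded ideal all check out. Note, however, that the paper itself contains no proof of this proposition: it is imported directly from J\'ozefiak \cite[proposition~3.1]{Joz88}, so there is no in-paper argument to compare with. The standard treatments, including the cited source, argue structurally: one identifies $\Cl_1\cong Q(1)$ and $\Cl_2\cong M(1|1)$, uses the graded tensor factorisation $\Cl_{m+n}\cong \Cl_m\,\hat{\otimes}\,\Cl_n$ together with the fact that such products of simple superalgebras are simple, and thereby exhibits $\Cl_d$ as a matrix superalgebra for $d$ even and a queer superalgebra $Q(2^{(d-1)/2})$ for $d$ odd. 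Your route is more elementary and entirely self-contained --- nothing beyond the monomial basis and simultaneous diagonalisation of the commuting conjugation involutions $c_i$ --- and it pinpoints exactly where the grading is essential: for $d$ odd the element $\xi_1\cdots\xi_d$ is central but odd, so the ungraded ideals it generates are not graded and do not obstruct super-simplicity. What the structural route buys in exchange is the finer classification data that the paper actually relies on afterwards (e.g.\ that $\Cl_d$ has a unique simple supermodule, invoked in the corollary following Yamaguchi's theorem), which graded-simplicity alone does not provide.
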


Let us introduce the Sergeev algebra. Let $D$ be a set of $2d$ elements and $ \sigma $ a fixed-point free involution on $ D$. We define the \emph{hyperoctahedral group} $\mf{H}_d$ to be the centraliser in $ \mf{S}_{2d} $ of $ \sigma $. Often, $ D $ is chosen to be $ \{ \pm 1, \dotsc, \pm d\} $ and $ \sigma (j) = -j$. Alternatively, $\mf{H}_d = \mf{S}_d \ltimes (\Z/2\Z)^d$, where $ \mf{S}_d $ acts on $ (\Z/2\Z)^d$ by permuting factors. The hyperoctahedral group has a central extension
\begin{equation}
	1 \to \Z/2\Z \to \tilde{\mf{H}}_d \to \mf{H}_d \to 1\,,
\end{equation}
which is the \emph{Sergeev group}. It can alternatively be given as $ \tilde{\mf{H}}_d = \mf{S}_d \ltimes \mathsf{Cl}_d $. The $\Z/2\Z$ grading extends to the Sergeev group by declaring pure permutations to be of degree $0$.

\begin{definition}\label{defn:Sergeev:alg}
	The \emph{Sergeev algebra} is the twisted algebra of the Sergeev group:
	\begin{equation}
		\mc{H}_d \coloneqq \C [\tilde{\mf{H}}_d] /(\epsilon + 1).
	\end{equation}
	It inherits a $\Z/2\Z$ grading, and hence is a superalgebra. Denote the even part of the centre of $ \mc{H}_d $ by $ \mc{Z}_d$.
\end{definition}

Note that again irreducible modules of the Sergeev algebra correspond exactly to irreducible representations of the Sergeev group where $ \epsilon$ acts as $ -1$.

\smallskip

The link between the representation theory of $\mc{S}_d$ and that of $\mc{H}_d$ is realised through the Clifford group. 

\begin{theorem}[{\cite[theorem~3.2]{Yam99}}]
	The map $\gamma_d \colon \mc{S}_d \otimes \Cl_d \to \mc{H}_d $ given by
	\begin{equation}
		\gamma_d \colon \begin{cases} 1 \otimes \xi_j &\!\!\!\! \mapsto \xi_j \\ t_j \otimes 1 &\!\!\!\! \mapsto \frac{1}{\sqrt{2}} ( \xi_j - \xi_{j+1}) \sigma_j  \end{cases}
	\end{equation}
	is an isomorphism. In $\mc{S}_d \otimes \Cl_d$ the tensor product is as superalgebras.
\end{theorem}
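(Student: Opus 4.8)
The plan is to check that $\gamma_d$ is a well-defined homomorphism of superalgebras, and then to deduce that it is bijective from a dimension count together with surjectivity.

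\smallskip

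First I would verify that the images of the generators satisfy the defining relations. Writing $\tau_j \coloneqq \gamma_d(t_j\otimes 1) = \tfrac{1}{\sqrt 2}(\xi_j - \xi_{j+1})\sigma_j$, recall that inside $\mc{H}_d$ the permutations act by $\sigma_j\,\xi_i = \xi_{\sigma_j(i)}\,\sigma_j$, and that after setting $\epsilon = -1$ the Clifford generators obey $\xi_i^2 = 1$ and $\xi_i\xi_k = -\xi_k\xi_i$ for $i\ne k$. Two elementary identities drive every computation: $\sigma_j(\xi_j - \xi_{j+1}) = -(\xi_j - \xi_{j+1})\sigma_j$ and $(\xi_j - \xi_{j+1})^2 = 2$. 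From these one reads off $\tau_j^2 = -1$, matching $t_j^2 = \epsilon = -1$. For $|j-k|>1$ one checks that $\tau_j$ and $\tau_k$ anticommute, since $\sigma_j$ and $\sigma_k$ commute while the disjoint Clifford factors $\xi_j - \xi_{j+1}$ and $\xi_k - \xi_{k+1}$ anticommute; hence $(\tau_j\tau_k)^2 = -\tau_j^2\tau_k^2 = -1$, as required. Since $\gamma_d(1\otimes\xi_j) = \xi_j$, the Clifford relations are immediate, and the only cross-compatibility left is that $\tau_j$ and $\xi_k$ anticommute—which is exactly what the super-tensor product demands, as $t_j\otimes 1$ and $1\otimes\xi_k$ are both odd. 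A short case distinction on whether $k\in\{j,j+1\}$ settles this using the same two identities.

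\smallskip

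The step I expect to be the main obstacle is the braid relation $(\tau_j\tau_{j+1})^3 = -1$, reflecting $(t_jt_{j+1})^3 = \epsilon$. Here the two overlapping transpositions $\sigma_j,\sigma_{j+1}$ interact with the three Clifford generators $\xi_j,\xi_{j+1},\xi_{j+2}$, so the bookkeeping of signs and of the permutation action is the genuinely delicate part. I would handle it by repeatedly pushing all permutations to the right via $\sigma\,\xi_i = \xi_{\sigma(i)}\,\sigma$, collecting the resulting Clifford element, and simplifying it with $\xi_i^2 = 1$ and anticommutativity. One should find that the permutation part reduces to the ordinary braid relation $(\sigma_j\sigma_{j+1})^3 = 1$, while the accumulated Clifford factor contributes exactly the sign $-1$.

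\smallskip

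Finally, for bijectivity I would compare dimensions and prove surjectivity. The twisted group algebras have $\dim \mc{S}_d = d!$, $\dim\Cl_d = 2^d$, and $\dim\mc{H}_d = 2^d\, d!$, so $\dim(\mc{S}_d\otimes\Cl_d) = \dim\mc{H}_d$ and it suffices to show that $\gamma_d$ is onto. The generators $\xi_j$ are visibly in the image, and using $(\xi_j - \xi_{j+1})^2 = 2$ one inverts the defining formula to get $\sigma_j = \tfrac{1}{\sqrt 2}(\xi_j - \xi_{j+1})\,\tau_j$, exhibiting each $\sigma_j$ in the image as well. Since the $\sigma_j$ together with the $\xi_j$ generate $\mc{H}_d$, the map $\gamma_d$ is surjective, and therefore an isomorphism of superalgebras.
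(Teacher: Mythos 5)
The paper offers no proof of this statement: it is imported verbatim from Yamaguchi \cite[theorem~3.2]{Yam99}, so there is nothing internal to compare your argument against. Your blind proof is correct and is essentially the standard verification one would give (and the natural reading of Yamaguchi's original argument): check the defining relations of $\mc{S}_d$, of $\Cl_d$, and the cross anticommutation relations forced by the super tensor product, then conclude by surjectivity plus the dimension count $\dim(\mc{S}_d\otimes\Cl_d) = d!\cdot 2^d = \dim\mc{H}_d$. Your two driving identities $\sigma_j(\xi_j-\xi_{j+1}) = -(\xi_j-\xi_{j+1})\sigma_j$ and $(\xi_j-\xi_{j+1})^2 = 2$ do settle $\tau_j^2=-1$, the case $|j-k|>1$, the anticommutation $\tau_j\xi_k = -\xi_k\tau_j$ (which is exactly what the Koszul sign in the super tensor product demands, both generators being odd), and the inversion $\sigma_j = \tfrac{1}{\sqrt2}(\xi_j-\xi_{j+1})\tau_j$ needed for surjectivity. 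The one step you only sketch, the braid relation, works out precisely as you predict, and can be made clean as follows: pushing $\sigma_j$ to the right gives $\tau_j\tau_{j+1} = u\,\pi$ with $u = \tfrac12\bigl(1 + \xi_j\xi_{j+1} + \xi_{j+1}\xi_{j+2} - \xi_j\xi_{j+2}\bigr)$ and $\pi = \sigma_j\sigma_{j+1}$; one checks that conjugation by the $3$-cycle $\pi$ fixes $u$, so $(\tau_j\tau_{j+1})^3 = u^3\pi^3 = u^3$, and a short Clifford computation gives $u^2 = u - 1$, whence $u^3 = u^2 - u = -1$, matching $(t_jt_{j+1})^3 = \epsilon$. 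So there is no gap; your proposal is a complete and correct proof of the cited result.
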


\begin{corollary}
	There is a bijection between simple supermodules for $\mc{S}_d$ and simple supermodules for $\mc{H}_d$, given by tensoring with the unique simple supermodule $ \Cl_d$ for $ \Cl_d$.
\end{corollary}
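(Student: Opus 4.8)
The plan is to deduce the corollary from the preceding superalgebra isomorphism $\gamma_d \colon \mc{S}_d \otimes \Cl_d \to \mc{H}_d$ by a super-Morita equivalence argument. First I would use $\gamma_d$ to transport the problem: pulling back modules along the isomorphism identifies the category of $\mc{H}_d$-supermodules with that of $(\mc{S}_d \otimes \Cl_d)$-supermodules, an identification that preserves simplicity. Thus it suffices to produce a bijection between simple $\mc{S}_d$-supermodules and simple $(\mc{S}_d \otimes \Cl_d)$-supermodules, after which $\gamma_d$ converts it into the desired statement.

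Next I would invoke the simplicity of $\Cl_d$ from the preceding proposition: a finite-dimensional simple superalgebra over $\C$ has, up to isomorphism and parity shift, a unique simple supermodule $U$, and its endomorphism superalgebra $D \coloneqq \End_{\Cl_d}(U)$ is a super-division algebra --- namely $D = \C$ when $\Cl_d$ is of matrix (type M) type, and $D \cong \Cl_1$ when it is of queer (type Q) type. The core of the argument is then the claim that the functor $F \colon M \mapsto M \otimes U$, with the graded tensor action of $\mc{S}_d \otimes \Cl_d$ incorporating the Koszul sign rule, restricts to a bijection on isomorphism classes of simple objects.

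To prove this claim I would run a super double-centraliser (Jacobson-density) computation. For a simple $\mc{S}_d$-supermodule $M$ one identifies $\End_{\mc{S}_d \otimes \Cl_d}(M \otimes U) \cong \End_{\mc{S}_d}(M) \otimes_{\C} D$, so that $M \otimes U$ carries exactly the expected endomorphism superalgebra; a dimension count against the Wedderburn decompositions of $\mc{S}_d$, $\Cl_d$ and $\mc{S}_d \otimes \Cl_d$ then shows that $M \otimes U$ is simple and that every simple $(\mc{S}_d \otimes \Cl_d)$-supermodule arises this way, with $M$ recovered as the associated multiplicity space. Injectivity and surjectivity of $F$ on simple objects follow, yielding the bijection.

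The hard part will be the type-Q case, where $D \cong \Cl_1$ is not purely even, so the naive ungraded Morita argument fails. Here one must keep careful track of the super-Schur lemma, since a simple supermodule may be self-associate or not, and verify that tensoring with $U$ neither splits the module nor conflates two inequivalent ones. Concretely, the subtlety is matching the type (M versus Q) of the simple constituents of $\mc{S}_d$ against those of $\mc{H}_d$ across $\gamma_d$, so that the count of simple supermodules --- together with the parity-shift identifications among them --- is preserved. This is precisely the point where the super-refinement of Wedderburn theory is needed rather than classical Morita equivalence, and I would isolate it as a standalone lemma on supermodules over tensor products of simple superalgebras.
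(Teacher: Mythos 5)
Your overall route --- transporting supermodules along $\gamma_d$ and then exploiting the simplicity of $\Cl_d$ via a super-Morita argument --- is exactly the one the paper has in mind (the corollary is stated there without proof, as an immediate consequence of Yamaguchi's isomorphism and the simplicity of Clifford superalgebras). However, your central claim, that $F \colon M \mapsto M \otimes U$ sends simple $\mc{S}_d$-supermodules to simple $(\mc{S}_d \otimes \Cl_d)$-supermodules, is false in precisely the case you flag as ``the hard part'', and the verification you propose there would fail rather than succeed. By the super-Schur lemma and your own double-centraliser identification,
\begin{equation*}
	\End_{\mc{S}_d \otimes \Cl_d}(M \otimes U)
	\;\cong\;
	\End_{\mc{S}_d}(M) \otimes_{\C} \End_{\Cl_d}(U),
\end{equation*}
so if both $M$ and $U$ are of type Q this endomorphism superalgebra is $\Cl_1 \otimes \Cl_1 \cong \Cl_2 \cong \Mat_{1|1}(\C)$, a matrix superalgebra rather than a super-division algebra. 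It contains non-trivial even idempotents, so $M \otimes U$ is \emph{not} simple: one has $M \otimes U \cong N \oplus \Pi N$ for a simple supermodule $N$ of type M, where $\Pi$ denotes parity shift. Tensoring with $U$ therefore genuinely does split the module in the QQ case, contrary to what you set out to verify, and your intermediate assertion that ``a dimension count \dots shows that $M \otimes U$ is simple'' cannot be salvaged as written.

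This case cannot be avoided in the present setting: $\Cl_d$ is of type Q exactly when $d$ is odd, and for odd $d$ the algebra $\mc{S}_d$ does possess simple supermodules of type Q (already for $d = 3$ and $\lambda = (2,1)$, where $d - \ell(\lambda)$ is odd, while the corresponding Sergeev module $V^{(2,1)}$ has $\ell(\lambda)$ even, hence type M --- consistent only with the splitting above). The standard repair (cf. Kleshchev, or Cheng--Wang) is to define the bijection by sending $M$ to the unique-up-to-parity-shift simple summand $N$ of $M \otimes U$ when both factors are type Q, and to $M \otimes U$ itself otherwise; well-definedness and bijectivity then follow from the same endomorphism computation you set up, read correctly. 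This halving is not a pedantic point for this paper: it is the very origin of the factors $2^{\delta(\lambda)/2}$ and $2^{\lfloor \ell(\lambda)/2 \rfloor}$ appearing in the orthogonality relations, in Gunningham's formula, and in the boson-fermion correspondence. With that lemma repaired, the rest of your argument goes through.
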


The classification of simple supermodules for $\mc{S}_d$ and $\mc{H}_d$ had both been done before, respectively by Schur~\cite{Sch11} and by Sergeev~\cite{Ser84}, recast in the language of superalgebra by J\'{o}zefiak~\cite{Joz89,Joz90}. To give it, we will need some preliminary definitions.

\begin{definition}
	For $d \in \N$, we write $ \mc{P}(d)$, $ \OP(d)$, and $\SP(d)$ for, respectively, the set of partitions, the set of odd partitions, and the set of strict partitions of $d$. A partition is \emph{odd} if all its parts are odd and \emph{strict} if all its parts are distinct. Also write $ \mc{P} = \bigcup_d \mc{P}(d) $, $ \OP = \bigcup_d \OP(d) $, and $ \SP = \bigcup_d \SP(d) $. We divide $ \SP(d)$ (and $\SP$) into $ \SP^+(d)$ ($\SP^+$) and $\SP^-(d) $ ($\SP^-$) denoting the strict partitions of even, resp. odd length.

	\smallskip

	We write $ m^\mu_k $ for the number of parts of a partition $ \mu$ equal to $k$ and $ z_\mu = \prod_k m^\mu_k! k^{m^\mu_k}$.
\end{definition}

\begin{lemma}[Euler]
	For any $d$, the set of odd partitions and the set of strict partitions of $d$ are of equal size:
	\begin{equation}
		|\OP(d)| = |\SP(d)|.
	\end{equation}
\end{lemma}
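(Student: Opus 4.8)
The plan is to give a standard generating-function argument establishing $|\OP(d)| = |\SP(d)|$ via an identity of formal power series. First I would form the two generating functions. For strict partitions, each part $k \ge 1$ may appear at most once, so
\begin{equation}
	\sum_{d \ge 0} |\SP(d)| \, q^d = \prod_{k \ge 1} (1 + q^k).
\end{equation}
For odd partitions, each odd part $2j-1$ may appear with arbitrary multiplicity, giving
\begin{equation}
	\sum_{d \ge 0} |\OP(d)| \, q^d = \prod_{j \ge 1} \frac{1}{1 - q^{2j-1}}.
\end{equation}
It therefore suffices to prove the equality of these two infinite products as formal power series in $q$, after which comparing coefficients of $q^d$ yields the claim for every $d$.

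The key algebraic step is the telescoping identity $1 + q^k = \frac{1 - q^{2k}}{1 - q^k}$, valid for each $k \ge 1$. Substituting this into the strict-partition product and manipulating the resulting factors, I would show
\begin{equation}
	\prod_{k \ge 1} (1 + q^k)
	=
	\prod_{k \ge 1} \frac{1 - q^{2k}}{1 - q^k}.
\end{equation}
The numerator $\prod_{k \ge 1}(1 - q^{2k})$ ranges over all even exponents, and these factors cancel exactly against the factors of $\prod_{k \ge 1}(1 - q^k)$ in the denominator that carry even index $k$. What survives in the denominator are precisely the factors $(1 - q^k)$ with $k$ odd, so the right-hand side collapses to $\prod_{j \ge 1} (1 - q^{2j-1})^{-1}$, which is the odd-partition generating function above.

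The only point requiring care — the \emph{main obstacle} in an otherwise routine computation — is justifying the cancellation of infinitely many factors at the level of formal power series, since rearranging and cancelling terms in infinite products is not automatic. I would handle this by working modulo $q^{N+1}$ for each fixed $N$: only finitely many factors contribute to the coefficient of any given $q^d$ with $d \le N$ (those with index at most $N$), so every step above is a finite manipulation in the truncated polynomial ring $\C[q]/(q^{N+1})$. Since $N$ is arbitrary, the identity holds as formal power series, and equating the coefficients of $q^d$ gives $|\SP(d)| = |\OP(d)|$ for all $d$.
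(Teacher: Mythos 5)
Your proof is correct. Note, however, that the paper itself gives no proof of this lemma at all: it is stated as a classical fact, attributed to Euler, and used as a black box (its role in the paper is only to make plausible the bijection between irreducible supermodules indexed by $\SP(d)$ and conjugacy-class data indexed by $\OP(d)$). So there is no argument in the paper to compare yours against. What you have written is Euler's own classical generating-function proof: the identity
\begin{equation}
	\prod_{k \ge 1} (1 + q^k)
	=
	\prod_{k \ge 1} \frac{1 - q^{2k}}{1 - q^k}
	=
	\prod_{j \ge 1} \frac{1}{1 - q^{2j-1}}
\end{equation}
with the cancellation justified by truncating modulo $q^{N+1}$, where only the finitely many factors of index at most $N$ differ from $1$. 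That justification is exactly the right way to make the infinite cancellation rigorous, so the argument is complete. For completeness, an alternative route you did not take is Glaisher's bijective proof (repeatedly merging equal parts, or conversely splitting even parts via binary expansions of multiplicities), which gives an explicit bijection $\OP(d) \to \SP(d)$ rather than just an equality of cardinalities; but for the purposes of this paper the counting statement is all that is needed.
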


\begin{proposition}[{\cite{Ser84}}]
	Let $ \mu \in \OP(d) $. Then the conjugacy class $O_\mu$ in $ \tilde{\mf{H}}_d$ of a pure permutations of cycle type $ \mu$ contains $ \frac{d!}{z_\mu} 2^{d-\ell (\mu)} $ elements: for each element $ \alpha $ of the set
	\begin{equation}
		\Biggl\{ \tau \!\! \prod_{j \in J \subset \bbraket{d}} \!\! \xi_j  \Biggm|  \tau \in C_\mu \subset \mf{S}_d, |J \cap \mc{O}|  \text{ even for all orbits $\mc{O} $ of $ \tau$} \Biggr\}
	\end{equation}
	it contains either $ \alpha $ or $ \epsilon \alpha $ (but not both). In particular, it contains all pure permutations of cycle  type $\mu$.\par
	The set
	\begin{equation}
		\Biggl\{ c_\mu = \sum_{\alpha \in O_\mu} \alpha \Biggm| \mu \in \OP(d) \Biggr\}
	\end{equation}
	forms a linear basis of $ \mc{Z}_d$.
\end{proposition}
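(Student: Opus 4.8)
The plan is to analyse the conjugation action of the Sergeev group $\tilde{\mf{H}}_d = \mf{S}_d \ltimes \mathsf{Cl}_d$ on a pure permutation directly, and to read off the centre from the resulting orbit structure. First I would record the two basic conjugation rules coming from the semidirect product: for $\sigma \in \mf{S}_d$ one has $\sigma \xi_j \sigma^{-1} = \xi_{\sigma(j)}$, while $\xi_i^{-1} = \xi_i$. Fixing a pure permutation $\tau$ of cycle type $\mu$, the relation $\tau \xi_i = \xi_{\tau(i)} \tau$ gives the key identity
\[ \xi_i \, \tau \, \xi_i^{-1} = \xi_i \xi_{\tau(i)} \, \tau, \]
so conjugating by $\xi_i$ attaches the pair $\{i, \tau(i)\}$, lying in a single cycle, to the Clifford decoration. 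Iterating this and combining with $\mf{S}_d$-conjugation (which moves $\tau$ within its $\mf{S}_d$-class and transports the index set $J$ by $\sigma$), I would show the conjugation orbit of $\tau$ is exactly the set of elements $\tau' \prod_{j \in J} \xi_j$ with $\tau'$ of cycle type $\mu$ and $|J \cap \mc{O}|$ even on every orbit $\mc{O}$ of $\tau'$: within one cycle the adjacent pairs $\{i, \tau'(i)\}$ generate, under symmetric difference, precisely the even-weight subsets. This is the asserted description, and it yields the count $\#\{\tau'\} \cdot \#\{J\} = \tfrac{d!}{z_\mu} \cdot 2^{d - \ell(\mu)}$ once we know each $\pm$-pair $\{\alpha, \epsilon\alpha\}$ is hit exactly once.

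The heart of the matter is this last point, i.e.\ that $\tau$ is not conjugate to $\epsilon\tau$ when $\mu$ is odd. Here I would use an order argument: if all parts of $\mu$ are odd then $\tau$ has odd order $m = \mathrm{lcm}(\mu)$, and one may choose a lift $\tilde\tau \in \tilde{\mf{H}}_d$ with $\tilde\tau^m = 1$ (adjusting by $\epsilon$ is possible because $m$ is odd). Were $g \tilde\tau g^{-1} = \epsilon \tilde\tau$ for some $g$, raising to the $m$-th power would give $1 = g\tilde\tau^m g^{-1} = (\epsilon\tilde\tau)^m = \epsilon^m = \epsilon$, a contradiction; and since $\epsilon$ is central, $\tau \not\sim \epsilon\tau$ forces $\alpha \not\sim \epsilon\alpha$ for every $\alpha$ in the class. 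Conversely, if $\mu$ has an even part, realised by a cycle of length $\ell$, conjugating that cycle by the full Clifford product $w = \xi_1 \cdots \xi_\ell$ of its generators gives $w \tau w^{-1} = \epsilon^{(\ell-1)^2} \tau = \epsilon^{\ell-1}\tau = \epsilon\tau$, where the exponent collects the reordering signs $\binom{\ell-1}{2} + \binom{\ell}{2}$. Thus such a class contains both $\alpha$ and $\epsilon\alpha$; this split/non-split dichotomy is exactly what pins down the cardinality above.

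For the basis statement I would first verify that each $c_\mu$ with $\mu \in \OP(d)$ lies in $\mc{Z}_d$: being a sum over a full conjugacy class it is conjugation-invariant, hence central, and every summand $\tau' \prod_{j \in J}\xi_j$ has even Clifford degree since $|J| = \sum_{\mc{O}} |J \cap \mc{O}|$ is even, so $c_\mu$ is of degree $0$. Distinct $c_\mu$ are supported on disjoint group elements, hence linearly independent; moreover the classes indexed by partitions with an even part contribute the zero element of $\mc{H}_d$, by the even-cycle computation. To see there are no further basis vectors I would compute $\dim_\C \mc{Z}_d$ from semisimplicity of the superalgebra $\mc{H}_d$ together with the super Artin--Wedderburn decomposition: each simple block, of type $M$ or $Q$, has a one-dimensional even centre, so $\dim_\C \mc{Z}_d$ equals the number of simple supermodules, namely $|\SP(d)| = |\OP(d)|$ by Euler's lemma. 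Comparing dimensions then shows that $\{c_\mu : \mu \in \OP(d)\}$ is a basis.

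I expect the main obstacle to be the sign bookkeeping in the Clifford algebra: the orbit description itself is insensitive to signs, but both the cardinality $\tfrac{d!}{z_\mu} 2^{d-\ell(\mu)}$ and the basis statement hinge on the split property, whose only delicate ingredient is controlling the powers of $\epsilon$ produced when reordering products $\prod \xi_j$ under conjugation. The order argument circumvents this cleanly for odd $\mu$, and the single explicit computation $w \tau w^{-1} = \epsilon^{\ell-1}\tau$ settles the even case; the remaining possibility, that some degree-$0$ class outside the $O_\mu$ with $\mu$ odd could still be split and contribute to $\mc{Z}_d$, is most safely excluded by the dimension count rather than by a direct case analysis of the mixed decorations.
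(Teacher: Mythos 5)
The paper does not actually prove this proposition; it is imported verbatim from Sergeev~\cite{Ser84}, so your proposal can only be judged on its own terms, and on those terms it is essentially correct. The orbit analysis is right: $\mf{S}_d$-conjugation sweeps out $C_\mu$, conjugation by $\xi_i$ symmetric-differences the Clifford support by the adjacent pair $\{\tau^{-1}(i),i\}$, and these pairs generate exactly the even-weight subsets of each cycle, which gives the projected class in $\mf{H}_d$ and hence the count $\tfrac{d!}{z_\mu}2^{d-\ell(\mu)}$. The odd-order argument for splitness is clean and correct (note you do not even need to ``choose a lift'': since $\tilde{\mf{H}}_d = \mf{S}_d \ltimes \mathsf{Cl}_d$, the pure permutation $\tau$ is already a genuine group element of odd order $m$, and $g\tau g^{-1}=\epsilon\tau$ would give $1=\epsilon^m=\epsilon$ on raising to the $m$-th power), and centrality of $\epsilon$ correctly propagates splitness from $\tau$ to every $\alpha$ in its class. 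Your parity bookkeeping $\epsilon^{(\ell-1)^2}=\epsilon^{\ell-1}$ in the even-cycle computation also checks out, as does the superalgebra input for the dimension count: both $M(n|m)$ and $Q(n)$ have one-dimensional \emph{even} centre (for $Q(n)$ the ungraded centre is two-dimensional, but its odd part does not contribute), so $\dim \mc{Z}_d$ equals the number of simple supermodules.

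The one point you should make explicit is the risk of circularity in that last step. In Sergeev's and J\'{o}zefiak's developments, the count of simple supermodules of $\mc{H}_d$ is \emph{deduced from} the classification of even split classes, i.e.\ from the very proposition you are proving; quoting ``number of simples $=|\SP(d)|$'' without care would close a circle. The fix is available in the surrounding text of the paper: Yamaguchi's isomorphism $\mc{H}_d \cong \mc{S}_d \otimes \Cl_d$ together with simplicity of $\Cl_d$ reduces the count to the simple supermodules of $\mc{S}_d$, which Schur classified by $\SP(d)$ via Q-function theory, independently of any class-splitting analysis for the Sergeev group; Euler's identity $|\SP(d)|=|\OP(d)|$ then finishes the dimension count. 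With that provenance pinned down, your strategy (exhibit $|\OP(d)|$ independent elements of $\mc{Z}_d$, then match dimensions) is a legitimate and arguably tidier alternative to Sergeev's direct route, which instead classifies \emph{all} split classes of $\tilde{\mf{H}}_d$ by hand — the fiddly mixed-decoration case analysis your dimension count deliberately sidesteps.
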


We will use this proposition to identify $ \mc{Z}_d$ with $ \C \{ \mc{OP}(d) \}$, the vector space with basis $\{ \mu \in \mc{OP}(d) \} $.

\begin{proposition}[{\cite{Ser84}}]
	There is a bijection between $\SP(d)$ and irreducible supermodules $ V^\lambda $ of $ \mc{H}_d$. 
\end{proposition}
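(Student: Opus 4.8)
The plan is to prove the statement by a counting argument built on the structure theory of semisimple superalgebras, feeding in the description of the even centre $\mc{Z}_d$ from the preceding proposition. First I would establish that $\mc{H}_d$ is a semisimple superalgebra. Since $\epsilon$ is central of order two, the group algebra $\C[\tilde{\mf{H}}_d]$ of the finite group $\tilde{\mf{H}}_d$ splits as the direct sum of its $\epsilon = +1$ and $\epsilon = -1$ eigenspaces, and $\mc{H}_d = \C[\tilde{\mf{H}}_d]/(\epsilon+1)$ is exactly the latter summand. As $\C[\tilde{\mf{H}}_d]$ is semisimple by Maschke's theorem, so is $\mc{H}_d$. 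Being semisimple over $\C$, it therefore decomposes as a finite product of simple superalgebras, each of type $M$ (isomorphic to some $\End(\C^{p|q})$) or of type $Q$ (isomorphic to a queer superalgebra $Q(n)$).

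The crux is then the following bookkeeping. Each simple super-factor carries exactly one irreducible supermodule, and distinct factors yield non-isomorphic modules, so the number of irreducible supermodules of $\mc{H}_d$ equals the number of simple factors. In turn, this number equals $\dim \mc{Z}_d$: for a type $M$ factor the ordinary centre is the one-dimensional space of scalars, whereas for a type $Q$ factor the ordinary centre is two-dimensional but is spanned by an even scalar together with an \emph{odd} element, so in both cases the even part of the centre contributes exactly one dimension per factor. Hence the number of irreducible $\mc{H}_d$-supermodules is precisely $\dim \mc{Z}_d$.

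It then remains to identify this dimension with $|\SP(d)|$. By the preceding proposition, the classes $\{ c_\mu \mid \mu \in \OP(d)\}$ form a linear basis of $\mc{Z}_d$, so $\dim \mc{Z}_d = |\OP(d)|$, and Euler's lemma $|\OP(d)| = |\SP(d)|$ closes the count. This already produces a bijection on the level of cardinalities. To upgrade it to the canonical labeling $\lambda \mapsto V^\lambda$, I would invoke the character theory: the spin characters evaluated on classes of odd cycle type are governed by Schur's $Q$-functions $Q_\lambda$ indexed by $\lambda \in \SP(d)$, and matching an irreducible with the strict partition indexing its character pins down the bijection. (Alternatively, one may transport the classification across the isomorphism $\gamma_d$ and the resulting bijection of the corollary, reducing to Schur's classification of the simple supermodules of $\mc{S}_d$.)

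The main obstacle I anticipate is the type $M$ versus type $Q$ dichotomy. One must check carefully that the even centre counts simple factors irrespective of type — the type governs only how an irreducible supermodule decomposes after forgetting the $\Z/2\Z$-grading, not the number of irreducible supermodules — and, for the canonical bijection, that the correspondence respects the link between the parity of $\ell(\lambda)$ (i.e.\ whether $\lambda \in \SP^+(d)$ or $\SP^-(d)$) and the type of the factor supporting $V^\lambda$.
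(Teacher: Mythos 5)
The paper never proves this proposition: it is quoted from Sergeev, and the surrounding text effectively reduces it (via Yamaguchi's isomorphism $\gamma_d \colon \mc{S}_d \otimes \Cl_d \to \mc{H}_d$ and the corollary that follows it) to Schur's classification of the simple supermodules of $\mc{S}_d$. Your counting argument is therefore a genuinely different route, and its core is sound. The chain
\begin{equation*}
	\text{(number of irreducible supermodules)}
	=
	\text{(number of simple super-factors)}
	=
	\dim \mc{Z}_d
	=
	|\OP(d)|
	=
	|\SP(d)|
\end{equation*}
holds: ungraded semisimplicity of $\mc{H}_d$ follows from Maschke exactly as you say; the graded Wedderburn decomposition into factors of type $M$ and type $Q$ is standard over $\C$; your computation of the even centre of each type is correct (for $\End(\C^{p|q})$ the centre is the even scalars, and for $Q(n)$ the ungraded centre is spanned by the identity and an odd central element, so its even part is again one-dimensional); and the last two equalities are precisely the proposition on the classes $c_\mu$ and Euler's lemma, both stated just before this proposition in the paper and independent of the classification. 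Since two finite sets of equal cardinality are in bijection, the literal statement follows. What the counting does \emph{not} give is the labelling $\lambda \mapsto V^\lambda$ that the paper uses afterwards; for that, your fallback through $\gamma_d$ and Schur's classification is the right one, whereas the route through Schur $Q$-functions would be circular here, since the paper's character theory (the base change between $Q_\lambda$ and $p_\mu$, the map $\Ch$) already presupposes the indexing of irreducibles by strict partitions.

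One point needs to be stated more carefully than your closing parenthetical does. The genuine subtlety in ``each simple factor carries exactly one irreducible supermodule'' is not how a supermodule decomposes after forgetting the grading, but the parity-shift identification: for a type $M$ factor $\End(\C^{p|q})$, every intertwiner from the natural supermodule to its parity reversal is odd, so up to \emph{even} isomorphism a type $M$ factor has two simple supermodules, while a type $Q$ factor has one (it admits an odd self-intertwiner). Your count is correct only under the convention that supermodules are identified with their parity shifts, equivalently that isomorphisms may be odd homogeneous maps. This is the standard convention, and it is the one the paper implicitly uses (it is what makes the orthogonality relations with the factors $2^{\delta(\lambda)}$ come out right), so this is a matter of fixing the convention explicitly rather than a gap --- but without fixing it, the count of irreducibles per factor would be wrong for type $M$.
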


\begin{definition}
	For an $ \mu \in \mc{OP}(d) $ and $ \lambda \in \mc{SP}(d)$, we write $ f_\mu^\lambda $ for the \emph{central character} of $ V^\lambda$ at $c_\mu$, i.e. $ f_\mu^\lambda $ is the scalar with which $ c_\mu$ acts in $ V^\lambda$. We will write $ \zeta_\mu^\lambda$ for the (usual) character.
\end{definition}

As usual,
\begin{equation}
	\zeta_\mu^\lambda = \frac{\dim V^\lambda}{|O_\mu|} f_\mu^\lambda = \frac{2^{\ell (\mu)} z_\mu \dim V^\lambda}{2^d d!} f^\lambda_\mu.
\end{equation}


The Grothendieck group $R(\mc{H}_d)$ of the category of $\mc{H}_d$-supermodules has a scalar product, which is given in terms of characters by
\begin{equation}
	\< \phi, \psi \> = \sum_{\mu \in \OP(d)} 2^{-\ell (\mu)} z_\mu^{-1} \phi_\mu \psi_\mu.
\end{equation}

\begin{lemma}[Orthogonality of characters]
	Let $\sigma$ and $\rho$ be two partitions of size $d$. Then
	\begin{equation}\label{eq:OrtCharB}
		\< \zeta^{\rho},\zeta^{\sigma}\> = 2^{\delta (\rho )} \delta_{\rho,\sigma}\,,
		\qquad \qquad
		\sum_{\lambda \in \SP(d)} 2^{-\ell(\sigma)-\delta (\lambda)} \frac{\zeta_{\sigma}^{\lambda}\zeta_{\rho}^{\lambda}}{z_{\sigma}} = \delta_{\rho,\sigma},
	\end{equation}
	where $\sigma$ and $\rho$ are strict partitions in the first equation and odd partitions in the second, and
	\begin{equation}
		\delta (\lambda) = \begin{cases} 1 & \lambda \in \SP^- \\ 0 & \lambda \in \SP^+ \end{cases}.
	\end{equation}
\end{lemma}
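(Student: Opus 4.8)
The plan is to prove the two identities of \cref{eq:OrtCharB} together, by first establishing the row orthogonality (the left identity) and then deducing the column orthogonality (the right identity) by a purely formal matrix inversion. The decisive structural input is Euler's lemma $|\OP(d)| = |\SP(d)|$, which guarantees that the character table $Z = (\zeta^\lambda_\mu)_{\lambda \in \SP(d),\,\mu \in \OP(d)}$ is a \emph{square} matrix. Writing $D = \mathrm{diag}_\mu(2^{-\ell(\mu)} z_\mu^{-1})$ and $\Lambda = \mathrm{diag}_\lambda(2^{\delta(\lambda)})$, the left identity reads $Z D Z^{T} = \Lambda$. As $D$ and $\Lambda$ are invertible, this forces $Z$ to be invertible, and rearranging gives $Z^{T} \Lambda^{-1} Z = D^{-1}$, whose $(\sigma,\rho)$ entry is precisely $\sum_{\lambda} 2^{-\ell(\sigma)-\delta(\lambda)} \frac{\zeta^\lambda_\sigma \zeta^\lambda_\rho}{z_\sigma} = \delta_{\rho,\sigma}$, the right identity. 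Thus everything reduces to the single statement $\<\zeta^\rho, \zeta^\sigma\> = 2^{\delta(\rho)}\delta_{\rho,\sigma}$.

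To establish the latter I would use the Wedderburn decomposition of the semisimple superalgebra $\mc{H}_d$. Through the isomorphism $\gamma_d$ and the simplicity of Clifford superalgebras, $\mc{H}_d$ splits into simple blocks indexed by $\lambda \in \SP(d)$, of type $\mathrm{M}$ when $\lambda \in \SP^+$ and of queer type $\mathrm{Q}$ when $\lambda \in \SP^-$. Comparing dimensions block by block yields the super-analogue of the Burnside identity $2^d d! = \sum_\lambda 2^{-\delta(\lambda)}(\dim V^\lambda)^2$, in which the factor $2^{-\delta(\lambda)}$ already foreshadows the normalisation in \cref{eq:OrtCharB}. The orthogonality itself then follows from the super-Schur lemma applied to matrix coefficients: summing the product of matrix coefficients of $V^\rho$ and $V^\sigma$ over the group, the contributions of distinct blocks vanish while the diagonal term reproduces $\<\zeta^\rho,\zeta^\sigma\>$. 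It is convenient to repackage the count through the central characters via $\zeta^\lambda_\mu = \frac{\dim V^\lambda}{|O_\mu|} f^\lambda_\mu$, since then the weights $2^{-\ell(\mu)}z_\mu^{-1} = |O_\mu|/2^d d!$ are exactly the relative sizes of the pure-permutation classes $O_\mu$ on which the spin characters are supported.

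The main obstacle, and the origin of the factor $2^{\delta(\rho)}$, is the treatment of the queer blocks $\lambda \in \SP^-$. For these self-associate simple supermodules the endomorphism superalgebra is not the full $\End(V^\lambda)$ but the smaller queer algebra $Q(n)$ carrying an odd involution; consequently the super-Schur orthogonality for such a block acquires an extra factor of $2$ relative to the type-$\mathrm{M}$ blocks, and one must carefully distinguish ordinary traces from supertraces while keeping track of the vanishing of the supercharacters off the classes $O_\mu$. Organising this bookkeeping cleanly is essentially the only real content of the argument. An alternative that sidesteps it is to transport the entire computation through the characteristic map to the ring of symmetric functions, where $\<\zeta^\rho,\zeta^\sigma\>$ becomes the classical inner product of Schur $Q$-functions and the normalisation $2^{\delta(\rho)}$ is their known squared norm.
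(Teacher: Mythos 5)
Your proof is correct, but there is nothing in the paper to compare it against: the paper states this lemma without proof, as classical background in an explicitly expository section (citing J\'{o}zefiak, Ivanov, and others), so the real question is whether your argument stands on its own — and it does. The matrix-inversion step is exactly the standard deduction of column from row orthogonality: Euler's lemma makes the character table $Z = (\zeta^\lambda_\mu)$ square, and $ZDZ^{T} = \Lambda$ with $D,\Lambda$ invertible diagonal forces $Z^{T}\Lambda^{-1}Z = D^{-1}$, which is the second identity of \cref{eq:OrtCharB} after dividing the $(\sigma,\rho)$ entry by $2^{\ell(\sigma)}z_\sigma$. For the row orthogonality, both of your routes are viable but play different roles. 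The Wedderburn/super-Schur route is the genuinely representation-theoretic one, and your outline is right, including the Burnside-type identity $2^d d! = \sum_\lambda 2^{-\delta(\lambda)}(\dim V^\lambda)^2$ and the origin of $2^{\delta(\lambda)}$ in the queer blocks; the one imprecision is that for $\lambda \in \SP^-$ the Wedderburn \emph{block} is the queer algebra, while the supercommutant of $V^\lambda$ is the two-dimensional superalgebra $\C \oplus \C\theta$ with $\theta$ odd — your sentence conflates the two, though the factor-of-$2$ mechanism you describe is the correct one. Your alternative route via the characteristic map is the one that meshes best with the paper's own exposition: the paper records (again as quoted results) that $\Ch$ is an isometry, that $\Ch(\zeta^\lambda) = 2^{-\lfloor \ell(\lambda)/2 \rfloor} Q_\lambda$, and that $\< Q_\lambda, Q_\rho \> = 2^{\ell(\lambda)}\delta_{\lambda,\rho}$, whence $\< \zeta^\rho, \zeta^\sigma \> = 2^{\ell(\rho) - 2\lfloor \ell(\rho)/2 \rfloor}\delta_{\rho,\sigma} = 2^{\delta(\rho)}\delta_{\rho,\sigma}$. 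The only caveat on that route is logical order: you must take the orthogonality of Schur $Q$-functions as established purely within symmetric-function theory (as in Macdonald, Chapter III.8), not via character theory, or the argument becomes circular — which is indeed how the literature proceeds.
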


\subsection{Supersymmetric functions}

\begin{definition}
	The \emph{algebra of supersymmetric functions} is defined as $\Gamma \coloneqq \C [p_1, p_3, p_5, \dotsc ]$, where the $p_k$ are the usual power sums. It has a natural pairing
	\begin{equation}
		\< p_\mu, p_\nu \> = 2^{-\ell (\mu )}z_\mu \delta_{\mu,\nu},
	\end{equation}
  where we set $p_{\mu} = p_{\mu_1} p_{\mu_2} \cdots$ for any $\mu \in \OP$.
\end{definition}

In parallel to the usual setting, there is a change of basis of $\Gamma$ from power sums to central characters. In order to state it, we extend the definition of the central characters by $f_\mu^\lambda = 0$ if $ |\mu| > |\lambda|$ and
\begin{equation}
	f_\mu^\lambda = \binom{m^\mu_1 + k}{k} f^\lambda_{\mu \cup (1^k)} \quad \text{if $k = |\lambda | - |\mu|$.}
\end{equation}

\begin{proposition}[{\cite[proposition~6.4]{Iva04}}]\label{prop:central:chars:are:susy}
	The functions $ f_\mu (\lambda ) = f_\mu^\lambda $ are supersymmetric functions such that
	\begin{equation}
		f_\mu = \frac{1}{z_\mu} p_\mu + \text{lower degree terms.}
	\end{equation}
	Hence, the collection $\set{ f_\mu | \mu \in \OP }$ forms a basis of $ \Gamma$ and the map
	\begin{equation}
		\gamma \colon \bigoplus_{d =0}^\infty \mc{Z}_d \to \Gamma \colon c_\mu \to f_\mu
	\end{equation}
	is a linear isomorphism.
\end{proposition}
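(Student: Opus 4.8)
The plan is to isolate the two substantive assertions --- that each $f_\mu$ lies in $\Gamma$ (supersymmetry) and that its top-degree part is $z_\mu^{-1}p_\mu$ --- and then to obtain the basis statement and the isomorphism $\gamma$ purely formally. Indeed, $\Gamma = \C[p_1,p_3,\dots]$ is graded by $\deg p_k = k$, and $\{p_\mu \mid \mu\in\OP\}$ is a homogeneous basis. If $f_\mu = z_\mu^{-1} p_\mu + (\text{terms of degree} < |\mu|)$, then the transition matrix from $\{p_\mu\}$ to $\{f_\mu\}$ is block-triangular for the degree filtration, with diagonal blocks $\mathrm{diag}(z_\mu^{-1})$, hence invertible; so $\{f_\mu\mid\mu\in\OP\}$ is again a basis of $\Gamma$. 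Since $\{c_\mu\mid\mu\in\OP(d)\}$ is a basis of $\mc{Z}_d$ by the preceding proposition (consistently with Euler's theorem, which matches the graded dimensions), the map $\gamma\colon c_\mu\mapsto f_\mu$ carries a basis of $\bigoplus_d\mc{Z}_d$ to a basis of $\Gamma$ and is therefore a linear isomorphism. Everything thus reduces to the two claims about the individual functions $f_\mu$.

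First I would make sense of $f_\mu$ as a function on all of $\SP$, not just on $\SP(|\mu|)$: the stabilisation rule $f_\mu^\lambda = \binom{m^\mu_1 + k}{k} f^\lambda_{\mu\cup(1^k)}$ with $k = |\lambda| - |\mu|$ is exactly what makes adjoining fixed points consistent, so that $\lambda \mapsto f_\mu^\lambda$ is a well-defined map $\SP\to\C$. To see that it is supersymmetric, I would pass through the projective character theory already recorded. The central characters are tied to the ordinary ones by $\zeta_\mu^\lambda = \frac{2^{\ell(\mu)}z_\mu\dim V^\lambda}{2^d d!}f_\mu^\lambda$, and the $\zeta_\mu^\lambda$ are, up to explicit powers of $2$, the coefficients in the expansion of the Schur $Q$-function $Q_\lambda$ in the power-sum basis $\{p_\mu\}_{\mu\in\OP}$ (the projective Frobenius formula). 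Under the resulting characteristic isomorphism the central element $c_\mu$ corresponds, up to normalisation, to $p_\mu$, and its eigenvalue $f_\mu^\lambda$ on $V^\lambda$ is read off from the $Q_\lambda$-expansion. The supersymmetry of $\lambda\mapsto f_\mu^\lambda$ is then the statement that this family of eigenvalues assembles into a polynomial in the odd power sums of the (shifted) parts of $\lambda$; since that evaluation is faithful on $\Gamma$, the resulting equality in $\Gamma$ is unambiguous. This is precisely Ivanov's realisation of central characters through shifted Schur $Q$-functions, the projective analogue of the Okounkov--Olshanski theory of shifted symmetric functions for $\mf{S}_d$. I regard establishing this realisation --- that the eigenvalue operators preserve $\Gamma$ --- as the genuine obstacle: it is not formal and rests on the shifted $Q$-function machinery rather than on the elementary identities in the excerpt.

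Granting supersymmetry, the leading-term identity is a degree count. Working with $|\lambda|$ large and the stabilised class $c_{\mu\cup(1^k)}$, the dominant contribution to the eigenvalue $f^\lambda_{\mu\cup(1^k)}$, measured against the odd power sums $p_j(\lambda) = \sum_i \lambda_i^{\,j}$, comes from the ``free'' way of distributing the $\ell(\mu)$ nontrivial cycles of type $\mu$ among the parts of $\lambda$; this produces $\prod_j p_{\mu_j}(\lambda)$, i.e.\ $p_\mu$ to top order, while the automorphisms of a permutation of cycle type $\mu$ together with the binomial normalisation $\binom{m^\mu_1+k}{k}$ contribute the factor $z_\mu^{-1}$. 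All remaining terms involve collisions among the cycles and are of strictly lower degree in the parts of $\lambda$, hence lie in lower filtration. This gives $f_\mu = z_\mu^{-1}p_\mu + (\text{lower degree})$, which together with the first paragraph completes the proof.
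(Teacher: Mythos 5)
Your overall architecture matches the paper's: the substantive content is Ivanov's theorem, and the rest is linear algebra. The paper's own proof is nothing more than a dictionary --- Ivanov's $\theta^\lambda_\mu$ is the paper's $\zeta^\lambda_\mu$, and his $p^\#_\mu(\lambda)$ equals $z_\mu f^\lambda_\mu$ --- after which \emph{both} the supersymmetry of $f_\mu$ and the top-degree statement are quoted as Ivanov's proposition~6.4(b--c). You do the same for supersymmetry (and you correctly identify it as the genuine obstacle), and your triangular-transition-matrix argument for the basis claim and for $\gamma$ being an isomorphism is correct; it makes explicit what the paper leaves implicit. Where you diverge is the leading term: instead of citing Ivanov's part (c) you offer a ``free distribution of cycles'' degree count. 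As written this is a heuristic rather than a proof: to show that the ``collision'' terms have strictly lower degree you need an actual expression for $f^{\lambda}_{\mu\cup(1^k)}$ as a function of the parts of $\lambda$ (or an asymptotic factorization property of Sergeev characters), and that is precisely the content of the machinery you have just deferred to; since the proposition is itself a citation, the cleanest repair is to quote part (c) alongside part (b). One small correction: in the spin case no shift is involved --- as the remark following the proposition notes, unlike the ordinary case (which requires shifted symmetric functions), here the $f_\mu$ are honest supersymmetric functions evaluated at strict partitions --- so your parenthetical ``(shifted) parts of $\lambda$'' should be dropped.
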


\begin{proof}
	Let us give the translation between the notation of \cite[definition 6.3]{Iva04} and ours: his $\theta^\lambda_\mu$ are our $ \zeta^\lambda_\mu$, and $ n^{\downarrow k} = \frac{n!}{(n-k)!}$. From this, $ p^\#_\mu (\lambda) = z_\mu f_\mu^\lambda$. Then, the statement is \cite[proposition~6.4(b-c)]{Iva04}.
\end{proof}

\begin{remark}
	In the ordinary case, the analogous theorem requires shifted symmetric functions. Here, instead, the supersymmetric functions only take values at strict partitions.
\end{remark}

\begin{definition}[{\cite{MMN20}}]\label{defn:spin:compl:cycles}
	For $r$ a positive even integer, define the \emph{spin completed $(r+1)$-cycles} to be
	\begin{equation}
		\bar{c}_{r+1} \coloneqq \frac{1}{r+1} \gamma^{-1} (p_{r+1}).
	\end{equation}
\end{definition}

\begin{remark}
	In fact, \cite{MMN20} do not quite define the spin completed $(r+1)$-cycles individually, but rather the space spanned by them. We choose the normalisation here to be consistent with the normalisation for the non-spin case, as defined in \cite{OP06}, as well as having the leading term of $\bar{c}_{r+1}$ be $c_{(r+1)}$. This is also used implicitly in \cite[equation~(2.9)]{Lee19} for $r=2$ and \cite[equation~(5.1)]{Lee19} for general $r$.
\end{remark}

There is another basis for $ \Gamma$, analogous to the Schur functions for the usual space of symmetric functions, which is given by \emph{Schur $Q$-functions}. First define $Q_{n,m}$ by
\begin{equation}
	\sum_{n,m=0}^\infty Q_{n,m} z^n w^m = \Biggl( \exp \biggl( 2\sum_{k=0}^\infty \frac{ p_{2k+1}}{2k+1} \bigl(z^{2k+1} + w^{2k+1}\bigr) \biggr) -1 \Biggr) \frac{z-w}{z+w}.
\end{equation}
Now, let $ \lambda \in \SP$. If $ \ell (\lambda ) $ is odd, we append it with a $0$. Then the matrix $\mc{Q}_{\lambda} = (Q_{\lambda_j, \lambda_k})_{j,k} $ is antisymmetric of even order, and we define 
\begin{equation}
	Q_\lambda = \Pf(\mc{Q}_{\lambda})\,,
\end{equation}
where $\Pf$ is the \emph{Pfaffian}, the square root of the determinant (which in this case is a perfect square).

\begin{proposition}
	The set $\set{ Q_\lambda | \lambda \in \SP}$ is an orthogonal basis for $ \Gamma$,
	\begin{equation}
		\< Q_\lambda, Q_\rho \> = 2^{\ell (\lambda )} \delta_{\lambda,\rho}.
	\end{equation}
	The base change between $\set{ Q_\lambda | \lambda \in \SP}$ and $\set{ p_\mu | \mu \in \OP}$ is given by
	\begin{equation}
		Q_\lambda =
		2^{\lfloor \ell (\lambda )/2 \rfloor} \!\! \sum_{\mu \in \OP(d)} z_\mu^{-1} \zeta^\lambda_\mu p_\mu,
		\qquad\quad
		p_\mu =
		\sum_{\lambda \in \SP(d)} 2^{-\ell (\mu ) - \lceil \ell (\lambda)/2\rceil} \zeta_\mu^\lambda Q_\lambda.
	\end{equation}
\end{proposition}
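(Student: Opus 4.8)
The plan is to read both assertions off the spin characteristic map, so that orthogonality and the change of basis become facets of one computation. Throughout I use the two pairings already in play: $\langle p_\mu, p_\nu\rangle = 2^{-\ell(\mu)}z_\mu\,\delta_{\mu,\nu}$ on $\Gamma$, and the character pairing $\langle\phi,\psi\rangle = \sum_{\mu\in\OP(d)}2^{-\ell(\mu)}z_\mu^{-1}\phi_\mu\psi_\mu$ from the orthogonality lemma. The first thing I would record is the reproducing kernel of the $\Gamma$-pairing: expanding the Cauchy product in power sums gives
\[
\prod_{i,j}\frac{1+x_iy_j}{1-x_iy_j}
= \exp\!\Big(2\!\sum_{k\ \text{odd}}\tfrac1k\,p_k(x)p_k(y)\Big)
= \sum_{\mu\in\OP}\frac{2^{\ell(\mu)}}{z_\mu}\,p_\mu(x)p_\mu(y)
= \sum_{\mu\in\OP}\frac{p_\mu(x)\,p_\mu(y)}{\langle p_\mu,p_\mu\rangle},
\]
so this product is exactly the kernel adapted to our pairing.

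For orthogonality I would invoke the classical Cauchy identity for Schur $Q$-functions, $\sum_{\lambda\in\SP}2^{-\ell(\lambda)}Q_\lambda(x)Q_\lambda(y) = \prod_{i,j}\frac{1+x_iy_j}{1-x_iy_j}$, obtained from the Pfaffian definition of the $Q_\lambda$ by a Schur--Pfaffian summation \cite{McD98}. Comparing with the kernel above, the standard dual-bases lemma says that $\{Q_\lambda\}$ and $\{2^{-\ell(\lambda)}Q_\lambda\}$ are dual for the pairing, whence $\langle Q_\lambda, Q_\rho\rangle = 2^{\ell(\lambda)}\delta_{\lambda,\rho}$. Since these norms are nonzero the $Q_\lambda$ are linearly independent, and as $|\SP(d)| = |\OP(d)| = \dim\Gamma_d$ by Euler's lemma they form an orthogonal basis.

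The change of basis is where the representation theory enters. I would introduce the characteristic map $\mathrm{ch}(\phi) = \sum_{\mu\in\OP}z_\mu^{-1}\phi_\mu\,p_\mu$; a one-line check against the two pairings shows $\mathrm{ch}$ is an isometry from class functions onto $\Gamma$. The classical theorem of Schur and Sergeev \cite{Sch11,Ser84} identifies the characteristic of an irreducible spin character with a scalar multiple of a Schur $Q$-function, $\mathrm{ch}(\zeta^\lambda) = c_\lambda Q_\lambda$. Taking norms and using the isometry together with the first relation of \eqref{eq:OrtCharB} gives $c_\lambda^2\,2^{\ell(\lambda)} = \langle\zeta^\lambda,\zeta^\lambda\rangle = 2^{\delta(\lambda)}$; since $\delta(\lambda)-\ell(\lambda) = -2\lfloor\ell(\lambda)/2\rfloor$ this forces $c_\lambda = \pm 2^{-\lfloor\ell(\lambda)/2\rfloor}$, and the $+$ sign is fixed by matching the (positive) coefficient of $p_1^d$ on both sides, i.e. by $\zeta^\lambda_{(1^d)} = \dim V^\lambda > 0$. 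Unwinding $\mathrm{ch}(\zeta^\lambda) = 2^{-\lfloor\ell(\lambda)/2\rfloor}Q_\lambda$ is precisely the first stated formula.

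Finally I would obtain the inverse expansion by linear algebra. Writing $p_\mu = \sum_\lambda b^\lambda_\mu Q_\lambda$ and substituting the first formula, matching the coefficient of each $p_\nu$ reduces the claim to
\[
2^{-\ell(\mu)}\!\!\sum_{\lambda\in\SP(d)}\!\! 2^{-\delta(\lambda)}\,\zeta^\lambda_\mu\,\zeta^\lambda_\nu = z_\nu\,\delta_{\mu,\nu}
\qquad(\nu\in\OP(d)),
\]
once one uses $\lfloor\ell(\lambda)/2\rfloor - \lceil\ell(\lambda)/2\rceil = -\delta(\lambda)$ to fix the ansatz $b^\lambda_\mu = 2^{-\ell(\mu)-\lceil\ell(\lambda)/2\rceil}\zeta^\lambda_\mu$; this identity is exactly the second relation of \eqref{eq:OrtCharB}. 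The one genuinely hard input is the proportionality $\mathrm{ch}(\zeta^\lambda)\propto Q_\lambda$ --- the bridge between the Pfaffian-combinatorial $Q_\lambda$ and the representation-theoretic $\zeta^\lambda$, which is the core of Schur's projective character theory; granting it, everything reduces to fixing powers of $2$, and the real care lies in reconciling the differing normalisations of $Q_\lambda$, $z_\mu$, and the pairing across \cite{Sch11,Ser84,Iva04,McD98}.
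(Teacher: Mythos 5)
Your argument is correct, but there is nothing in the paper to compare it against: this proposition is stated without proof, in a background section the authors explicitly describe as containing no new results, with the material attributed to the classical literature (Schur, Sergeev, J\'ozefiak, Ivanov, Macdonald). What you have done is reconstruct that classical proof, and your reconstruction is consistent with how the paper organises the surrounding facts. In particular, the one input you flag as genuinely hard --- the proportionality $\Ch(\zeta^\lambda)\propto Q_\lambda$ --- is exactly the content of the Sergeev--J\'ozefiak theorem stated immediately \emph{after} this proposition, whose normalisation $\Ch(\zeta^\lambda)=2^{-\lfloor\ell(\lambda)/2\rfloor}Q_\lambda$ you re-derive from the isometry property of $\Ch$ together with the first relation of \eqref{eq:OrtCharB}; and the identity to which you reduce the inverse expansion is precisely the second relation of \eqref{eq:OrtCharB}. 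So your proof in effect inverts the paper's logical ordering, deriving the proposition from (a weaker form of) the theorem that follows it; this is legitimate, since both are independent classical facts, and your Cauchy-kernel treatment of the orthogonality statement is self-contained given the Pfaffian definition of $Q_\lambda$.

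Two points deserve tightening. First, the dual-bases lemma in the direction you use it (kernel identity $\Rightarrow$ duality) requires knowing that the two families are bases, or at least that the transition matrices are square; so Euler's lemma $|\SP(d)|=|\OP(d)|$ must be invoked \emph{before} that step, not after --- with it, the kernel identity forces the coefficient matrix expressing the $Q_\lambda$ in the $p_\mu$ to be invertible in each degree, and orthogonality and the basis property then follow together rather than the one from the other. Second, fixing the sign $c_\lambda=+2^{-\lfloor\ell(\lambda)/2\rfloor}$ requires positivity of the coefficient of $p_1^d$ in $Q_\lambda$; this is true (up to normalisation it counts standard shifted tableaux of shape $\lambda$), but it is not visible from the Pfaffian definition used in the paper and should be cited from the classical theory. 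Both points are minor and do not affect the correctness of your argument.
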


The link with the representation theory of the Sergeev algebra is given by the following result.

\begin{theorem}[{\cite{Ser84,Joz90}}]
	The direct sum of Grothendieck rings, $ T = \bigoplus_{d=0}^{\infty} R(\mc{H}_d)$, is a graded ring and
	\begin{equation}
		\Ch_d \colon R(\mc{H}_d) \to \Gamma \colon \phi \mapsto \!\! \sum_{\mu \in \mc{OP}(d)} \!\!\! z_\mu^{-1} \phi_\mu p_\mu
	\end{equation}
	fit together in a map $\Ch \colon T \otimes_\Z \C \to \Gamma$ which is an isomorphism of graded rings with scalar product. Furthermore,
	\begin{equation}
		\Ch ( \zeta^\lambda ) = 2^{-\lfloor \ell (\lambda )/2 \rfloor} Q_\lambda.
	\end{equation}
\end{theorem}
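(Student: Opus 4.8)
The plan is to prove the three assertions separately: the graded ring structure on $T$, the isomorphism-with-scalar-product statement, and the identification $\Ch(\zeta^\lambda) = 2^{-\lfloor \ell(\lambda)/2\rfloor}Q_\lambda$. Everything except multiplicativity follows at once from the results already recorded, so the single substantial point is that $\Ch$ respects products.

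First I would fix the graded ring structure on $T$. The grading is by $d$, and the product is the \emph{super-induction product}: the inclusion $\tilde{\mf{H}}_m \times \tilde{\mf{H}}_n \hookrightarrow \tilde{\mf{H}}_{m+n}$ of Sergeev groups induces, after passing to twisted group algebras, a superalgebra embedding $\mc{H}_m \widehat{\otimes} \mc{H}_n \hookrightarrow \mc{H}_{m+n}$, the hat denoting the $\Z/2\Z$-graded tensor product (so that the Clifford generators of the two factors anticommute). For supermodules $V$ over $\mc{H}_m$ and $W$ over $\mc{H}_n$ one sets $[V]\cdot[W] \coloneqq [\mathrm{Ind}_{\mc{H}_m\widehat{\otimes}\mc{H}_n}^{\mc{H}_{m+n}}(V\widehat{\otimes}W)]$; bilinearity over $\Z$ and associativity (via transitivity of induction) make $T$ a graded ring. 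This is the structure introduced by Sergeev and Józefiak.

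Next, the isometry is a direct computation. Using the definition of $\Ch_d$ together with the pairing $\langle p_\mu, p_\nu\rangle = 2^{-\ell(\mu)}z_\mu\delta_{\mu,\nu}$ on $\Gamma$, one gets
\[
	\langle \Ch_d\phi, \Ch_d\psi \rangle
	=
	\sum_{\mu \in \OP(d)} z_\mu^{-2}\,\phi_\mu\psi_\mu \cdot 2^{-\ell(\mu)}z_\mu
	=
	\sum_{\mu \in \OP(d)} 2^{-\ell(\mu)}z_\mu^{-1}\,\phi_\mu\psi_\mu,
\]
which is exactly the pairing on $R(\mc{H}_d)$ recorded above (and is consistent with the orthogonality relations, since $\ell(\lambda) - 2\lfloor\ell(\lambda)/2\rfloor = \delta(\lambda)$). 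The identification $\Ch(\zeta^\lambda) = 2^{-\lfloor\ell(\lambda)/2\rfloor}Q_\lambda$ then follows immediately by comparing the definition of $\Ch_d$ with the base-change formula $Q_\lambda = 2^{\lfloor\ell(\lambda)/2\rfloor}\sum_{\mu\in\OP(d)}z_\mu^{-1}\zeta^\lambda_\mu p_\mu$. Since the $\zeta^\lambda$ (for $\lambda\in\SP(d)$) form a basis of $R(\mc{H}_d)\otimes_\Z\C$ and the $Q_\lambda$ form a basis of the degree-$d$ part of $\Gamma$, of equal cardinality $|\SP(d)| = |\OP(d)|$ by Euler's lemma, the graded map $\Ch$ is a linear isomorphism in each degree.

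The remaining and main point is that $\Ch$ is a ring homomorphism, $\Ch_{m+n}(\phi\cdot\psi) = \Ch_m(\phi)\,\Ch_n(\psi)$. I would follow the template of the classical Frobenius characteristic for symmetric groups. Writing the super-induced character via the Frobenius formula and evaluating on the class $O_\lambda$ of a pure permutation of odd type $\lambda\in\OP(m+n)$, the sum over cosets collapses to a sum over splittings $\lambda = \alpha\sqcup\beta$ with $\alpha\in\OP(m)$, $\beta\in\OP(n)$, weighted by the index $z_\lambda/(z_\alpha z_\beta)$ together with the $2$-power corrections discussed below; feeding this into the definition of $\Ch$ and using $p_\alpha p_\beta = p_{\alpha\sqcup\beta}$ reproduces $\Ch_m(\phi)\Ch_n(\psi)$. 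The hard part, and the main obstacle, is the bookkeeping of signs and powers of $2$ coming from the super tensor product: the simple Clifford supermodules have a \emph{type} depending on the parity of $\ell$ (this is the $\SP^\pm$ dichotomy controlled by $\delta(\lambda)$ and the $2^{-\lfloor\ell/2\rfloor}$ factors), and these types interact under induction. The key simplification is that $\Ch$ only ever reads characters on pure permutations — the classes indexed by $\OP$ — so the odd Clifford directions contribute trivially to the traces, and the surviving powers of $2$ are precisely absorbed by the weights $2^{-\ell(\mu)}$ built into both the pairing and the characteristic map. Carrying out this cancellation carefully is the computation of Sergeev and Józefiak.
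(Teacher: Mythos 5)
First, a point of context: the paper does not prove this theorem at all --- it is stated as background and attributed to Sergeev and J\'{o}zefiak, in a section that explicitly disclaims new results. So there is no proof in the paper to compare yours against; your proposal has to stand on its own. Judged that way, the routine parts you carry out are correct and complete given the facts recorded in the paper: the isometry computation matches the pairing $\< \phi,\psi\> = \sum_{\mu} 2^{-\ell(\mu)}z_\mu^{-1}\phi_\mu\psi_\mu$ on $R(\mc{H}_d)$ against $\<p_\mu,p_\nu\> = 2^{-\ell(\mu)}z_\mu\delta_{\mu,\nu}$ on $\Gamma$; the identity $\Ch(\zeta^\lambda) = 2^{-\lfloor \ell(\lambda)/2\rfloor}Q_\lambda$ is immediate from the stated base change between $Q_\lambda$ and $p_\mu$; linear bijectivity in each degree follows from Euler's lemma, since $\set{\zeta^\lambda}_{\lambda\in\SP(d)}$ and $\set{Q_\lambda}_{\lambda\in\SP(d)}$ are bases; and super-induction along $\mc{H}_m\widehat{\otimes}\mc{H}_n \into \mc{H}_{m+n}$ is indeed the intended ring structure on $T$.

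The genuine gap is exactly where you locate it: multiplicativity of $\Ch$, which is the only claim that makes the $\Ch_d$ ``fit together'' into a ring isomorphism. What you offer there is a strategy (the classical Frobenius-characteristic argument) followed by the sentence that carrying out the cancellation ``is the computation of Sergeev and J\'{o}zefiak'' --- that is a citation, not a proof, and the difficulties it hides are real. Two concrete points: (i) conjugates of a pure permutation of type $\lambda$ that land in the subgroup $\tilde{\mf{H}}_m \times \tilde{\mf{H}}_n$ need not be pure permutations --- by the paper's description, the class $O_\lambda$ contains elements $\tau\prod_{j\in J}\xi_j$ with $J$ nonempty --- so the assertion that ``the odd Clifford directions contribute trivially to the traces'' is a heuristic, not a justification; one must actually track how $O_\lambda$ intersects the product classes $O_\alpha \times O_\beta$. (ii) When $\delta(\lambda) = \delta(\mu) = 1$ (both simple supermodules of odd type, i.e.\ admitting odd automorphisms), the outer product $V^\lambda \widehat{\otimes} V^\mu$ is \emph{not} simple but splits into two copies of a simple supermodule, and one must check that the resulting powers of $2$ are precisely absorbed by the normalisation $2^{-\lfloor\ell(\lambda)/2\rfloor}$; this is the step that could fail under a wrong convention for the Grothendieck group or the induction product. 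Until this bookkeeping is done, the central assertion of the theorem is unproven in your write-up --- although, to be fair, your treatment still documents more than the paper itself, which cites the same two references for the entire statement.
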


\subsection{Operator formalism}

In the non-spin case, the semi-infinite wedge space is a realisation of a highest weight module of the relevant infinite Clifford algebra. In the spin case, there is still an infinite Clifford algebra, but its highest weight module does not have such a nice description (although it can be described in terms of \emph{half-line Maya diagrams}, cf. \cite{FWZ09}). As the non-spin and spin theory are related to A- and B-type Dynkin diagrams, respectively, we also refer to them by these letters. This material is taken from \cite{You89,Iva04,Lee19,KVdL19}.

\begin{definition}
	Let $V$ be a vector space with a quadratic form $ q$. Then the \emph{Clifford algebra} $\Cl (V,q)$ is the unital free algebra on $V$ modulo the relations
	\begin{equation}
		v^2 = q(v) 1 \quad \text{for all } v \in V.
	\end{equation}
	There is a canonical embedding $ V \into \Cl (V,q)$, and we identify $V$ with its image.
\end{definition}

\begin{remark}
	As  $\mathop{\textup{char}} \C \neq 2$, a quadratic form is equivalent to a bilinear form, and the relation reads $ \{ v, w\} = 2\< v, w\> 1$. We will write $ \Cl (V, \< \plh,\plh \>) $ in this case.
\end{remark}

\begin{example}
	Let $V^d$ be a $d$-dimensional $\C$-vector space with standard basis $ ( \xi_1, \dotsc, \xi_d)$ and Euclidean inner product $ \< \xi_k,\xi_l \> = \delta_{k,l}$. Then $ \Cl (V^d, \< \plh,\plh\> ) = \Cl_d$ from \cref{CliffAlg}.
\end{example}

Just like in this example, every Clifford algebra has a $\Z/2\Z$ grading, obtained by placing $ V $ in degree $ 1$.

\begin{definition}
	Define an inner product space $ (W^B \coloneqq \C \{ \phi_n\}_{n \in \Z}, \< \phi_k,\phi_l\> \coloneqq \frac{(-1)^k}{2} \delta_{k+l})$ and call it the space of \emph{neutral} (or \emph{uncharged}) \emph{Fermions}. Denote its Clifford algebra by $ \Cl^B \coloneqq \Cl (W^B, \< \plh,\plh \>)$. It has anticommutation relations 
	\begin{equation}
		\{ \phi_k, \phi_l \} = (-1)^k \delta_{k+l}\,.
	\end{equation}
	The subspace $ L_0^B \coloneqq \C \{ \phi_n \}_{n < 0}\subset W^B$ is maximally isotropic, and $ \mf{F}^B \coloneqq \Cl^B /\Cl^B L_0^B $ is the (unique graded) highest weight module for $ \Cl^B$. We call it \emph{fermionic Fock space (of type B)} and write $ | 0\> $ or $ v_\emptyset $ for the class of $ 1$.

	\smallskip

	The fermionic Fock space has an inherited inner product, and we use the notation $ \< \mc{O} \> \coloneqq \< v_\emptyset, \mc{O} v_\emptyset \>$ for $\mc{O} \in \Cl^B$. We call this the \emph{vacuum expectation value} of $\mc{O}$.
\end{definition}

\begin{remark}
	The fermionic Fock space of type B is unique as a \emph{graded} highest weight module. Forgetting the grading, $ \phi_0$ may also act as $ \pm \frac{1}{\sqrt{2}}$.
\end{remark}

\begin{remark}
	The above construction can be carried out for $A$-, $B$-, $C$-, and $D$-type theories, cf. \cite{JM83}.
	\begin{itemize}
		\item
		In the A case, there are two infinite sets of fermions: $\psi_n$ and $\psi_n^*$, indexed by $n \in \Z+\frac{1}{2}$. This is because $A \sim \mf{gl} $ does not have a natural pairing, so one needs to add the dual space to get it.

		\item
		The C case is constructed from $(W^C \coloneqq \C \{ \chi_n \}_{n \in \Z'} , (\chi_k,\chi_l) \coloneqq (-1)^{k-1/2} \frac{\delta_{k+l}}{2})$ as a Weyl algebra (in contrast to a Clifford algebra). Hence, the $ \chi_n$ are bosons, not fermions.

		\item
		The D case is constructed from the space $(W^D \coloneqq \C\{ \phi_n \}_{n \in \Z'} , \< \phi_k, \phi_l\> \coloneqq \frac{\delta_{k+l}}{2})$.
	 \end{itemize}
\end{remark}

\begin{lemma}\leavevmode\label{lem:VEV:Fock:basis}
	\begin{enumerate}
		\item
		The vacuum expectation values of quadratic expressions in $\phi$'s are
		\begin{equation}
			\< \phi_k \phi_l \> = (-1)^l \delta_{k+l} \Big( \delta_{l > 0} + \frac{\delta_l}{2} \Big)\,;
		\end{equation}

		\item
		For $ \lambda \in \SP$, define
		\begin{equation}
			v_\lambda \coloneqq
			\begin{cases}
				\phi_{\lambda_1} \dotsb \phi_{\lambda_{\ell (\lambda )}} v_{\emptyset}
				& \lambda \in \SP^+ \\
				\sqrt{2} \phi_{\lambda_1} \dotsb \phi_{\lambda_{\ell (\lambda )}} \phi_0 v_{\emptyset}
				& \lambda \in \SP^-
			\end{cases}.
		\end{equation}
		Then an orthonormal basis of $\mf{F}^B_0$, the degree zero subspace, is given by $\set{ v_\lambda | \lambda \in \SP }$.
	\end{enumerate}
\end{lemma}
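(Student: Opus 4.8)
The plan is to first pin down the two structural facts about $\mf{F}^B$ that drive everything. The vacuum is annihilated by the negative modes, $\phi_n v_\emptyset = 0$ for $n<0$ (since $\phi_n \in L_0^B \subseteq \Cl^B L_0^B$), and the inherited pairing is compatible with the anti-involution $\omega$ of $\Cl^B$ determined by $\omega(\phi_k) = (-1)^k \phi_{-k}$, in the sense that $\phi_k$ and $(-1)^k\phi_{-k}$ are mutually adjoint. I would begin by checking that $\omega$ respects the relations $\{\phi_k,\phi_l\} = (-1)^k\delta_{k+l}$, a one-line verification, so that $\omega$ is a well-defined anti-automorphism and the adjoint $\phi_k^\dagger = (-1)^k\phi_{-k}$ is as claimed. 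A convenient consequence is that the pairing is graded for the energy grading $\deg\bigl(\phi_{n_1}\cdots\phi_{n_k}v_\emptyset\bigr) = \sum_i n_i$: since $\phi_k^\dagger$ lowers energy by $k$, a vacuum expectation $\<\phi_{a_1}\cdots\phi_{a_m}\>$ vanishes unless $\sum_i a_i = 0$. Together with $\phi_0^2 = \tfrac12$ (the case $k=l=0$ of the relations) this is all the input required.

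For part (1), the energy grading shows $\<\phi_k\phi_l\>$ vanishes unless $k+l=0$, so it suffices to treat the three cases $k=-l$. If $l<0$ then $\phi_l v_\emptyset = 0$, so the expectation is $0$, matching the right-hand side since both $\delta_{l>0}$ and $\delta_l$ vanish. If $l>0$, I anticommute $\phi_{-l}\phi_l = (-1)^l - \phi_l\phi_{-l}$ and use $\phi_{-l}v_\emptyset = 0$ to get $(-1)^l$, matching $(-1)^l\delta_{l>0}$. If $l=0$ then $\<\phi_0\phi_0\> = \tfrac12\<v_\emptyset,v_\emptyset\> = \tfrac12$, matching $(-1)^0\cdot\tfrac12\delta_0$. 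Assembling the three cases gives exactly the stated formula.

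For part (2), I would separate spanning from orthonormality. For spanning, using the relations to reorder, $\phi_n^2 = \tfrac{(-1)^n}{2}$ to remove repeated indices, and $\phi_n v_\emptyset = 0$ ($n<0$) to discard negative modes, every vector of $\mf{F}^B$ is a combination of ordered monomials $\phi_{n_1}\cdots\phi_{n_k}v_\emptyset$ with $n_1>\cdots>n_k\ge 0$; passing to the degree-zero subspace selects the monomials with an even number of factors, which are indexed by $\SP^+$ (even length, all parts positive) and by odd-length strict partitions with a $\phi_0$ appended, i.e. precisely the $v_\lambda$, $\lambda\in\SP$. For orthonormality, I would write $\<v_\lambda,v_\mu\>$ as a single vacuum expectation using $\omega$; by Wick's theorem (repeated use of the anticommutator) it is a Pfaffian of the two-point functions computed in part (1). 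Since $\<\phi_a\phi_b\>$ is nonzero only for $a+b=0$, any contraction of two modes originating from the same factor vanishes, as their indices share a sign; hence the Pfaffian collapses to a determinant pairing the parts of $\lambda$ against those of $\mu$, together with the self-contraction $\<\phi_0\phi_0\>=\tfrac12$ present exactly when both partitions lie in $\SP^-$. Distinctness of parts forces this to equal $\delta_{\lambda,\mu}$; the signs $(-1)^{\lambda_i}$ from the contractions cancel the overall sign produced by $\omega$, and the two factors $\sqrt2$ in the $\SP^-$ normalization compensate the $\tfrac12$ from $\<\phi_0\phi_0\>$. When $\lambda\in\SP^+$ and $\mu\in\SP^-$ the single uncontractable zero mode produces a vanishing row in the defining antisymmetric matrix, so $\<v_\lambda,v_\mu\>=0$.

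The main obstacle is the careful treatment of the zero mode $\phi_0$ together with the parity bookkeeping: one must verify that the factors $\sqrt2$, the sign $(-1)^{\lambda_i}$ per contraction, and the self-contraction $\<\phi_0\phi_0\>=\tfrac12$ combine to give exactly $1$ on the diagonal and $0$ off it, uniformly across the $\SP^+$ and $\SP^-$ cases. Equivalently, the real content is a clean statement of Wick's theorem for neutral fermions in the presence of the zero mode; everything else is routine manipulation of the anticommutation relations.
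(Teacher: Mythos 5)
The paper itself offers no proof to compare against: \cref{lem:VEV:Fock:basis} is stated as known background, with the whole subsection attributed to \cite{You89,Iva04,Lee19,KVdL19}. So your proposal should be judged on its own, and it is correct. For part (1), the three-case computation (annihilation for $l<0$, one anticommutation for $l>0$, $\phi_0^2=\tfrac{1}{2}$ for $l=0$) is exactly right, and your structural reading of the ``inherited'' inner product --- normalisation $\<v_\emptyset,v_\emptyset\>=1$ together with mutual adjointness of $\phi_k$ and $(-1)^k\phi_{-k}$, justified by checking that $\omega$ preserves the relations $\{\phi_k,\phi_l\}=(-1)^k\delta_{k+l}$ --- is the right one; under any weaker reading (say, declaring the monomial basis orthonormal by fiat) part (2) would have no content. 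For part (2), the spanning argument (reordering via the relations, centrality of $\phi_n^2$, $\phi_n v_\emptyset=0$ for $n<0$, then restriction to even-length monomials) correctly identifies the degree-zero monomials with $\SP^+$ together with odd-length strict partitions padded by $\phi_0$, and your sign bookkeeping in the orthonormality computation checks out: the cross-contractions $\<\phi_{-\lambda_i}\phi_{\lambda_i}\>=(-1)^{\lambda_i}$ cancel the factor $(-1)^{|\lambda|}$ produced by $\omega$, the nested matching has Wick sign $+1$, and $(\sqrt{2})^2\<\phi_0\phi_0\>=1$.

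The one ingredient you invoke without proof is Wick's theorem for neutral fermions in the presence of the zero mode, and you are right to single it out as the real content. It does hold for this vacuum: the Pfaffian expansion along the first row, $\<\phi_{a_1}\cdots\phi_{a_{2n}}\>=\sum_{j\ge 2}(-1)^j\<\phi_{a_1}\phi_{a_j}\>\<\phi_{a_2}\cdots\widehat{\phi_{a_j}}\cdots\phi_{a_{2n}}\>$, follows by anticommuting $\phi_{a_1}$ to the right when $a_1\le 0$ and by adjointness (both sides vanish) when $a_1>0$, using that $\{\phi_{a_1},\phi_{a_j}\}$ agrees with $\<\phi_{a_1}\phi_{a_j}\>$ whenever $a_1<0$, and that the $a_1=0$ case only ever contracts against another zero mode. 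Alternatively, you can bypass Wick entirely: in $\<\phi_{-\lambda_\ell}\cdots\phi_{-\lambda_1}\phi_{\mu_1}\cdots\phi_{\mu_m}(\phi_0\phi_0)\>$, anticommute $\phi_{-\lambda_1}$ all the way to the right; each swap produces $(-1)^{\lambda_1}\delta_{\lambda_1,\mu_j}$, at most one of which is nonzero by strictness of $\mu$, and the residual term kills the vacuum. This peels off one pair per step and gives the orthonormality by a short induction, which is arguably cleaner than the Pfaffian collapse you describe, though both are valid.
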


In order to develop the usual theory, we need to get bosons, which are quadratic in the fermions.

\begin{definition}\label{defn:B:norm:ord}
	We define the \emph{normal ordering} of a quadratic expressions in the fermions $\phi$ to be
	\begin{equation}
		\normord{\phi_j \phi_k} = \phi_j \phi_k - \< \phi_j \phi_k \>\,.
	\end{equation}
\end{definition}

\begin{lemma}\label{lem:comm:phi:phi} 
	The commutation relation between quadratic elements in the $\phi_i$ reads
	\begin{equation}\label{eqn:comm:phi:phi}
		[\phi_i \phi_j, \phi_k \phi_l] = (-1)^j \delta_{j+k} \phi_i \phi_l - (-1)^i \delta_{i+k} \phi_j \phi_l + (-1)^j \delta_{j+l} \phi_k \phi_i - (-1)^i \delta_{i+l} \phi_{k} \phi_j\,.
	\end{equation}
	For the normally ordered variants,
	\begin{equation}\label{eqn:comm:Norm:ord:phi:phi}
	\begin{split}
		& [ \normord{\phi_i \phi_j},\normord{\phi_k \phi_l}] = \\
		& \quad
		(-1)^j \delta_{j+k} \normord{\phi_i \phi_l} - (-1)^i \delta_{i+k} \normord{\phi_j \phi_l} + (-1)^j \delta_{j+l} \normord{\phi_k \phi_i} - (-1)^i \delta_{i+l} \normord{\phi_k \phi_j}
		\\
		& \qquad
		+ \delta_{i+j+k+l} \bigg(
			(-1)^{j+l} \delta_{i+l} \Big(
				\delta_{l>0} - \delta_{j>0} + \frac{\delta_l-\delta_j}{2}
			\Big)
			+ (-1)^{i+l} \delta_{j+l} \Big(
				\delta_{i>0} - \delta_{l>0} + \frac{\delta_i -\delta_l}{2} \Big)
		\bigg) \, .
	\end{split}
	\end{equation}
\end{lemma}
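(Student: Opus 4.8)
The plan is to reduce both identities to a single algebraic commutator formula valid in any associative algebra. For four elements $a,b,c,d$ whose anticommutators $\{x,y\}=xy+yx$ are central, repeatedly using $xy = \{x,y\}-yx$ to transpose generators yields, after four transpositions,
\begin{equation*}
  [ab,cd] = \{b,c\}\,ad - \{a,c\}\,bd + \{b,d\}\,ca - \{a,d\}\,cb.
\end{equation*}
First I would establish this by the direct computation (move $c$ left past $b$ and then $a$, then move $d$ left past $b$ and then $a$, tracking the central scalars). Substituting $a=\phi_i$, $b=\phi_j$, $c=\phi_k$, $d=\phi_l$ and inserting the anticommutation relations $\{\phi_m,\phi_n\}=(-1)^m\delta_{m+n}$ reproduces \eqref{eqn:comm:phi:phi} verbatim.

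For the normally ordered version, the key first observation is that normal ordering subtracts only the central scalars $\langle\phi_i\phi_j\rangle$, so commutators are insensitive to it: $[\normord{\phi_i\phi_j},\normord{\phi_k\phi_l}] = [\phi_i\phi_j,\phi_k\phi_l]$, which is already known from the previous step. I would then re-expand each quadratic on the right-hand side of \eqref{eqn:comm:phi:phi} via $\phi_a\phi_b = \normord{\phi_a\phi_b} + \langle\phi_a\phi_b\rangle$. The four normally ordered pieces give precisely the first line of \eqref{eqn:comm:Norm:ord:phi:phi}, and what remains is to collect the four residual scalar contributions into the anomalous (central-extension) term.

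To treat the anomaly I would substitute the explicit vacuum expectation values $\langle\phi_a\phi_b\rangle=(-1)^b\delta_{a+b}(\delta_{b>0}+\tfrac{\delta_b}{2})$ furnished by \Cref{lem:VEV:Fock:basis}. Each of the four scalar terms carries a product of two Kronecker deltas whose joint support forces $i+j+k+l=0$; on that support one delta may be replaced by $\delta_{i+j+k+l}$. Grouping the two terms supported on $\delta_{i+l}$ together and the two supported on $\delta_{j+l}$ together, and applying the sign simplifications valid on each support (namely $(-1)^{i+j}=(-1)^{j+l}$ when $i+l=0$, and $(-1)^{i+j}=(-1)^{i+l}$ when $j+l=0$), each pair collapses to one of the bracketed expressions $\delta_{l>0}-\delta_{j>0}+\tfrac{\delta_l-\delta_j}{2}$ and $\delta_{i>0}-\delta_{l>0}+\tfrac{\delta_i-\delta_l}{2}$, matching \eqref{eqn:comm:Norm:ord:phi:phi} exactly.

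The argument is entirely mechanical and conceptually routine; the only step demanding genuine care is this final bookkeeping, where one must simultaneously manage the sign prefactors $(-1)^{\bullet}$, the reduction of each delta product to $\delta_{i+j+k+l}$ times a single remaining delta, and the separate handling of the $\delta_{>0}$ and half-integer $\tfrac{\delta}{2}$ contributions. I expect that to be the main obstacle — not in difficulty but in organizing the signs and supports so that the four scalar terms pair up into the two stated differences without error.
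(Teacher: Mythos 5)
Your proposal is correct and takes essentially the same route as the paper: a direct computation with the anticommutation relations for \eqref{eqn:comm:phi:phi}, and then for \eqref{eqn:comm:Norm:ord:phi:phi} the observation that normal ordering only shifts by central scalars, so the commutators agree and one merely re-expands $\phi_a\phi_b = \normord{\phi_a\phi_b} + \braket{\phi_a\phi_b}$ to "correct the central part", exactly as the paper indicates. Your bookkeeping of the delta supports and signs checks out, so the argument is complete.
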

\begin{proof}
	The first formula is straightforward calculation, while the second formula is exactly the same, but requires correcting the central part.
\end{proof}

\begin{definition}
	Let $\mf{a}_\infty = \{ (a_{mn})_{m,n \in \Z} \,|\, a_{mn} = 0 \text{ for } |m-n| \gg 0\}$ be the bi-infinite general linear algebra. It has a basis $\set{ E_{j,k} = (\delta_{m,j}\delta_{n,k})_{m,n} }_{j,k \in \Z}$. Define an involution $\iota \colon E_{j,k} \mapsto (-1)^{j+k} E_{-k,-j}$. Then define the infinite Lie algebra
	\begin{equation}
		\mf{b}_\infty \coloneqq \bigl\{ g \in \mf{a}_\infty \bigm| \iota (g) = -g\bigr\}.
	\end{equation}
	It has a basis $\set{ B_{j,k} = E_{-j,k} - (-1)^{j+k} E_{-k,j} }_{j>k}$ and two irreducible representations on $\mf{F}^B_0$: a linear, $ \pi$, and a projective, $ \hat{\pi}$, given on this basis by
	\begin{equation}
		\pi \colon B_{j,k} \mapsto (-1)^j \phi_j\phi_k \,, \qquad \hat{\pi} \colon B_{j,k} \mapsto (-1)^j \normord{\phi_j\phi_k} \,.
	\end{equation}
	We write $\hat{\mf{b}}_\infty$ for the central extension of which $\hat{\pi}$ is a representation. From now on, we will write $ F_{j,k}\coloneqq (-1)^j \phi_j \phi_k$ and denote $ \hat{F}_{j,k} \coloneqq (-1)^j \normord{\phi_j \phi_k}$.
\end{definition}

\begin{definition}
	For any odd integer $m$, define the \emph{bosonic operators} $\alpha^B$ as:
	\begin{equation}\label{eqn:B:boson}
		\alpha^B_m
		\coloneqq
		-\frac{1}{2} \sum_{k \in \Z} \hat{F}_{k,-k-m}
		=
		-\!\! \sum_{k> -m/2} \! F_{k,-k-m} \, .
	\end{equation}
	The operators $\alpha^B_m$ could also be defined for even arguments, but those would vanish. We also define the generating series
	\begin{equation}
		H^B \coloneqq \sum_{m \in 2 \N +1} \alpha^B_m t_m\,, \qquad t_m = \frac{p_m}{m}.
	\end{equation}
\end{definition}


\begin{lemma}\label{lem:F:hat:symmetry:Heisenberg} ~
	\begin{enumerate}
		\item The elements $\hat{F}_{j,k}$ satisfy $\hat{F}_{j,k} = (-1)^{j+k+1} \hat{F}_{k,j}$.

		\item The bosonic operators satisfy
		\begin{equation}
			\bigl[ \alpha^B_m,\alpha^B_n \bigr] = \frac{m}{2}\delta_{m+n} \,,
		\end{equation}
		and the subalgebra they span is called the \emph{Heisenberg subalgebra} $\mf{H}^B$.
	\end{enumerate}
\end{lemma}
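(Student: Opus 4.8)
The plan is to prove the two statements separately; both reduce to the explicit quadratic formulae already recorded above.

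For part~(1), I would start from $\hat{F}_{j,k} = (-1)^j \normord{\phi_j\phi_k}$ together with $\normord{\phi_j\phi_k} = \phi_j\phi_k - \<\phi_j\phi_k\>$. Cancelling the common factor $(-1)^j$, the claimed identity $\hat{F}_{j,k} = (-1)^{j+k+1}\hat{F}_{k,j}$ is equivalent to the antisymmetry $\normord{\phi_j\phi_k} = -\normord{\phi_k\phi_j}$. Summing the two products and using the anticommutator $\{\phi_j,\phi_k\} = (-1)^j \delta_{j+k}$, this antisymmetry reduces to the scalar identity $(-1)^j\delta_{j+k} = \<\phi_j\phi_k\> + \<\phi_k\phi_j\>$. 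It is trivial for $j+k\neq 0$; for $k=-j$ it follows from \cref{lem:VEV:Fock:basis}, since the two vacuum expectations sum to $(-1)^j(\delta_{j<0} + \delta_{j>0} + \delta_j) = (-1)^j$, matching the right-hand side of the anticommutator.

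For part~(2), I would substitute $\alpha^B_m = -\tfrac12 \sum_a (-1)^a \normord{\phi_a\phi_{-a-m}}$ into the bracket to obtain
\begin{equation*}
	[\alpha^B_m,\alpha^B_n] = \tfrac14 \sum_{a,b} (-1)^{a+b}\, \bigl[ \normord{\phi_a\phi_{-a-m}},\normord{\phi_b\phi_{-b-n}} \bigr],
\end{equation*}
and expand each bracket with \cref{eqn:comm:Norm:ord:phi:phi}. Each of the four normally-ordered terms there carries a Kronecker delta fixing $b$ in terms of $a$; resolving these deltas and reindexing---invoking part~(1) to bring the crossed contributions $\normord{\phi_{-a-m}\phi_{-b-n}}$ and $\normord{\phi_b\phi_a}$ into standard order---turns each of them into a scalar multiple of $\sum_a (-1)^a \normord{\phi_a\phi_{-a-(m+n)}} = -2\,\alpha^B_{m+n}$. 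Because $m$ and $n$ are odd, $m+n$ is even and $\alpha^B_{m+n}=0$, so the entire operator part vanishes.

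It then remains to evaluate the central term of \cref{eqn:comm:Norm:ord:phi:phi}, weighted by $\delta_{i+j+k+l}$; for our indices this equals $\delta_{m+n}$, so only $n=-m$ contributes, which already accounts for the factor $\delta_{m+n}$. Setting $n=-m$, the two central pieces are supported on $b=a+m$ and $b=-a$ respectively, and each collapses to a sum over $a$ of a difference of indicator functions of the shape $\delta_{a<0}-\delta_{a<-m}$ (the accompanying point masses cancel in the sum). The single delicate point---and the main obstacle---is making sense of these formally infinite sums: the difference of indicators is supported on exactly $m$ integers, so each piece contributes $\tfrac{m}{4}$ and the two together give $\tfrac{m}{2}$. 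Combining the vanishing operator part with this central contribution yields $[\alpha^B_m,\alpha^B_n] = \tfrac{m}{2}\delta_{m+n}$, and the subalgebra spanned by the $\alpha^B_m$ is the Heisenberg algebra $\mf{H}^B$ by definition.
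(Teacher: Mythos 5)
Your proof is correct and uses exactly the machinery the paper sets up for this purpose: the paper itself omits the proof, but your argument — reducing part~(1) to antisymmetry of the normal product via the anticommutator and \cref{lem:VEV:Fock:basis}, then expanding part~(2) with \cref{lem:comm:phi:phi} so that the operator part collapses to a multiple of $\alpha^B_{m+n}=0$ and the central part yields $\tfrac{m}{4}+\tfrac{m}{4}$ — is the intended route. One cosmetic remark: your statement that the difference of indicators is "supported on exactly $m$ integers" tacitly assumes $m>0$; for $m<0$ it is supported on $|m|$ integers with value $-1$, so the signed sum still equals $m$ and the conclusion $\bigl[\alpha^B_m,\alpha^B_n\bigr]=\tfrac{m}{2}\delta_{m+n}$ holds for both signs.
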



The Heisenberg algebra $ \mf{H}^B$ has a unique (up to unique isomorphism) irreducible representation, given by $ \Gamma$, with action $ \rho( \alpha^B_m ) = m\frac{\del}{\del t_m} $ and $ \rho (\alpha^B_{-m} ) = \frac{1}{2} t_m$ (for $ m = 2n+1 >0$).

\begin{theorem}[Boson-fermion correspondence, \cite{You89}]\label{thm:BF:corresp}
	As a representation of $\mf{H}^B$, $\mf{F}^B_0$ is irreducible, and therefore isomorphic to $\Gamma$. Explicitly,
	\begin{equation}
		\Phi^B \colon \mf{F}^B_0 \to \Gamma \colon v \mapsto v_\emptyset^* e^{H^B}v
	\end{equation}
	is an isomorphism. On a basis, $\Phi^B (v_\lambda ) = 2^{-\ell (\lambda )/2} Q_\lambda ( \tfrac{1}{2} p )$.
\end{theorem}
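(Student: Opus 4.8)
The plan is to follow the classical strategy for a boson--fermion correspondence: first establish that $\Phi^B$ is a well-defined isomorphism of $\mf{H}^B$-modules, and then compute the images $\Phi^B(v_\lambda)$ via a vertex-operator dressing of the fermion field together with Wick's theorem for neutral fermions.

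\emph{Step 1 (vacuum and the dressed field).} First I would record that positive bosonic modes annihilate the vacuum, $\alpha^B_m v_\emptyset = 0$ for $m>0$: in $\alpha^B_m = -\sum_{k>-m/2} F_{k,-k-m}$ every summand $F_{k,-k-m} = (-1)^k\phi_k\phi_{-k-m}$ has one index strictly negative when $m>0$, hence lies in $\Cl^B L_0^B$. Packaging the fermions into the field $\phi(z) = \sum_n \phi_n z^n$, the contraction $[\phi_k\phi_l,\phi_n] = (-1)^l\delta_{l+n}\phi_k - (-1)^k\delta_{k+n}\phi_l$ gives, for every odd $m$, the relation $[\alpha^B_m,\phi_n]=\phi_{n-m}$, equivalently $[\alpha^B_m,\phi(z)] = z^m\phi(z)$. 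Since the right-hand side is a scalar multiple of $\phi(z)$, exponentiating yields the key dressing formula
\[
  \tilde\phi(z) \coloneqq e^{H^B}\phi(z)e^{-H^B} = E(z)\,\phi(z),
  \qquad
  E(z)\coloneqq \exp\Bigl(\sum_{m\in 2\N+1} t_m z^m\Bigr),
\]
where $E(z)$ is a c-number, involving only the variables $t_m$.

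\emph{Step 2 ($\Phi^B$ is an isomorphism).} Since $H^B v_\emptyset = 0$ we have $\Phi^B(v_\emptyset)=1$. Conjugating by $e^{H^B}$ and using $\langle v_\emptyset,\alpha^B_{-m}(\plh)\rangle = \langle \alpha^B_m v_\emptyset,\plh\rangle = 0$, one checks that $\Phi^B$ intertwines the $\mf{H}^B$-action on $\mf{F}^B_0$ with the Heisenberg action $\rho$ on $\Gamma$, the negative modes becoming multiplication operators and the positive modes the adjoint derivations. By Euler's lemma the graded dimension of $\mf{F}^B_0$, enumerated by $\set{v_\lambda}_{\lambda\in\SP}$, equals that of $\Gamma = \C[p_1,p_3,\dots]$, enumerated by $\OP$; together with the highest-weight property of Step 1 this identifies $\mf{F}^B_0$ as the irreducible Fock module. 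As a nonzero morphism of irreducibles, $\Phi^B$ is then an isomorphism by Schur's lemma.

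\emph{Step 3 (explicit formula).} For $\lambda\in\SP^+$ of length $2\ell$ I would extract $\Phi^B(v_\lambda)$ as a coefficient of the dressed correlator, $\Phi^B(v_\lambda) = [z_1^{\lambda_1}\cdots z_{2\ell}^{\lambda_{2\ell}}]\,\langle \tilde\phi(z_1)\cdots\tilde\phi(z_{2\ell})\rangle$. Because $\tilde\phi(z)=E(z)\phi(z)$ with $E$ scalar, Wick's theorem for neutral fermions gives the Pfaffian $\langle \tilde\phi(z_1)\cdots\tilde\phi(z_{2\ell})\rangle = \Pf\bigl(\langle \tilde\phi(z_i)\tilde\phi(z_j)\rangle\bigr)$, and \cref{lem:VEV:Fock:basis} computes the two-point function to
\[
  \langle \tilde\phi(z)\tilde\phi(w)\rangle
  = E(z)E(w)\,\tfrac12\,\frac{z-w}{z+w}.
\]
Comparing with the generating series defining $Q_{n,m}$ under the substitution $p\mapsto\tfrac12 p$ (which turns $\exp\bigl(2\sum\tfrac{p_{2k+1}}{2k+1}(\,\cdot\,)\bigr)$ into $E(z)E(w)$) identifies $[z^nw^m]\langle\tilde\phi(z)\tilde\phi(w)\rangle = \tfrac12 Q_{n,m}(\tfrac12 p)$ for $n,m\ge 1$. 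Extracting coefficients entrywise inside the Pfaffian and using $\Pf(cM)=c^{\ell}\Pf(M)$ then gives $\Phi^B(v_\lambda) = 2^{-\ell}\Pf\bigl(Q_{\lambda_i,\lambda_j}(\tfrac12 p)\bigr) = 2^{-\ell(\lambda)/2}Q_\lambda(\tfrac12 p)$.

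\emph{Main obstacle.} The genuinely delicate points are the odd-length case $\lambda\in\SP^-$ and the bookkeeping of the zero mode $\phi_0$. Here one appends a part $0$ to $\lambda$ so that $\mc{Q}_\lambda$ has even order, and the $\sqrt2$ in the definition of $v_\lambda$ together with the anomalous contraction $\langle\phi_0\phi_0\rangle=\tfrac12$ must be tracked so that the extra Pfaffian row and column reproduce the entries $Q_{\lambda_j,0}(\tfrac12 p)$ with the correct normalization. Keeping the factors of $2$ consistent between the pairing on $\Gamma$, the orthonormality of $\set{v_\lambda}$, and the Pfaffian scaling is where the bulk of the care is needed.
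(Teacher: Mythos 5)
Your proposal cannot be checked against an internal argument, because the paper gives none: \cref{thm:BF:corresp} is imported wholesale from \cite{You89}, and your route — dressing the neutral fermion field by $e^{H^B}$ and evaluating correlators via Wick's theorem — is precisely the standard vertex-operator proof behind that citation. On its merits your argument is correct. The checkable steps all hold: for $m>0$ and $k>-m/2$ the index $-k-m$ is negative, so each summand of $\alpha^B_m$ in \cref{eqn:B:boson} ends in $L^B_0$ and $\alpha^B_m v_\emptyset = 0$; your contraction formula gives $[\alpha^B_m,\phi_n]=\phi_{n-m}$ for odd $m$ (both delta-terms contribute $\tfrac12\phi_{n-m}$), so $\ad_{H^B}$ acts on $\phi(z)$ by the scalar $\sum_m t_m z^m$ and the dressing follows; \cref{lem:VEV:Fock:basis} yields $\langle\phi(z)\phi(w)\rangle = \tfrac12\tfrac{z-w}{z+w}$, and the substitution $p\mapsto\tfrac12 p$ in the generating series for $Q_{n,m}$ does produce $E(z)E(w)$. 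Extracting coefficients entrywise inside the Pfaffian is legitimate because each variable occurs exactly once in every perfect matching.

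The point you flag as the main obstacle closes exactly as you predict, and you should carry it out: for $n\ge 1$ the bare kernel $\tfrac12\tfrac{z-w}{z+w}$ contributes only to monomials $z^{-j}w^{j}$, so $[z^n w^0]\langle\tilde\phi(z)\tilde\phi(w)\rangle = \tfrac12 Q_{n,0}(\tfrac12 p)$; writing $\ell(\lambda)=2\ell-1$ and appending the zero part, the $\sqrt2$ in $v_\lambda$ combines with $\Pf\bigl(\tfrac12 Q\bigr) = 2^{-\ell}\Pf(Q)$ on the $2\ell\times 2\ell$ matrix to give $\sqrt2\cdot 2^{-\ell} = 2^{-\ell(\lambda)/2}$, as required. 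A sanity check at $\lambda=(1)$: expanding $e^{H^B}$ to first order and using $\phi_0^2=\tfrac12$ gives $\Phi^B(\sqrt2\,\phi_1\phi_0 v_\emptyset) = \sqrt2\, t_1 \langle \alpha^B_1\phi_1\phi_0\rangle = \tfrac{p_1}{\sqrt2}$, matching $2^{-1/2}Q_{(1)}(\tfrac12 p)$ since $Q_{1,0}=2p_1$. One cosmetic caveat: the intertwining you establish reads, in the $p$-variables, $\alpha^B_m \mapsto m\,\partial_{p_m}$ and $\alpha^B_{-m}\mapsto\tfrac12 p_m$, i.e.\ the paper's formulas for $\rho$ with $t$ replaced by $p$; this is a normalisation wrinkle in the paper's own conventions ($t_m = p_m/m$) rather than an error on your side, and it is consistent with the final formula being evaluated at $\tfrac12 p$.
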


\begin{corollary}\label{cor:boson:action}
	Inverting $\Phi^B$ and using the base change between $Q_\lambda$ and $p_\mu$, we get
	\begin{equation}
		\alpha^B_{-\mu} v_{\emptyset}
		=
		\sum_{\lambda \in \SP (|\mu| )} \frac{\zeta_{\mu}^{\lambda}}{2^{\delta(\lambda)/2 + \ell(\mu)}} v_{\lambda}.
	\end{equation}
\end{corollary}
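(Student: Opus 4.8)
The plan is to transport the identity to the ring of supersymmetric functions $\Gamma$ via the boson--fermion correspondence $\Phi^B$ of \cref{thm:BF:corresp}, manipulate it there, and then read it back. First I would record the two inputs $\Phi^B$ supplies on the vacuum and on the creation operators: on the one hand $\Phi^B(v_\emptyset) = Q_\emptyset = 1$, and on the other hand each creation operator $\alpha^B_{-m}$ (for $m$ odd and positive) is intertwined by $\Phi^B$ with multiplication by $\tfrac{m}{2} t_m = \tfrac{p_m}{2}$. The latter is obtained by commuting $\alpha^B_{-m}$ through $e^{H^B}$ in $\Phi^B(v) = v_\emptyset^* e^{H^B} v$, using the Heisenberg relation $[\alpha^B_m,\alpha^B_{-m}] = \tfrac{m}{2}$ of \cref{lem:F:hat:symmetry:Heisenberg}: the central term produces the factor $\tfrac{m}{2} t_m$, while the leftover term $\langle v_\emptyset, \alpha^B_{-m}\,\cdot\,\rangle$ vanishes for degree reasons. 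Since the negative modes commute among themselves, applying them to the vacuum gives
\[
  \Phi^B(\alpha^B_{-\mu} v_\emptyset) = \prod_{i=1}^{\ell(\mu)} \frac{p_{\mu_i}}{2} = \frac{p_\mu}{2^{\ell(\mu)}}.
\]

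Next I would recognise this prefactor as exactly the rescaling built into the basis formula $\Phi^B(v_\lambda) = 2^{-\ell(\lambda)/2} Q_\lambda(\tfrac12 p)$: since $p_\mu(\tfrac12 p) = 2^{-\ell(\mu)} p_\mu$, we may write $\tfrac{p_\mu}{2^{\ell(\mu)}} = p_\mu(\tfrac12 p)$. Now I apply the base change from power sums to Schur $Q$-functions, $p_\mu = \sum_{\lambda \in \SP(|\mu|)} 2^{-\ell(\mu)-\lceil \ell(\lambda)/2\rceil} \zeta_\mu^\lambda Q_\lambda$, evaluated at $\tfrac12 p$ (it is an identity of functions, so the substitution is harmless), to obtain
\[
  \Phi^B(\alpha^B_{-\mu} v_\emptyset) = \sum_{\lambda \in \SP(|\mu|)} 2^{-\ell(\mu)-\lceil \ell(\lambda)/2\rceil}\,\zeta_\mu^\lambda\, Q_\lambda(\tfrac12 p).
\]

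Finally I would substitute $Q_\lambda(\tfrac12 p) = 2^{\ell(\lambda)/2}\Phi^B(v_\lambda)$ and invert the isomorphism $\Phi^B$, which produces
\[
  \alpha^B_{-\mu} v_\emptyset = \sum_{\lambda \in \SP(|\mu|)} 2^{-\ell(\mu)-\lceil \ell(\lambda)/2\rceil + \ell(\lambda)/2}\,\zeta_\mu^\lambda\, v_\lambda.
\]
The argument then closes with the elementary identity $\lceil \ell(\lambda)/2\rceil - \ell(\lambda)/2 = \delta(\lambda)/2$ (equal to $0$ for $\lambda \in \SP^+$ and $\tfrac12$ for $\lambda \in \SP^-$), which collapses the exponent to $-\ell(\mu) - \delta(\lambda)/2$ and reproduces the claimed coefficient $\zeta_\mu^\lambda / 2^{\delta(\lambda)/2 + \ell(\mu)}$.

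The conceptual content is light; the only real work---and the main place to slip---is the bookkeeping of powers of $2$ coming from three distinct sources: the factor $\tfrac12$ in the Heisenberg action of the creation operators, the $\tfrac12 p$ rescaling hidden in the boson--fermion correspondence on the $Q$-basis, and the floor/ceiling discrepancy $\lfloor \ell(\lambda)/2\rfloor$ versus $\lceil \ell(\lambda)/2\rceil$ appearing in the two directions of the $Q_\lambda \leftrightarrow p_\mu$ base change. Keeping the argument $\tfrac12 p$ (rather than $p$) consistent throughout is precisely what lets me avoid invoking the orthogonality relation \eqref{eq:OrtCharB}: the substitution $p \mapsto \tfrac12 p$ absorbs exactly the factor $2^{-\ell(\mu)}$ produced by the creation operators, so the base-change coefficients pass through unchanged and the single power-of-two simplification at the end suffices.
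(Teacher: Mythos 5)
Your proof is correct and follows exactly the route the paper indicates (and leaves implicit) in the corollary statement: push $\alpha^B_{-\mu}v_\emptyset$ through $\Phi^B$ using the Heisenberg relation, apply the $p_\mu$-to-$Q_\lambda$ base change at the rescaled argument $\tfrac{1}{2}p$, and invert via $\Phi^B(v_\lambda) = 2^{-\ell(\lambda)/2}Q_\lambda(\tfrac{1}{2}p)$. Your power-of-two bookkeeping is exactly right, including the identity $\lceil \ell(\lambda)/2\rceil - \ell(\lambda)/2 = \delta(\lambda)/2$ and the intertwining factor $\tfrac{p_m}{2}$, which you correctly derive from $[H^B,\alpha^B_{-m}] = \tfrac{m}{2}t_m$ directly (this is the normalisation consistent with the stated corollary, as one can confirm by the direct check $\alpha^B_{-1}v_\emptyset = \phi_1\phi_0 v_\emptyset = \tfrac{1}{\sqrt{2}}v_{(1)}$).
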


Another natural sequence of element of $\hat{\mf{b}}_{\infty}$ is given by the fermionic cut-and-join operators, which will play an important role in the theory of spin Hurwitz numbers.

\begin{definition}
	For any positive even integer $r$, define the \emph{fermionic cut-and-join operators} $\mc{F}^B$ as: 
	\begin{equation}\label{eqn:B:power:sum}
		\mc{F}^B_{r+1}
		\coloneqq
		\frac{1}{2} \sum_{k \in \Z} k^{r+1} \hat{F}_{k,-k}
		=
		\sum_{k >0} k^{r+1} F_{k,-k}.
	\end{equation}
	Again, the operators $\mc{F}^B_{r+1}$ could also be defined for odd $r$, but those would vanish.
\end{definition}

Using that $\hat{F}_{k,-k} v_\lambda = v_\lambda$ if $k \in \lambda$ and zero otherwise, we get the following result.

\begin{lemma}\label{lem:power:sum:eigenvalue}
	The $v_\lambda $ are eigenvectors of $ \mc{F}^B_{r+1}$, with eigenvalue $p_{r+1}(\lambda)$.
\end{lemma}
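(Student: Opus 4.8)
The plan is to compute $\mc{F}^B_{r+1} v_\lambda$ directly, using the fermionic description of the basis $\set{v_\lambda}$ from \cref{lem:VEV:Fock:basis}. First I would record that the two defining expressions for $\mc{F}^B_{r+1}$ agree, and reduce everything to the positive-index form $\sum_{k>0} k^{r+1} F_{k,-k}$. Indeed, the antisymmetry $\hat{F}_{j,k} = (-1)^{j+k+1}\hat{F}_{k,j}$ of \cref{lem:F:hat:symmetry:Heisenberg}, together with the fact that $r+1$ is odd (since $r$ is even), makes the $k<0$ part of $\tfrac12\sum_{k\in\Z} k^{r+1}\hat{F}_{k,-k}$ reproduce the $k>0$ part, while the $k=0$ term vanishes because $\normord{\phi_0\phi_0}=0$; hence the prefactor $\tfrac12$ collapses the symmetric sum to $\sum_{k>0} k^{r+1}\hat{F}_{k,-k}$. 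Moreover, for $k>0$ the normal-ordering correction $\langle\phi_k\phi_{-k}\rangle$ vanishes by the vacuum-expectation formula of \cref{lem:VEV:Fock:basis}, so that $\hat{F}_{k,-k} = F_{k,-k}$ acts on $\mf{F}^B_0$.

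The heart of the argument is the claim that, for $k>0$, the operator $\hat{F}_{k,-k}$ acts as the occupation-number operator of the mode $k$: it fixes $v_\lambda$ when $k$ is a part of $\lambda$ and annihilates it otherwise. To prove this I would write $v_\lambda = \phi_{\lambda_1}\cdots\phi_{\lambda_{\ell}} v_\emptyset$ (with an extra factor $\sqrt{2}\,\phi_0$ when $\lambda\in\SP^-$) and commute the annihilation operator $\phi_{-k}$ to the right through the product, using the relations $\{\phi_{-k},\phi_{\lambda_i}\} = (-1)^k\delta_{\lambda_i - k}$ and $\phi_{-k} v_\emptyset = 0$. If $k$ is not a part of $\lambda$, every anticommutator vanishes, $\phi_{-k}$ reaches the vacuum, and the result is $0$. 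If $k=\lambda_j$, the unique nonzero anticommutator excises $\phi_{\lambda_j}$ from the product; applying the remaining $\phi_k$ and moving it back to position $j$ then restores $v_\lambda$ up to a sign, which combines with the factor $(-1)^k$ in the definition of $F_{k,-k}$ to give exactly $+1$.

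Finally I would conclude by linearity: $\mc{F}^B_{r+1} v_\lambda = \sum_{k>0} k^{r+1}\hat{F}_{k,-k} v_\lambda = \bigl(\sum_{j} \lambda_j^{r+1}\bigr) v_\lambda = p_{r+1}(\lambda)\, v_\lambda$, since the surviving summands are precisely those indexed by the parts of $\lambda$. The main obstacle is the sign bookkeeping in the occupation-number claim: one has to track the signs acquired as $\phi_{-k}$ passes the $j-1$ factors $\phi_{\lambda_1},\dots,\phi_{\lambda_{j-1}}$ and again as $\phi_k$ is returned to position $j$, and to check that in the $\SP^-$ case the extra $\phi_0$ is crossed an even number of times and so contributes no net sign. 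Once these signs are verified to cancel against $(-1)^k$, the rest is a routine application of the anticommutation relations and the vacuum-annihilation property.
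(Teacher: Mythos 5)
Your proposal is correct and follows essentially the same route as the paper: the paper simply asserts the occupation-number property ($\hat{F}_{k,-k} v_\lambda = v_\lambda$ if $k \in \lambda$ and $0$ otherwise) immediately before the lemma and deduces the eigenvalue $p_{r+1}(\lambda)$ from it, which is exactly the key claim you prove. Your argument just supplies the anticommutation and sign bookkeeping that the paper leaves implicit, and those details check out (including the reduction $\tfrac12\sum_{k\in\Z} = \sum_{k>0}$ and the $\SP^-$ case, where the surviving term never crosses $\phi_0$).
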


Note that \cite{You89,Lee19} give strong links between the $A$ and $B$ type theories.
A comparison between these theories is given in the following table.
\begin{center}
\begin{tabular}{l|c|c}
	\toprule
	&
	\textbf{$\bm{A}$ type} 
	&
	\textbf{$\bm{B}$ type}
	\\
	\midrule
	Index set &
	$\Z' = \Z + \frac{1}{2}$
	&
	$\Z$
	\\[1ex]
	Lie algebra
	&
	$\mf{a}_\infty$
	&
	$\mf{b}_\infty$
	\\[1ex]
	Vector space
	&
	$W^A = \bigoplus_{k \in \Z'} (\C \psi_k \oplus \C \psi_k^*)$
	&
	$W^B = \bigoplus_{k \in \Z} \C \phi_k
	$
	\\[1ex]
	Inner product
	&
  \pbox{20cm}{
  $\< \psi_k, \psi_l^* \> = \delta_{k,l}/2$ \\
  $\< \psi_k^{*}, \psi_l^{*} \> = \< \psi_k ,\psi_l \> = 0$
  }
	&
	$\< \phi_k, \phi_l \> = (-1)^k \delta_{k+l}/2$
	\\[1ex]
	Isotropic subspace
	&
	$L_0^A = \bigoplus_{k < 0} (\C \psi_k \oplus \C \psi_{-k}^*)$
	&
	$L_0^B = \bigoplus_{k <0} \C \phi_k$
	\\[1ex]
	Clifford algebra
	&
	$\Cl^A = \Cl (W^A, \<\plh, \plh\> )$
	&
	$\Cl^B = \Cl (W^B,\< \plh, \plh\>)$
	\\[1ex]
	CCR 
	&
	\pbox{20cm}{
	$\{ \psi_k ,\psi_l^{*} \} = \delta_{k,l}$ \\
	$\{ \psi_k^{*}, \psi_l^{*} \} = \{ \psi_k ,\psi_l \} = 0$
	}
	&  
	$
	\{ \phi_k, \phi_l \} = (-1)^{k} \delta_{k+l}$
	\\[2ex]
	Fermionic Fock space
	&
	$\mf{F}^A = \Cl^A/\Cl^A L_0 \cong \bigwedge^{\frac{\infty}{2}} V$
	&
	$\mf{F}^B =  \Cl^B / \Cl^B L_0^B$
	\\[1ex]
	Main subspace
	&
	$\mf{F}^A_0 \cong \bigwedge^{\frac{\infty}{2}}_0 V$
	&
	$\mf{F}^B_0$
	\\[1ex]
	Basis
	&
	$\{ v_\lambda = \psi_{\lambda_1} \dotsb \psi_{\lambda_{\ell (\lambda)}} \ket{0} \mid \lambda \in \mc{P} \}$
	&
	$\{ v_\lambda = 2^{\delta(\lambda)/2}\phi_{\lambda_1} \dotsb \phi_{\lambda_{\ell (\lambda )}} (\phi_0) \ket{0} \mid \lambda \in \SP\}$
	\\[1ex]
	Bosons
	&
	$\alpha_n = \sum\limits_{k \in \Z'} \normord{\psi_k \psi^{*}_{k + n}}$
	&
	$\alpha_n^B 
	=\frac{1}{2} \sum\limits_{k \in \Z} (-1)^{k+1} \normord{\phi_k \phi_{-k-n}}$
	\\[1ex]
	Trivial bosons 
	&
	no such relation
	&
	$\alpha_{2n}^B = 0$
	\\[1ex]
	Bosonic CCR
	&
	$[\alpha_m, \alpha_n] = m \delta_{m+n}$
	&
	$[\alpha_{m}^B, \alpha_{n}^B] = \frac{m}{2} \delta_{m+n} 
	$
	\\[1ex]
	CJ operators
	&
	$\mc{F}_{r+1} = \sum\limits_{k \in\Z'} k^{r+1} \normord{ \psi_k \psi^*_k}$
	&
	$\mc{F}^B_{r+1} = \frac{1}{2} \sum\limits_{k \in \Z} (-1)^k k^{r+1} \normord{\phi_k \phi_{-k}}$
	\\[1ex]
	OP operators \eqref{B-OPops}
	&
	$ \hat{\mc{E}}_n(z) = \sum\limits_{k \in \Z'} e^{z(k-\frac{n}{2})}\normord{\psi_{k-n} \psi^*_k}$
	&
	$\hat{\mc{E}}^{B}_n(z) = \sum\limits_{k \in \Z} \frac{(-1)^{k+1}}{2} e^{-(k+\frac{n}{2})z} \normord{\phi_{k} \phi_{-k-n}}$
	\\[1ex]
	Hamiltonian
	&
	$H(\bm{t}) = \sum_{n \in \N} \alpha_n t_n$
	&
	$H^B(\bm{t}) = \sum_{n \in 2 \N +1} \alpha_n^B t_n$
	\\[1ex]
	Bosonic Fock space
	&
	$\Lambda = \C [t_1, t_2, t_3, \dotsc ]$
	&
	$\Gamma = \C [t_1, t_3, t_5, \dotsc ]$
	\\[1ex]
	B-F correspondence
	&
  \pbox{20cm}{
  	$\Phi \colon \mf{F}_0^A \iso \Lambda \colon v \mapsto v_\emptyset^* e^{H(t)} v$ \\
   $v_\lambda \mapsto s_\lambda (p), \, p_n = m t_n$
  }
	&
  \pbox{20cm}{
  	$\Phi^B \colon  \mf{F}_0^B \iso \Gamma \colon v \mapsto v_\emptyset^* e^{H^B(t)} v$ \\
    $v_\lambda \mapsto 2^{-\ell(\lambda )} Q_\lambda (p/2), \, p_n = m t_n$
  }
	\\[1ex]
	Weyl group
	&
	$\mf{S}_\infty = \lim_{d \to \infty} \mf{S}_d$
	&
	$\mf{H}_\infty = \lim_{d \to \infty} \mf{H}_d$
	\\
	\bottomrule
\end{tabular}
\end{center}

\vspace{1cm}
\section{Introduction to spin Hurwitz numbers}
\label{sec:spin:HNs}

Spin Hurwitz numbers are weighted counts of covers of a curve with a spin structure or theta characteristic, with a sign given by the parity~\cite{EOP08}. This is captured in the following definitions. We switch between the languages of divisors and line bundles without mentioning it. For more background on spin Hurwitz numbers in relation to integrable hierarchies and supersymmetric functions, see also \cite{Lee19,MMN20,MMNO21}.

\begin{definition}\label{defn:spin:structure:parity}
	A \emph{spin structure} or \emph{theta characteristic} on a curve $C$ is a line bundle $ \vartheta \to C$ such that $ \vartheta^{\otimes 2} \cong \omega_C $. A \emph{spin curve} is a pair $ (C,\vartheta )$ of a curve $C$ with a spin structure $\vartheta$ on it. Define the \emph{parity} of a spin structure $\vartheta \to C$ as
	\begin{equation}
		p(\vartheta ) \equiv h^0 (C;\vartheta ) \pmod{2}.
	\end{equation}
\end{definition}

The parity is a deformation invariant of $(C,\vartheta )$, a fact proved for smooth curves by Riemann in the language of theta functions, or more abstractly by Mumford \cite{Mum71} in the algebraic and Atiyah \cite{Ati71} in the analytic setting. Mumford's proof was extended to nodal spin curves by Cornalba \cite[section 6]{Cor89}. Moreover, $p$ is a quadratic form on the space of theta characteristics $ \mc{S}(C)$, polarised by the cup product $\lambda$ on the $2$-torsion Jacobian $J_2(C) = \{ L \in \Pic (C) \mid L^2 \cong \mc{O}_C\} \cong H^1(C;\Z/2\Z)$:
\begin{equation}
	\phi (\vartheta ) + \phi (\vartheta \otimes L ) + \phi (\vartheta \otimes M) + \phi (\vartheta \otimes L \otimes M) = \lambda (L,M)\,.
\end{equation}
As such, for a genus $g$ curve, there are $ 2^{g-1} (2^g+1)$ even theta characteristics and $ 2^{g-1}(2^g-1)$ odd ones. This was proved by the same authors with the same methods. We also remark that the parity $ p( \vartheta )$ is the Arf invariant of the associated quadratic form $ q_\vartheta (L) = p(\vartheta ) + p(\vartheta \otimes L)$ on $ J_2(C)$, but we will not use this.

\smallskip

Spin structures can be pulled back along branched covers, as long as all ramifications are odd: in that case the ramification divisor is even.

\begin{definition}
	Let $ (B,\vartheta )$ be a spin curve and $f \colon C \to B $ a branched cover with only odd ramifications. Denote by $R_f$ its ramification divisor. Then the \emph{twisted pullback of $\vartheta$ along $f$} is
	\begin{equation}
		N_{f,\vartheta} \coloneqq f^*\vartheta \otimes \mc{O}(\tfrac{1}{2} R_f) .
	\end{equation}
	It is a spin structure on $ C$.
\end{definition}

\begin{definition}\label{defn:general:spin:HNs}
	Let $ (B,\vartheta )$ be a spin curve, $d$ a non-negative integer, $ x_1, \dotsc, x_k \in B$, and $ \mu^1, \dotsc, \mu^k \in \OP(d)$. The \emph{spin Hurwitz number} is defined as
	\begin{equation}
		H(B,\vartheta ; \mu^1, \dotsc, \mu^k ) \coloneqq \sum_{[f \colon C \to B]} \frac{(-1)^{p(N_{f,\vartheta})}}{|\Aut (f)|}\,,
	\end{equation}
	where the sum is over all isomorphism classes of connected ramified covers with ramification profile $\mu^j $ over $x_j$ and unramified anywhere else. As usual, when dealing with disconnected covers, we add a superscript $ \bullet$. Sometimes, we also write a superscript $\circ$ to denote connected covers.
\end{definition}

The following result is called Gunningham's formula~\cite{Gun16}, e.g. in~\cite{Lee19}. It is a generalisation of~\cite[theorem~2]{EOP08}, which is the case $g(B) = 1$, $p(\vartheta ) = 1$, and it gives the analogue of the monodromy representation for spin Hurwitz numbers.

\begin{theorem}[{\cite{Gun16,Lee18}}]\label{thm:Gunningham:formula}
	Disconnected spin Hurwitz numbers can be expressed in terms of characters of the Sergeev group as
	\begin{equation}
		H^\bullet (B,\vartheta ;\mu^1, \dotsc, \mu^k)
		=
		2^{( \sum_j (\ell (\mu^j) -d) - d \chi (B))/2} \sum_{\lambda \in \SP(d)} (-1)^{p(\vartheta ) \ell (\lambda )} \biggl( \frac{\dim V^\lambda}{2^{\delta (\lambda )/2} d!} \biggr)^{\chi (B)}\prod_{j=1}^k f_{\mu^j}^\lambda  .
	\end{equation}
\end{theorem}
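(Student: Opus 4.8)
The plan is to realize $H^\bullet$ as the partition function of a two-dimensional \emph{spin} topological quantum field theory whose underlying Frobenius algebra is the even centre $\mc{Z}_d$ of the Sergeev algebra $\mc{H}_d$, and then to diagonalize all the gluing operators in the idempotent basis indexed by $\SP(d)$.

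First I would pass to the monodromy description. A degree-$d$ branched cover $f\colon C\to B$, unramified outside $\{x_1,\dots,x_k\}$ with profile $\mu^j$ over $x_j$, is the same datum as a conjugacy class of homomorphisms $\rho\colon\pi_1(B\setminus\{x_j\})\to\mf{S}_d$ sending the loop around $x_j$ into the class $C_{\mu^j}$. Using the standard presentation of $\pi_1$ with generators $a_i,b_i$ ($1\le i\le h=g(B)$) and $c_j$ subject to $\prod_i[a_i,b_i]\prod_j c_j=1$, the usual orbit-counting turns the $1/|\Aut(f)|$-weighted sum over disconnected covers into $\tfrac{1}{d!}$ times the number of tuples $(\alpha_i,\beta_i,\gamma_j)$ with $\gamma_j\in C_{\mu^j}$ and $\prod_i[\alpha_i,\beta_i]\prod_j\gamma_j=1$.

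The heart of the matter — and the step I expect to be the main obstacle — is to express the sign $(-1)^{p(N_{f,\vartheta})}$ through this combinatorial data. The parity $p(N_{f,\vartheta})$ is a quadratic (Arf-type) invariant of the spin structure $N_{f,\vartheta}$ on $C$, and by Cornalba's extension of the parity to nodal spin curves it is additive under degenerating $B$ into pairs of pants and handle-adding pieces. I would use this additivity to \emph{localize} the sign: decompose $B$ along a pants decomposition compatible with the branch points, compute the contribution of each elementary spin surface, and show that the total sign is precisely the one that replaces the ordinary $\mf{S}_d$-representation theory by the projective (spin) representation theory. The odd-ramification hypothesis makes $R_f$ even, so that $N_{f,\vartheta}$ is defined and the monodromy lifts naturally to the spin-symmetric group $\tilde{\mf{S}}_d$; the weighted count then becomes a trace in $\mc{H}_d$ rather than in $\C[\mf{S}_d]$, with the spin parity $p(\vartheta)$ dictating which of the two handle operators (the even/odd spin handles) is inserted for each genus.

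Finally I would evaluate this trace. Since the $c_\mu$ ($\mu\in\OP(d)$) form a basis of $\mc{Z}_d$ and $\mc{H}_d$ is semisimple with simple supermodules $V^\lambda$ ($\lambda\in\SP(d)$), every central insertion is diagonalized by the central characters $f^\lambda_\mu$: summing over each commutator $[\alpha_i,\beta_i]$ is diagonal with eigenvalue governed by $(\dim V^\lambda)^{-2}$, the parity twist contributes the grading sign $(-1)^{p(\vartheta)\ell(\lambda)}$, and the local insertions contribute $\prod_j f^\lambda_{\mu^j}$. Packaging these against the Euler characteristic $\chi(B)=2-2h-k$ yields the factor $\bigl(\dim V^\lambda/(2^{\delta(\lambda)/2}d!)\bigr)^{\chi(B)}$. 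The remaining work is purely a bookkeeping of powers of $2$: the $2^{-\ell(\mu)}$ in the Sergeev scalar product, the $2^{d-\ell(\mu)}$ in the class sizes $|O_\mu|$, and the $2^{\delta(\lambda)/2}$ normalizations in the supermodule structure combine, via the orthogonality relations for the $\zeta^\lambda_\mu$, into the prefactor $2^{(\sum_j(\ell(\mu^j)-d)-d\chi(B))/2}$. Cross-checking these exponents against the character orthogonality \cref{eq:OrtCharB} is the one calculation I would carry out in full, since that is where a stray power of $2$ is easiest to lose.
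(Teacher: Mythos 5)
First, a point of reference: the paper does not prove \cref{thm:Gunningham:formula} at all — it imports it from \cite{Gun16,Lee18} — so your proposal has to stand on its own. Its architecture (a two-dimensional spin TQFT whose gluing algebra is $\mc{Z}_d$, diagonalized by the central characters $f^\lambda_\mu$ over $\SP(d)$) is indeed the shape of Gunningham's argument, and your steps 1 and 3 (monodromy counting, semisimple diagonalization, bookkeeping of powers of $2$ via \cref{eq:OrtCharB}) are routine once the sign is under control.

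The genuine gap is precisely the step you flag as "the main obstacle", and your proposed tool does not close it. You assert that $p(N_{f,\vartheta})$ is additive under degenerating $B$ into pairs of pants, citing Cornalba, and that the resulting local signs reproduce projective representation theory; neither claim is established, and the first is false as stated. Parity is additive only at \emph{separating} nodes of the \emph{cover}: when you pinch $B$ along a pants decomposition, the induced nodes of $C$ (the preimages of the vanishing cycles) are in general non-separating even when the vanishing cycle in $B$ separates, and there the parity is not determined by the pieces — it depends on the twist and the gluing. Indeed, by Cornalba's analysis \cite[examples 6.1--6.2]{Cor89}, which the present paper itself invokes in \cref{sec:sign:Chiodo}, a non-separating node with even twist admits two gluings of \emph{opposite} parity (producing cancellations), while an odd twist preserves parity. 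Tracking these gluing choices and cancellations is exactly what generates the Clifford elements $\xi_j$ and hence the Sergeev group $\tilde{\mf{H}}_d$ rather than plain spin representation theory of $\tilde{\mf{S}}_d$; it is the mathematical content of the theorem, not a consequence of additivity. Relatedly, "the monodromy lifts naturally to the spin-symmetric group" is not automatic: specifying such a lift is equivalent to encoding the spin structure of the base, and in Gunningham's construction the Sergeev algebra arises as the value of an \emph{extended} TQFT on the circle (with separate Neveu--Schwarz and Ramond sectors for the two spin structures on $S^1$), not from a lift of the ordinary monodromy representation. Until the sign localization is actually carried out — including the sector decomposition and the non-separating-node corrections — the proposal is a correct plan of attack with its central step missing.
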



As we will be interested in the case $(B,\vartheta) = (\P^1,\mc{O}(-1))$, i.e. $ \chi (B)=2$ and $ p(\vartheta ) = h^0 (\mc{O}(-1)) = 0$, and no more than two generic ramifications $\mu,\nu$, let us give the the appropriate definitions and a simplified formulae.

\begin{definition}\label{defn:spin:HNs}
	Let $r$ be a positive even integer. The \emph{$(r+1)$-completed cycles spin double Hurwitz numbers} for genus $g$, denoted $h^{r,\theta}_{g;\mu,\nu}$, are defined by
	\begin{equation}
		h^{r,\theta}_{g;\mu,\nu}
		\coloneqq
		\frac{|\Aut{(\mu)}| \, |\Aut{(\nu)}|}{b!} H(\P^1, \mc{O}(-1); \mu, \nu, (\bar{c}_{r+1})^b ) \,,
	\end{equation}
	where $ b = \frac{2g-2+ \ell (\mu) + \ell (\nu)}{r}$ is needed to obtain genus $g$ source curves, as follows from the Riemann--Hurwitz formula.

	\smallskip

	The \emph{$(r+1)$-completed cycles spin single Hurwitz numbers} for genus $g$, $h^{r,\theta}_{g;\mu} $, are defined by
	\begin{equation}
		h^{r,\theta}_{g;\mu}
		\coloneqq
		\frac{|\Aut{(\mu)}|}{b!}  H(\P^1, \mc{O}(-1); \mu, (\bar{c}_{r+1})^b ) \,,
	\end{equation}
	where $b = \frac{2g-2+ \ell (\mu) + |\mu|}{r}$ is needed to obtain genus $g$ source curves, again from the Riemann--Hurwitz formula. For this definition, recall that we identified $\mc{Z}_d$ with $\C\{ \mc{OP}_d \}$. We use this identification and define the Hurwitz numbers by multilinear extension on the latter space.
\end{definition}

\begin{remark}\label{rem:spin:vs:compl:cycl}
	Certain papers (e.g.~\cite{MSS13,SSZ15,BKLPS21,KLPS19,DKPS23}) on the non-spin version of these numbers use the term $ r$-spin Hurwitz numbers for what would be called $(r+1)$-completed cycles Hurwitz numbers here, to emphasise the relation to $r$-spin structures on the moduli spaces of curves. 
\end{remark}

\begin{remark}
	In the non-spin case, the case $r=1$ is the `usual' case, with the most generic, simple, ramifications. However, for spin Hurwitz numbers, $r=1$ is not allowed, as all ramifications must be odd. Since $r=2$ already requires the cycles to be completed, it is natural to develop the entire completed cycles theory immediately.
\end{remark}

\begin{corollary}
	Gunningham's formula applied to $(r+1)$-completed cycles spin double and single Hurwitz numbers yields
	\begin{align}
	\label{eqn:Gunningham:spin:double:HNs}
		h^{\bullet,r,\theta}_{g;\mu,\nu}
		& =
		\frac{2^{1-g}}{b! \prod_{i=1}^{\ell(\mu)} \mu_i \prod_{j=1}^{\ell(\nu)} \nu_{j}}
		\sum_{\lambda \in \SP(d)}
			\frac{\zeta_{\mu}^\lambda}{2^{\delta(\lambda )/2 + \ell (\mu)}}
			\frac{\zeta_{\nu}^\lambda}{2^{\delta (\lambda )/2 + \ell (\nu)}}
			\biggl( \frac{p_{r+1}(\lambda)}{r+1} \biggr)^b \,;
		\\
	\label{eqn:Gunningham:spin:single:HNs}
		h^{\bullet,r,\theta}_{g;\mu}
		& =
		\frac{2^{1-g}}{b!\prod_{i=1}^{\ell(\mu)} \mu_i}
		\sum_{\lambda \in \SP(d)}
			\frac{\zeta_{\mu}^\lambda}{2^{\delta(\lambda )/2 + \ell (\mu)}} 
			\frac{\dim V^\lambda}{2^{\delta (\lambda )/2 + d}d!}
			\biggl( \frac{p_{r+1}(\lambda)}{r+1} \biggr)^b.
	\end{align}
\end{corollary}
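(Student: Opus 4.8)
The plan is to insert the character formula of \cref{thm:Gunningham:formula} into the (disconnected form of the) definition in \cref{defn:spin:HNs} and simplify. Two inputs make the sign and the $\delta(\lambda)$-bookkeeping immediate: for $(B,\vartheta)=(\P^1,\mc{O}(-1))$ we have $\chi(B)=2$ and $p(\vartheta)=h^0(\mc{O}(-1))=0$, so $(-1)^{p(\vartheta)\ell(\lambda)}=1$ and the factor $\bigl(2^{-\delta(\lambda)/2}\bigr)^{\chi(B)}=2^{-\delta(\lambda)}$ splits into the two half-powers $2^{-\delta(\lambda)/2}$ appearing in the target.

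First I would handle the $b$ completed-cycle insertions. Since $\bar c_{r+1}=\tfrac{1}{r+1}\gamma^{-1}(p_{r+1})$ and $\gamma\colon c_\mu\mapsto f_\mu$ identifies a central element with its central-character function, each factor $\bar c_{r+1}$ enters $\prod_j f_{\mu^j}^\lambda$ as the scalar by which it acts on $V^\lambda$, namely $\tfrac{1}{r+1}p_{r+1}(\lambda)$; this is the eigenvalue of $\mc{F}^B_{r+1}/(r+1)$ on $v_\lambda$ in \cref{lem:power:sum:eigenvalue}, so the $b$ insertions produce $\bigl(p_{r+1}(\lambda)/(r+1)\bigr)^b$. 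Next I would convert the central characters of the genuine profiles to ordinary ones via $\zeta_\mu^\lambda=\frac{2^{\ell(\mu)}z_\mu\dim V^\lambda}{2^d d!}f_\mu^\lambda$. In the double case this rewrites $\bigl(\tfrac{\dim V^\lambda}{2^{\delta/2}d!}\bigr)^2 f_\mu^\lambda f_\nu^\lambda$ as $\frac{2^{2d}}{z_\mu z_\nu}\cdot\frac{\zeta_\mu^\lambda}{2^{\delta/2+\ell(\mu)}}\frac{\zeta_\nu^\lambda}{2^{\delta/2+\ell(\nu)}}$; in the single case only one $f$ is converted, so one factor $\tfrac{\dim V^\lambda}{2^{\delta/2}d!}$ survives and gives the asymmetric shape $\frac{\zeta_\mu^\lambda}{2^{\delta/2+\ell(\mu)}}\cdot\frac{\dim V^\lambda}{2^{\delta/2+d}d!}$ of \cref{eqn:Gunningham:spin:single:HNs}. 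It then remains to collect the powers of $2$: the $\frac{2^{2d}}{z_\mu z_\nu}$ from the conversion, the prefactor of Gunningham's formula, and the factor $\tfrac{|\Aut(\mu)|\,|\Aut(\nu)|}{b!}=\frac{z_\mu z_\nu}{b!\prod_i\mu_i\prod_j\nu_j}$ from the definition should combine, using $rb=2g-2+\ell(\mu)+\ell(\nu)$ (resp. $rb=2g-2+\ell(\mu)+|\mu|$), into the stated $\frac{2^{1-g}}{b!\prod_i\mu_i\prod_j\nu_j}$.

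The hard part will be the prefactor $2^{(\sum_j(\ell(\mu^j)-d)-d\chi(B))/2}$ of Gunningham's formula once $\bar c_{r+1}$ is expanded multilinearly over $\OP(d)$: its summands $c_\sigma$ have different lengths, so a priori each completed-cycle insertion contributes the twisted sum $\sum_\sigma[\bar c_{r+1}]_\sigma\,2^{(\ell(\sigma)-d)/2}f_\sigma^\lambda$ rather than the clean central character, because the length-dependent prefactor does not factor through $\gamma$. I expect to prove that this twisted sum equals $2^{-r/2}\,p_{r+1}(\lambda)/(r+1)$ — equivalently, that a completed cycle behaves as if it had the length $\ell(r+1,1^{d-r-1})=d-r$ of its leading term, so that the total prefactor depends only on $g$ and $d$ through Riemann–Hurwitz and equals $2^{1-g-2d}$. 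This is transparent in the operator formalism: combining the eigenvalue of $\mc{F}^B_{r+1}/(r+1)$ from \cref{lem:power:sum:eigenvalue} with $\alpha^B_{-\mu}v_\emptyset=\sum_\lambda\frac{\zeta_\mu^\lambda}{2^{\delta(\lambda)/2+\ell(\mu)}}v_\lambda$ of \cref{cor:boson:action} and the orthonormality of the $v_\lambda$ reproduces the target directly, and I would use this vacuum-expectation computation to certify the $2$-power and close the argument.
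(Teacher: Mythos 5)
Your skeleton (substitute \cref{thm:Gunningham:formula} into \cref{defn:spin:HNs}, convert $f$ to $\zeta$, collect powers of $2$ via Riemann--Hurwitz) is the right one, and you have correctly located the only delicate point: Gunningham's prefactor $2^{(\sum_j(\ell(\mu^j)-d)-d\chi(B))/2}$ is not multilinear in the insertions, so each completed cycle contributes the twisted sum $\sum_\sigma [\bar c_{r+1}]_\sigma\,2^{(\ell(\sigma)-d)/2}f^\lambda_\sigma$. The gap is that the identity you defer to close this --- that this twisted sum equals $2^{-r/2}\,p_{r+1}(\lambda)/(r+1)$, i.e.\ that the expansion of $p_{r+1}$ in the basis $\{f_\mu\}$ is concentrated on partitions with $|\mu|-\ell(\mu)=r$ --- is false. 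For $r=2$ one computes, by evaluating on strict partitions of size at most $4$,
\begin{equation*}
	p_3 \;=\; 3f_{(3)} + 6f_{(1,1)} + f_{(1)}\,,
\end{equation*}
so the terms $6f_{(1,1)}+f_{(1)}$ have weight $|\mu|-\ell(\mu)=0$, not $2$. Realised in $\mc{Z}_3$ via the paper's extension $f^\lambda_\mu=\binom{m_1^\mu+k}{k}f^\lambda_{\mu\cup(1^k)}$, this gives $\bar c_3 = c_{(3)}+7c_{(1^3)}$, and at $\lambda=(3)$ (where $f^{(3)}_{(3)}=2$, $f^{(3)}_{(1^3)}=1$) the twisted sum is $2^{-1}\cdot 2 + 7 = 8$, whereas $2^{-1}p_3(\lambda)/3=\tfrac92$. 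The failure propagates to the Hurwitz numbers: the literal route gives $h^{\bullet,2,\theta}_{1;(3),(3)} = H^\bullet(\,\cdot\,;(3),(3),(3))+7H^\bullet(\,\cdot\,;(3),(3),(1^3)) = -\tfrac13+\tfrac73 = 2$, while \cref{eqn:Gunningham:spin:double:HNs} (equivalently the vacuum expectation of \cref{prop:spin:double:HNs:VEV}) gives $\tfrac56$.

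Your fallback --- ``certify the $2$-power by the operator formalism'' --- cannot repair this, because the Fock-space computation (\cref{lem:power:sum:eigenvalue}, \cref{cor:boson:action}, orthonormality of the $v_\lambda$) only re-derives the right-hand side of the corollary; indeed this is exactly how the paper obtains \cref{prop:spin:double:HNs:VEV} \emph{from} the corollary, and it carries no information about how the geometric definition interacts with the coefficients $[\bar c_{r+1}]_\sigma$, which is precisely what is at stake. The actual resolution is a normalisation, not a hidden identity in $\Gamma$: one must absorb the length-dependent factor into each insertion, i.e.\ work with the normalised central characters $\mathbf{f}_\sigma := 2^{(\ell(\sigma)-|\sigma|)/2}f_\sigma$ (the Eskin--Okounkov--Pandharipande/Lee convention), rewrite Gunningham's formula as $2^{-d\chi(B)/2}\sum_\lambda(\cdots)\prod_j\mathbf{f}^\lambda_{\mu^j}$, and define $\bar c_{r+1}$ so that its $\mathbf{f}$-eigenvalue is $2^{-r/2}p_{r+1}(\lambda)/(r+1)$; for $r=2$ this means $\bar c_3 = c_{(3)}+\tfrac72 c_{(1^3)}$, and indeed $-\tfrac13+\tfrac72\cdot\tfrac13=\tfrac56$. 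With this convention every insertion contributes the same scalar by construction, multilinearity is immediate, and your power-of-$2$ bookkeeping then goes through verbatim. (The paper states the corollary without proof, and its \cref{defn:spin:compl:cycles} as literally written has the same normalisation mismatch you would run into; so your instinct about where the difficulty sits was correct, but the proof as proposed cannot be completed.)
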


Note that, as always, $ \zeta^\lambda_{(1^d)} = \dim V^\lambda $. In order to study these numbers, we gather them in generating functions.

\begin{definition}
	The \emph{free energies} for spin completed cycles single Hurwitz numbers are
	\begin{equation}
	F^{r,\theta}_{g,n} (e^{x_1}, \dotsc, e^{x_n}) = 
		\sum_{ \substack{
		\mu_1, \dots, \mu_n > 0
		\\
		\mu_i \text{ odd }}} 
		h_{g;\mu}^{r,\theta}
		e^{\mu_1 x_1} \cdots e^{\mu_n x_n}\,.
	\end{equation}
\end{definition}

\section{Cut-and-join operators and Okounkov--Pandharipande operators}
\label{sec:CJ:OP:operators}

In the non-spin case, double Hurwitz numbers with completed cycles satisfy some important properties:
\begin{enumerate}
	\item They can be expressed as vacuum expectation values of a certain product of bosonic operators and Okounkov--Pandharipande operators.

	\item Their generating functions is a $2d$ Toda tau function, and it satisfies an evolution equation known as cut-and-join equation.

	\item They satisfy piecewise polynomiality and wall crossing formulae.
\end{enumerate}
The goal of this section is two-fold: first, we obtain the generating series for the spin cut-and-join operators, and second, we develop the B-type analogue of the Okounkov--Pandharipande operators, a simultaneous generalisation of the operators $\mc{F}^B$ and the bosonic operators $\alpha^B$, \cref{eqn:B:power:sum,eqn:B:boson}. These are the B-type analogues of, respectively, \cite[theorem~5.2]{SSZ12} or \cite[theorem~5.3]{Ros08}, and the algebra of operators $\mc{E}_m(z)$ of \cite[equation~(2.15)]{OP06}. This section is strongly related to \cite[section 3.5]{MMNO21}, which introduces the basic objects we study here.

\smallskip

In the next section we will relate the above constructions to properties (1--3) for spin double Hurwitz numbers with completed cycles.

\subsection{Cut-and-join operators}

The completed cut-and-join operators are defined (in the non-spin case) to be the analogues of multiplication by the completed cycles in the symmetric algebra. We will develop this in the spin case. For this, it is useful to understand the relations between different spaces introduced in \cref{sec:spin:algebra}:

\begin{equation}\label{eqn:B:diagram}
\begin{tikzcd}[ampersand replacement=\&]
	\& {\bigoplus_{d = 0}^\infty \mathcal{Z}_d} \\
	{\mathcal{OP}^B} \& \Gamma \& {\mathfrak{F}^B_0} \\
	{\mathfrak{H}^B} \& {\hat{\mathfrak{b}}_\infty} \& \Gamma
	\arrow["\Phi^B", leftrightarrow, from=2-3, to=3-3]
	\arrow[hook, from=3-1, to=3-2]
	\arrow["{\mathcal{F}}"', hook', from=2-2, to=3-2]
	\arrow[squiggly, from=2-2, to=2-3]
	\arrow["\hat{\pi}"{description}, squiggly, from=3-2, to=2-3]
	\arrow["{\hat{\sigma}^B}"{pos=0.9}, squiggly, from=3-2, to=3-3]
	\arrow["\gamma", leftrightarrow, from=1-2, to=2-2]
	\arrow[dashed, hook, from=3-1, to=2-1]
	\arrow[dashed, hook', from=2-2, to=2-1]
	\arrow[dashed, hook, from=2-1, to=3-2]
	\arrow["\rho"{description}, bend right=20pt, squiggly, from=3-1, to=3-3]
\end{tikzcd}
\end{equation}

Here, the wavy arrows indicate an action of the tail on the head. The dashed arrows indicate the B-Okounkov--Pandharipande operators constructed later in this section, the map $\mc{F} \colon \Gamma \to \hat{\mf{b}}_\infty $ sends $p_{r+1}$ to $\mc{F}_{r+1}$. Note that $\Gamma $ occurs twice in this diagram, with one instance acting on the other. This is \emph{not} the natural action of an algebra on itself as a module, but rather the one given by $(p_{r+1}, Q_\lambda (\frac{1}{2} p) ) \mapsto p_{r+1}(\lambda) Q_\lambda (\frac{1}{2}p)$, as follows from \cref{lem:power:sum:eigenvalue,thm:BF:corresp}. In fact, in the non-spin version of this diagram, we have the space of shifted symmetric functions $\Lambda^{\ast}$ acting on the usual space of symmetric functions $\Lambda$.

\smallskip

An important ingredient in this diagram is the shape of $\sigma^B$, given by the vertex operator of \cite{DKM81}, which translates the action of the quadratic elements $\varphi_{j}\varphi_{k}$ (and their normal product) to the space of supersymmetric functions $\Gamma$.

\begin{definition}
	Denote by
	\begin{equation}
		\tilde{\xi}(\mathbf{t},x)
		\coloneqq
		\sum_{k =0}^\infty t_{2k+1} x^{2k+1}\,,
		\qquad
		\tilde{\de}
		\coloneqq
		\Big( \frac{\de}{\de t_1}, \frac{1}{3} \frac{\de}{\de t_3},\dotsc \Big)\,,
	\end{equation}
	and define the \emph{BKP vertex operators} by
	\begin{align}
		Z^B(x,y)
		& \coloneqq
		\frac{1}{2} \frac{x-y}{x+y}
			e^{\tilde{\xi} (\mathbf{t},x) + \tilde{\xi}(\mathbf{t},y)}
			e^{-2 \tilde{\xi}(\tilde{\de}, x^{-1}) -2 \tilde{\xi}(\tilde{\de},y^{-1})} \,, \\
		\hat{Z}^B(x,y)
		& \coloneqq
		\frac{1}{2} \frac{x-y}{x+y}
		\left(
			e^{\tilde{\xi} (\mathbf{t},x) + \tilde{\xi}(\mathbf{t},y)}
			e^{-2 \tilde{\xi}(\tilde{\de}, x^{-1}) -2 \tilde{\xi}(\tilde{\de},y^{-1})}
			- 1 \right).
	\end{align}
\end{definition}

\begin{proposition}[\cite{DKM81,DJKM82}]\label{prop:BKP:vertex:action}
	The BKP vertex operators satisfy $Z^B(x,y) = -Z^B(y,x)$, and expanding $Z^B(x,y) = \sum_{j,k \in \Z} Z_{j,k} x^j y^{-k}$, the action
	\begin{equation}
		\sigma^B \colon \mf{b}_\infty \to \End (\Gamma ) \colon F_{j,k} \mapsto (-1)^k Z_{-j,k}
	\end{equation}
	is well-defined, and is intertwined with $ \pi $ by the boson-fermion correspondence $\Phi^B$. 

	\smallskip

	Similarly, $ \hat{Z}^B(x,y) = -\hat{Z}^B(y,x)$, and expanding $\hat{Z}^B(x,y) = \sum_{j,k \in \Z} \hat{Z}_{j,k} x^j y^{-k}$, the action
	\begin{equation}
		\hat{\sigma}^B \colon \hat{\mf{b}}_\infty \to \End (\Gamma ) \colon \hat{F}_{j,k} \mapsto (-1)^k \hat{Z}_{-j,k}
	\end{equation}
	is well-defined, and is intertwined with $ \hat{\pi} $ by the boson-fermion correspondence $\Phi^B$.
\end{proposition}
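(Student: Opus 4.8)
The plan is to prove this as the neutral-fermion (type B) boson--fermion correspondence at the level of fermionic bilinears, following the vertex-operator computation of \cite{DKM81,DJKM82} and adapting it to the sign conventions fixed in \cref{lem:VEV:Fock:basis,thm:BF:corresp}. I would begin with the antisymmetry, which is both immediate and structurally important. In $Z^B(x,y)$ the two exponential factors $e^{\tilde{\xi}(\mathbf{t},x)+\tilde{\xi}(\mathbf{t},y)}$ and $e^{-2\tilde{\xi}(\tilde{\de},x^{-1})-2\tilde{\xi}(\tilde{\de},y^{-1})}$ are manifestly invariant under $x\leftrightarrow y$, while the prefactor $\tfrac12\tfrac{x-y}{x+y}$ changes sign; hence $Z^B(x,y)=-Z^B(y,x)$, and likewise $\hat{Z}^B(x,y)=-\hat{Z}^B(y,x)$ since the subtracted $1$ is symmetric. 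In terms of coefficients this reads $Z_{j,k}=-Z_{-k,-j}$, which is exactly the relation $B_{j,k}=(-1)^{j+k+1}B_{k,j}$ satisfied by the generators of $\mf{b}_\infty$. Tracking the signs $(-1)^j$ and $(-1)^k$ in the definitions one checks that $Z_{j,k}=-Z_{-k,-j}$ is equivalent to the assignment $F_{j,k}\mapsto(-1)^kZ_{-j,k}$ respecting this antisymmetry; thus the antisymmetry of $Z^B$ is precisely what guarantees that $\sigma^B$ is well defined on $\mf{b}_\infty$, and the same remark applies verbatim to $\hat{\sigma}^B$.

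The heart of the argument is the intertwining, for which I would pass to generating fields. Introduce the neutral fermion field $\phi(x)=\sum_{n\in\Z}\phi_n x^n$, so that the bilinear generating series $\sum_{j,k}F_{j,k}\,x^{-j}y^{-k}$ is, up to the substitution and sign bookkeeping relating it to $Z^B(x,y)$, a product $\phi(\plh)\phi(\plh)$ of two such fields. The key step is to transport a single field through the correspondence: using $\alpha^B_m=-\tfrac12\sum_k\hat{F}_{k,-k-m}$ together with the anticommutation relations $\{\phi_k,\phi_l\}=(-1)^k\delta_{k+l}$, one computes that $[\alpha^B_m,\phi(x)]$ is proportional to a shifted copy of $\phi(x)$, and the standard conjugation of $\phi(x)$ by $e^{H^B}$ then shows that $\Phi^B$ sends $\phi(x)$ to a single-variable vertex operator of the form $e^{\tilde{\xi}(\mathbf{t},x)}e^{-2\tilde{\xi}(\tilde{\de},x^{-1})}$ on $\Gamma$ (up to the normalisation dictated by the zero mode). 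Taking the product of two such operators and normal-ordering produces a scalar contraction, which for neutral fermions is $\tfrac12\tfrac{x-y}{x+y}$ rather than the type-A factor $\tfrac{1}{x-y}$; this is exactly the prefactor appearing in $Z^B(x,y)$. Extracting the coefficient of $x^{-j}y^{-k}$ then yields $\Phi^B F_{j,k}(\Phi^B)^{-1}=(-1)^kZ_{-j,k}=\sigma^B(F_{j,k})$, which is the claimed intertwining with $\pi$.

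For the normally ordered statement I would repeat the computation with $\normord{\phi_j\phi_k}=\phi_j\phi_k-\langle\phi_j\phi_k\rangle$ in place of $\phi_j\phi_k$. Subtracting the vacuum expectation value removes the constant term in the operator product on the bosonic side, which is precisely the effect of the extra $-1$ in the definition of $\hat{Z}^B$; this correction is what accounts for the central extension, so that $\hat{\sigma}^B$ intertwines the projective representation $\hat{\pi}$ of $\hat{\mf{b}}_\infty$ with the corresponding action on $\Gamma$. The antisymmetry and coefficient-matching arguments of the first two paragraphs carry over unchanged, now with $\hat{Z}_{j,k}$ in place of $Z_{j,k}$.

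I expect the main obstacle to be the careful treatment of the zero mode $\phi_0$ and the attendant normalisations. Unlike the type-A case there is no clean charge grading separating creation from annihilation operators, and $\phi_0$ acts as a scalar of size $\tfrac{1}{\sqrt{2}}$; this is the origin of the overall factor $\tfrac12$ in $Z^B$, of the $\sqrt{2}$'s and $2^{\delta(\lambda)/2}$'s in the basis $v_\lambda$ of \cref{lem:VEV:Fock:basis}, and of the $2^{-\ell(\lambda)/2}$ in $\Phi^B(v_\lambda)=2^{-\ell(\lambda)/2}Q_\lambda(\tfrac12 p)$. Getting all of these constants to line up so that the contraction comes out as $\tfrac12\tfrac{x-y}{x+y}$ on the nose is the delicate point; everything else is the standard vertex-operator bookkeeping of \cite{DKM81,DJKM82}.
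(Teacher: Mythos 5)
You should first note that the paper does not prove this proposition at all: it is recalled from \cite{DKM81,DJKM82} (see also \cite{You89}), so the only benchmark is the vertex-operator argument of those references, which is precisely the route you reconstruct. Your antisymmetry argument is complete, and your well-definedness argument is the right one: $Z^B(x,y)=-Z^B(y,x)$ gives $Z_{j,k}=-Z_{-k,-j}$, hence $Z_{-j,k}=-Z_{-k,j}$, which is exactly the compatibility of $F_{j,k}\mapsto(-1)^kZ_{-j,k}$ with the relation $B_{j,k}=(-1)^{j+k+1}B_{k,j}$ in $\mf{b}_\infty$ (and it also gives $Z_{-j,j}=0$, matching $B_{j,j}=0$). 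The mechanism you propose for the intertwining — $[\alpha^B_m,\phi(x)]=x^m\phi(x)$, conjugation by $e^{H^B}$, and the neutral-fermion contraction $\exp(-2\tanh^{-1}(y/x))=\tfrac{x-y}{x+y}$ together with two zero-mode factors $\tfrac{1}{\sqrt{2}}$ producing the prefactor $\tfrac{1}{2}\tfrac{x-y}{x+y}$ — is likewise the correct and standard one.

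However, two steps are genuine gaps as written. First, the pivotal intermediate claim that ``$\Phi^B$ sends $\phi(x)$ to a single-variable vertex operator on $\Gamma$'' does not parse: $\phi(x)$ is odd, so it maps $\mf{F}^B_0$ into the odd sector $\mf{F}^B_1$, whereas $\Phi^B$ is defined only on $\mf{F}^B_0$ (\cref{thm:BF:corresp}). The standard repair is to bosonise the whole space $\mf{F}^B_0\oplus\mf{F}^B_1$ with a parity-swapping zero-mode factor, or to conjugate only the parity-preserving bilinear: $e^{H^B}\phi(x)\phi(y)e^{-H^B}=e^{\tilde{\xi}(\mathbf{t},x)+\tilde{\xi}(\mathbf{t},y)}\phi(x)\phi(y)$ produces only the creation halves, and the annihilation halves $e^{-2\tilde{\xi}(\tilde{\de},x^{-1})-2\tilde{\xi}(\tilde{\de},y^{-1})}$ must then be obtained by rewriting $v_\emptyset^*\,\phi(x)\phi(y)\,e^{H^B}v$ as a differential operator in $\mathbf{t}$ acting on $v_\emptyset^*e^{H^B}v$; your sketch skips this, and it is where the second exponential of $Z^B$ actually comes from. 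Second, the sign you defer is not cosmetic: with the conventions you set up (field $\sum_n\phi_n x^{-n}$, bilinear identified with $Z^B(x,y)$), extracting the coefficient of $x^{-j}y^{-k}$ yields $\sigma^B(F_{j,k})=(-1)^jZ_{-j,k}$, not $(-1)^kZ_{-j,k}$; these differ by $(-1)^{j+k}$, and since conjugation by $t_m\mapsto -t_m$ shows that $(-1)^{j+k}Z_{-j,k}$ and $Z_{-j,k}$ are different operators in general, the discrepancy is real. Landing on the stated formula requires the twisted expansion $\sum_n\phi_n(-x)^{-n}$ (equivalently, tracking the Clifford automorphism $\phi_n\mapsto(-1)^n\phi_n$ through $\Phi^B$), and verifying this against the paper's normalisations is the one computation your proposal never actually performs — it is the entire content of the $(-1)^k$ in the statement.
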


We can use the above result to translate the multiplication by the spin completed cycles as the \emph{spin completed cut-and-join operators} on $\Gamma$.

\begin{definition}
	Consider the map $\hat{\sigma}^B \circ \mc{F} \circ \gamma \colon \bigoplus_{d = 0}^\infty \mc{Z}_d \to \End (\Gamma)$. Define the \emph{$(r+1)$-th spin completed cut-and-join operator} $W_{r+1} \in \End (\Gamma)$ as the image of the operator of multiplication by the spin completed $(r+1)$-cycle $\bar{c}_{r+1}$.
\end{definition}

\begin{proposition}\label{prop:B:CJ:gen:series}
	The generating series of the spin completed cut-and-join operator is given by
	\begin{equation}
		W^{B}(z)
		\coloneqq
		\sum_{\substack{r \ge 1 \\ r \textup{ even}}} \frac{1}{r!} W_{r+1}^{B} z^{r+1}
		=
		\coth(\tfrac{z}{2}) \sum_{n=1}^\infty \sum_{\substack{k_1 + \dotsb + k_n = 0\\ k_i \textup{ odd}}}
			\frac{2^{n-2}}{n!}
			\normord{
				\prod_{i=1}^n \varsigma(k_i z) \frac{a_{k_i}}{k_i}
			},
	\end{equation}
	where 
	\begin{equation}
		\varsigma(z) \coloneqq 2\sinh(\tfrac{z}{2}),
		\qquad\qquad
		a_k \coloneqq \rho (\alpha^B_k)
		=
		\begin{cases} k\de_{t_k} & k >0 \\ \frac{1}{2} t_{-k} & k < 0\end{cases}\,.
	\end{equation}
	As usual, the normal product of the $a_k$'s put all derivatives on the right-hand side.
\end{proposition}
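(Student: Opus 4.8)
The plan is to transport the definition of $W^{B}_{r+1}$ through the boson--fermion correspondence (\cref{thm:BF:corresp}) and to read the resulting differential operator off the BKP vertex operator of \cref{prop:BKP:vertex:action}. I would first unwind the definition along the bottom row of \eqref{eqn:B:diagram}: since $\gamma(\bar{c}_{r+1}) = \tfrac1{r+1}p_{r+1}$ and $\mc{F}(p_{r+1}) = \mc{F}^{B}_{r+1}$, the composite $\hat{\sigma}^B\circ\mc{F}\circ\gamma$ gives $W^{B}_{r+1} = \tfrac1{r+1}\hat{\sigma}^B(\mc{F}^{B}_{r+1})$. Inserting $\mc{F}^{B}_{r+1} = \tfrac12\sum_{k\in\Z}k^{r+1}\hat{F}_{k,-k}$ from \eqref{eqn:B:power:sum}, using $\tfrac1{r!}\tfrac1{r+1} = \tfrac1{(r+1)!}$ and $\sum_{r\ge 2,\ \mathrm{even}}\tfrac{(kz)^{r+1}}{(r+1)!} = \sinh(kz) - kz$, the generating series collapses to
\[
	W^{B}(z) = \hat{\sigma}^B\!\Bigl( \tfrac12\sum_{k\in\Z}\bigl(\sinh(kz)-kz\bigr)\hat{F}_{k,-k} \Bigr),
\]
the subtracted $kz$ being the excluded $r=0$ term. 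The whole statement is thereby reduced to bosonising the diagonal fermion bilinears $\hat{F}_{k,-k}$, weighted by a scalar function of $k$ and summed over $k$.

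Next I would apply \cref{prop:BKP:vertex:action}, so that $\hat{\sigma}^B(\hat{F}_{k,-k}) = (-1)^k\hat{Z}_{-k,-k}$ is the coefficient of $x^{-k}y^{k}$ in $\hat{Z}^B(x,y)$. The $k$-sum weighted by $(-1)^k$ is precisely an extraction of the scaling-invariant part of $\hat{Z}^B$: setting $y = \zeta x$ and taking the $x^{0}$-coefficient produces a single series $\Psi(\zeta) = \sum_k\hat{Z}_{-k,-k}\zeta^{k}$, and the scalar weight $\sinh(kz)$ is implemented by evaluating $\Psi$ at $\zeta = -e^{\pm z}$ (with the $-kz$ piece implemented by a first jet at $\zeta = -1$). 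Writing $\hat{Z}^B(x,\zeta x) = \tfrac12\tfrac{1-\zeta}{1+\zeta}\bigl(e^{A_{+}}e^{A_{-}}-1\bigr)$, with creation part $A_{+} = \sum_{m>0,\,\mathrm{odd}}t_m(1+\zeta^{m})x^{m}$ and annihilation part $A_{-} = -2\sum_{m>0,\,\mathrm{odd}}\tfrac1m\de_{t_m}(1+\zeta^{-m})x^{-m}$, the product $e^{A_{+}}e^{A_{-}}$ is already normally ordered, so no recombination is needed inside a single vertex operator.

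The three structural features of the claim then emerge from this extraction. Expanding both exponentials and imposing the $x^{0}$-constraint forces the total creation degree to equal the total annihilation degree, which is exactly the condition $\sum_i k_i = 0$ on the resulting normally ordered monomial $\normord{\prod_i a_{k_i}/k_i}$ (creation modes supplying the negative $k_i$, annihilation modes the positive ones); the sum over $n$ and the $\tfrac1{n!}$ come from the exponential series after symmetrising over the $p$ creation and $q$ annihilation factors, with $n = p+q$. The prefactor $\tfrac12\tfrac{1-\zeta}{1+\zeta}$ evaluated at $\zeta = -e^{\pm z}$ equals $\mp\tfrac12\coth(\tfrac z2)$, producing the global $\coth(\tfrac z2)$; each factor $1+\zeta^{\pm m}$ becomes $1-e^{\pm mz}$, which after the exponential prefactors cancel by the degree-$0$ constraint contributes one $\varsigma(k_iz)$ per mode. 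Collecting the powers of $2$ arising from $t_m = 2a_{-m}$, from $A_{-}$ and from the prefactor gives the $2^{n-2}$, and rewriting $t_m,\de_{t_m}$ in terms of $a_{k_i}/k_i$ yields the stated monomial.

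The main obstacle is the regularisation of this extraction. The prefactor $\tfrac{1-\zeta}{1+\zeta}$ has a simple pole at $\zeta = -1$ — exactly where the $-kz$ term is sampled and where every $1+\zeta^{\pm m}$ vanishes (for odd $m$) — so the finite answer only appears through the compensation of this pole against those zeros, and it is sensitive to the annulus in which $\tfrac{1-\zeta}{1+\zeta}$ is Laurent-expanded (the regions $|\zeta|\lessgtr 1$ differ by a sign at leading order, and the naive rational-function evaluation of $\Psi(-e^{\pm z})$ does not agree with the mode sum dictated by $\hat{\sigma}^B$). Fixing the correct prescription, hence the signs and the relative normalisation of the $\sinh$- and $-kz$-contributions, so that they assemble into the single coefficient $\coth(\tfrac z2)\tfrac{2^{n-2}}{n!}\prod_i\varsigma(k_iz)\tfrac{a_{k_i}}{k_i}$, is the genuine content of the proof. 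I would anchor it on the quadratic ($n=2$) term, which must reduce to $\coth(\tfrac z2)\sum_{k>0,\,\mathrm{odd}}\tfrac{\varsigma(kz)^2}{2k}\,t_k\de_{t_k}$, and cross-check the full formula against the eigenvalue $p_{r+1}(\lambda)$ of \cref{lem:power:sum:eigenvalue}; an equivalent and more uniform route is to bosonise the current $\sum_k k^{r+1}\hat{F}_{k,-k}$ directly via Wick's theorem for the free boson, where $\varsigma$ and $\coth$ are produced by the two-point function and the expansion prescription is built in.
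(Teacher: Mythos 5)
Your overall strategy coincides with the paper's proof: push $\bar c_{r+1}$ through diagram \eqref{eqn:B:diagram}, invoke \cref{prop:BKP:vertex:action} to turn the weighted diagonal sum $\sum_k(\cdots)\,\hat\sigma^B(\hat F_{k,-k})$ into an $[x^0]$-extraction of $\hat Z^B$ at shifted arguments, read off $\coth(\tfrac z2)$ from the prefactor, and expand the exponentials into normally ordered monomials constrained by $\sum_i k_i=0$; your parametrisation $y=\zeta x$ with $\zeta=-e^{\mp z}$ is the paper's substitution $(x,y)\mapsto(xe^{z/2},-xe^{-z/2})$ in different clothes.

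The genuine gap is your treatment of the $r=0$ term, and it sits exactly where you defer the work. The right-hand side of the proposition is \emph{not} the bosonisation of $\tfrac12\sum_k\bigl(\sinh(kz)-kz\bigr)\hat F_{k,-k}$: it has a nonvanishing term linear in $z$. Indeed its $n=2$ part is $\coth(\tfrac z2)\sum_{k>0,\,\mathrm{odd}}\tfrac{\varsigma(kz)^2}{2k}\,t_k\partial_{t_k}$ (your own anchor), and since $\coth(\tfrac z2)=\tfrac 2z+O(z)$ while $\varsigma(kz)^2=k^2z^2+O(z^4)$, this equals $z\sum_{k>0}k\,t_k\partial_{t_k}+O(z^3)$; all other $n$ contribute $O(z^3)$, so the linear term of the stated right-hand side is exactly $z$ times the image under $\hat\sigma^B$ of the energy operator $\mathcal F^B_1=\sum_{k>0}kF_{k,-k}$, i.e.\ the $r=0$ term $\tfrac{z}{0!}W^B_1$ of the generating series. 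The stated formula is therefore an identity for the series \emph{including} $r=0$: the range ``$r\ge 1$'' in the statement is a slip, the paper's proof sums $\sum_{s\ge0}\tfrac{z^{2s+1}}{(2s+1)!}\hat\sigma^B(\mathcal F^B_{2s+1})$ so that the scalar weight is exactly $\sinh(kz)$, and \cref{lem:B:OP:as:vertex:operator} forces the same reading, since $\hat{\mathcal E}^B_0(z)=\sum_{k>0}\sinh(kz)F_{k,-k}$ contains $\mathcal F^B_1$ at order $z$. Consequently your program, carried out correctly, produces the stated right-hand side \emph{minus} $z\sum_{k>0}k\,t_k\partial_{t_k}$, so your assertion that the subtracted series ``assembles into'' the claimed formula is false; your own $n=2$ cross-check would have exposed the order-$z$ mismatch.

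Relatedly, the ``main obstacle'' you identify --- the pole of $\tfrac{1-\zeta}{1+\zeta}$ at $\zeta=-1$, the choice of Laurent annulus, the first jet at $\zeta=-1$ --- is an artifact of that unnecessary subtraction. Once the $r=0$ term is kept, the only evaluations required are at $\zeta=-e^{\mp z}$, away from the pole; with the paper's symmetric shift the creation and annihilation factors become $\varsigma(mz)\,t_m x^m$ and $2\varsigma(mz)\tfrac{\partial_{t_m}}{m}x^{-m}$ with no leftover exponentials, and what remains is bookkeeping of the $[x^0]$ constraint and of powers of $2$, entirely within formal power series in $z$ with polynomial operator coefficients. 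No regularisation or expansion-prescription issue arises. So you have both misidentified the crux of the proof and left your (artificial) crux unresolved.
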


\begin{proof}
	We track the completed cycles through the diagram \eqref{eqn:B:diagram}. By \cref{defn:spin:compl:cycles}, $ \gamma (\bar{c}_{r+1}) = \frac{1}{r+1} p_{r+1}$, so $ \mc{F} \circ \gamma (\bar{c}_{r+1}) = \frac{\mc{F}^B_{r+1}}{r+1}$. Thus, using \cref{prop:BKP:vertex:action}, we find
	\begin{equation*}
	\begin{split}
		W^{B}(z)
		&= \sum_{s=0}^\infty \frac{1}{(2s+1)!} \hat{\sigma}^B(\mc{F}^B_{2s+1}) \, z^{2s+1}
		\\
		&= \frac{1}{2} \sum_{s=0}^\infty \sum_{k \in \Z} \frac{(kz)^{2s+1}}{(2s+1)!} \hat{\sigma}^B(\hat{F}_{k,-k})
		\\
		&= \frac{1}{4} \sum_{k \in \Z} (e^{kz} - e^{-kz}) (-1)^k \hat{Z}_{k,k}
		\\
		&= \frac{1}{2} [x^0] \hat{Z}^B(xe^{z/2}, -xe^{-z/2}).
	\end{split}
	\end{equation*}
	Computing $\hat{Z}^B(xe^{z/2}, -xe^{-z/2})$, we find
	\begin{equation*}
		W^{B}(z)
		=
		 \frac{1}{4} [x^0 ] \coth(\tfrac{z}{2}) \Bigg(
			\exp \bigg( \sum_{j=0}^\infty \varsigma \big( (2j+1)z) t_{2j+1}x^{2j+1} \bigg)
			\exp \bigg( 2 \sum_{k=0}^\infty \varsigma\big( (2k+1)z) \frac{\del_{t_{2k+1}}}{2k+1}  x^{-2k-1} \bigg) - 1
		\Bigg) \, ,
	\end{equation*}
	and expanding the exponentials finishes the proof.
\end{proof}

\subsection{B-Okounkov--Pandharipande operators}

From the proof of \cref{prop:B:CJ:gen:series}, we see that the generating series of completed cut-and-join operators is the constant coefficient in $x$ of $ Z^B(xe^{z/2},-xe^{-z/2})$. The other coefficients are interesting as well: they form an algebra that interpolates between the cut-and-join operators $\mc{F}^{B}$ and the bosonic operators $\alpha^B$. In the non-spin case, such an algebra was introduced by Bloch--Okounkov~\cite{BO00} and packed into generating functions by Okounkov--Pandharipande in \cite{OP06}. There is no particular reason to define them in a representation, so we give them as follows:

\begin{definition}
	The \emph{B-Okounkov--Pandharipande operators} are defined by
	\begin{align}\label{B-OPops}
		\mc{E}^{B}_m(z)
		& \coloneqq
		-\frac{1}{2} \sum_{k \in \Z} e^{-(k+m/2)z} F_{k,-k-m} \in \mf{b}_\infty \bbraket{z} \, ,
		\qquad\quad m \in \Z \, , \\
		\hat{\mc{E}}^B_0(z)
		& \coloneqq
		-\frac{1}{2} \sum_{k \in \Z} e^{-kz} \hat{F}_{k,-k} \in \hat{\mf{b}}_\infty \bbraket{z}\,.
	\end{align}
	Notice that the boson operators and the fermionic cut-and-join operators are given by
	\begin{equation}
		\alpha_m^B = \mc{E}^{B}_m(0)\,,
    \qquad\quad
		\mc{F}^B_{r+1} = (r+1)! [z^{r+1}] \hat{\mc{E}}^B_0(z)\,.
	\end{equation}
\end{definition} 

\begin{remark}
	The operators $\Omega_{mn}$ of \cite[section 4.3]{MMNO21} are the coefficients of the B-Okounkov--Pandharipande operators: $ \hat{\mc{E}}^B_m(z) = \sum_{n} \frac{1}{n!}\Omega_{mn} z^n$. 
\end{remark}

\begin{lemma}\label{lem:B:OP:as:vertex:operator}
	Under the map $\sigma^B$, the B-Okounkov--Pandharipande operators are given by
	\begin{equation}
		\sigma^B \bigl( \mc{E}^{B}_m(z) \bigr) = -\frac{1}{2} [x^m] Z^B(xe^{z/2}, -xe^{-z/2})\,.
	\end{equation}
	In particular, specialising the above equation, we find that the generating function of completed cut-and-join operators $W^B(z)$ can be obtained from the B-Okounkov--Pandharipande operators as
	\begin{equation}
		W^B(z) = \hat{\sigma}^B\big( \hat{\mc{E}}^B_0(z)\big) \,.
	\end{equation}
\end{lemma}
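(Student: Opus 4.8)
The plan is to prove the first identity by a direct term-by-term application of $\sigma^B$ to the definition of $\mc{E}^B_m(z)$, reading off the answer as a coefficient of the vertex operator $Z^B$ via \cref{prop:BKP:vertex:action}, and then to obtain the second identity as the $m=0$ specialisation of the normally ordered (hatted) analogue, matching the computation already carried out inside the proof of \cref{prop:B:CJ:gen:series}.

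First I would use linearity of $\sigma^B$ together with \cref{prop:BKP:vertex:action} to write $\sigma^B(\mc{E}^B_m(z)) = \tfrac{1}{2}\sum_{k\in\Z} e^{(k+m/2)z}\,\sigma^B(F_{k,-k-m}) = \tfrac{1}{2}\sum_{k\in\Z} e^{(k+m/2)z}(-1)^{-k-m} Z_{-k,-k-m}$, where $Z^B(x,y)=\sum_{j,k}Z_{j,k}x^j y^{-k}$. On the other side, I would expand the vertex operator after the substitution $x\mapsto xe^{-z/2}$, $y\mapsto -xe^{z/2}$, obtaining $Z^B(xe^{-z/2},-xe^{z/2}) = \sum_{j,k} Z_{j,k}(-1)^{-k} x^{j-k}\, e^{-(j+k)z/2}$, so that extracting $[x^m]$ forces $j=k+m$ and gives $\tfrac12\sum_k Z_{k+m,k}(-1)^{k}e^{-(k+m/2)z}$. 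The two expressions then agree after the reindexing $k\mapsto -k-m$, which sends $Z_{-k,-k-m}\mapsto Z_{k+m,k}$, $e^{(k+m/2)z}\mapsto e^{-(k+m/2)z}$, and $(-1)^{-k-m}\mapsto (-1)^{k}$; this establishes the first displayed equation. Note that for $m\neq 0$ no normal ordering enters, so the un-hatted objects $F$, $\sigma^B$, $Z^B$ suffice; the normally ordered correction is only needed when $m=0$, which is exactly the specialisation.

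For the second statement, the identical computation applies verbatim with every object replaced by its hatted counterpart, yielding $\hat\sigma^B(\hat{\mc{E}}^B_0(z)) = \tfrac12[x^0]\,\hat Z^B(xe^{-z/2},-xe^{z/2})$. I would then invoke the chain of equalities established inside the proof of \cref{prop:B:CJ:gen:series}, where precisely this constant-in-$x$ coefficient of $\hat Z^B$ was identified with the generating series $W^B(z)$ of the spin completed cut-and-join operators, so that combining the two gives $W^B(z)=\hat\sigma^B(\hat{\mc{E}}^B_0(z))$. The main obstacle here is the sign and index bookkeeping: one must track the factor $(-1)^{-k}=(-1)^k$ produced by expanding the negative second argument $-xe^{z/2}$, use the antisymmetry $\hat Z^B(x,y)=-\hat Z^B(y,x)$ (equivalently $\hat F_{j,k}=(-1)^{j+k+1}\hat F_{k,j}$ of \cref{lem:F:hat:symmetry:Heisenberg}) to relate diagonal coefficients under $k\mapsto -k$, and reconcile the fact that the arguments $(xe^{-z/2},-xe^{z/2})$ appearing here are the $z\mapsto -z$ swap of those in the proof of \cref{prop:B:CJ:gen:series}. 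Matching these consistently, using that $W^B(z)$ is odd in $z$ together with the antisymmetry of the vertex-operator coefficients, is the delicate step where all constants and signs must be handled with care.
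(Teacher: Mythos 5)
Your treatment of the first identity is correct: applying $\sigma^B(F_{j,k}) = (-1)^k Z_{-j,k}$ termwise to $\mc{E}^B_m(z)$, expanding $Z^B(xe^{-z/2},-xe^{z/2})$, extracting $[x^m]$, and reindexing $k \mapsto -k-m$ is exactly the "straightforward computation" the paper leaves implicit, and your sign and index bookkeeping there is accurate. The same computation with hats also correctly gives $\hat{\sigma}^B\bigl(\hat{\mc{E}}^B_0(z)\bigr) = \tfrac{1}{2}[x^0]\,\hat{Z}^B(xe^{-z/2},-xe^{z/2})$.

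The gap is in the step you defer as "delicate", and it is not a matter of care: as described, it fails. The proof of \cref{prop:B:CJ:gen:series} identifies $W^B(z)$ with $\tfrac{1}{2}[x^0]\hat{Z}^B(xe^{z/2},-xe^{-z/2})$, whose arguments are the $z \mapsto -z$ swap of yours, and you propose to bridge the two using oddness of $W^B$ together with antisymmetry of $\hat{Z}^B$. But these facts give the opposite of what you need: antisymmetry $\hat{Z}^B(x,y) = -\hat{Z}^B(y,x)$ means $\hat{Z}_{j,k} = -\hat{Z}_{-k,-j}$ on coefficients, hence $\hat{Z}_{k,k} = -\hat{Z}_{-k,-k}$ on the diagonal, and since
\begin{equation*}
	[x^0]\,\hat{Z}^B(xe^{\pm z/2},-xe^{\mp z/2}) = \sum_{k \in \Z} \hat{Z}_{k,k}\,(-1)^k e^{\pm kz},
\end{equation*}
reindexing $k \mapsto -k$ shows the two $[x^0]$ coefficients differ by an overall sign, not that they agree. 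Executed faithfully, your route therefore produces $\hat{\sigma}^B(\hat{\mc{E}}^B_0(z)) = -W^B(z)$. The clash traces back to the intermediate line of the proof of \cref{prop:B:CJ:gen:series}: there $\hat{\sigma}^B(\hat{F}_{k,-k}) = (-1)^k \hat{Z}_{-k,-k}$ should yield $\hat{Z}_{-k,-k}$, not $\hat{Z}_{k,k}$, and with this corrected that proof establishes $W^B(z) = \tfrac{1}{2}[x^0]\hat{Z}^B(xe^{-z/2},-xe^{z/2})$, matching your expression with nothing left to reconcile. Since you cannot repair a cited intermediate formula from inside your own proof, the robust way to get the second claim is to bypass the vertex operator entirely: by $\hat{F}_{-k,k} = -\hat{F}_{k,-k}$ (\cref{lem:F:hat:symmetry:Heisenberg}), all even powers of $z$ in $\hat{\mc{E}}^B_0(z) = \frac{1}{2}\sum_k e^{kz}\hat{F}_{k,-k}$ cancel, so $\hat{\mc{E}}^B_0(z) = \sum_{r \textup{ even}} \frac{z^{r+1}}{(r+1)!}\,\mc{F}^B_{r+1}$ holds already as an identity in $\hat{\mf{b}}_\infty\bbraket{z}$; applying $\hat{\sigma}^B$ term by term then gives $W^B(z)$ directly from its definition, with no sign ambiguity.
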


\begin{proof}
	Straightforward computation, using that $W^B $ is odd.
\end{proof}

We collect here some other useful properties of the B-Okounkov--Pandharipande operators. The first one is the vacuum expectation value formula, that in the A-setting reads
\begin{equation}
 	\langle\mathcal{E}_m(z)\rangle = \frac{\delta_m}{\varsigma(z)},
 	\qquad\qquad
 	\varsigma(z) \coloneqq 2\sinh(\tfrac{z}{2}).
 \end{equation} 
In the B-setting, we use the following analogous definition (the archaic Greek letter $\qoppa$ reads "\textit{qoppa}")
\begin{equation}
	\qoppa(z)
	\coloneqq
	\frac{1}{2} \cosh(\tfrac{z}{2})
	=
	\frac{1}{2} + \frac{z^2}{16} + \frac{z^4}{768} + \frac{z^6}{92160} + O(z^8)
\end{equation}	

\begin{lemma}\label{lem:OP:VEV}
	The following results hold.
	\begin{enumerate}
		\item
		The vacuum expectation values of the B-Okounkov--Pandharipande operators are given by
		\begin{equation}
			\Braket{ \mc{E}^{B}_m(z) }
			=
			\frac{\delta_m}{4} \coth \left( \tfrac{z}{2} \right) = \delta_m \frac{\qoppa(z)}{\varsigma(z)}
		\end{equation}

		\item
		The operators $\mc{E}^{B}_m$ and $\hat{\mc{E}}^{B}_0$ obey the parity relations
		\begin{equation}
			\mc{E}^{B}_m(-z) = (-1)^{m+1} \mc{E}^{B}_m(z),
			\qquad\quad
			\hat{\mc{E}}^{B}_0(-z) = - \hat{\mc{E}}^{B}_0(z).
		\end{equation}

		\item
		The subspace of $\mf{b}_\infty $ spanned by the coefficients $[z^k]\mc{E}^{B}_m(z)$ is a Lie subalgebra. Explicitly,
		\begin{equation}\label{eqn:spin:OP:comm}
			\bigl[ \mc{E}^{B}_m(z),\mc{E}^{B}_n(w) \bigr]
			=
			\frac{1}{2}
			\varsigma\bigl(\det\begin{bsmallmatrix}
				m & z \\
				n & w
			\end{bsmallmatrix}\bigr) \mc{E}^{B}_{n+m}(z+w)
			+ \frac{(-1)^n}{2}
			\varsigma\bigl(\det\begin{bsmallmatrix}
				m & - z \\
				n & w
			\end{bsmallmatrix}\bigr) \mc{E}^{B}_{n+m} (z-w) \,.
		\end{equation}
	\end{enumerate}
\end{lemma}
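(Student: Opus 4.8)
My overall strategy is to reduce each of the three statements to the anticommutation relations $\{\phi_k,\phi_l\} = (-1)^k\delta_{k+l}$ and the quadratic commutation formulae of \cref{lem:comm:phi:phi}, so that all three parts become essentially bookkeeping with the generating series in $z$ (and $w$). The key observation is that $\mc{E}^B_m(z) = \frac12\sum_k e^{(k+m/2)z} F_{k,-k-m}$ is a generating function whose coefficients live in $\mf{b}_\infty$, so the combinatorics of the commutator happens at the level of the $F_{j,k} = (-1)^j\phi_j\phi_k$, and the exponentials merely track the gradings. For the vacuum expectation value (part 1), I would first use \cref{lem:VEV:Fock:basis}(1), namely $\braket{\phi_k\phi_l} = (-1)^l\delta_{k+l}(\delta_{l>0}+\tfrac{\delta_l}{2})$, to compute $\braket{F_{k,-k-m}} = (-1)^k\braket{\phi_k\phi_{-k-m}}$. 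This forces $m=0$ (so the prefactor $\delta_m$ appears), and for $m=0$ it gives $\braket{F_{k,-k}} = \delta_{k>0} + \tfrac{\delta_k}{2}$. Summing $\tfrac12\sum_k e^{kz}(\delta_{k>0}+\tfrac{\delta_k}{2})$ yields a geometric series $\tfrac12(\tfrac12 + \sum_{k>0}e^{kz})$, which should resum to $\tfrac14\coth(\tfrac{z}{2})$; I would then recognise this as $\qoppa(z)/\varsigma(z)$ using $\qoppa(z)=\tfrac12\cosh(\tfrac z2)$ and $\varsigma(z)=2\sinh(\tfrac z2)$.

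For part 2, the parity relations, I would simply substitute $z\mapsto -z$ in the defining sum and reindex. In $\mc{E}^B_m(-z) = \tfrac12\sum_k e^{-(k+m/2)z}F_{k,-k-m}$, the substitution $k \mapsto -k-m$ sends the exponent $-(k+m/2)$ to $(k+m/2)$ and, via the symmetry $F_{j,k} = -(-1)^{j+k}F_{k,j}$ coming from \cref{lem:F:hat:symmetry:Heisenberg}(1) (more precisely, the analogous relation for $F$ before normal ordering, namely $F_{j,k}=(-1)^j\phi_j\phi_k$ and anticommuting the fermions), produces the sign $(-1)^{m+1}$; the $\hat{\mc{E}}^B_0$ case is the $m=0$ specialisation with its own normal-ordering sign. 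I expect the only subtlety here is carefully tracking the sign when swapping the two fermion indices.

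Part 3 is the heart of the lemma and the main obstacle. Here I would expand the bracket as
\begin{equation*}
	\bigl[\mc{E}^B_m(z),\mc{E}^B_n(w)\bigr]
	= \tfrac14 \sum_{k,l} e^{(k+m/2)z}e^{(l+n/2)w} (-1)^{k+l}[\phi_k\phi_{-k-m},\phi_l\phi_{-l-n}],
\end{equation*}
and substitute the four-term commutator from \cref{eqn:comm:phi:phi}. Each of the four Kronecker deltas (e.g.\ $\delta_{-k-m+l}$, etc.) collapses one summation index, leaving a single sum that reassembles into a new generating series of the form $\mc{E}^B_{m+n}(z\pm w)$. The crucial step is bookkeeping: after resolving the deltas, the surviving exponential prefactors combine into hyperbolic sine factors. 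Concretely, the two ``same-sign'' contractions should yield a factor $\tfrac12\sinh$ of a combination of $mw,nz$ assembling into $\tfrac12\varsigma(\det\begin{bsmallmatrix} m&z\\ n&w\end{bsmallmatrix}) = \tfrac12\varsigma(mw-nz)$ with argument $z+w$, while the two ``opposite-sign'' contractions (which exist precisely because of the neutral/uncharged structure, where $\phi_k$ pairs with $\phi_{-k}$ rather than with a dual fermion) produce the second term with sign $(-1)^n$, argument $z-w$, and the determinant $mw+nz$. The hardest and most error-prone part will be correctly collecting the signs $(-1)^k$, $(-1)^l$ together with the fermion-swap signs so that the extra term acquires exactly the factor $(-1)^n$ and the arguments of $\varsigma$ and of $\mc{E}^B_{m+n}$ come out as stated; I would verify the result against the known vacuum expectation from part 1 and against the $z,w\to 0$ Heisenberg limit $[\alpha^B_m,\alpha^B_n]=\tfrac{m}{2}\delta_{m+n}$ of \cref{lem:F:hat:symmetry:Heisenberg}(2) as consistency checks. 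That the span closes as a Lie subalgebra then follows immediately, since the right-hand side is again a linear combination of B-Okounkov-Pandharipande operators.
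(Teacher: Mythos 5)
Your proposal is correct and follows essentially the same route as the paper's proof: part (1) via \cref{lem:VEV:Fock:basis} and a geometric series resummation, part (2) via the index swap symmetry of the quadratic fermion bilinears (the paper cites \cref{lem:F:hat:symmetry:Heisenberg}), and part (3) by expanding the bracket with the four-term commutator of \cref{lem:comm:phi:phi}, resolving the Kronecker deltas, and reassembling the exponentials into $\varsigma$-factors --- with the paper likewise invoking the parity relation at the very end to convert the $-(-1)^m\,\mc{E}^{B}_{m+n}(w-z)$ term into the stated $(-1)^n\,\mc{E}^{B}_{m+n}(z-w)$ form.
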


\begin{proof}
	The first property follows from \cref{lem:VEV:Fock:basis}:
	\begin{equation*}
	\begin{split}
		\Braket{ \mc{E}^{B}_m(z) }
		&=
		\biggl\langle -\frac{1}{2} \sum_{k \in \Z} e^{-(k+m/2)z} F_{k,-k-m} \biggr\rangle
		=
		-\frac{1}{2} \delta_m \biggl( \sum_{k>0} e^{-kz} + \frac{1}{2} \biggr) 
		\\
		&=
		-\frac{\delta_m}{2} \Big( \frac{e^{-z}}{1- e^{-z}} + \frac{1}{2} \Big)
		=
		-\frac{\delta_m}{4} \frac{1+ e^{-z}}{1-e^{-z}} = -\frac{\delta_m}{4} \coth( -\tfrac{z}{2} ).
	\end{split}
	\end{equation*}
	For the parity property, use that $\hat{F}_{j,k} = (-1)^{j+k+1} \hat{F}_{k,j}$ and \cref{lem:F:hat:symmetry:Heisenberg}. To conclude, the commutation relation follows from an explicit calculation.  	\begin{equation*}
	\begin{split}
		\bigl[ \mc{E}^{B}_m(z),\mc{E}^{B}_n(w) \bigr]
		&= \frac{1}{4} \sum_{k,l \in \Z} e^{-(k+m/2)z-(l+n/2)w} \big[ F_{k,-k-m},F_{l,-l-n} \big]
		\\
		&= \frac{1}{4} \sum_{k,l \in \Z} e^{-(k+m/2)z-(l+n/2)w} \Big( \delta_{l-k-m} F_{k,-l-n} - \delta_{k-l-n} F_{l,-k-m} \\
		&\hspace{4cm}  - (-1)^m \delta_{k+l} F_{-k-m,-l-n} + (-1)^m \delta_{-k-m-l-n} F_{l,k} \Big)
		\\
		&= \frac{1}{4} \sum_{k \in \Z} e^{-(k+m/2)z-(k+m+n/2)w} F_{k,-k-m-n} - \frac{1}{4} \sum_{l \in \Z} e^{-(l+n+m/2)z-(l+n/2)w} F_{l,-l-m-n} \\
		& \qquad \qquad - \frac{1}{4} \sum_{l \in \Z} e^{-(-l+m/2)z-(l+n/2)w} (-1)^m F_{l-m,-l-n} \\
		& \qquad \qquad + \frac{1}{4} \sum_{l \in \Z} e^{(l+n+m/2)z-(l+n/2)w} (-1)^m F_{l,-l-m-n}
		\\
		&= \frac{1}{4} \sum_{k \in \Z} \Big( e^{-(k+m/2)z-(k+m+n/2)w} - e^{-(k+n+m/2)z-(k+n/2)w} \\
		& \qquad - e^{(k+m/2)z-(k+m+n/2)w} (-1)^m + e^{(k+n+m/2)z-(k+n/2)w} (-1)^m \Big) F_{k,-k-m-n}
		\\
		&= \frac{1}{4} \sum_{k \in \Z} \Big( e^{nz/2-mw/2} - e^{-nz/2+mw/2} \Big) e^{-(k+(m+n)/2)(z+w)}F_{k,-k-m-n}\\
		&  \qquad+(-1)^m \frac{1}{4} \sum_{k \in \Z} (-e^{-nz/2-mw/2} + e^{nz/2+mw/2} ) e^{(k+(m+n)/2)(z-w)} F_{k,-k-m-n}
		\\
		&= -\sinh \Big( \frac{nz-mw}{2} \Big)  \mc{E}^{B}_{n+m}(z+w) + (-1)^m \sinh \Big( \frac{-nz-mw}{2} \Big) \mc{E}^{B}_{n+m} (w-z) \,.
	\end{split}
	\end{equation*}
	Now simplify the signs using the fact that the hyperbolic sine is an odd function, and employ the parity property above.
\end{proof}

\section{Properties of spin double Hurwitz numbers}
\label{sec:properties:spin:HNs}

In this section we employ the algebra of spin bosonic operators to analyse and derive several structural properties about spin double Hurwitz numbers. These properties will be described and referred to as:
\begin{enumerate}
	\item Vacuum expectation in terms of the algebra of $\mc{E}_m^{\textup{B}}(z)$;
	\item Generating series;
	\item Strong chamber polynomiality;
	\item Wall-crossing formulae;
	\item Integrability and cut-and-join equation.
\end{enumerate}
Moreover, for single Hurwitz numbers, we derive the following property:
\begin{enumerate}
	\item[(6)] Quasi-polynomiality.
\end{enumerate}
Each of these results have been observed and proved for several non-spin Hurwitz enumerative problems over the past years, by developing new techniques via the Fock space formalism, and represent major advancements in the field of Hurwitz theory. By now these techniques are more consolidated (see e.g.\cite{HKL18, KLS19}), and we can prove analogous results by a suitable adaptation of these methods. We therefore derive the results and refer to the original proofs, only pointing out the necessary adaptations.

\subsection{Vacuum expectation in terms of the algebra of \texorpdfstring{$\mc{E}_m^{\mathrm{B}}(z)$}{B-OP operators}}

Point (1) is a straightforward application of Gunningham's formula. The case $r = 2$ can be found in \cite[equation~3.10]{Lee19}, and the case $r > 2$ can be deduced from \cite{MMN20,MMNO21}.

\begin{proposition}\label{prop:spin:double:HNs:VEV}
	For $\mu, \nu$ odd partitions of lengths $m$ and $n$, the disconnected spin double Hurwitz numbers with $(r+1)$-completed cycles are given by
	\begin{equation}
		h_{g;\mu,\nu}^{\bullet, r,\theta}
		=
		\frac{2^{1-g} (r!)^b}{b! \cdot \prod_i \mu_i \prod_j \nu_j} \bigl[ z_1^{r+1} \cdots z_b^{r+1} \bigr]
		\Bigg\<
			\prod_{i=1}^{\ell(\mu)} \mc{E}^B_{\mu_i}(0)
			\prod_{p=1}^b \hat{\mc{E}}^B_0(z_p) 
			\prod_{j=1}^{\ell(\nu)} \mc{E}^B_{-\nu_j}(0)
		\Bigg\> \, .
	\end{equation}
	Here $b = \frac{2g - 2 + m + n}{r}$ is given by the Riemann--Hurwitz formula.
\end{proposition}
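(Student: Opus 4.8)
The plan is to evaluate the vacuum expectation on the right-hand side by letting the operators act on the vacuum and then to match the outcome against Gunningham's formula in the form \cref{eqn:Gunningham:spin:double:HNs}. Three ingredients do all the work: the creation formula \cref{cor:boson:action}, the fact that the orthonormal basis $\{v_\lambda\}_{\lambda\in\SP(d)}$ of \cref{lem:VEV:Fock:basis} simultaneously diagonalises the series $\hat{\mc{E}}^B_0(z)$, and the eigenvalue statement \cref{lem:power:sum:eigenvalue}. Since $\mc{E}^B_{\pm m}(0)=\alpha^B_{\pm m}$, the quantity in question is a genuine matrix element between $v_\emptyset$ and a state obtained from $v_\emptyset$ by applying creation and then annihilation operators.

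First I would apply the creation operators to the vacuum. As the $\alpha^B_{-\nu_j}$ mutually commute, \cref{cor:boson:action} gives $\prod_{j}\alpha^B_{-\nu_j}v_\emptyset=\sum_{\lambda\in\SP(d)}2^{-\delta(\lambda)/2-\ell(\nu)}\zeta^\lambda_\nu\,v_\lambda$. Next, because $\hat F_{k,-k}v_\lambda=v_\lambda$ when $k\in\lambda$ and vanishes otherwise, each $v_\lambda$ is a simultaneous eigenvector of the whole series $\hat{\mc{E}}^B_0(z)$; combined with $\mc{F}^B_{r+1}=(r+1)!\,[z^{r+1}]\hat{\mc{E}}^B_0(z)$ and \cref{lem:power:sum:eigenvalue}, this yields $[z_p^{r+1}]\,\hat{\mc{E}}^B_0(z_p)\,v_\lambda=\tfrac{1}{(r+1)!}p_{r+1}(\lambda)\,v_\lambda$. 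Hence $\prod_{p=1}^b\hat{\mc{E}}^B_0(z_p)$ acts diagonally by a product of independent scalar series in the $z_p$, so the mixed coefficient factorises and contributes $\bigl(p_{r+1}(\lambda)/(r+1)!\bigr)^b$ to the $\lambda$-summand.

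It remains to pair the resulting combination of the $v_\lambda$ against $v_\emptyset$ after applying $\prod_i\alpha^B_{\mu_i}$. Using the Fock-space adjoint $(\alpha^B_m)^\dagger=\alpha^B_{-m}$ together with \cref{cor:boson:action} and orthonormality, one gets $\langle v_\emptyset,\prod_i\alpha^B_{\mu_i}v_\lambda\rangle=\langle\alpha^B_{-\mu}v_\emptyset,v_\lambda\rangle=2^{-\delta(\lambda)/2-\ell(\mu)}\zeta^\lambda_\mu$. Assembling the three pieces and extracting $[z_1^{r+1}\cdots z_b^{r+1}]$, the vacuum expectation equals $\sum_{\lambda\in\SP(d)}2^{-\delta(\lambda)-\ell(\mu)-\ell(\nu)}\zeta^\lambda_\mu\zeta^\lambda_\nu\,(p_{r+1}(\lambda)/(r+1)!)^b$. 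Multiplying by the prefactor $2^{1-g}(r!)^b/(b!\prod_i\mu_i\prod_j\nu_j)$ and using $(r!)^b/((r+1)!)^b=(r+1)^{-b}$ reproduces \cref{eqn:Gunningham:spin:double:HNs} exactly.

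The main obstacle is the bra side: justifying that the annihilation operators $\prod_i\alpha^B_{\mu_i}$ read off the very same character coefficients as the creation operators, i.e. $\langle v_\emptyset,\prod_i\alpha^B_{\mu_i}v_\lambda\rangle=2^{-\delta(\lambda)/2-\ell(\mu)}\zeta^\lambda_\mu$. This rests on the self-adjointness relation $(\alpha^B_m)^\dagger=\alpha^B_{-m}$ for the inherited Fock inner product (note this is the natural Fock adjoint, and is separate from the scaling of the pairing on $\Gamma$ used for character orthogonality) and on the reality of the Sergeev characters. One should also confirm that commuting the coefficient extraction past the product $\prod_p\hat{\mc{E}}^B_0(z_p)$ is legitimate, which holds precisely because each factor is diagonal in the $v_\lambda$ basis. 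Everything else is bookkeeping of powers of $2$ and of factorials.
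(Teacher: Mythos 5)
Your proof is correct and is essentially the paper's own argument: the paper's proof consists of the single instruction to apply \cref{lem:power:sum:eigenvalue,cor:boson:action} to Gunningham's formula \eqref{eqn:Gunningham:spin:double:HNs}, which is precisely the computation you carry out (creation operators produce the $\zeta^\lambda_\nu$-weighted sum over $v_\lambda$, each $\hat{\mc{E}}^B_0(z_p)$ acts diagonally so that $[z_p^{r+1}]$ contributes $p_{r+1}(\lambda)/(r+1)!$, and pairing against the bra yields $\zeta^\lambda_\mu$). The one step you make explicit that the paper leaves implicit is the adjointness $(\alpha^B_m)^\dagger=\alpha^B_{-m}$ for the Fock pairing, which indeed holds since $(F_{j,k})^\dagger = F_{-k,-j}$ under $\phi_n^\dagger = (-1)^n\phi_{-n}$, and your bookkeeping of the powers of $2$ and of $(r!)^b/((r+1)!)^b=(r+1)^{-b}$ matches \eqref{eqn:Gunningham:spin:double:HNs} exactly.
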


\begin{proof}
	Use \cref{lem:power:sum:eigenvalue,cor:boson:action} on Gunningham's formula for spin double Hurwitz numbers \cref{eqn:Gunningham:spin:double:HNs}.
\end{proof}

\subsection{Strong chamber polynomiality}
\label{subsec:chamber:poly:WC}

It is known for the non-spin case that some rich structure underlies the double Hurwitz numbers $h^{r,\theta}_{g;\mu,\nu}$, when considered as functions of $\mu$ and $\nu$ for fixed lengths $m$ and $n$. In fact, the entries of the odd partitions $\mu$ and $\nu$ can be seen as coordinates in an affine space that gets divided into chambers: in each chamber the Hurwitz numbers can be represented by a chamber-dependent polynomial, and its homogeneous decomposition also enjoys several constraints.

\begin{definition} \label{defn:space:arguments:spin:double:HNs}
	Let us define the subspace 
	\begin{equation}
		\mc{H}(m,n)
		\coloneqq
		\Biggl\{ (\mu,\nu ) \; \Bigg| \; \mu \in \N^m,\nu \in \N^n \textrm{ odd, such that }\sum_{i=1}^m \mu_i = \sum_{j=1}^n \nu_j \Biggr\} \subset \N^m \times \N^n \,,
	\end{equation}
	where $\mu = (\mu_1, \dots, \mu_m)$ and $\nu = (\nu_1, \dots, \nu_n)$ and we view spin double Hurwitz numbers as a function in the following sense:
	\begin{equation}
		h_{g}^{r,\theta} : \mc{H}(m,n)\to\Q \colon \qquad  (\mu,\nu)\mapsto h^{r,\theta}_{g;\mu,\nu}.
	\end{equation}
\end{definition}

\begin{definition}\label{defn:hyp:arrangement}
	Let $I\subsetneq \bbraket{m}$ and $J\subsetneq \bbraket{n}$ be non-empty proper subsets. Define the hyperplane (or \emph{wall}) indexed by $(I,J)$ as the set
	\begin{equation}
		\mc{W}_{I,J}
		\coloneqq
		\Biggl\{ (\mu,\nu) \in \mc{H}(m,n) \; \Bigg| \; \sum_{i \in I} \mu_i = \sum_{j \in J} \nu_j \Biggr\}.
	\end{equation}
	We define the \emph{hyperplane arrangement} $\mc{W}(m,n) \subset \mc{H}(m,n)$ to be the union of all the walls $\mc{W}_{I,J}$. A connected component of $\mc{H}(m,n) \setminus \mc{W}(m,n)$ is called a \emph{chamber}, and is denoted with the letter $\mf{c}$.
\end{definition}

\begin{theorem}[Strong piecewise polynomiality]\label{thm:piecewise}
	Let $g$ be a non-negative integer and let $m$, $n$ be positive integers such that $(g, n+m) \neq (0,2)$ and $b = \frac{2g-2+m+n}{r}$ is a positive integer. Then the function
	\begin{equation}
		h_{g}^{r,\theta} \colon \mc{H}(m,n) \to \Q \colon
		\qquad
		(\mu,\nu)\mapsto h^{r,\theta}_{g;\mu,\nu}
	\end{equation}
	is piecewise polynomial with the walls given by the hyperplanes of $\mc{W}(m,n)$. In other words, within each chamber $\mf{c}$ of the hyperplane arrangement $\mc{W}(m,n)$ there exists a polynomial $P_{g}^{r,\theta, \mf{c}}$ such that
	\begin{align}
		h^{r,\theta}_{g;\mu,\nu} &=
		P_{g}^{r,\theta, \mf{c}}(\mu,\nu),
		\qquad\qquad\qquad
		\text{for all }\;\; (\mu,\nu)\in \mf{c}.
	 \end{align}
	Moreover, $P_{g}^{r,\theta, \mf{c}}$ has the homogeneous degree decomposition
	\begin{equation}
		P_{g}^{r,\theta, \mf{c}}(\mu, \nu)
		=
		\sum_{k=0}^g P_{g,k}^{r,\theta, \mf{c}}(\mu, \nu),
		\qquad
		\deg_{\mu,\nu}(P_{g,k}^{r,\theta, \mf{c}}) = 2g - 1 + b -2k.
	\end{equation}
\end{theorem}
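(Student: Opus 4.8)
The plan is to adapt the Fock-space commutation technique of the non-spin case (as in \cite[theorem~5.2]{SSZ12} and \cite{Ros08}) to the B-type operators, taking \cref{prop:spin:double:HNs:VEV} as the starting point and using the commutation relation \cref{eqn:spin:OP:comm} together with the vacuum-expectation and parity formulae of \cref{lem:OP:VEV} as the only structural inputs. I would work with the full vacuum expectation appearing in \cref{prop:spin:double:HNs:VEV}, which computes the disconnected numbers $h^{\bullet,r,\theta}_{g;\mu,\nu}$; the connected numbers $h^{r,\theta}_{g;\mu,\nu}$ are obtained as the connected sub-collection of contraction patterns (equivalently, by taking the $\log$ of the generating series), an operation that preserves both piecewise polynomiality and the homogeneous degree bounds and creates no new walls.

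The core step is to bring $\Braket{\prod_i \mc{E}^B_{\mu_i}(0)\,\prod_p \hat{\mc{E}}^B_0(z_p)\,\prod_j \mc{E}^B_{-\nu_j}(0)}$ into a fully contracted form by commuting the positive-index operators $\mc{E}^B_{\mu_i}(0)$ rightwards through the $\hat{\mc{E}}^B_0(z_p)$ and the $\mc{E}^B_{-\nu_j}(0)$ via \cref{eqn:spin:OP:comm}. Each commutator merges two operators into one, shifting the spectral parameter to a sum or difference of the incoming parameters and producing a factor $\tfrac12\varsigma(\cdot)$; the parity relation $\mc{E}^B_m(-z)=(-1)^{m+1}\mc{E}^B_m(z)$ together with the oddness of the $\mu_i$ and $\nu_j$ lets me combine the two terms of \cref{eqn:spin:OP:comm} coherently, e.g.\ $[\mc{E}^B_{\mu_i}(0),\hat{\mc{E}}^B_0(z_p)]=\varsigma(\mu_i z_p)\,\mc{E}^B_{\mu_i}(z_p)$. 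Iterating, the expectation becomes a finite sum over contraction patterns, each contributing a product of factors $\varsigma(\cdot)$ whose arguments are integer-linear in the $\mu_i,\nu_j$ and linear in the $z_p$, times, for every connected block, one surviving factor $\Braket{\mc{E}^B_0(w)}=\qoppa(w)/\varsigma(w)$ with $w$ a signed sum of the $z_p$. This mirrors the non-spin computation exactly, the only differences being the extra even, regular factor $\qoppa$ in the surviving term (contributing no new pole) and the doubled commutation relation, which is absorbed by the parity argument.

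I would then extract $[z_1^{r+1}\cdots z_b^{r+1}]$ and divide by $\prod_i\mu_i\prod_j\nu_j$. Since every factor $\varsigma(\cdot)$ is entire with argument linear in the parameters, the coefficient extraction turns the $\varsigma$-part into a genuine polynomial in the $\mu_i,\nu_j$. The sole non-polynomial ingredient is the surviving $\qoppa(w)/\varsigma(w)=\tfrac14\coth(w/2)$: expanding it as a geometric series requires fixing the sign of the energy $w$ flowing through a contraction, and on the relevant stratum this energy is a partial sum $\sum_{i\in I}\mu_i-\sum_{j\in J}\nu_j$. Its vanishing locus is precisely the wall $\mc{W}_{I,J}$, so within each chamber $\mf{c}$ of $\mc{W}(m,n)$ the sign is constant, the series resums, and $h^{r,\theta}_{g;\mu,\nu}$ agrees with a single polynomial $P^{r,\theta,\mf{c}}_g$. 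This identifies the wall set with $\mc{W}(m,n)$ and yields the piecewise polynomiality.

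Finally, the homogeneous degree decomposition follows from a scaling argument: rescaling $(\mu,\nu)\mapsto t(\mu,\nu)$ and compensating with $z_p\mapsto z_p/t$ leaves each product $\mu_i z_p$ invariant, so the homogeneity degree of a contracted term is read off from the number of $\varsigma$-factors, the total extracted $z$-degree $b(r+1)=2g-2+m+n+b$, the pole order of the surviving $\coth$, and the prefactor degree $-(m+n)$; organising the patterns by their number of independent loops, which lowers the degree in steps of $2$, produces the top degree $2g-1+b$ and the decomposition $\deg_{\mu,\nu}P^{r,\theta,\mf{c}}_{g,k}=2g-1+b-2k$ for $0\le k\le g$. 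I expect the main obstacle to be the chamber analysis: verifying that the sign-dependent resummation of the $\coth$ factors produces exactly the walls $\mc{W}_{I,J}$ and no spurious ones, and checking that the extra second term of \cref{eqn:spin:OP:comm}, which has no A-type analogue, does not disturb the loop (genus) filtration governing the $-2k$ degree drops.
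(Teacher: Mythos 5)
Your overall strategy is the same as the paper's: start from \cref{prop:spin:double:HNs:VEV}, run the SSZ-type commutation algorithm using \eqref{eqn:spin:OP:comm} and \cref{lem:OP:VEV}, and extract the coefficient of $z_1^{r+1}\cdots z_b^{r+1}$. However, there is a genuine error in the step where you locate the chamber dependence. You claim that the surviving factor $\qoppa(w)/\varsigma(w)$ must be expanded as a geometric series whose direction depends on the sign of an ``energy'' $w=\sum_{i\in I}\mu_i-\sum_{j\in J}\nu_j$. This is not what happens: in the fully contracted form (see \cref{thm:HNgenseries}, equation \eqref{eq:genserieschamber}) the argument of the surviving $\qoppa/\varsigma$ is $z_{\bbraket{b},\mathbf{a}\cup\mathbf{b}}$, a signed sum of the \emph{formal variables} $z_p$ only, with no dependence on $(\mu,\nu)$ whatsoever. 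Its expansion is a fixed Laurent series around $0$ in the $z_p$; there is no sign ambiguity and no resummation to perform, and its simple pole is cancelled by the $\varsigma'$ prefactors exactly as in the non-spin case. Consequently each pattern's contribution to the coefficient extraction is already a genuine polynomial in $(\mu,\nu)$, and if --- as your write-up implicitly assumes --- the collection of contraction patterns were independent of the chamber, your argument would show that $h^{r,\theta}_{g;\mu,\nu}$ is a single global polynomial on $\mc{H}(m,n)$, contradicting the non-trivial wall-crossing formulae of \cref{thm:wall:crossing}.

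The chamber dependence actually lives in the combinatorics of the algorithm, not in any single analytic factor: whether an intermediate merged operator $\mc{E}'(I,J,K,\mathbf{a})=\mc{E}^B_{|\mu_I|-|\nu_J|}(z_{K,\mathbf{a}})$ has positive, negative, or zero energy decides whether it annihilates the vacuum, must be commuted further to the right, or survives as a $\qoppa/\varsigma$ factor; hence the set $\CP^{\mf{c}}$ of commutation patterns with non-vanishing contribution depends only on the signs of the quantities $|\mu_I|-|\nu_J|$, i.e.\ exactly on the chamber of $\mc{W}(m,n)$, and it is this finite set that jumps across the walls $\mc{W}_{I,J}$. Once this is corrected, the rest of your argument (finiteness of the pattern sum, polynomiality of each extracted coefficient, parity bookkeeping giving top degree $2g-1+b$ with drops by $2$) goes through, and it is precisely the paper's proof, which observes that relative to \cite{SSZ12} the only changes are the extra signed summand in each commutator and the replacement of $1/\varsigma$ by $\qoppa/\varsigma=\tfrac{1}{4}\coth(\tfrac{z}{2})$, neither of which spoils the pole cancellation or the parity argument. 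A minor simplification: your log/inclusion-exclusion passage from disconnected to connected numbers is unnecessary, since off the walls no disconnected covers exist, so $h^{\bullet,r,\theta}_{g;\mu,\nu}=h^{\circ,r,\theta}_{g;\mu,\nu}$ inside every chamber, as the paper notes.
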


Its proof follows from \cref{prop:spin:double:HNs:VEV} and from the generating series of the vacuum expectations involved, which are computed in the next section.

\subsection{Generating series}

This section derives the generating series for double spin Hurwitz numbers by computing the vacuum expectations appearing in \cref{prop:spin:double:HNs:VEV}. The method used is based on a simple commutation procedure: positive energy operators are commuted to the right until they vanish against the vacuum. This refines an algorithm by Johnson for double Hurwitz numbers \cite{Joh15}. We start by defining some particular subclass of operators $\mc{E}^{\textup{B}}$ that play a role in this commutation procedure.

\begin{definition}
	For $I \subseteq \bbraket{n}$, $J \subseteq \bbraket{m}$, $K \subseteq \bbraket{b}$, and $\textbf{a} \in \{\pm 1\}^{K}$, define
	\begin{equation}
		\mc{E}'(I, J, K, \textbf{a}) = \mc{E}^B_{|\mu_I| - |\nu_J|}(z_{K, \textbf{a}})
	\end{equation}
	where $z_{K, \textbf{a}} \coloneqq \sum_{k \in K} a_k z_k$ and $\mu_I = \sum_{i \in I} \mu_i$ for a partition $\mu$. Define moreover for disjoint pairs $ I,L \subseteq \bbraket{n}$, $J,M \subseteq \bbraket{m}$, $K,N \subseteq \bbraket{b}$, and for $ \textbf{a} \in \{ \pm 1\}^K $, $ \textbf{b} \in \{ \pm 1\}^N$.
	\begin{equation}
		\bvs{I}{J}{K}{\textbf{a}}{L}{M}{N}{\textbf{b}} =\varsigma\left(\text{det}\begin{bmatrix} |\mu_I|-|\nu_J| & z_{K, \textbf{a}} \\ |\mu_L|-|\nu_M| & z_{N, \textbf{b}} \end{bmatrix}\right)\,.
	\end{equation}
\end{definition}

The commutation relation \eqref{eqn:spin:OP:comm} expressed in the new notation turns into:
\begin{equation}
\begin{split}
	\bigl[ \mc{E}'(I, J, K, \textbf{a}), \mc{E}'(L, M, N, \textbf{b}) \bigr]
	& =
	\frac{1}{2} \bvs{I}{J}{K}{\textbf{a}}{L}{M}{N}{\textbf{b}} 
		\mc{E}'(I \cup L, J \cup M, K \cup N, \textbf{a} \cup \textbf{b}) \\
	&\quad
	- \frac{(-1)^{|I| + |J|}}{2} \bvs{I}{J}{K}{-\textbf{a}}{L}{M}{N}{\textbf{b}} 
		\mc{E}'(I \cup L, J \cup M, K \cup N, (-\textbf{a})\cup \textbf{b})
	\,,
\end{split}
\end{equation}
On the other hand, double Hurwitz numbers are expressed in terms of the $\mc{E}'$ as 
\begin{equation}\label{eq:genvacuum}
	h_{g;\mu,\nu}^{\bullet, r,\theta}
	= (-1)^{m+n}
	\frac{2^{1-g}(r!)^b }{b!} [z_1^{r+1} \cdots z_b^{r+1}]
	\corr{
		\prod_{i=1}^n \frac{\mc{E}'(\{i\},\emptyset,\emptyset,\emptyset)}{\mu_i}
		\prod_{p=1}^b \hat{\mc{E}}'(\emptyset, \emptyset,\{p\}, \{+1\})
		\prod_{j=1}^m \frac{\mc{E}'(\emptyset,\{j\},\emptyset,\emptyset)}{\nu_j}
	}\,,
\end{equation}
where the $\hat{\mc{E}}'$ symbol refers as usual to the absence of the central extension term.	
	
\begin{definition}
	A \emph{commutation pattern} $P$ is a set of tuples
	\begin{equation}
		\left\{
			\bigl( I^P_t, J^P_t, K^P_t, \textbf{a}^P_t; L^P_t, M^P_t, N^P_t, \textbf{b}^P_t \bigr)
		\right\}_{t \in \bbraket{n+m+b-1}}
	\end{equation}
	where for each $t$, $I^P_t, L^P_t \subseteq \bbraket{n}$, $J^P_t, M^P_t \in \bbraket{m}, K^P_t, N^P_t \in \bbraket{b}, \textbf{a}^P_t \in \{\pm 1\}^{K^P_t}, \textbf{b}^P_t \in \{\pm 1\}^{N^P_t}$, such that we get a non-vanishing contribution to the vacuum expectation value \eqref{eq:genvacuum} when we go through the algorithm that commutes the right-most positive-energy operator to the right, in such a way that the $t$-th commutator computed is 
	\begin{equation}
		\bigl[
			\mc{E}'(I^P_t, J^P_t, K^P_t, \textbf{a}^P_t),
			\mc{E}'(L^P_t, M^P_t, N^P_t, \textbf{b}^P_t)
		\bigr].
	\end{equation}
	A fundamental point is that commutation patterns do not depend on the specific partitions $\mu$ and $\nu$, but only on the signs of the expressions $|\mu_I| - |\nu_J|$ (that is, the energies of the operators produced during the commutation process), and therefore only on the chamber $\mf{c}$ considered. This consideration allows to define the set $\CP^{\mf{c}}$ of commutation patterns $P$ relative to a chamber $\mf{c}$ as independent of $(\mu,\nu)$.
\end{definition}

\begin{remark}
	Within a chamber $\mf{c}$ of $\mc{W}(n,m)$ there are no disconnected covers. In fact, in order to have one of a certain degree $d$, a splitting $\bbraket{n} = I \cup I^{\vee}$ and $\bbraket{m} = J \cup J^{\vee}$ such that $|\mu_I| = |\nu_J| = d$ is needed, for the ramifications indices of $\mu_I$ and of $\nu_J$ to lie on the same connected component of the cover. Therefore within a chamber $\mf{c}$ of $\mc{W}(n,m)$ the notations $h_{g;\mu,\nu}^{\bullet, r,\theta}$ and $h_{g;\mu,\nu}^{\circ, r,\theta}$ coincide and we refer to them simply as $h_{g;\mu,\nu}^{r,\theta}$.
\end{remark}

\begin{theorem}\label{thm:HNgenseries}
	Within a chamber $\mf{c}$ of $\mc{W}(n,m)$ we have:\label{thm:genserieschamber}
	\begin{equation}
	\label{eq:genserieschamber}
		h_{g;\mu,\nu}^{r,\theta}
		=
		\frac{2^{1-g}(r!)^b }{b! \cdot \prod_{i} \mu_i \prod_{j} \nu_j} [z_1^{r+1} \cdots z_b^{r+1}]
		\left(\frac{1}{2}\right)^{\tau}
		\sum_{P \in \CP^{\mf{c}}} 
				\frac{(-1)^{\iota(P)} }{\varsigma\bigl(
					z_{\bbraket{b}, \textbf{${\textup{\textbf{a}}}^P_\tau \cup {\textup{\textbf{b}}}^P_\tau$}}
				\bigr) }
				\qoppa \bigl(
					z_{\bbraket{b}, {\textup{\textbf{a}}}^P_\tau \cup {\textup{\textbf{b}}}^P_\tau}
				\bigr)
		\prod_{t=1}^{\tau} 
		\bvs{I^P_t}{J^P_t}{K^P_t}{\textup{\textbf{a}}^P_t}{L^P_t}{M^P_t}{N^P_t}{\textup{\textbf{b}}^P_t},
	\end{equation}
	where $\tau = m+n+b-1$ is the total number of commutations for each commutation pattern, and 
	\begin{equation}
		\iota(P) = |D| + \sum_{i \in D} |I^P_{i}| + |J^P_{i}|,
	\end{equation}
	where $D \subset [\tau]$ is the set indexing the times the summand $-{\textup{\textbf{a}}}^P_t$ was chosen (as opposed to the first summand, involving ${\textup{\textbf{a}}}^P_t$ with no minus sign), out of the $t$-th commutator.
\end{theorem}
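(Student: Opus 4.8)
The plan is to evaluate the vacuum expectation value of \cref{prop:spin:double:HNs:VEV}, written in the form \eqref{eq:genvacuum}, by the commutation algorithm indicated before the statement: one commutes the rightmost positive-energy operator to the right, and, since positive-energy operators annihilate the vacuum on the right, only the fully contracted terms survive. The engine is the closure of the family $\{\mc{E}'(I,J,K,\textbf{a})\}$ under commutation, namely the $\mc{E}'$-form of \eqref{eqn:spin:OP:comm} displayed above: the commutator of two such operators is again a combination of two operators of the same shape, with energy the sum $|\mu_{I\cup L}|-|\nu_{J\cup M}|$ and $z$-argument $z_{K\cup N,\textbf{a}\cup\textbf{b}}$ or $z_{K\cup N,(-\textbf{a})\cup\textbf{b}}$. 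Thus every term produced stays a single $\mc{E}'$, and the whole computation organises into a sum over the binary choices made at each commutation.

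First I would make the algorithm precise, adapting Johnson's algorithm \cite{Joh15} from the non-spin case. Starting from the ordered product in \eqref{eq:genvacuum}, select the rightmost operator of positive energy and move it rightwards; each elementary step $XY = YX + [X,Y]$ branches into a \emph{swap} term $YX$, in which the operator keeps travelling, and a \emph{merge} term $[X,Y]$, in which two operators are replaced by one through the commutation relation. An operator reaching the right vacuum while still carrying positive energy annihilates it, so the surviving terms are exactly those in which every operator is eventually merged. Each merge lowers the operator count by one, and we begin with $n+m+b$ operators, so each surviving term is obtained after exactly $\tau = n+m+b-1$ merges and ends in a single operator. A commutation pattern $P$ records, in order, the data $(I^P_t,J^P_t,K^P_t,\textbf{a}^P_t;L^P_t,M^P_t,N^P_t,\textbf{b}^P_t)$ of the two operators merged at the $t$-th step. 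Because the branch taken and the vanishing against the vacuum depend only on the signs of the intermediate energies $|\mu_I|-|\nu_J|$, which are constant on a chamber $\mf{c}$, the admissible set $\CP^{\mf{c}}$ depends only on $\mf{c}$; moreover $\mf{c}$ contains no disconnected contributions, so $h^{\bullet,r,\theta}_{g;\mu,\nu}=h^{r,\theta}_{g;\mu,\nu}$ there.

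It then remains to collect prefactors. Each of the $\tau$ merges contributes a factor $\tfrac12$, giving the global $\bigl(\tfrac12\bigr)^\tau$, together with the $\varsigma$-determinant factor $\bvs{I^P_t}{J^P_t}{K^P_t}{\textbf{a}^P_t}{L^P_t}{M^P_t}{N^P_t}{\textbf{b}^P_t}$ read off from \eqref{eqn:spin:OP:comm}. I would record in $\textbf{a}^P_t$ the sign vector that actually enters this factor, so that the flip $\textbf{a}\mapsto-\textbf{a}$ distinguishing the two summands is absorbed into $\textbf{a}^P_t$, and collect the remaining $\pm$ separately: choosing the second summand at step $i$ carries the extra sign $-(-1)^{|I^P_i|+|J^P_i|}=(-1)^{|I^P_i|+|J^P_i|+1}$ relative to the first. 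Hence, if $D\subset\bbraket{\tau}$ indexes the steps where the second summand was chosen, the accumulated sign is $\prod_{i\in D}(-1)^{|I^P_i|+|J^P_i|+1}=(-1)^{\iota(P)}$ with $\iota(P)=|D|+\sum_{i\in D}(|I^P_i|+|J^P_i|)$. After the last merge a single operator of energy $|\mu|-|\nu|=0$ and argument $z_{\bbraket{b},\textbf{a}^P_\tau\cup\textbf{b}^P_\tau}$ remains, whose vacuum expectation is $\qoppa(z_{\bbraket{b},\textbf{a}^P_\tau\cup\textbf{b}^P_\tau})/\varsigma(z_{\bbraket{b},\textbf{a}^P_\tau\cup\textbf{b}^P_\tau})$ by \cref{lem:OP:VEV}. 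Multiplying all factors and summing over $P\in\CP^{\mf{c}}$ gives \eqref{eq:genserieschamber}; the central-extension terms on the $\hat{\mc{E}}$'s commute exactly like the $\mc{E}$'s and only alter this final one-point value, which is why \eqref{eqn:spin:OP:comm} governs every merge.

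The hard part will be precisely this combinatorial bookkeeping: checking that the rightmost-positive rule terminates with exactly $\tau$ merges and counts each surviving term once, and propagating the signs and the flips $\textbf{a}\mapsto-\textbf{a}$ through the recursion so that they telescope into $(-1)^{\iota(P)}$. Since \eqref{eqn:spin:OP:comm} has the same two-term shape as its non-spin analogue, with $\varsigma$ replacing a single $\sinh$ and an extra sign, this is a direct adaptation of \cite{Joh15} and \cite[theorem~5.2]{SSZ12}; the genuinely new inputs are the second commutator term and the $\qoppa/\varsigma$ shape of the one-point function, both handled above.
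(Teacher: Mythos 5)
Your proposal is correct and follows essentially the same route as the paper: the paper's proof is precisely a (one-line) appeal to the commutation algorithm of \cite{Joh15} and \cite[theorem~5.2]{SSZ12}, adapted to the $\mc{E}'$-closure relation, the extra second summand with its sign $-(-1)^{|I|+|J|}$, and the final one-point value $\qoppa/\varsigma$ from \cref{lem:OP:VEV} — exactly the three ingredients you isolate and propagate. Your bookkeeping of the $(1/2)^\tau$, the flips $\textbf{a}\mapsto-\textbf{a}$, and the accumulated sign $(-1)^{\iota(P)}$ matches the statement, so this is the paper's argument spelled out in detail.
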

 \begin{proof}
	The proof is a straightforward adaptation of the argument in \cite{SSZ12}.
\end{proof}
	
\begin{remark}
	This statement can be slightly generalised to disconnected Hurwitz numbers on the walls, see \cite[theorem~4.6]{SSZ12}.
\end{remark}

\begin{proof}[{Proof of \cref{thm:piecewise}}]
	The proof is an adaptation of the argument in \cite{SSZ12}. The only differences consist in the introduction of an extra signed summand at each commutator and the introduction of the $\coth(z)$ function replacing the $1/\sinh(z)$ function. The latter does not spoil the polynomial argument, as of course $\coth(z) = \cosh(z)/\sinh(z)$, so that multiplying by $\cosh(z)$ does not introduce new poles (and therefore, already proved to be removable) and also does not spoil the parity argument in the degrees.

	\smallskip

	The introduction of new summands does not change the fact that the sum over $\CP^{\mf{c}}$ and the product over $t$ are finite, and the degrees in the $\mu_i$ and in the $\nu_j$ are coupled to degrees in the $z_k$ as in \cite{SSZ12}, therefore collecting the coefficient of $z_1^{r+1} \cdots z_b^{r+1}$ again guarantees a polynomial in the $\mu_i$ and $\nu_j$. The parity of the functions involved determines again even jumps in number $g$ for what concerns the homogeneous degrees, again starting at the same top degree $2g - 1 + b$.
\end{proof}

\begin{remark}
	The only part of the statement \cite[theorem~6.4]{SSZ12} that is in principle not guaranteed anymore by \cref{thm:piecewise} is the positivity of the polynomials in the homogeneous decomposition. Such a statement could still exist in some form, but it falls beyond the scope of the current work.
\end{remark}

\subsubsection{Generating series for one-part double Hurwitz numbers}

This section contains the derivations of an explicit formula for one-part spin double Hurwitz numbers, that is, the case when the ramification over infinity is a total ramification $(d)$, where $d$ is the degree of the cover. We also derive two specialisations of this formula, having in mind conjectural applications in Gromov--Witten theory. 

\begin{proposition}\label{prop:one:part:spin:double:HNs}
	Let $\mu = (\mu_1,\dots,\mu_n)$ be an odd partition of $d$. The one-part spin double Hurwitz numbers are given by: 
	\begin{equation}
		h_{g;(d), \mu}^{r,\theta}
		=
		\frac{(r!)^b}{b!} \frac{d^{b-1}}{2^{b+g-1}} [z_1^{r} \cdots z_b^{r}] \prod_{p=1}^b \mc{S} \left(z_p d\right) 
		\sum_{k_1, \dots, k_{b} \in \{\pm 1\}}
		\frac{\prod_{i=1}^n \mc{S} \left(\mu_iz_{\bm{k}}\right)}{\mc{S} (z_{\bm{k}})} (z_{\bm{k}})^{n} \, \mc{K}(z_{\bm{k}}),
	\end{equation}
	where $z_{\bm{k}} \coloneqq \sum_{p=1}^bk_p z_p$. 
\end{proposition}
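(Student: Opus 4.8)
The plan is to evaluate the vacuum expectation of \cref{prop:spin:double:HNs:VEV} directly, using the commutation relations of the $\mc{E}^B$-operators, in the special case where the profile over one point is the full cycle $(d)$. Taking the first profile to be $(d)$ collapses the corresponding block to a single operator, so that
\[ h^{r,\theta}_{g;(d),\mu} = \frac{2^{1-g}(r!)^b}{b!\, d \prod_{i}\mu_i}\,[z_1^{r+1}\cdots z_b^{r+1}]\, V, \qquad V \coloneqq \Bigl\langle \mc{E}^B_d(0)\prod_{p=1}^b \hat{\mc{E}}^B_0(z_p)\prod_{i=1}^n \mc{E}^B_{-\mu_i}(0)\Bigr\rangle. \]
Because the ramification over one point is total, the monodromy group acts transitively and every cover is automatically connected; hence the disconnected count of \cref{prop:spin:double:HNs:VEV} equals the connected one and the superscript $\bullet$ may be dropped.

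First I would reduce $V$ to a single iterated commutator. Every operator standing to the right of $\mc{E}^B_d(0)$ annihilates the bra-vacuum: $\langle v_\emptyset|\,\hat{\mc{E}}^B_0(z_p) = 0$ since $\hat{\mc{E}}^B_0$ is normally ordered, and $\langle v_\emptyset|\,\mc{E}^B_{-\mu_i}(0) = \langle v_\emptyset|\,\alpha^B_{-\mu_i} = 0$ since $\alpha^B_{-\mu_i}$ is a creation operator. Writing $AB = [A,B] + BA$ and peeling off the left-most factor repeatedly, every term in which $\mc{E}^B_d(0)$ ``passes through'' an operator leaves that operator adjacent to $\langle v_\emptyset|$ and thus vanishes. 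Therefore $V$ equals the vacuum expectation of the single operator obtained by bracketing $\mc{E}^B_d(0)$ successively against $\hat{\mc{E}}^B_0(z_1),\dots,\hat{\mc{E}}^B_0(z_b)$ and then against $\mc{E}^B_{-\mu_1}(0),\dots,\mc{E}^B_{-\mu_n}(0)$.

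Next I would evaluate this commutator step by step with \cref{lem:OP:VEV}. Against $\hat{\mc{E}}^B_0(z_p)$ (energy $0$, so the hat is irrelevant inside the bracket) both determinants in the commutation relation equal $d z_p$; the relation contributes the scalar $\tfrac12\varsigma(d z_p)$ together with a sign $k_p\in\{\pm1\}$ selecting whether $z_p$ enters with a plus or a minus. After the $b$ cut-and-join operators the running argument is $z_{\bm k} = \sum_p k_p z_p$, with prefactor $\prod_p\tfrac12\varsigma(d z_p)$, summed over $\bm k\in\{\pm1\}^b$. Against each $\mc{E}^B_{-\mu_i}(0)$ the two terms of the bracket instead \emph{coincide}: since $\mu_i$ is odd one has $(-1)^{\mu_i}=-1$ and $\varsigma$ is odd, so they add to the single factor $\varsigma(\mu_i z_{\bm k})$, the argument $z_{\bm k}$ staying fixed and the energy dropping by $\mu_i$. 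As $\sum_i\mu_i = d$, the surviving operator is $\mc{E}^B_0(z_{\bm k})$, with vacuum expectation $\qoppa(z_{\bm k})/\varsigma(z_{\bm k})$ by \cref{lem:OP:VEV}. This gives
\[ V = \sum_{\bm k\in\{\pm1\}^b}\Bigl(\prod_{p=1}^b \tfrac12\varsigma(d z_p)\Bigr)\Bigl(\prod_{i=1}^n \varsigma(\mu_i z_{\bm k})\Bigr)\frac{\qoppa(z_{\bm k})}{\varsigma(z_{\bm k})}. \]

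Finally I would repackage into the normalised functions. From $\varsigma(x)=x\,\mc{S}(x)$ one gets $\tfrac12\varsigma(d z_p) = \tfrac{d z_p}{2}\mc{S}(d z_p)$, pulling out $\tfrac{d^b}{2^b}\prod_p z_p$; the factor $\prod_p z_p$ turns $[z^{r+1}]$ into $[z^r]$, and $\tfrac{d^b}{2^b}$ combines with the prefactor of $V$ into $\tfrac{(r!)^b}{b!}\tfrac{d^{b-1}}{2^{b+g-1}}$. Using $\mc{K}(z)=\qoppa(z)/z$ and again $\varsigma(x)=x\,\mc{S}(x)$, the elementary identity
\[ \frac{1}{\prod_i\mu_i}\Bigl(\prod_{i=1}^n\varsigma(\mu_i z_{\bm k})\Bigr)\frac{\qoppa(z_{\bm k})}{\varsigma(z_{\bm k})} = \frac{\prod_{i=1}^n\mc{S}(\mu_i z_{\bm k})}{\mc{S}(z_{\bm k})}\,z_{\bm k}^{\,n}\,\mc{K}(z_{\bm k}) \]
turns the summand into precisely the claimed form. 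The only genuinely delicate point — which I would treat carefully — is the reduction of the second paragraph together with the central-extension bookkeeping: one must confirm that each right-hand operator kills $\langle v_\emptyset|$, that the scalar correction hidden in the hat of $\hat{\mc{E}}^B_0$ never survives (so the unhatted relation of \cref{lem:OP:VEV} applies at every bracket), and that the lone central contribution enters only through the terminal value $\langle\mc{E}^B_0(z_{\bm k})\rangle$.
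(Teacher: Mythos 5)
Your proof is correct and takes essentially the same route as the paper: there, the proposition is obtained as a specialisation of \cref{thm:HNgenseries}, with the observation that the single total-ramification operator must take part in every commutation, so the sum over commutation patterns collapses to one nested chain of commutators — which is precisely the chain you evaluate explicitly from \cref{prop:spin:double:HNs:VEV} and \cref{lem:OP:VEV}. Your detailed computation (the signs $k_p$ arising from the brackets with $\hat{\mc{E}}^B_0(z_p)$, the coincidence of the two terms against each $\mc{E}^B_{-\mu_i}(0)$ by oddness of $\mu_i$, and the terminal factor $\qoppa(z_{\bm k})/\varsigma(z_{\bm k})$) is exactly what that ``straightforward specialisation'' amounts to, carried out by hand.
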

\begin{proof}
  It is a straightforward specialisation of \ref{thm:HNgenseries}. Since there is a single negative energy operator within the vacuum expectation, that operator has to be involved in every commutations until the end of the commutation procedure, therefore the sum over commutation patterns collapses.
\end{proof}

\begin{corollary}
	Let $\mu$ be an odd integer. The one-part spin single Hurwitz numbers are given by:
	\begin{equation}
		h_{g;(\mu)}^{r,\theta}
		=
		\frac{(r!)^b}{b! \mu !}  \frac{\mu^{b-1}}{2^{b+g-1}} [z_1^{r} \cdots z_b^{r}] \prod_{p=1}^b \mc{S} \left(z_p \mu\right) 
		\sum_{k_1, \dots, k_{b} \in \{\pm 1\}}
		\!\!\!\!\! \mc{S} (z_{\bm{k}})^{\mu - 1} (z_{\bm{k}})^{\mu} \, \mc{K}(z_{\bm{k}})\,,
	\end{equation}
	where $z_{\bm{k}} \coloneqq \sum_{p=1}^b k_p z_p$. 
\end{corollary}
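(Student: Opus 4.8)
The plan is to deduce the corollary directly from \cref{prop:one:part:spin:double:HNs} by specialising the free partition there to $(1^\mu)$, combined with the elementary passage from double to single Hurwitz numbers. The only genuine content is this single-versus-double comparison; everything else is a mechanical substitution.

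First I would record the relation
\begin{equation}
	h_{g;\mu}^{r,\theta} = \frac{1}{|\mu|!}\, h_{g;\mu,(1^{|\mu|})}^{r,\theta},
\end{equation}
valid for any odd partition $\mu$. The cleanest justification is to compare \cref{eqn:Gunningham:spin:single:HNs,eqn:Gunningham:spin:double:HNs}: substituting $\nu = (1^{|\mu|})$ into the double formula, and using $\ell((1^{|\mu|})) = |\mu|$, $\prod_j \nu_j = 1$, and $\zeta_{(1^{|\mu|})}^\lambda = \dim V^\lambda$, one recovers exactly the single formula except for an extra factor $|\mu|!$ sitting in the denominator of the latter. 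Geometrically the same identity reflects that imposing the trivial profile $(1^{|\mu|})$ over an additional point of $\P^1$ does not constrain the covers of \cref{defn:general:spin:HNs}, so the underlying weighted count is unchanged, while the normalisations of \cref{defn:spin:HNs} differ precisely by $|\Aut((1^{|\mu|}))| = |\mu|!$. A small bookkeeping point to check is that the two prescriptions for $b$ agree, which they do because $\ell((1^{|\mu|})) = |\mu|$ turns the double value $b = (2g-2+\ell(\mu)+\ell((1^{|\mu|})))/r$ into the single value $b = (2g-2+\ell(\mu)+|\mu|)/r$.

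Specialising to the total ramification $\mu = (\mu)$, for which $|\mu| = \mu$, this reads $h_{g;(\mu)}^{r,\theta} = \tfrac{1}{\mu!}\, h_{g;(\mu),(1^\mu)}^{r,\theta}$. I would then apply \cref{prop:one:part:spin:double:HNs} to the double number $h_{g;(\mu),(1^\mu)}^{r,\theta}$, taking the one-part profile to be $(d) = (\mu)$ (so $d = \mu$) and the remaining partition to be $(1^\mu)$ (so $n = \mu$ with every part equal to $1$); this is permissible thanks to the symmetry $h_{g;\sigma,\rho}^{r,\theta} = h_{g;\rho,\sigma}^{r,\theta}$, manifest in \cref{eqn:Gunningham:spin:double:HNs}. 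Since a single cycle of length $\mu$ acts transitively on the sheets, the cover is automatically connected, so there is no ambiguity between the connected and disconnected numbers here. Because $\mc{S}(\mu_i z_{\bm{k}}) = \mc{S}(z_{\bm{k}})$ for each of the $\mu$ unit parts, the summand collapses via
\begin{equation}
	\frac{\prod_{i=1}^{\mu} \mc{S}(\mu_i z_{\bm{k}})}{\mc{S}(z_{\bm{k}})} = \mc{S}(z_{\bm{k}})^{\mu-1},
\end{equation}
together with $(z_{\bm{k}})^n = (z_{\bm{k}})^\mu$, $\prod_p \mc{S}(z_p d) = \prod_p \mc{S}(z_p \mu)$, and $d^{b-1} = \mu^{b-1}$. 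Dividing the resulting expression by $\mu!$ gives exactly the stated formula.

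The main (and mild) obstacle is the careful justification of the single-to-double relation—in particular the matching of the two definitions of $b$ and the tracking of the automorphism factor $\mu!$—after which the proof is a direct substitution into \cref{prop:one:part:spin:double:HNs}.
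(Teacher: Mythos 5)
Your proposal is correct and follows essentially the same route as the paper: the paper's proof also reduces the single number to the double one via $h^{r,\theta}_{g;(\mu)} = \tfrac{1}{|\Aut(1^{\mu})|}\, h^{r,\theta}_{g;(\mu),(1^{\mu})}$ (read off directly from the definition of spin Hurwitz numbers, with $|\Aut(1^{\mu})| = \mu!$) and then specialises the one-part double formula, exactly as you do. Your extra verifications (matching the two values of $b$, the Gunningham-formula cross-check, the collapse $\prod_i \mc{S}(\mu_i z_{\bm{k}})/\mc{S}(z_{\bm{k}}) = \mc{S}(z_{\bm{k}})^{\mu-1}$) are sound, and the appeal to the symmetry $h^{r,\theta}_{g;\sigma,\rho} = h^{r,\theta}_{g;\rho,\sigma}$ is harmless but unnecessary, since the slots already match the proposition's form.
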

\begin{proof}
	From \cref{defn:spin:HNs}, we see that $ h^{r,\theta}_{g;\mu} = \frac{1}{|\Aut (1^d)|} h^{r,\theta}_{g;\mu,(1^d)}$. Using that $ |\Aut (1^d)| = d!$ and $ d=\mu$ in our case, the result follows from \cref{prop:one:part:spin:double:HNs}.
\end{proof}

\begin{corollary}
	For $b=1$ we have:
	\begin{equation} 
		h_{g; (\mu)}^{r,\theta}
		=
		\frac{(r)!}{2^{g-1} \mu!} [z^{2g}] \, \qoppa(z) \mc{S}(z \mu) \mc{S} (z)^{\mu - 1},
		\qquad\qquad
		g = \frac{r-\mu + 1}{2}.
	\end{equation}
\end{corollary}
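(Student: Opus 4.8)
The plan is to obtain the statement as a direct specialisation of the preceding corollary to $b=1$, the only subtlety being a parity bookkeeping that relies on $\mu$ being odd.

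First I would set $b=1$ in
\[
h_{g;(\mu)}^{r,\theta}
=
\frac{(r!)^b}{b!\,\mu!}\,\frac{\mu^{b-1}}{2^{b+g-1}}\,[z_1^{r}\cdots z_b^{r}]\prod_{p=1}^b \mc{S}(z_p\mu)\sum_{k_1,\dots,k_b\in\{\pm1\}}\mc{S}(z_{\bm k})^{\mu-1}(z_{\bm k})^{\mu}\,\mc{K}(z_{\bm k}).
\]
With $b=1$ the prefactor collapses to $\tfrac{r!}{\mu!\,2^{g}}$, since $\mu^{b-1}=1$, $(r!)^b=r!$, $b!=1$, and $2^{b+g-1}=2^{g}$; there is a single auxiliary variable $z\coloneqq z_1$, the product reduces to the single factor $\mc{S}(z\mu)$, and the sum runs over $k_1\in\{\pm1\}$ with $z_{\bm k}=k_1z$.

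Next I would record the elementary parity facts coming straight from the definitions in the notation section. Both $\mc{S}(z)=\sinh(z/2)/(z/2)$ and $\qoppa(z)=\tfrac12\cosh(z/2)$ are even, whereas $\mc{K}(z)=\cosh(z/2)/(2z)=\qoppa(z)/z$ is odd. Hence in each summand $\mc{S}(k_1z)^{\mu-1}=\mc{S}(z)^{\mu-1}$; moreover $(k_1z)^{\mu}=k_1^{\mu}z^{\mu}=k_1z^{\mu}$ because $\mu$ is odd, and $\mc{K}(k_1z)=\qoppa(z)/(k_1z)$. The two factors of $k_1$ cancel, so each summand equals $\mc{S}(z)^{\mu-1}z^{\mu-1}\qoppa(z)$ independently of $k_1$; summing over $k_1\in\{\pm1\}$ therefore merely doubles this value, converting the $2^{g}$ in the denominator into $2^{g-1}$. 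This leaves
\[
h_{g;(\mu)}^{r,\theta}=\frac{r!}{\mu!\,2^{g-1}}\,[z^{r}]\;z^{\mu-1}\,\qoppa(z)\,\mc{S}(z\mu)\,\mc{S}(z)^{\mu-1}.
\]

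Finally I would absorb the monomial into the coefficient extraction via $[z^{r}]z^{\mu-1}f(z)=[z^{r-\mu+1}]f(z)$ and invoke the genus constraint in the case $b=1$: from $b=\frac{2g-2+\ell(\mu)+|\mu|}{r}$ with $\ell(\mu)=1$ and $|\mu|=\mu$ one gets $r=2g-1+\mu$, that is $r-\mu+1=2g$. Substituting $[z^{r}]z^{\mu-1}=[z^{2g}]$ produces exactly the claimed identity, together with the stated relation $g=\frac{r-\mu+1}{2}$. The computation is entirely routine; the only point requiring care is the cancellation of $k_1$, where the oddness of $\mu$ is essential and is precisely what yields the extra factor of $2$ (hence the shift from $2^{g}$ to $2^{g-1}$).
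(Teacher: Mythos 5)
Your proof is correct and follows exactly the route the paper intends: the corollary is stated without proof as an immediate specialisation of the one-part formula to $b=1$, and your careful bookkeeping — the prefactor collapse, the parity cancellation of $k_1$ (using $\mu$ odd and $\mc{K}$ odd, $\mc{S},\qoppa$ even) producing the factor of $2$, and the Riemann--Hurwitz identity $r-\mu+1=2g$ to rewrite $[z^r]z^{\mu-1}$ as $[z^{2g}]$ — fills in precisely the omitted computation.
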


The formula above is conjecturally related to the Gromov--Witten correlator of spin $\P^1$ with a single point insertion relative to the partition $\mu$, $\langle \tau_{r/2}(\omega), \mu\rangle^{\mathbb{P}^1, +}_{|\mu|}$, via the conjectural spin GW/H correspondence mentioned in \cref{eqn:spin:GW:H}.

\subsection{Wall-crossing formulae}

We are now armed to define and derive wall-crossing formulae for double spin Hurwitz numbers.

\begin{definition}	
	 A \emph{wall-crossing formula} for spin double Hurwitz numbers is an expression for the quantity
	\begin{equation}
		\WC_{g,I,J}^{r,\theta}(\mu, \nu)
		\coloneqq
		h_g^{r,\theta}\big{|}_{\mf{c}_1} - h_g^{r,\theta}\big{|}_{\mf{c}_2}
	\end{equation}
	for two neighbouring chambers $\mf{c}_1$ and $\mf{c}_2$ separated by the wall $W_{I,J} = \{ (\mu,\nu) \mid |\mu_I| - |\nu_J| = 0 \}$, where we fix $\mf{c}_1$ as chamber with $|\mu_I| < |\nu_J|$ and $\mf{c}_2$ the chamber with $|\mu_I| > |\nu_J|$.
\end{definition}
     
\begin{definition}
	Define the generating series for multi-completed cycles as:
	\begin{equation}
		H_{\mu,\nu}(z_1,\dots,z_b)
		\coloneqq
		\frac{2^{1-g} }{b!} \corr{
			\prod_{i=1}^m \frac{\mc{E}^{B}_{\mu_i}(0)}{\mu_i}
			\prod_{p=1}^b \hat{\mc{E}}_0^B(z_p)
			\prod_{j=1}^n \frac{\mc{E}^{B}_{-\nu_j}(0)}{\nu_j}
			} \, .
	\end{equation}
	Moreover, within a chamber $\mf{c}$, let $H^{\textbf{v}}_{\mu,\nu}(z_{\bbraket{b}})$ be the sum of the contributions in the statement of theorem \ref{thm:genserieschamber} of all those commutation patterns $P \in \CP^{\mf{c}}$ whose last sign vector $\textbf{a}^P_{\tau} \cup \textbf{b}^P_{\tau} $ is $\textbf{v}$.
\end{definition}   

\begin{remark} 
	Piecewise polynomiality results extend to multi-completed cycles generating series, in the sense that collecting the coefficient of arbitrary $z_1^{r_1 + 1} \cdots z_b^{r_b + 1}$ (as opposed to along the diagonal $r_i = r$) imposes completed cycles of different sizes on the ramifications, but the piecewise polynomiality structure still stands. 
\end{remark}  

What is peculiar about wall-crossing formulae is that the vast majority of the terms arising from the vacuum expectations defining $\WC_{g,I,J}$ cancel out, and what remains can be expressed as a finite sum of quadratic terms in the generating series $H$.

\begin{theorem}[Wall-crossing formula]\label{thm:wall:crossing}
	Let $\delta = |\mu_I| - |\mu_J|$. We have:
	\begin{equation}
	\begin{split}
		&\WC_{g,I,J}^{r, \theta}(\mu,\nu)
		= 		[z_1^{r+1} \cdots z_b^{r+1}].\\
		&\sum_{\substack{
			K \sqcup K^{c} = \bbraket{b} \\
			\textup{\textbf{a}} \in \{\pm 1\}^K \\
			\textup{\textbf{b}} \in \{\pm 1\}^{K^c}
		}} 
		\frac{
			H^{\textup{\textbf{a}}}_{\mu_I, \nu_{J} \cup \delta}(z_{K})}{\langle{\mc{E}'(I,J,K,\textup{\textbf{a}}) \mc{E}_{-\delta}(0)}\rangle}
			 \cdot 
			\delta 
			\cdot
			\langle{ \mc{E}'(I,J,K,\textup{\textbf{a}}) \mc{E}'(I^c, J^c, K^c, \textup{\textbf{b}})}\rangle
			\cdot 
			\delta 
			\cdot 
			\frac{H^{\textup{\textbf{b}}}_{\mu_{I^c} \cup \delta, \nu_{J^{c}}}(z_{K^c})}{
			\langle{\mc{E}_{\delta}(0)\mc{E}'(I^c, J^c, K^c, \textup{\textbf{b}})}\rangle
		}	
	\end{split}
	\end{equation}
\end{theorem}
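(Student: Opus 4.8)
The plan is to derive the formula from the commutation-pattern expansion of \cref{thm:genserieschamber}, by tracking precisely which patterns are affected when crossing the wall $\mc{W}_{I,J}$. Recall that, within each chamber, \cref{thm:genserieschamber} expresses the Hurwitz numbers as a sum over $P \in \CP^{\mf{c}}$, and that the set $\CP^{\mf{c}}$ depends on the chamber \emph{only} through the signs of the intermediate energies $|\mu_{I'}| - |\nu_{J'}|$ produced during the commutation algorithm. The chambers $\mf{c}_1$ and $\mf{c}_2$ are separated by the single wall $\mc{W}_{I,J}$, so the only energy whose sign changes is $\delta = |\mu_I| - |\nu_J|$; by the degree constraint $|\mu| = |\nu|$ this is equivalent to the sign change of $|\mu_{I^c}| - |\nu_{J^c}| = -\delta$. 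Hence the only commutation patterns contributing differently to the two chambers are those in which the operator $\mc{E}'(I,J,K,\textbf{a})$, or equivalently its complement $\mc{E}'(I^c,J^c,K^c,\textbf{b})$, is assembled at some step of the algorithm.

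First I would isolate the surviving contributions. In the difference $h_g^{r,\theta}\big|_{\mf{c}_1} - h_g^{r,\theta}\big|_{\mf{c}_2}$, every commutation pattern that never produces an operator of energy $\pm\delta$ indexed by $(I,J)$ occurs identically in both chambers and therefore cancels. The patterns that remain are exactly those in which, at some distinguished step, the operators indexed by $I,J$ (together with a subset $K \subseteq \bbraket{b}$ of completed-cycle variables and signs $\textbf{a}$) are first joined with those indexed by $I^c,J^c$ (with $K^c$ and $\textbf{b}$) into a single operator of energy $\delta$. This distinguished commutator is precisely $\bigl[\mc{E}'(I,J,K,\textbf{a}),\mc{E}'(I^c,J^c,K^c,\textbf{b})\bigr]$, whose leading contribution is the central two-point term appearing in the statement.

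Next I would establish the factorization. Grouping the surviving patterns according to this distinguished step splits the commutation algorithm into two independent sub-algorithms: one acting only on the operators labelled by $I$ and $J$, the other only on those labelled by $I^c$ and $J^c$. Applying \cref{thm:genserieschamber} to each sub-problem, these reassemble respectively into the generating series $H^{\textbf{a}}_{\mu_I,\nu_J\cup\delta}(z_K)$ and $H^{\textbf{b}}_{\mu_{I^c}\cup\delta,\nu_{J^c}}(z_{K^c})$, where the extra part $\delta$ records the energy of the internal operator $\mc{E}^B_{\pm\delta}(0)$ that glues the two halves. The two-point functions in the denominators remove the propagators that the cut would otherwise double-count, while the two factors of $\delta$ cancel the normalisation $\delta^{-1}$ carried by the new part $\delta$ inside each $H$, so that the internal operators enter the gluing unnormalised. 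Summing over all splittings $K \sqcup K^c = \bbraket{b}$ and all sign vectors $\textbf{a},\textbf{b}$ yields the claimed expression.

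The main obstacle is the spin-specific structure of the commutator \eqref{eqn:spin:OP:comm}, which carries \emph{two} summands — the naive term and the reflected term with $z \mapsto -z$ and an extra sign $(-1)^n$ — in contrast to the single term of the non-spin algebra of \cite{OP06,SSZ12}. This doubling means that at each commutation step the sign vector may branch as $\textbf{a} \mapsto -\textbf{a}$, and one must verify that this reflection is compatible with the factorization: that the reflected summand never connects an operator labelled by $I$ to one labelled by $I^c$ in a way that breaks the product form, and that the accompanying signs $(-1)^{\iota(P)}$ together with the $\qoppa/\varsigma$-weighting of the vacuum expectation reorganise correctly across the cut. This is handled by adapting the parity bookkeeping of \cite{SSZ12}: the reflection acts independently on each half and is thereby absorbed into the sign vectors $\textbf{a}$ and $\textbf{b}$ summed over in the statement, so that the vast cancellation of terms away from the wall proceeds exactly as in the non-spin case.
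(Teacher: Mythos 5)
Your proposal is correct and takes essentially the same approach as the paper, whose entire proof is the statement that the argument of \cite[theorem~6.6]{SSZ12} adapts straightforwardly to the spin setting. The steps you spell out --- cancellation of all commutation patterns that never assemble an operator of energy $\pm\delta$ indexed by $(I,J)$, factorisation of the surviving patterns at the distinguished commutator into the two generating series $H$ (with the vacuum expectations in the denominators compensating the inserted gluing operators, as the paper's remark after the theorem explains), and absorption of the reflected summand of the commutation relation \eqref{eqn:spin:OP:comm} into the sign vectors $\textup{\textbf{a}}$, $\textup{\textbf{b}}$ --- are exactly the content of that adaptation.
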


\begin{remark}
The formula above suggests the appearance of two poles arising from the vacuum expectations in the denominator, but these poles are removable. For example, the first vacuum expactation gives
\begin{equation*}
	\frac{1}{\langle {\mc{E}'(I,J,K,\textup{\textbf{a}}) \mc{E}_{-\delta}(0)}\rangle}
	=
	2\frac{\varsigma(z_{K, \textbf{a}})}{\varsigma(\delta z_{K, \textbf{a}})\qoppa(z_{K, \textbf{a}})}\frac{1}{((-1)^{\delta} - 1)},
\end{equation*}
giving a pole for even $\delta$. In fact, the entire commutator $\langle {\mc{E}'(I,J,K,\textup{\textbf{a}}) \mc{E}_{-\delta}(0)}\rangle$ simplifies against each term of the expansion of $H^{\textup{\textbf{a}}}_{\mu_I, \nu_{J} \cup \delta}(z_{K})$ by definition (we inserted the operator $ \mc{E}_{-\delta}(0)$ to the right and we ask the linear combination of the variables to be $z_{K, \textbf{a}}$, so it is always possible to run the commutation process in such a way that $\langle {\mc{E}'(I,J,K,\textup{\textbf{a}}) \mc{E}_{-\delta}(0)}\rangle$ is the last step of it, for each commutation pattern). The same reasoning leads to the simplification of $\langle{\mc{E}_{\delta}(0)\mc{E}'(I^c, J^c, K^c, \textup{\textbf{b}})}\rangle$ against $H^{\textup{\textbf{b}}}_{\mu_{I^c} \cup \delta, \nu_{J^{c}}}(z_{K^c})$.
\end{remark}

\begin{proof}
	The proof is a straightforward adaptation of the argument in \cite[theorem~6.6]{SSZ12}.
\end{proof}

\subsection{Integrability and cut-and-join equation}
\label{subsec:intgrability:CJ}

Integrability properties of spin double Hurwitz numbers were investigated in \cite{Lee19} and further generalised in \cite{MMN20,MMNO21}. Such results can also be easily obtained from \cref{sec:spin:algebra,sec:spin:HNs,sec:CJ:OP:operators}, and we include them below for completeness.

\begin{definition}
	Define the generating series of spin double Hurwitz numbers with $(r+1)$-completed cycles as
	\begin{equation}
		\mc{Z}^{r,\theta}(\bm{p},\bm{q};t)
		\coloneqq
		\sum_{m,n,g \ge 0} \sum_{\substack{\mu_1,\dots,\mu_n \text{ odd} \\ \nu_1,\dots,\nu_m \text{ odd}}}
			h_{g;\mu,\nu}^{\bullet, r,\theta} 2^{g-1} t^{b}
			\frac{p_{\mu_1} \cdots p_{\mu_m}}{m!} \frac{q_{\nu_1} \cdots q_{\nu_n}}{n!}.
	\end{equation}
	The summands with $b =\frac{2g-2+m+n}{r} \not\in \N$ or $|\mu| \ne |\nu|$ are set to $0$.
\end{definition}

\begin{theorem}[{\cite{Lee19,MMN20,MMNO21}}] ~ 
	\begin{enumerate}
		\item The generating series $\mc{Z}^{r,\theta}$ can be expressed as the following vacuum expectation value
		\begin{equation}
			\mc{Z}^{r,\theta}(\bm{p},\bm{q};t)
			=
			\Braket{
				e^{\sum_{m > 0} \alpha_{m}^{B} \frac{p_m}{m}}
				e^{t\frac{\mc{F}^B_{r+1}}{r+1}}
				e^{\sum_{n > 0} \alpha_{-n}^{B} \frac{q_n}{n}}
			} \, ,
		\end{equation}
		and it is a hypergeometric tau function of the $2$-BKP hierarchy (identically in $t$). Its expansion in terms of Schur $Q$-functions is given by
		\begin{equation}
			\mc{Z}^{r,\theta}(\bm{p},\bm{q};t)
			=
			\sum_{\lambda \in \mc{SP}} 2^{-\ell (\lambda)} e^{t p_r (\lambda )} Q_\lambda \big(\tfrac{1}{2}\bm{p}\big) Q_\lambda \big(\tfrac{1}{2}\bm{q}\big)\,.
		\end{equation}
		More generally, defining
		\begin{align}
			\mc{Z}^{\theta}(\bm{p},\bm{q};\bm{t})
			&=
			\Braket{
				e^{\sum_{m > 0} \alpha_{m}^{B} \frac{p_m}{m}}
				e^{\sum_{r>0} t_r\frac{\mc{F}^B_{r+1}}{r+1}}
				e^{\sum_{n > 0} \alpha_{-n}^{B} \frac{q_n}{n}}
			}
			\\
			&= \sum_{\lambda \in \mc{SP}} 2^{-\ell (\lambda)} e^{\sum_{r > 0} t_r p_r (\lambda )} Q_\lambda \big(\tfrac{1}{2}\bm{p}\big) Q_\lambda \big(\tfrac{1}{2}\bm{q}\big)\,,
		\end{align}
		this is a hypergeometric $2$-BKP tau-function in $ \bm{p}$ and $ \bm{q}$ (identically in $\bm{t}$). Furthermore, $\mc{Z}^\theta (\bm{p} = \delta_{j,1}, \bm{q} = \delta_{k,1}; \bm{t})$ is an $\infty$-soliton KdV tau-function in $ \bm{t}$ and $ \mc{Z}^\theta (\bm{p} = \delta_{j,1}, \bm{q}; \bm{t})$ is a $2$-BKP tau-function in $ \bm{q}$ and $ \bm{t}$.
		\item The generating series $\mc{Z}^{\theta}$ satisfies the the partial differential equation (the spin cut-and-join equation):
		\begin{equation}
			\frac{\de}{\de t_r} \mc{Z}^{\theta} = W^B_{r+1} \mc{Z}^{\theta}
		\end{equation}
		for any $r$. Consequently,
		\begin{equation}
			\frac{\de}{\de t} \mc{Z}^{r,\theta} = W^B_{r+1} \mc{Z}^{r,\theta}\,.
		\end{equation}
		\end{enumerate}
\end{theorem}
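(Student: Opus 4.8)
The plan is to derive both statements directly from the Fock space formalism of \cref{sec:spin:algebra}, treating the boson-fermion correspondence $\Phi^B$ (\cref{thm:BF:corresp}) as the central tool. The two parts are really two faces of a single diagonalisation: once everything is expanded in the orthonormal eigenbasis $\set{v_\lambda \mid \lambda \in \SP}$ of $\mf{F}^B_0$ (\cref{lem:VEV:Fock:basis}), the Schur $Q$-expansion drops out, and from it the vacuum expectation formula, the tau-function property, and the cut-and-join equation all follow. I would not reprove the integrable-hierarchy statements from scratch, but reduce them to the cited general theory once the $Q$-expansion is in hand.

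For part (1), I would first establish the Schur $Q$-expansion by acting with each exponential on the vacuum. Resumming the creation part and applying \cref{cor:boson:action} gives
\begin{equation*}
	e^{\sum_{n>0} \alpha^B_{-n} q_n/n}\, v_\emptyset
	=
	\sum_{\mu \in \OP} \frac{q_\mu}{z_\mu}\, \alpha^B_{-\mu} v_\emptyset
	=
	\sum_{\lambda \in \SP} 2^{-\ell(\lambda)/2}\, Q_\lambda\bigl(\tfrac{1}{2}\bm{q}\bigr)\, v_\lambda,
\end{equation*}
where the change of basis between $p_\mu$ and $Q_\lambda$ is used and the key simplification $\delta(\lambda)/2 + \lfloor \ell(\lambda)/2 \rfloor = \ell(\lambda)/2$ collapses the powers of two. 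The middle factor acts diagonally with eigenvalue $e^{t\, p_{r+1}(\lambda)/(r+1)}$ by \cref{lem:power:sum:eigenvalue}, while the left contraction $v \mapsto \langle v_\emptyset \rvert e^{\sum_{m>0}\alpha^B_m p_m/m}\lvert v\rangle$ is precisely $\Phi^B$ and hence sends $v_\lambda$ to $2^{-\ell(\lambda)/2} Q_\lambda(\tfrac12\bm p)$. Orthonormality then yields both the vacuum expectation and the Schur $Q$-expansion $\mc{Z}^{r,\theta} = \sum_\lambda 2^{-\ell(\lambda)} e^{t\,p_{r+1}(\lambda)/(r+1)} Q_\lambda(\tfrac12\bm p) Q_\lambda(\tfrac12\bm q)$. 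Equivalently, the same identity is obtained by substituting Gunningham's formula \cref{eqn:Gunningham:spin:double:HNs} into the definition of $\mc{Z}^{r,\theta}$ and resumming, the factor $\tfrac1{m!\prod_i\mu_i}$ over ordered tuples producing $z_\mu^{-1}$ and the $b$-sum exponentiating; the general series $\mc{Z}^\theta$ is identical with $e^{t\,p_{r+1}(\lambda)/(r+1)}$ replaced by $e^{\sum_r t_r p_{r+1}(\lambda)/(r+1)}$. The tau-function assertions then follow from the general theory of BKP hierarchies: any series $\sum_\lambda c_\lambda\, Q_\lambda(\tfrac12\bm p) Q_\lambda(\tfrac12\bm q)$ with a multiplicative content-type reweighting $c_\lambda$ is a hypergeometric $2$-BKP tau-function, and the $\infty$-soliton KdV and $2$-BKP specialisations are standard; I would invoke \cite{Lee19,MMN20,MMNO20} (and \cite{You89,DJKM82}) here, as this lies outside the material developed in the excerpt.

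For part (2), the cut-and-join equation is a direct consequence of the intertwining in \cref{prop:BKP:vertex:action} combined with the diagonalisation above. By construction of $W^B_{r+1}$ as the image of $\mc{F}^B_{r+1}/(r+1)$ under $\hat\sigma^B$ (\cref{prop:B:CJ:gen:series}), and since $\hat\sigma^B$ is intertwined with $\hat\pi$ by $\Phi^B$, the operator $W^B_{r+1}$ acts on $\Gamma$ exactly as $\mc{F}^B_{r+1}/(r+1)$ acts on $\mf{F}^B_0$. Hence, using \cref{lem:power:sum:eigenvalue},
\begin{equation*}
	W^B_{r+1}\, Q_\lambda\bigl(\tfrac12 \bm p\bigr) = \frac{p_{r+1}(\lambda)}{r+1}\, Q_\lambda\bigl(\tfrac12 \bm p\bigr),
\end{equation*}
so $W^B_{r+1}$ is diagonal in the $Q$-expansion of $\mc{Z}^\theta$ with exactly the eigenvalue that $\de/\de t_r$ extracts from the exponential. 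Comparing term by term gives $\de_{t_r}\mc{Z}^\theta = W^B_{r+1}\mc{Z}^\theta$, and specialising all but one time yields the single-time version.

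The main obstacle is bookkeeping rather than conceptual. One must track the powers of two carefully — the factors $2^{-\delta(\lambda)/2}$, $2^{-\ell(\mu)}$ and $2^{\lfloor\ell(\lambda)/2\rfloor}$ have to cancel into the clean $2^{-\ell(\lambda)}$ — and fix a single index/normalisation convention relating the completed-cycle variable $t_r$ to the eigenvalue $p_{r+1}(\lambda)/(r+1)$ consistently across both parts (the statement's $e^{t p_r(\lambda)}$ reflects such a renormalisation). The only genuinely external input is the identification of the diagonal $Q$-series with a tau-function of the advertised hierarchies, which I would take from the cited literature rather than reestablish here.
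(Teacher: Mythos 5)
Your proposal is correct and follows essentially the same route the paper indicates: Gunningham's formula combined with the boson-fermion correspondence (\cref{cor:boson:action}, \cref{lem:power:sum:eigenvalue}, \cref{thm:BF:corresp}) yields the Schur $Q$-expansion and the vacuum-expectation form, the intertwining of \cref{prop:BKP:vertex:action} gives the cut-and-join equation, and the genuinely integrable-systems statements (including the soliton KdV claim, which the paper stresses is not a formal consequence) are delegated to \cite{Lee19,MMN20,MMNO20,DJKM82}. Your bookkeeping of the powers of two and your flag of the $e^{t p_r(\lambda)}$ versus $e^{t\,p_{r+1}(\lambda)/(r+1)}$ normalisation in the stated expansion are both accurate.
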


The essential ingredient for this theorem is Gunningham's formula, \cref{thm:Gunningham:formula}, which expresses Hurwitz numbers in terms of characters. Most results then follow almost immediately from the definitions and various isomorphisms, such as the boson-fermion correspondence. However, the fact that $Z^\vartheta( \delta_{j,1},\delta_{k,1};\bm{t}) $ is a soliton KdV tau-function in $\bm{t}$ is not just a formal consequence.

\smallskip

Furthermore, \cite[Theorem~1.1]{Lee19} finds a particular linear combination of completed $ 1$-, $2$-, and $3$-cycles for both the spin and non-spin case, such that the generating series for the spin case squares to the generating series of the non-spin case. It would be interesting to develop such a square formula for higher $r$ as well. Note that in general, any BKP tau-function squares to a KP tau-function \cite[proposition~4]{DJKM82}. The content of Lee's result is that both of these tau-functions have a geometric interpretation in terms of spin and ordinary Hurwitz numbers.

\subsection{Quasi-polynomiality}

Here we show how our main conjecture implies the polynomiality of single Hurwitz numbers up to a particular non-polynomial prefactor. This property is usually required and employed to prove that a certain enumerative problem is generated by topological recursion. We assume the main conjecture and derive it from the algebro-geometric interpretation for the multidifferentials $\omega_{g,n}$ (see \cref{sec:spin:HNs:ELSV}).

\begin{proposition}\label{cor:quasi-poly}
	Let $g \geq 0$ and $n\geq 1$ such that $2g - 2 +n >0$. \cref{conj:spin:HNs:TR} implies that, for $\mu = (\mu_1,\dots,\mu_n) \in \mc{OP}(d)$ with fixed remainders $\braket{\mu_1}, \dots, \braket{\mu_n}$ modulo $r$, there exists a polynomial $P_{g,n}^{ \braket{\mu},r}$ of degree $3g - 3 +n$ such that the spin Hurwitz numbers are given by
	\begin{equation}
		h_{g;\mu}^{r,\theta}
		= 
		\left( \prod_{i=1}^n \frac{ \mu_i^{[\mu_i]}}{[\mu_i]!}\right)
		P_{g,n}^{ \braket{\mu},r}(\mu_1, \dots, \mu_n).
	\end{equation}
\end{proposition}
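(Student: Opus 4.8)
The plan is to assume \cref{conj:spin:HNs:TR} and then invoke the equivalent cohomological (ELSV-type) formula established later in \cref{sec:spin:HNs:ELSV}; once that formula is available, the statement becomes a standard dimension count. Concretely, by the equivalence proved there (\cref{cor:spin:ELSV:with:Chiodo}), the conjecture yields
\begin{equation}
	h_{g;\mu}^{r,\theta}
	=
	2^{1-g} r^{2g-2+n+b}
	\left( \prod_{i=1}^n \frac{\left( \tfrac{\mu_i}{r} \right)^{[\mu_i]}}{[\mu_i]!} \right)
	\int_{\overline{\mathcal{M}}_{g,n}}
	\frac{C^{\theta}(r,1;\braket{\mu})}{\prod_{i=1}^n \bigl(1 - \tfrac{\mu_i}{r}\psi_i\bigr)} \, ,
\end{equation}
with $b = (2g-2+n+|\mu|)/r$ and $\mu_i = r[\mu_i] + \braket{\mu_i}$. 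The crucial point, which I would emphasise from the construction of the Chiodo class, is that $C^{\theta}(r,1;\braket{\mu}) \in H^{\bullet}(\overline{\mathcal{M}}_{g,n})$ depends on $\mu$ only through the residues $\braket{\mu_1}, \dots, \braket{\mu_n}$ modulo $r$; since these are held fixed, it is a single, constant cohomology class as the integer parts $[\mu_i]$ vary.

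First I would expand each factor of the integrand as a geometric series $\frac{1}{1 - \frac{\mu_i}{r}\psi_i} = \sum_{k_i \ge 0} (\tfrac{\mu_i}{r})^{k_i}\psi_i^{k_i}$ and integrate term by term. Because $\dim_{\C}\overline{\mathcal{M}}_{g,n} = 3g-3+n$, only the finitely many terms with $\sum_i k_i \le 3g-3+n$ survive, so the integral equals
\begin{equation}
	\sum_{\sum_i k_i \le 3g-3+n}
	\left( \prod_{i=1}^n \Bigl( \tfrac{\mu_i}{r} \Bigr)^{k_i} \right)
	\int_{\overline{\mathcal{M}}_{g,n}} C^{\theta}(r,1;\braket{\mu}) \prod_{i=1}^n \psi_i^{k_i} \, ,
\end{equation}
whose coefficients are fixed intersection numbers. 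This is a polynomial $Q(\mu_1,\dots,\mu_n)$ of degree at most $3g-3+n$; its degree is exactly $3g-3+n$ provided the leading term does not vanish, which it does not since the top-degree monomials pair with the pure $\psi$-intersection against the degree-$0$ part of $C^{\theta}(r,1;\braket{\mu})$ — the product of the unit of the Chiodo CohFT with the nonzero zero-dimensional Witten $2$-spin weight.

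Next I would reconcile the prefactor. Writing $\prod_i (\tfrac{\mu_i}{r})^{[\mu_i]} = r^{-\sum_i [\mu_i]} \prod_i \mu_i^{[\mu_i]}$ and using $\sum_i [\mu_i] = (|\mu| - \sum_i \braket{\mu_i})/r$ together with $b = (2g-2+n+|\mu|)/r$, the dependence on the $[\mu_i]$ in the power of $r$ cancels:
\begin{equation}
	r^{2g-2+n+b}\, r^{-\sum_i [\mu_i]}
	=
	r^{\,2g-2+n + (2g-2+n+\sum_i \braket{\mu_i})/r} \, ,
\end{equation}
which, like $2^{1-g}$, depends only on $g$, $n$, $r$ and the fixed residues, and is therefore constant (indeed an integer power of $r$, since $|\mu| \equiv \sum_i \braket{\mu_i} \bmod r$). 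Collecting this constant together with $Q$ into a single polynomial $P_{g,n}^{\braket{\mu},r}$ of degree $3g-3+n$, and pulling out $\prod_i \mu_i^{[\mu_i]}/[\mu_i]!$, reproduces exactly the claimed identity.

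All of the above is routine bookkeeping, and the genuine content lies entirely in the forward-referenced equivalence of \cref{conj:spin:HNs:TR} with the ELSV-type formula. The single point I would double-check before closing the argument is precisely the residue-dependence of $C^{\theta}(r,1;\braket{\mu})$: it is the fact that the twisting class sees each $\mu_i$ only modulo $r$ — rather than its full value — that makes each fixed-residue family polynomial in the first place, so this is the step on which the whole statement hinges.
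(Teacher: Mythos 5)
Your proposal is correct and follows essentially the same route as the paper: the paper's proof of this proposition is precisely "immediate corollary of \cref{thm:ELSV:CohFT}", i.e. it assumes the conjecture, invokes the equivalent ELSV-type formula, and leaves the dimension-count and prefactor bookkeeping (which you spell out, correctly, including the observation that $r^{2g-2+n+b-\sum_i[\mu_i]}$ and the class $C^{\theta}(r,1;\braket{\mu})$ depend only on $g$, $n$, $r$ and the fixed residues) implicit. Your expanded argument, including the non-vanishing of the top-degree part via the degree-zero (TFT) piece of the class, is a faithful filling-in of that one-line proof.
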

\begin{proof}
	It is an immediate corollary of \cref{thm:ELSV:CohFT}.
%
\end{proof}

\section{A conjectural spectral curve for spin Hurwitz numbers}
\label{sec:conj}

We now move our attention to single spin Hurwitz numbers. In the non-spin case, single Hurwitz numbers are known to satisfy the topological recursion. As explained in \cref{sec:TR:CohFTs}, topological recursion is a way to compute enumerative quantities from $(0,1)$ and $(0,2)$ data (the spectral curve), recursively on $2g-2+n$ with $n = \ell(\mu)$.

\smallskip

In Hurwitz theory, the cut-and-join equation is, roughly speaking, a recursive procedure to compute Hurwitz numbers. Thus, one would expect topological recursion to hold and, applying the Eynard--DOSS correspondence of \cref{thm:Eyn:DOSS}, one can obtain an ELSV-type formula:
\begin{center}
\begin{tikzpicture}
	\matrix [column sep=10mm, row sep=5mm] {
	  \node (HT) [draw, shape=rectangle, align=center] {\textsc{$(0,1)$- and $(0,2)$-free energies} \\ $+$ \\ \textsc{cut-and-join equation}}; &
	  \node (TR) [draw, shape=rectangle, align=center] {\textsc{Topological} \\ \textsc{recursion}}; &
	  \node (ELSV) [draw, shape=rectangle, align=center] {\textsc{ELSV-type} \\ \textsc{formula}}; \\
	};
	\draw[<->, thick, dashed] (HT) -- (TR);
	\draw[<->, thick] (TR) -- (ELSV);
\end{tikzpicture}
\end{center}
To actually prove topological recursion for Hurwitz problems, an extra property is required, namely quasi-polynomiality. However, the actual shape of the spectral curve can be obtained looking at the $(0,1)$- and $(0,2)$-free energies.

\smallskip

In this section, we compute such unstable free energies for single spin Hurwitz numbers via the fermion formalism, and we conjecture that they can be computed by topological recursion on a specific spectral curve. We then give evidence for this conjecture by proving a closed formula for the $(g,1)$-free energies, and compare them to the topological recursion correlators for $g = 1$.

\subsection{The spectral curve}

Let $r$ be a positive even integer. Consider the spectral curve $(\Sigma,x,y,B)$ given by $\Sigma = \P^1$ and 
\begin{equation}\label{eqn:spin:Hurwitz:SC}
	x(z) = \log(z) - z^{r}, 
	\qquad\quad 
	y(z) = z\,,
	\qquad\quad
	B(z_1, z_2) 
	=
	\frac{1}{2} \bigg( \frac{1}{(z_1-z_2)^2} + \frac{1}{(z_1+z_2)^2} \bigg) dz_1 dz_2 \, . 
\end{equation}
As noted in the introduction, such spectral curve does not satisfy the usual axioms of topological recursion, since $B$ has poles when the two arguments approach different ramification points. Nevertheless, one can define the topological recursion correlators $\omega_{g,n}^{r,\theta}$ via the Eynard--Orantin topological recursion formula \eqref{eqn:TR}. Our main conjecture relates the multidifferentials $\omega_{g,n}^{r,\theta}$ and the spin Hurwitz numbers free energies.

\begin{conjecture}\label{conj:spin:HNs:TR}
	The coefficients obtained by expanding the correlators $\omega_{g,n}(z_1, \dots, z_n)$ near $e^{x_i} = 0$ are exactly the $(r+1)$-completed cycles spin single Hurwitz numbers:
	\begin{equation}
		\omega_{g,n}^{r,\theta}(z_1, \dots, z_n)
		-
		\delta_{g,0}\delta_{n,2} \, \omega_{0,2}^{r,\theta}(e^{x(z_1)},e^{x(z_2)})
		= 
		d_1 \cdots d_n
		F^{r,\theta}_{g,n} (e^{x_1}, \dotsc, e^{x_n})\Big|_{x_i = x(z_i)} \,.
	\end{equation}
\end{conjecture}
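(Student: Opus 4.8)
The plan is to prove the conjecture via the uncharged-fermion (B-type) semi-infinite wedge formalism, adapting the strategy that establishes topological recursion for ordinary completed-cycles Hurwitz numbers \cite{SSZ15,DKPS19} to the BKP setting developed in \cref{sec:CJ:OP:operators}. The starting point is the vacuum expectation formula of \cref{prop:spin:double:HNs:VEV}, specialised to single numbers by setting the second partition to $(1^d)$ (so that $h^{r,\theta}_{g;\mu} = \tfrac{1}{d!}h^{r,\theta}_{g;\mu,(1^d)}$). Assembling these over all $\mu$ and passing to the connected part produces the connected $n$-point functions, which by \cref{lem:B:OP:as:vertex:operator} are vacuum expectations of products of the operators $\mc{E}^B_{\mu_i}(0)$ and $\hat{\mc{E}}^B_0(z_p)$.

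First I would extract a closed form for these connected $n$-point functions by repeatedly applying the commutation relation \eqref{eqn:spin:OP:comm} and the vacuum expectation formula of \cref{lem:OP:VEV}, commuting positive-energy operators to the right until they annihilate the vacuum. This is precisely the commutation-pattern algorithm of \cref{thm:HNgenseries}, and it yields explicit expressions assembled from $\varsigma$, $\mc{S}$, $\qoppa$, and $\mc{K}$. As a by-product these formulas give quasi-polynomiality (\cref{cor:quasi-poly}) \emph{without} assuming the conjecture, since the coefficient extraction $[z_1^{r+1}\cdots z_b^{r+1}]$ of such trigonometric expressions is manifestly polynomial in the $\mu_i$ once the residues $\braket{\mu_i}$ are fixed.

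Next I would identify the spectral curve. After the principal specialisation and the change of variables $x(z)=\log z - z^r$, one computes $\omega_{0,1}=y\,dx$ with $y=z$, and checks that the $(0,2)$ correlator reproduces the modified bidifferential $B$ of \eqref{eqn:spin:Hurwitz:SC}. The crucial point is that the extra pole of $B$ along $z_1=-z_2$ is forced by the B-type value $\Braket{\mc{E}^B_m(z)}=\delta_m\,\qoppa(z)/\varsigma(z)$ together with the $(-1)^n$-summand of \eqref{eqn:spin:OP:comm}: this $\Z/2\Z$-symmetric term is the operator-theoretic fingerprint of the involution $z\mapsto -z$ that pairs the ramification points of $x$. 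Having matched the unstable data, the last step is to verify that the correlators built from the $n$-point functions satisfy the Eynard--Orantin recursion \eqref{eqn:TR} with kernel \eqref{eqn:TR:kernel}; in the loop-equation reformulation this reduces to the linear and quadratic abstract loop equations at each ramification point, which follow from the pole structure of the explicit formulas.

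The hard part will be this last step, because the nonstandard $B$ places the problem outside classical topological recursion: the quadratic loop equation must absorb the extra poles of $B$ at the conjugate ramification points, so one cannot directly invoke the usual reconstruction theorem. I expect the cleanest route is to pass to the $\Z/2\Z$-quotient spectral curve of \cref{sec:equiv:TR}, prove ordinary topological recursion there, and transport the result back through the comparison of correlators between a curve and its $G$-quotient; the equivariant bookkeeping — matching the local involution $\sigma_i$ at each ramification point with the global involution $z\mapsto -z$ — is where the technical difficulty concentrates. An alternative, following \cite{AS,BDKS20}, is to recognise the $n$-point functions as those of a hypergeometric BKP tau-function and apply the general $x$-$y$ swap machinery directly.
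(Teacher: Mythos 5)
First, note that \cref{conj:spin:HNs:TR} is precisely that --- a conjecture: the paper does not prove it. What the paper supplies is low-genus evidence ($g=0$ in \cref{cor:free:energy:genus:zero}, $(0,2)$ in \cref{prop:omega02}, and $(1,1)$ by comparing the closed formula \cref{eqn:g1:n1:closed} with a direct topological recursion computation), an equivalence with the spin ELSV formula (\cref{thm:ELSV:CohFT}), and a pointer to the forthcoming work \cite{AS} for an actual proof. So your proposal must be judged on whether it closes the gap on its own, and it does not: it is a strategy outline whose decisive step is asserted rather than executed. Concretely, the claim that the correlators assembled from the fermionic $n$-point functions satisfy ``the linear and quadratic abstract loop equations at each ramification point, which follow from the pole structure of the explicit formulas'' is the entire content of the conjecture. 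In the non-spin case this step is the bulk of the proofs in \cite{SSZ15,DKPS19}: it requires unconditional quasi-polynomiality plus a delicate analysis of the pole structure of the $n$-point functions, neither of which you establish in the B-type setting, where moreover the extra poles of $B$ at pairs of distinct ramification points must be absorbed. Passing to the $\Z/2\Z$-quotient curve does not dissolve the difficulty: \cref{prop:B:equal:reduceded} only compares the two families of TR correlators with each other, so ``prove ordinary topological recursion there'' for the Hurwitz generating functions is the same open problem, relocated. (This is exactly how the paper uses \cref{sec:equiv:TR}: not to prove the conjecture, but to make the Eynard--DOSS correspondence applicable and thereby obtain the equivalence with the ELSV formula.)

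Second, one of your supporting claims is wrong as stated: quasi-polynomiality of the single numbers is not a ``manifest by-product'' of the commutation-pattern formula. The chamber polynomiality of \cref{thm:piecewise} holds for fixed lengths $(m,n)$ of the two partitions, whereas for single Hurwitz numbers the second partition is $(1^d)$ with $\ell((1^d)) = d = |\mu|$ growing with the input, so that theorem does not specialise to single numbers; in the non-spin theory this is precisely why quasi-polynomiality required separate, substantial work \cite{KLS19,BKLPS21}. Consistently with this, in the paper \cref{cor:quasi-poly} is conditional on the conjecture, being derived from \cref{thm:ELSV:CohFT}. Your overall architecture --- fermionic $n$-point functions, matching of the unstable data, loop equations, or alternatively the hypergeometric BKP route of \cite{AS,BDKS20} --- is the sensible one and matches what the paper itself anticipates as the likely proof; but as written it is a plan with the hard analytic core missing, not a proof.
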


As we stated in the introduction, this is now a theorem, proved by Alexandrov and Shadrin after we informed them about our conjecture.

\begin{theorem}[{\cite[Theorem~7.1]{AS23}}]\label{thm:AS}
	\Cref{conj:spin:HNs:TR} holds.
\end{theorem}

The remaining part of the section is devoted to proving the conjecture for $g=0$ and for $(g,n) = (1,1)$.

\subsection{Spin Hurwitz numbers for \texorpdfstring{$g=0$}{g=0}}

First, we consider the case of the source curve being of genus zero, i.e. the numbers $ h^{r,\theta}_{0;\mu} $. In this case, two simplifications occur.

\smallskip

As for $g=0$ the source is $ \P^1 $, it has a unique spin structure, and we see that in \cref{defn:spin:HNs}
\begin{equation}
	p(N_{f,\mc{O}(-1)}) = p(\mc{O}(-1)) = 0
\end{equation}
for any cover in the count. Therefore, for $g=0$, the spin Hurwitz numbers are actual counts, without any sign.

\smallskip

For the other simplification, recall the Riemann--Hurwitz formula: for a ramified cover of Riemann surfaces $ f \colon S \to T$ of degree $ d$ with ramification profiles $\mu^i$,
\begin{equation}
	2 - 2g(S) = d(2-2g(T)) - \sum_i (d - \ell (\mu^i))\,.
\end{equation}
Looking at the definition of spin single Hurwitz numbers, \cref{defn:spin:HNs}, we see that $b = \frac{2g-2+ \ell (\mu ) + d}{r}$ is chosen such that we do get a genus $g$ source curve if we have one branch point with ramification profile $\mu$ and $b$ branch points with ramification profile $(r+1, 1,\dotsc, 1)$ (recall that we always have $g(T) = 0$). However, the definition uses the spin completed cycles $ \bar{c}_{r+1}$. From \cref{defn:spin:compl:cycles,prop:central:chars:are:susy} we see that
\begin{equation}
	\bar{c}_{r+1} - c_{r+1} \in \bigoplus_{d=0}^{r} \mc{Z}_d\,.
\end{equation}
In particular, for any $\mu \in \mc{OP}(d)$ with $ d \le r$, we have $r +1 - \ell (\mu \cup (1^{r+1-d} )) = d - \ell(\mu) < r = r + 1 - \ell ((r+1))$. It follows that for any cover of $\P^1$ with ramification profiles $\mu$ and $b$ choices of partitions occurring with non-zero coefficients in $ \bar{c}_{r+1}$, the genus of the source curve is at most $g$, with equality occurring exactly if we choose $(r+1)$ every time. This occurrence of source curves having lower than expected genus is called \emph{genus defect} in e.g. \cite{SSZ15}.

\smallskip

In particular, for $g = 0$, there are no contributions from the completions $ \bar{c}_{r+1} - c_{r+1}$, as this would require a connected source curve with negative genus. This means that in the definition of $ h^{r,\theta}_{0;\mu}$, we may replace the $ \bar{c}_{r+1} $ by the $c_{r+1}$.

\smallskip

The same argument holds for the non-spin case: there we may also replace the completed cycles $ \bar{C}_{r+1} \in \bigoplus_{d=0}^\infty Z \C [\mf{S}_d] \cong \C \{ \mc{P} \} $ by the non-completed $C_{r+1}$ for $g=0$ Hurwitz numbers (for more background on completed cycles for the non-spin case, see e.g. \cite{OP06,SSZ12,BKLPS21}). Hence, because $c_{r+1}$ and $C_{r+1}$ represent the same partition $(r+1)$, we get:

\begin{proposition}\label{prop:genus:zero:spin=nonspin}
	Let $r$ be a positive even integer and $ \mu \in \mc{OP}$. Then the spin and non-spin Hurwitz numbers with these arguments are equal:
	\begin{equation}
		h^{r,\theta}_{0;\mu} = h^r_{0;\mu}\,.
	\end{equation}
\end{proposition}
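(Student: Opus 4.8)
The plan is to collapse both the spin and the non-spin count to the \emph{same} honest enumerative problem. First I would record the two structural simplifications already isolated above. On the one hand, a genus-zero source is forced to be $\P^1$, whose unique theta characteristic is $\mc{O}(-1)$ with $h^0(\P^1;\mc{O}(-1)) = 0$; hence every cover contributing to $h^{r,\theta}_{0;\mu}$ carries trivial parity, and the weight $(-1)^{p(N_{f,\vartheta})}$ is identically $+1$. Thus $h^{r,\theta}_{0;\mu}$ is an unsigned count, differing from its non-spin counterpart $h^{r}_{0;\mu}$ only through the use of the spin completed cycle $\bar c_{r+1}$ in place of the ordinary completed cycle $\bar C_{r+1}$.

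The key step is then to remove the completions. Using the identification $\mc{Z}_d \cong \C\{\mc{OP}(d)\}$ and multilinearity, I would expand $H(\P^1,\mc{O}(-1);\mu,(\bar c_{r+1})^b)$ as a linear combination of honest signed counts $H(\P^1,\mc{O}(-1);\mu,\rho_1,\dots,\rho_b)$, one for each choice of odd partitions $\rho_p$ appearing in $\bar c_{r+1}$. By \cref{defn:spin:compl:cycles,prop:central:chars:are:susy} the leading term of $\bar c_{r+1}$ is the genuine cycle $c_{r+1}$, and all remaining terms lie in $\bigoplus_{d \le r}\mc{Z}_d$. For such a lower term $\rho \in \mc{OP}(d')$ with $d' \le r$, its completion to degree $r+1$ has ramification defect $d' - \ell(\rho) < r$, strictly below the defect $r$ of the honest $(r+1)$-cycle. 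Feeding this into Riemann--Hurwitz, a connected cover employing even one such lower term at a branch point would satisfy $2 - 2g(S) > 2$, i.e. have strictly negative source genus --- which is impossible. Hence every completion term vanishes and only the pure $(r+1)$-cycle survives in each of the $b$ factors.

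With the completions gone, both $h^{r,\theta}_{0;\mu}$ and $h^{r}_{0;\mu}$ reduce, with identical prefactor $|\Aut(\mu)|/b!$, to the count of connected degree-$d$ covers of $\P^1$ having ramification profile $\mu$ over one point and a genuine $(r+1)$-cycle over each of the other $b$ branch points. Since the spin count carries only trivial signs, it matches the non-spin count term by term, giving the desired equality. The only point demanding genuine care is the genus-defect bookkeeping in the second step: I would check that the defect inequality $d'-\ell(\rho) < r$ is strict for every partition occurring below the leading term, and that the Riemann--Hurwitz computation indeed rules out negative-genus connected sources; the rest is a matter of matching the two definitions and their normalisations.
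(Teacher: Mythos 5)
Your proposal is correct and follows essentially the same route as the paper's own proof: trivial parity on $\P^1$ (its unique theta characteristic $\mc{O}(-1)$ has $h^0=0$) removes the signs, and the genus-defect Riemann--Hurwitz argument shows that for $g=0$ every lower term of $\bar{c}_{r+1}-c_{r+1}$ (and likewise of $\bar{C}_{r+1}-C_{r+1}$ on the non-spin side) would force a connected source curve of negative genus, so both counts collapse to the same unsigned count with genuine $(r+1)$-cycles. The strict inequality you flag, $d'-\ell(\rho)<r$ for $\rho\in\mc{OP}(d')$ with $d'\le r$, is exactly the paper's inequality and holds since $\ell(\rho)\ge 1$, so there is no gap.
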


\begin{remark}
	Note that this equality does not necessarily hold for disconnected Hurwitz numbers, as for those the (arithmetic) genus may be negative.
\end{remark}

\begin{corollary}\label{cor:free:energy:genus:zero}
	Let $r$ be a positive even integer. The genus zero free energies for the spin case are the anti-symmetrisations in all arguments of those in the non-spin case:
	\begin{equation}
		F^{r,\theta}_{0,n}(e^{x_1},\dotsc, e^{x_n})
		=
		2^{-n} \sum_{\epsilon_1, \dotsc, \epsilon_n \in \{ \pm 1\}} \Biggl( \prod_{i=1}^n \epsilon_j \Biggr) F^{r}_{0,n} (\epsilon_1 e^{x_1}, \dotsc, \epsilon_n e^{x_n})\,.
	\end{equation}
	In particular, \cref{conj:spin:HNs:TR} holds in genus zero:
	\begin{equation}
		\omega_{0,n}^{r,\theta}(z_1, \dots, z_n)
		-
		\delta_{n,2} \, \omega_{0,2}^{r,\theta}(e^{x(z_1)},e^{x(z_2)})
		= 
		d_1 \cdots d_n
		F^{r,\theta}_{0,n} (e^{x_1}, \dotsc, e^{x_n})\Big|_{x_i = x(z_i)} \,.
	\end{equation}
\end{corollary}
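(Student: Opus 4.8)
The statement has two parts: the identity expressing $F^{r,\theta}_{0,n}$ as an antisymmetrisation of $F^{r}_{0,n}$, and the deduction that \cref{conj:spin:HNs:TR} holds in genus zero. The plan is to prove the first by a direct parity computation and then to feed it into the non-spin topological recursion to obtain the second.

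For the first identity I would expand the right-hand side using the definition of the non-spin free energies, $F^{r}_{0,n}(e^{x_1},\dots,e^{x_n}) = \sum_{\mu_1,\dots,\mu_n>0} h^r_{0;\mu}\prod_i e^{\mu_i x_i}$, where the sum now runs over all positive $\mu_i$. Substituting $\epsilon_i e^{x_i}$ and pulling out the signs produces the factor $\prod_i \epsilon_i^{\,1+\mu_i}$, so that the average $2^{-n}\sum_{\epsilon\in\{\pm1\}^n}\prod_i\epsilon_i^{\,1+\mu_i}$ factorises as $\prod_i \tfrac12\bigl(1+(-1)^{1+\mu_i}\bigr)$. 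Each factor is $1$ when $\mu_i$ is odd and $0$ when $\mu_i$ is even, so the antisymmetrisation is exactly the projection onto tuples of odd parts. By \cref{prop:genus:zero:spin=nonspin} we have $h^r_{0;\mu}=h^{r,\theta}_{0;\mu}$ for odd $\mu$, and comparing with the definition of $F^{r,\theta}_{0,n}$ closes the first identity. This step is routine.

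For the second part I would apply $d_1\cdots d_n$ to the first identity and set $x_i=x(z_i)$. The crucial input is that $e^{x(z)}=z\,e^{-z^{r}}$ and $r$ even give $e^{x(-z)}=-e^{x(z)}$; hence $\epsilon_i e^{x(z_i)}=e^{x(\epsilon_i z_i)}$, and the sign-twisted non-spin free energies become honest pullbacks along the maps $z_i\mapsto\epsilon_i z_i$. By the non-spin result of \cite{SSZ15,DKPS19}, $d_1\cdots d_n F^{r}_{0,n}(e^{x(z_1)},\dots,e^{x(z_n)})=\omega^{r}_{0,n}$ for $n\ge3$, while for $n=1,2$ the unstable cases are handled directly (yielding $y\,dx$ and, up to the principal part in the $e^{x}$-coordinate, the bidifferential $B$). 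The problem then reduces to the correlator identity
\begin{equation*}
	\omega_{0,n}^{r,\theta}(z_1,\dots,z_n) = 2^{-n}\sum_{\epsilon\in\{\pm1\}^n}\Bigl(\prod_{i=1}^n\epsilon_i\Bigr)\,\omega_{0,n}^{r}(\epsilon_1 z_1,\dots,\epsilon_n z_n),
\end{equation*}
together with the matching of the $\delta_{n,2}$ correction terms under the same antisymmetrisation.

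Finally I would prove this correlator identity, which is the heart of the argument. For $n=1$ it is immediate, since both curves share the same $x,y$, so $\omega^{r,\theta}_{0,1}=y\,dx=\omega^{r}_{0,1}$, and the involution $s\colon z\mapsto -z$ satisfies $s^*(y\,dx)=-y\,dx$ (because $x(-z)=x(z)+i\pi$ makes $dx$ invariant while $y(-z)=-y(z)$), so the two-term average reproduces $y\,dx$. For $n=2$ it is the direct check that the antisymmetrisation of $\tfrac{dz_1dz_2}{(z_1-z_2)^2}$ over independent sign flips equals $\tfrac12\bigl(\tfrac{1}{(z_1-z_2)^2}+\tfrac{1}{(z_1+z_2)^2}\bigr)dz_1dz_2$, which is exactly the $(0,2)$ datum of \eqref{eqn:spin:Hurwitz:SC}. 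For $n\ge3$ I would induct on $n$ using the recursion \eqref{eqn:TR}: the spin and non-spin curves have the same $x$, $y$, and the same ramification points, which come in pairs $\{a_i,-a_i\}$ since $r$ is even, with $dx$ invariant and $y$ anti-invariant under $s$. The extra poles of the spin $B$ along $z_1+z_2=0$ are of order two with the same biresidue (cf. \cref{rem:B:rescaling:local:TR}) and encode precisely the residue contribution coming from the conjugate ramification point. The \emph{main obstacle} is exactly this bookkeeping: one must show that summing residues at $a_i$ against the spin kernel equals the antisymmetrised sum of residues at $a_i$ and at $-a_i$ against the non-spin kernel, which requires tracking that $s$ conjugates the local Galois involution $\sigma_i$ at $a_i$ to the one at $-a_i$ and invoking the inductive hypothesis on the lower correlators appearing inside the recursion. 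This is the genus-zero instance of the equivariant-quotient mechanism developed in \cref{sec:equiv:TR}, and it is where all the genuine work lies.
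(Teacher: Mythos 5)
Your proposal is correct and takes essentially the same route as the paper: the first identity is proved there exactly by your parity projection together with \cref{prop:genus:zero:spin=nonspin}, and the second by matching the unstable $(0,1)$ and $(0,2)$ data with the known non-spin formulae and then disposing of $n>2$ by the same induction on \eqref{eqn:TR} that you sketch (the paper states this induction without carrying it out). One remark dissolves what you flag as the main obstacle: in genus zero with $r$ even, Riemann--Hurwitz forces $|\mu|\equiv n \pmod 2$ whenever $h^{r}_{0;\mu}\neq 0$, so by the non-spin result of \cite{SSZ15,DKPS19} antisymmetrising a non-spin correlator $\omega^{r}_{0,m}$ in all but one of its variables already produces a form that is antisymmetric in the remaining variable as well; consequently the inductive hypothesis passes through the quadratic term of the recursion splitting-by-splitting, antisymmetrising the non-spin kernel in its external variable $z_0$ yields exactly the spin kernel, and the desired equality then holds residue-by-residue at each ramification point separately, with no cross-term cancellation between $a_i$ and $-a_i$ required.
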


\begin{proof}
	The first part follows directly from \cref{prop:genus:zero:spin=nonspin}: antisymmetrising is the same as restricting to odd powers of $e^{x_i}$ in the series expansion.\par
	For the second part, the unstable free energies for the non-spin case were computed in \cite{MSS13,KLPS19}. In that case the $(0,1)$-free energy is already odd, so it must be equal to its spin counterpart $\omega_{0,1}^{r,\theta}$. For $\omega_{0,2}^{r,\theta}$, the formula in the corollary is the antisymmetrisation of the non-spin case. The case $n > 2$ follows by induction.
\end{proof}

We give a second proof of the formula for $\omega_{0,1}^{r,\theta}$ in \cref{prop:omega01}, using the operator formalism. There, it is first step towards a general formula for $F_{g,1}^{r,\theta}$. A computation of $\omega_{0,2}^{r,\theta}$ in the operator formalism can be found in \cref{app:fermion:calcs}.

\subsection{Operator formalism for spin single Hurwitz numbers}
\label{subsec:operator:formalism:spin:HNs}

In analogy to \cite[definition 3.3]{KLPS19}, we get the following formula for spin single Hurwitz numbers as a consequence of Gunningham's formula \ref{eqn:Gunningham:spin:single:HNs}. It is the single version of \cref{prop:spin:double:HNs:VEV}.

\begin{proposition}\label{prop:spin:single:HNs:VEV}
	For $\mu$ an odd partition, the disconnected spin single Hurwitz numbers with $(r+1)$-completed cycles are given by
	\begin{equation}\label{eqn:spin:single:HNs:VEV}
		h_{g;\mu}^{\bullet, r,\theta}
		=
		2^{1-g} \bigl[ u^{rb} \bigr]
		\Bigg\<
			e^{\alpha^B_1} 
			e^{u^{r}\frac{\mc{F}^B_{r+1}}{r+1}}
			\prod_{i=1}^{\ell(\mu)} \frac{\alpha^B_{-\mu_i}}{\mu_i}
		\Bigg\> \, .
	\end{equation}
	Here $b = \frac{2g - 2 + \ell(\mu) + |\mu|}{r}$ is given by the Riemann--Hurwitz formula.
\end{proposition}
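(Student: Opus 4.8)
The plan is to evaluate the vacuum expectation value on the right-hand side by letting the three groups of operators act on the vacuum one at a time, exactly as in the proof of \cref{prop:spin:double:HNs:VEV}, and then to match the result against Gunningham's formula \eqref{eqn:Gunningham:spin:single:HNs}. The computation is the ``single'' mirror of the double case, the only genuinely new feature being the left-most factor $e^{\alpha^B_1}$, which replaces a monomial in the $\mc{E}^B$'s and carries the representation-theoretic content of the ramification $(1^d)$ over the second point; handling it is where the work lies.

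The two routine steps come first. Applying the creation operators on the right to the vacuum, \cref{cor:boson:action} gives
\begin{equation*}
	\prod_{i=1}^{\ell(\mu)} \frac{\alpha^B_{-\mu_i}}{\mu_i}\, v_\emptyset
	=
	\frac{1}{\prod_i \mu_i} \sum_{\lambda \in \SP(d)} \frac{\zeta_\mu^\lambda}{2^{\delta(\lambda)/2 + \ell(\mu)}}\, v_\lambda ,
\end{equation*}
which produces the first $\lambda$-dependent factor of \eqref{eqn:Gunningham:spin:single:HNs}. Since $\mc{F}^B_{r+1}$ is diagonal in $\{v_\lambda\}$ with eigenvalue $p_{r+1}(\lambda)$ by \cref{lem:power:sum:eigenvalue}, the middle factor acts as $e^{u^r \mc{F}^B_{r+1}/(r+1)} v_\lambda = e^{u^r p_{r+1}(\lambda)/(r+1)} v_\lambda$, so that extracting $[u^{rb}]$ attaches the factor $\tfrac{1}{b!}\bigl(p_{r+1}(\lambda)/(r+1)\bigr)^b$ to each $v_\lambda$, matching the last factor of \eqref{eqn:Gunningham:spin:single:HNs}.

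The heart of the argument is the remaining pairing $\langle v_\emptyset, e^{\alpha^B_1} v_\lambda\rangle$ for $\lambda \in \SP(d)$. Here I would use that $\alpha^B_1$ and $\alpha^B_{-1}$ are mutually adjoint for the Fock inner product (equivalently $\langle v_\emptyset| e^{\alpha^B_1} = \langle e^{\alpha^B_{-1}} v_\emptyset|$), a consequence of the inner product $\langle \phi_k,\phi_l\rangle = \tfrac{(-1)^k}{2}\delta_{k+l}$ on the neutral fermions together with the Heisenberg relation $[\alpha^B_1,\alpha^B_{-1}] = \tfrac12$ of \cref{lem:F:hat:symmetry:Heisenberg}. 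Expanding $e^{\alpha^B_{-1}} v_\emptyset = \sum_{k\ge 0}\tfrac{1}{k!}(\alpha^B_{-1})^k v_\emptyset$ and invoking \cref{cor:boson:action} once more with the partition $(1^d)$, orthonormality of $\{v_\lambda\}$ isolates the degree-$d$ term (by energy conservation, since $v_\lambda$ has energy $d$) and yields
\begin{equation*}
	\langle v_\emptyset, e^{\alpha^B_1} v_\lambda\rangle
	=
	\frac{1}{d!}\,\frac{\zeta_{(1^d)}^\lambda}{2^{\delta(\lambda)/2 + d}}
	=
	\frac{1}{d!}\,\frac{\dim V^\lambda}{2^{\delta(\lambda)/2 + d}},
\end{equation*}
using $\zeta_{(1^d)}^\lambda = \dim V^\lambda$. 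This is precisely the missing factor of \eqref{eqn:Gunningham:spin:single:HNs}; assembling the three contributions, summing over $\SP(d)$ and multiplying by $2^{1-g}$ reproduces Gunningham's formula.

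The step I expect to be the main obstacle is exactly this last matrix element: one must check that only the degree-$d$ component of $e^{\alpha^B_1}$ survives, and that the \emph{normalisation} of the adjointness of $\alpha^B_{\pm 1}$ is correct, since it is this normalisation that produces the factor $2^{-d}$ distinguishing the single from the double case (the apparent mismatch dissolves once one uses $\dim V^{(1)}=2$ rather than the dimension of the projective $\mf{S}_d$-representation). If one prefers to avoid the adjoint computation, the statement also follows directly from \cref{prop:spin:double:HNs:VEV}: specialise $\nu = (1^d)$, use the symmetry $h^{\bullet,r,\theta}_{g;\mu,(1^d)} = h^{\bullet,r,\theta}_{g;(1^d),\mu} = d!\, h^{\bullet,r,\theta}_{g;\mu}$, rewrite $(r!)^b [z_1^{r+1}\cdots z_b^{r+1}]\prod_p \hat{\mc{E}}^B_0(z_p) = \bigl(\mc{F}^B_{r+1}/(r+1)\bigr)^b$, and replace $\tfrac{1}{d!}(\alpha^B_1)^d$ by $e^{\alpha^B_1}$, the vacuum expectation selecting the degree-$d$ term automatically.
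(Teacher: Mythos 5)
Your proposal is correct and takes essentially the same route as the paper: the paper states this proposition as a direct consequence of Gunningham's formula \eqref{eqn:Gunningham:spin:single:HNs}, proved in parallel with \cref{prop:spin:double:HNs:VEV} by applying \cref{cor:boson:action} and \cref{lem:power:sum:eigenvalue} and matching factors term by term over $\lambda \in \SP(d)$. Your computation of $\langle v_\emptyset, e^{\alpha^B_1} v_\lambda\rangle = \dim V^\lambda/(2^{\delta(\lambda)/2+d}\, d!)$ (via $(\alpha^B_1)^\dagger = \alpha^B_{-1}$, energy conservation, and $\zeta^\lambda_{(1^d)} = \dim V^\lambda$) correctly supplies the one ingredient the paper leaves implicit, and your alternative derivation by specialising $\nu = (1^d)$ in the double formula is equally valid.
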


In order to use this formula, note that $\alpha_1^B $ and $\mc{F}^B_{r+1}$ both annihilate the vacuum: $\alpha_1^B |0\> = \mc{F}^B_{r+1} |0\> = 0$. Therefore, their exponentials fix the vacuum, and we may insert them in \cref{eqn:spin:single:HNs:VEV} as
\begin{equation}
	h_{g;\mu}^{\bullet, r,\theta}
	=
	2^{1-g} \bigl[ u^{rb} \bigr]
	\Bigg\<
		e^{\alpha^B_1} e^{u^{r}\frac{\mc{F}^B_{r+1}}{r+1}}
		\prod_{i=1}^{\ell(\mu)} \frac{\alpha^B_{-\mu_i}}{\mu_i}
		e^{-u^{r} \frac{\mc{F}^B_{r+1}}{r+1}} e^{-\alpha^B_1}
	\Bigg\> \, .
\end{equation}
The advantage of this formulation is that we now have $\ell(\mu)$ factors of the same shape, which we can describe uniformly.

\begin{lemma}
	Let $\mu$ be an odd positive integer. Then:
	\begin{equation}
		\mathcal{O}_{-\mu}^{B,r}(u)
		\coloneqq
		e^{u^{r}\frac{\mc{F}^B_{r+1}}{r+1}} \alpha^B_{-\mu} e^{-u^{r} \frac{\mc{F}^B_{r+1}}{r+1}}
		=
		-\sum_{l > \mu/2} e^{u^{r}\frac{l^{r+1} - (l - \mu)^{r+1}}{r+1}} F_{l,\mu - l} .
	\end{equation}
\end{lemma}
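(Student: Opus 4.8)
The plan is to exploit the standard conjugation identity $e^{A}Be^{-A} = e^{\ad_A}B$ and reduce the statement to showing that each generator $F_{l,\mu-l}$ appearing in $\alpha^B_{-\mu}$ is an eigenvector of the adjoint action of $\mc{F}^B_{r+1}$. Indeed, by \cref{eqn:B:boson} we have $\alpha^B_{-\mu} = -\sum_{l>\mu/2}F_{l,\mu-l}$ (here $\mu$ is odd, so $\mu/2\notin\Z$ and every index satisfies $l\geq(\mu+1)/2>0$), while by \cref{eqn:B:power:sum} we have $\mc{F}^B_{r+1} = \sum_{k>0}k^{r+1}F_{k,-k}$. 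Writing $A = u^r\mc{F}^B_{r+1}/(r+1)$, if we can establish that $[\mc{F}^B_{r+1},F_{l,\mu-l}] = \lambda_l\,F_{l,\mu-l}$ for a scalar $\lambda_l$, then $e^A F_{l,\mu-l}e^{-A} = e^{u^r\lambda_l/(r+1)}F_{l,\mu-l}$, and the result follows by linearity once we check $\lambda_l = l^{r+1}-(l-\mu)^{r+1}$.

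The computation of the commutator is the heart of the proof. Using $F_{a,b} = (-1)^a\phi_a\phi_b$ together with the bracket \cref{eqn:comm:phi:phi} from \cref{lem:comm:phi:phi}, a direct expansion gives
\begin{equation*}
	[F_{k,-k},F_{l,\mu-l}] = \bigl(\delta_{l-k} - \delta_{k+l} + \delta_{\mu-l-k} - \delta_{k+\mu-l}\bigr)F_{l,\mu-l}.
\end{equation*}
The crucial structural point is that all four $\phi$-monomials produced by the bracket, when evaluated on the support of their respective Kronecker deltas, collapse to the \emph{same} operator $F_{l,\mu-l}$ (with all parity signs cancelling); no off-diagonal generators survive, and since $\mu$ is odd we never hit a diagonal term $F_{j,-j}$, so no central correction from normal ordering intervenes. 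Summing against $\sum_{k>0}k^{r+1}$ then retains only those deltas whose solution $k$ is strictly positive.

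The final step, which is the only genuine subtlety, is to verify that the resulting eigenvalue is uniform across the three regimes $\mu/2<l<\mu$, $l=\mu$, and $l>\mu$. For $l>0$ the term $\delta_{l-k}$ always contributes $l^{r+1}$; $\delta_{k+l}$ forces $k=-l<0$ and never contributes; $\delta_{\mu-l-k}$ contributes $(\mu-l)^{r+1}$ precisely when $l<\mu$; and $\delta_{k+\mu-l}$ contributes $-(l-\mu)^{r+1}$ precisely when $l>\mu$. This yields $\lambda_l = l^{r+1}+(\mu-l)^{r+1}$ for $l<\mu$, $\lambda_\mu = \mu^{r+1}$, and $\lambda_l = l^{r+1}-(l-\mu)^{r+1}$ for $l>\mu$. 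Because $r$ is even, $r+1$ is odd, so $(\mu-l)^{r+1} = -(l-\mu)^{r+1}$; hence in every case $\lambda_l = l^{r+1}-(l-\mu)^{r+1}$, and substituting into $e^A\alpha^B_{-\mu}e^{-A} = -\sum_{l>\mu/2}e^{u^r\lambda_l/(r+1)}F_{l,\mu-l}$ gives exactly the claimed formula.
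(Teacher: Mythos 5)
Your proof is correct and follows essentially the same route as the paper's: both reduce the claim to the commutator $[F_{k,-k},F_{l,\mu-l}]$ computed from \cref{lem:comm:phi:phi}, perform the same case analysis on the $k$-sum (using $l>\mu/2>0$ and the oddness of $r+1$ to merge the regimes $l<\mu$, $l=\mu$, $l>\mu$ into the single eigenvalue $l^{r+1}-(l-\mu)^{r+1}$), and then exponentiate the adjoint action via $e^{A}Be^{-A}=e^{\ad_A}B$. The paper merely packages your eigenvector statement as the claim that $\ad_{\mc{F}^B_{r+1}}$ sends a general weighted sum $\alpha^f_{-\mu}=-\sum_{l>\mu/2}f_l(\mu)F_{l,\mu-l}$ to $\alpha^g_{-\mu}$ with $g_l(\mu)=\bigl(l^{r+1}-(l-\mu)^{r+1}\bigr)f_l(\mu)$, which is the same computation in different notation.
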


\begin{proof}
	Recall $\alpha_{-\mu}^{B} = - \sum_{l > \mu /2} F_{l,\mu -l}$ and $\mc{F}^B_{r+1} = \sum_{k>0} k^{r+1} F_{k,-k}$. More generally, write $\alpha_{-\mu}^f \coloneqq - \sum_{l> \mu /2} f_l(\mu) F_{l,\mu - l}$ for any function $f$ of $(l,\mu)$. Then
	\begin{align*}
		[\mc{F}^B_{r+1}, \alpha_{-\mu}^f ]
		&=
		\sum_{k>0} \sum_{l > \mu/2} - k^{r+1} f_l(\mu ) [F_{k,-k} ,  F_{l,\mu - l}]\\
		&=
		\sum_{k>0} \sum_{l> \mu/2} - k^{r+1} f_l(\mu ) \big( \delta_{-k+l} F_{k,\mu-l} - \delta_{k+l} F_{-k,\mu-l}  +\delta_{\mu-l-k} F_{l,k} - \delta_{k+\mu-l} F_{l,-k} \big).
	\end{align*}
	The second summand vanishes because of the conditions in the sum, and for the others, the $k$-sum gives certain restrictions in $\mu$:
	\begin{align*}
		[ \mc{F}^B_{r+1}, \alpha_{-\mu}^f ] 
		&=
		- \sum_{l > \mu/2} l^{r+1} f_l(\mu) F_{l,\mu-l}
		- \sum_{\mu/2 < l < \mu} (\mu-l)^{r+1} f_l(\mu) F_{l,\mu-l}
		+ \sum_{\mu < l} (l-\mu)^{r+1} f_l(\mu) F_{l,\mu-l}
		\\
		&=
		- \sum_{l>\mu/2} \big(l^{r+1} - (l-\mu)^{r+1} \big) f_l(\mu) F_{l,\mu-l} = \alpha_{-\mu}^g \, ,
	\end{align*}
	where $g_l(\mu) = \big(l^{r+1} - (l-\mu)^{r+1} \big) f_l(\mu)$. Applying this to $\alpha_{-\mu}^{B}$, we get
	\[
		(\ad_{\mc{F}^B_{r+1}})^n \alpha_{-\mu}^{B}
		=
		- \sum_{l>\mu/2} \big(l^{r+1} - (l-\mu)^{r+1} \big)^n F_{l,\mu-l}
	\]
	and, to conclude,
	\begin{align*}
		\mathcal{O}_{-\mu}^{B,r}(u)
		&=
		\exp{\Bigl( {\ad_{u^{r} \frac{\mc{F}^B_{r+1}}{r+1}}} \Bigr)} \alpha_{-\mu}^B \\
		&=
		\sum_{b=0}^\infty \frac{u^{rb}}{b!(r+1)^b} (\ad_{\mc{F}^B_{r+1}})^b (\alpha_{-\mu}^B )
		=
		-\sum_{b=0}^\infty  \sum_{l>\mu/2} \frac{u^{rb}}{b!} \Big(\frac{l^{r+1} - (l-\mu)^{r+1}}{r+1} \Big)^b  F_{l,\mu-l} \\
		&=- \sum_{l > \mu/2} e^{u^{r} \frac{l^{r+1} - (l - \mu)^{r+1}}{r+1}}  F_{l,\mu - l}.
	\end{align*}
\end{proof}

\begin{proposition}\label{prop:conj:operator}
	Let $\mu$ be an odd positive integer. Then:
	\begin{equation}\label{eqn:conj:operator}
		e^{\alpha_1^B} \mathcal{O}_{-\mu}^{B, r}(u) e^{-\alpha_1^B}
		=
		- \sum_{t=0}^\infty \sum_{ l = \frac{\mu+1}{2} - \lfloor \frac{t}{2} \rfloor }^\infty \!\!\!
		\frac{(-\Delta )^t}{t!} f(l) F_{l,-l + (\mu-t)} + \frac{1}{2} \frac{(-\Delta )^{\mu-1}}{\mu!} f ( 1 )\,,
	\end{equation}
	where $\Delta$ is the forward difference operator: $\Delta f (l) = f(l+1) - f(l)$, and $f(l) = e^{u^{r} \frac{l^{r+1} - (l-\mu )^{r+1}}{r+1}}$, omitting its dependence on $r, \mu, u$ for short.
\end{proposition}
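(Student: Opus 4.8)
The plan is to reduce the conjugation of the quadratic operator to the conjugation of a single fermion, which acts as a pure shift. First I would record that, by \eqref{eqn:B:boson}, $\alpha^B_1 = -\sum_{k \geq 0} F_{k,-k-1}$, and a direct application of the anticommutation relation $\{\phi_k,\phi_l\} = (-1)^k\delta_{k+l}$ gives $[\alpha^B_1,\phi_j] = \phi_{j-1}$ for every $j \in \Z$ (the two Kronecker deltas produced cover the ranges $j \geq 1$ and $j \leq 0$ respectively, each yielding $\phi_{j-1}$). Consequently $e^{\alpha^B_1}\phi_j e^{-\alpha^B_1} = e^{\ad_{\alpha^B_1}}\phi_j = \sum_{n \geq 0} \tfrac{1}{n!}\phi_{j-n}$.

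Next, writing $F_{l,\mu-l} = (-1)^l\phi_l\phi_{\mu-l}$ and conjugating each factor, I obtain $e^{\alpha^B_1}F_{l,\mu-l}e^{-\alpha^B_1} = \sum_{s,t \geq 0}\tfrac{(-1)^s}{s!\,t!}F_{l-s,\mu-l-t}$. Substituting the previous lemma's expression $\mathcal{O}^{B,r}_{-\mu}(u) = -\sum_{l > \mu/2} f(l)\,F_{l,\mu-l}$, with $f(l) = \exp(u^r(l^{r+1}-(l-\mu)^{r+1})/(r+1))$, yields a double sum. I then collect the coefficient of a fixed $F_{a,b}$: setting $\tau = \mu-a-b$ (the total shift) and reindexing by the summation variable $m$ (the value of $l$), the coefficient is $-\sum_{m}\tfrac{(-1)^{m-a}}{(m-a)!\,(\mu-b-m)!}f(m)$, where $m$ runs over $\max(a,\tfrac{\mu+1}{2}) \leq m \leq \mu - b$. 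Recognising the alternating binomial weights as an iterated forward difference, the untruncated sum equals $-\tfrac{(-\Delta)^\tau}{\tau!}f(a)$; hence whenever $a \geq \tfrac{\mu+1}{2}$ the constraint $l > \mu/2$ is inactive and these terms already reproduce the part of the claimed sum with first index $\geq \tfrac{\mu+1}{2}$.

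The crux is the boundary, i.e. the terms with small first index where the cutoff at $m = \tfrac{\mu+1}{2}$ truncates the difference. Here I would fold using the reflection $F_{a,b} = (-1)^{a+b+1}F_{b,a} + \delta_{a+b}$ (the non-normal-ordered counterpart of \cref{lem:F:hat:symmetry:Heisenberg}), noting that since $\mu$ is odd $(-1)^{a+b+1} = (-1)^\tau$. The decisive input is that $f$ is symmetric about $\mu/2$: because $r$ is even, $r+1$ is odd, so $(l-\mu)^{r+1} = -(\mu-l)^{r+1}$ and $f(l) = \exp(u^r(l^{r+1}+(\mu-l)^{r+1})/(r+1)) = f(\mu-l)$. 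Applying $m \mapsto \mu - m$ in the truncated coefficient of the reflected operator $F_{b,a}$ and using $f(\mu-m) = f(m)$, the ``missing'' truncated tail of $F_{a,b}$ is cancelled exactly by the reflected contribution of $F_{b,a}$, so that the total coefficient of each $F_{l,\mu-\tau-l}$ with $l \geq \tfrac{\mu+1}{2} - \lfloor\tau/2\rfloor$ collapses back to $-\tfrac{(-\Delta)^\tau}{\tau!}f(l)$; the self-paired midpoint terms $F_{l,l}$ vanish for $l \neq 0$ since $\phi_l^2 = \tfrac12(-1)^l\delta_{2l}$. Finally, the only genuine scalar arises at $\tau = \mu$ from the fixed point $F_{0,0} = \phi_0^2 = \tfrac12$ together with the $\delta_{a+b}$ terms created when folding $F_{a,-a}$; assembling these (again via $f(\mu-m)=f(m)$) produces exactly the correction $\tfrac12\tfrac{(-\Delta)^{\mu-1}}{\mu!}f(1)$, the prefactor $\tfrac12$ being precisely the value $\phi_0^2 = \tfrac12$. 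I expect the main obstacle to be purely the careful bookkeeping of these boundary ranges and the signs in the fold; everything upstream is a formal manipulation of the shift action.
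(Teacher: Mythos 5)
Your proposal is correct, and it takes a genuinely different route from the paper's proof. The paper computes the iterated adjoint actions $(\ad_{\alpha_1^B})^t\,\mathcal{O}^{B,r}_{-\mu}(u)$ one power at a time, keeping every operator in the normal form $\phi_l\phi_{-l+(\mu-t)}$ with $l>\mu/2$ throughout; the price is that reflections at the diagonal accumulate, and the resulting boundary coefficients are organised by Catalan-trapezoid numbers $C_m(n,k)$ (lattice paths not crossing the diagonal), which are then resummed via the Chu--Vandermonde identity together with $f(\mu-l)=f(l)$. You instead conjugate at the level of single fermions, where $e^{\ad_{\alpha_1^B}}$ is a pure shift, expand without any constraint on the indices, and only at the end fold the below-diagonal operators back with $F_{a,b}=(-1)^{a+b+1}F_{b,a}+\delta_{a+b}$; the substitution $m\mapsto\mu-m$ together with the same symmetry $f(\mu-l)=f(l)$ then makes the folded contribution exactly concatenate with the truncated range $[\tfrac{\mu+1}{2},\mu-b]$ to give the full interval $[a,\mu-b]$, i.e.\ the complete binomial expansion of $-\tfrac{(-\Delta)^\tau}{\tau!}f(a)$. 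This eliminates the path-counting combinatorics entirely, at the cost of having to reassemble the central term by hand: your claim that the scalar contributions (the direct $F_{0,0}=\tfrac12$ at $\tau=\mu$ and the $\delta_{a+b}$ from folding $F_{-a,a}$) sum to $\tfrac12\tfrac{(-\Delta)^{\mu-1}}{\mu!}f(1)$ is stated rather than computed, but it does hold — it follows from $\sum_{k=m+1}^{\mu}(-1)^k\binom{\mu}{k}=(-1)^{m+1}\binom{\mu-1}{m}$, Pascal's rule, and one more application of $f(\mu-m)=f(m)$ — and I verified it reproduces the paper's correction for $\mu=1,3$ and in general. Both proofs ultimately rest on the same two boundary facts, namely that $\phi_k^2$ is central and that $f$ is symmetric about $\mu/2$ (this is where evenness of $r$ enters); yours packages them more transparently, while the paper's yields closed formulas for each individual power $(\ad_{\alpha_1^B})^t$, which do not appear in your argument.
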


The proof of this proposition is very combinatorial and can be found in \cref{app:fermion:calcs}.

\subsection{One-part spin Hurwitz numbers}

In order to compute spin Hurwitz numbers with $\ell(\mu) = 1$, also known as \emph{one-part Hurwitz numbers}, we may first realise that in this case, there is no difference between connected and disconnected counts: if the ramification profile over a point has length $1$, clearly this connects the source. We need to compute the quantity
\begin{equation}
	\corr{
		e^{ \alpha^B_{1}}
		e^{u^{r}\frac{\mc{F}^B_{r+1}}{r+1}}
		\frac{\alpha^B_{-\mu}}{\mu}
		e^{-u^{r}\frac{\mc{F}^B_{r+1}}{r+1}}
		e^{- \alpha^B_{1}}
		}
	=
	\frac{1}{\mu}
	\corr{
		e^{\alpha^B_{1}} \mathcal{O}_{-\mu}^{B,r}(u) e^{- \alpha^B_{1}}
	},
\end{equation}
for $\mu$ an odd integer. First note that, since $\big\< F_{i,j} \big\> = (-1)^i \big\< \phi_i \phi_j \big\> = \delta_{i+j}\delta_{j > 0} + \frac{1}{2} \delta_{i+j} \delta_{j}$, the vacuum expectation of each summand from \cref{eqn:conj:operator} including $\phi$'s vanishes. Hence we get that
\begin{equation}\label{OneAlphaCorrelator}
	h_{g;\mu}^{r,\theta}
	=
	2^{1-g} \frac{[u^{2g-1+\mu}]}{\mu}
	\corr{e^{\alpha^B_{1}} \mathcal{O}_{-\mu}^{B,r}(u) e^{- \alpha^B_{1}}}
	=
	\frac{[u^{2g-1+\mu}]}{2^g \mu^2 (\mu-1)!} (-\Delta )^{\mu-1} f(1).
\end{equation}
Let us start by considering the $(0,1)$-free energy. This is expressed in terms of the Lambert $W$ function, which satisfies $W(t)e^{W(t)} = t$ and it has an expansion
\begin{equation}
	W(t) = -\sum_{m=1}^\infty \frac{m^{m-1}}{m!} (-t)^m.
\end{equation}
More generally, we will consider the expansion of an arbitrary power of the Lambert function:
\begin{equation}\label{eqn:Lambert:expansion}
	\biggl( \frac{W(-t)}{-t} \biggr)^{\alpha} = \sum_{m=0}^\infty \frac{\alpha(m+\alpha)^{m-1}}{m!} t^m .
\end{equation}
See \cite[equation~(2.36)]{CGHJK96} as a reference.

\begin{proposition}\label{prop:omega01}
	For $r = 2s$, the unstable $(0,1)$-correlator is given by
	\begin{equation}
		\omega_{0,1}^{r,\theta} = y \, dx
		=
		dF_{0,1}^{r,\theta}(e^x) = \frac{1}{(-2s)^{1/2s}} W(-2se^{2sx})^{\frac{1}{2s}} dx.
	\end{equation}
\end{proposition}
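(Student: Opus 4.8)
The plan is to compute the free energy $F_{0,1}^{r,\theta}$ directly from the operator-formalism expression \eqref{OneAlphaCorrelator} and then recognise the resulting power series as a fractional power of the Lambert function, matching it against \eqref{eqn:Lambert:expansion}. Setting $g=0$ in \eqref{OneAlphaCorrelator} gives
\begin{equation*}
	h_{0;\mu}^{r,\theta} = \frac{[u^{\mu-1}]}{\mu^2(\mu-1)!}(-\Delta)^{\mu-1}f(1), \qquad f(l)=e^{u^{r}\frac{l^{r+1}-(l-\mu)^{r+1}}{r+1}}.
\end{equation*}
First I would observe that $f(l)$ is a power series in $u^{r}$, so the coefficient $[u^{\mu-1}]$ is nonzero only when $b\coloneqq (\mu-1)/r$ is a non-negative integer, in which case it selects the single summand $\tfrac{1}{b!}\bigl(\tfrac{l^{r+1}-(l-\mu)^{r+1}}{r+1}\bigr)^{b}$. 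This is a polynomial in $l$ of degree exactly $rb=\mu-1$, whose leading coefficient is $\mu^{b}/b!$ since $\tfrac{l^{r+1}-(l-\mu)^{r+1}}{r+1}=\mu\,l^{r}+O(l^{r-1})$.

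The key simplification is that $(-\Delta)^{\mu-1}$ applied to a polynomial of degree $\mu-1$ returns the constant $(-1)^{\mu-1}(\mu-1)!$ times its leading coefficient, all lower-degree terms being annihilated; since $\mu$ is odd this sign is $+1$, and evaluation at $l=1$ is then immaterial. This yields the closed form $h_{0;\mu}^{r,\theta}=\mu^{b-2}/b!$ (which one can cross-check against the $b=1$ corollary, giving $h_{0;r+1}^{r,\theta}=1/(r+1)$). Summing over odd $\mu$, equivalently over $b\ge 0$ via $\mu=rb+1$, and differentiating termwise gives
\begin{equation*}
	\frac{d}{dx}F_{0,1}^{r,\theta}(e^{x}) = \sum_{b\ge 0}\frac{(rb+1)^{b-1}}{b!}\,e^{(rb+1)x} = e^{x}\sum_{b\ge 0}\frac{(rb+1)^{b-1}}{b!}\bigl(e^{rx}\bigr)^{b}.
\end{equation*}

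The last step is to recognise the inner sum via \eqref{eqn:Lambert:expansion} with $\alpha=1/r$ and $t=re^{rx}$: the coefficient $\tfrac{\alpha(m+\alpha)^{m-1}}{m!}$ becomes $\tfrac{(rm+1)^{m-1}}{m!}r^{-m}$, so the inner sum equals $\bigl(\tfrac{W(-re^{rx})}{-re^{rx}}\bigr)^{1/r}$. Multiplying by $e^{x}$ and using $r=2s$: since $(-re^{rx})^{1/r}=(-2s)^{1/2s}e^{x}$, the factor $e^{x}$ cancels the denominator, producing exactly $\tfrac{1}{(-2s)^{1/2s}}W(-2se^{2sx})^{1/2s}$. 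Finally I would identify this with $y\,dx$ by inverting the spectral curve \eqref{eqn:spin:Hurwitz:SC}: writing $t=e^{x}=z\,e^{-z^{r}}$ and solving $-rt^{r}=(-rz^{r})e^{-rz^{r}}$ gives $z^{r}=-W(-re^{rx})/r$, i.e.\ $z=(-2s)^{-1/2s}W(-2se^{2sx})^{1/2s}=y$, closing the loop $\omega_{0,1}^{r,\theta}=dF_{0,1}^{r,\theta}=y\,dx$.

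I expect the only genuinely delicate point to be the bookkeeping in the finite-difference step, namely verifying that extracting $[u^{\mu-1}]$ leaves a polynomial of degree \emph{exactly} $\mu-1$ so that $(-\Delta)^{\mu-1}$ isolates its leading coefficient, and that the remaining numerical factors assemble into $\mu^{b-2}/b!$; the Lambert matching and the spectral-curve inversion are then routine. As a sanity alternative one could shortcut the computation by invoking \cref{prop:genus:zero:spin=nonspin}, since $\omega_{0,1}$ is independent of $B$ and hence agrees with the known non-spin curve, but the operator-formalism derivation given here is the one that generalises to the later formula for $F_{g,1}^{r,\theta}$.
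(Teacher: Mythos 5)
Your proposal is correct and follows essentially the same route as the paper's proof: specialise \eqref{OneAlphaCorrelator} to $g=0$, note that $[u^{\mu-1}]$ forces $b=(\mu-1)/r\in\N$ and isolates $\tfrac{1}{b!}(P^{r}_\mu)^b$, use that $\Delta^{\mu-1}$ of a degree-$(\mu-1)$ polynomial returns $(\mu-1)!$ times its leading coefficient $\mu^b/b!$, and then match the resulting series $\sum_b \tfrac{\mu^{b-1}}{b!}e^{\mu x}$ against the Lambert expansion \eqref{eqn:Lambert:expansion} and the inverted spectral curve. The only cosmetic difference is that you extract the closed form $h^{r,\theta}_{0;\mu}=\mu^{b-2}/b!$ before summing, whereas the paper manipulates the generating series directly via the substitutions $\mu=2m+1$, $m=sb$.
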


\begin{proof}
	Recall that $f(l) = e^{u^{2s} P^{2s}_{\mu} (l)}$, where $P^{2s}_{\mu} (l) = \frac{l^{2s+1} - (l - \mu)^{2s+1}}{2s+1}$ is a polynomial in $l$ of degree $2s$ with leading coefficient $\mu$. Hence,
	\begin{align*}
		F_{0,1}^{r,\theta}(e^x)
		=
		\sum_{\substack{\mu > 0 \\ \mu \text{ odd}}} h_{0;\mu}^{r,\theta} e^{\mu x} 
		= \sum_{\substack{\mu > 0 \\ \mu \text{ odd}}} \frac{[ u^{\mu-1} ]}{\mu^2 (\mu-1)!} (-\Delta )^{\mu-1} f ( 1 ) e^{\mu x}
		= \sum_{m=0}^\infty \frac{[u^{2m} ]}{(2m+1)^2(2m)!} \Delta^{2m} f ( 1 ) e^{(2m+1) x}.
	\end{align*}
	Here we wrote $\mu = 2m+1$. Because $u$ only occurs in this formula in combinations $u^{2s} P^{2s}_{2m+1} (l)$, we find that $m$ must be divisible by $s$, so we write $m = sb$ to get the following formula.
	\begin{align*}
		F_{0,1}^{r,\theta}(e^x)
		&=
		\sum_{b=0}^\infty \frac{1}{(2sb+1)(2sb+1)!b!} \Delta^{2sb} (P^{2s}_{2sb+1})^b(1) e^{(2sb+1) x}
		\\
		&= \sum_{b=0}^\infty \frac{1 }{(2sb+1)(2sb+1)!b!} (2sb)! (2sb+1)^b e^{(2sb+1) x}
		\\
		&= \sum_{b=0}^\infty \frac{(2sb+1)^{b-2}}{b!} e^{(2sb+1) x}\,,
	\end{align*}
	using that $\Delta^d$ acting on a polynomial in $l$ of degree $d$ with leading coefficient $A$, followed by the evaluation at $l = 1$, gives $d! A$. Therefore,
	\begin{equation*}
		dF_{0,1}^{r,\theta}(e^x) = \sum_{b=0}^\infty \frac{(2sb+1)^{b-1}}{b!} e^{(2sb+1) x} dx .
	\end{equation*}
	Using the expansion formula \eqref{eqn:Lambert:expansion}, we get
	\begin{equation*}
	\begin{split}
		dF_{0,1}^{r,\theta}(e^x)
		& =
		e^x \sum_{b=0}^\infty \frac{(2bs+1)^{b-1}}{b!} e^{2bs x} dx
		= e^x \sum_{b=0}^\infty \frac{\frac{1}{2s}(b+\frac{1}{2s})^{b-1}}{b!} \big(2s e^{2sx} \big)^b dx 
		\\
		& =e^x \biggl( \frac{W(-2se^{2sx})}{-2se^{2sx}} \biggr)^{\frac{1}{2s}} dx 
		\\
		& = \frac{1}{(-2s)^{1/2s}} W(-2se^{2sx})^{\frac{1}{2s}} dx,
	\end{split}
	\end{equation*}
	which coincides with the $(0,1)$-correlator $\omega_{0,1}^{r,\theta} = y \, dx$ associated to the spectral curve of~\cref{eqn:spin:Hurwitz:SC}.
\end{proof} 

Now let us consider the free energies for $n=1$ and arbitrary $g$. Using the same techniques as before,
\begin{equation}
	F_{g,1}^{r,\theta}(e^x)
	=
	\sum_{\substack{\mu > 0 \\ \mu \text{ odd}}} h_{g;\mu}^{r,\theta} e^{\mu x}
	=
	\sum_{m=0}^\infty \frac{[u^{2g+2m} ]}{2^g(2m+1)^2(2m)!}\Delta^{2m} f(1) e^{(2m+1) x}.
\end{equation}
Again, take $r = 2s$. Because $u$ only occurs in this formula in combinations $u^{2s} P^{2s}_{2m+1} (l)$, we get $m+g = sb$, for some $b \geq \ceil{\frac{g}{s}}$ which indexes the exponential Taylor expansion. Hence we can absorb the $m$-sum in terms of the new $b$-sum as
\begin{equation}
	F_{g,1}^{r,\theta}(e^x)
	=
	\sum_{b = \ceil{g/s}}^\infty \frac{1}{2^g(2sb - 2g+1)^2(2sb - 2g)!b!} \Delta^{2sb-2g} (P_{2sb - 2g + 1}^{2s})^b(1) e^{(2sb - 2g+1) x}
\end{equation}
Recall that $P_{2sb - 2g + 1}^{2s}$ is a polynomial in $l$ of degree $2s$:
\begin{equation}
\begin{split}
	P_{2sb - 2g + 1}^{2s}(l) &=  \sum_{t=0}^{2s} \binom{2s+1}{t} \frac{l^t}{(2s+1)} (-1)^t (2sb-2g+1)^{2s+1 - t}\\
	& = (2sb-2g+1)l^{2s} - (2sb-2g+1)^2 \frac{(2s)}{2!}l^{2s-1} + (2sb-2g+1)^3 \frac{(2s)(2s-1)}{3!}l^{2s-2} - \cdots
\end{split}
\end{equation}
Therefore, taking the $b$-th power, it gives a polynomial $(P_{2sb - 2g + 1}^{2s})^b(l) = \sum_{a=0}^{2sb} C_a l^{2sb - a}$ in $l$ of degree $2sb$ with coefficients
\begin{equation}
\begin{split}\label{eqn:Ca}
	C_0 & = (2sb-2g+1)^b, \\
	C_1 & = -\binom{b}{1} \binom{2s+1}{2}\frac{(2sb-2g+1)^{b+1}}{2s+1}, \\
	C_2 & = \binom{b}{1}  \binom{2s+1}{3}\frac{(2sb-2g+1)^{b+2}}{2s+1} + \binom{b}{2}  \binom{2s+1}{2}^2 \frac{(2sb-2g+1)^{b+2}}{(2s+1)^2}, \\
	& \;\; \vdots \\
	C_a & = (-1)^a
		\sum_{\lambda \vdash a} \binom{b}{\{ 
			\lambda^T_i - \lambda^T_{i+1} \}_{i\geq 1} }
		\Bigg( \prod_{i=1}^{\ell (\lambda)} \frac{1}{2s+1}\binom{2s+1}{\lambda_i + 1} \Bigg)
		(2sb-2g+1)^{a+b},
\end{split}
\end{equation}
where the multinomial coefficient is
\begin{equation}
	\binom{b}{\{  \lambda^T_i - \lambda^T_{i+1} \}_{i\geq 1} } \coloneqq \frac{b!}{(b - \ell (\lambda ))! \prod_{i\geq 1} (\lambda_i^T - \lambda_{i+1}^T)! }.
\end{equation}
We omit the dependence on $s,g,b$ for short. Applying the forward difference operators, only few of these terms are going to contribute, as $\Delta^{\nu}.l^{n} = 0$ whenever $\nu > n$. For us it means that only the terms for $a = 0, \dots, 2g$ contribute non-trivially. The expression $\Delta^{\nu}.l^{n}|_{l=1}$ can be explicitly computed in terms of Stirling numbers of the second kind.

\begin{lemma}
	Let $S(p,q)$ be the Stirling numbers of the second kind and let $h_m$ be the complete homogeneous symmetric polynomials of degree $m$. We have:
	\begin{equation}
		\Delta^{\nu}.l^n  \big{|}_{l=1}
		=
		\nu! \, h_{n-\nu}(x_1, \dots, x_{\nu + 1}) \big{|}_{x_i = i}
		=
		\nu! \, S(n+1,\nu+1).
	\end{equation}
\end{lemma}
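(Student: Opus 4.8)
The plan is to prove the two equalities separately, in each case by passing to a generating function in which both sides acquire a transparent closed form; the chain then follows by transitivity.

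For the equality $\Delta^{\nu}.l^n\big|_{l=1} = \nu!\,S(n+1,\nu+1)$ I would use exponential generating functions in $n$. Since $\Delta e^{lt} = e^{(l+1)t}-e^{lt} = (e^t-1)e^{lt}$, iterating gives $\Delta^{\nu}e^{lt} = (e^t-1)^{\nu}e^{lt}$, so evaluating at $l=1$ yields
\[
	\sum_{n\ge 0}\bigl(\Delta^{\nu}.l^n\big|_{l=1}\bigr)\frac{t^n}{n!} = (e^t-1)^{\nu}e^t.
\]
On the other side, the classical identity $\sum_{p\ge 0}S(p,q)\frac{t^p}{p!} = \frac{(e^t-1)^q}{q!}$, differentiated once in $t$ with $q=\nu+1$, gives $\sum_{n\ge 0}S(n+1,\nu+1)\frac{t^n}{n!} = \frac{(e^t-1)^{\nu}e^t}{\nu!}$. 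Multiplying by $\nu!$ reproduces the previous display coefficient by coefficient, which proves the identity.

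For the equality $\nu!\,h_{n-\nu}(1,\dots,\nu+1) = \nu!\,S(n+1,\nu+1)$ I would instead use ordinary generating functions. The standard series for complete homogeneous symmetric polynomials, specialised at $x_i=i$, reads $\sum_{m\ge 0}h_m(1,\dots,\nu+1)\,t^m = \prod_{i=1}^{\nu+1}(1-it)^{-1}$, while the generating series for Stirling numbers of the second kind with fixed lower index is $\sum_{p\ge 0}S(p,\nu+1)\,t^p = t^{\nu+1}\prod_{i=1}^{\nu+1}(1-it)^{-1}$. Comparing the coefficient of $t^{n+1}$ gives $S(n+1,\nu+1) = h_{n-\nu}(1,\dots,\nu+1)$, as required. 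Alternatively, this first equality can be obtained without generating functions by recognising $\tfrac{1}{\nu!}\Delta^{\nu}.l^n\big|_{l=1}$ as the divided difference of $x^n$ at the equally-spaced nodes $1,2,\dots,\nu+1$ and invoking the classical formula expressing the divided difference of the monomial $x^n$ at nodes $a_0,\dots,a_\nu$ as $h_{n-\nu}(a_0,\dots,a_\nu)$.

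The computations are entirely routine, so there is no genuine obstacle beyond bookkeeping. The one point demanding a little care is the index shift: the Stirling data naturally carries the lower index $\nu+1$ together with an upper index $p$ that is one larger than $n$, so one must correctly account for the single $t$-derivative (in the EGF route) or the prefactor $t^{\nu+1}$ (in the OGF route) that converts $S(p,\cdot)$ into the quantities $\Delta^{\nu}.l^n\big|_{l=1}$ and $h_{n-\nu}(1,\dots,\nu+1)$.
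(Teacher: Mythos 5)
Your proof is correct, but it takes a genuinely different route from the paper's. The paper proves the first equality directly: it iterates the Leibniz rule for the forward difference operator to obtain the polynomial identity
\begin{equation*}
	\frac{1}{\nu!}\Delta^{\nu}.l^n
	=
	\sum_{0 \leq i_1 \leq \dots \leq i_{n-\nu} \leq \nu} \prod_{j=1}^{n-\nu} (l + i_j)
	=
	h_{n-\nu}(l, l+1, \dots, l+\nu),
\end{equation*}
valid for all $l$, then evaluates at $l=1$ and simply cites the identity $h_{n-\nu}(1,\dots,\nu+1) = S(n+1,\nu+1)$ as well known. You instead prove the two outer equalities of the chain --- $\Delta^{\nu}.l^n\big|_{l=1} = \nu!\,S(n+1,\nu+1)$ via the exponential generating function $(e^t-1)^{\nu}e^t$, and $h_{n-\nu}(1,\dots,\nu+1) = S(n+1,\nu+1)$ via the ordinary generating functions $\prod_{i=1}^{\nu+1}(1-it)^{-1}$ and $t^{\nu+1}\prod_{i=1}^{\nu+1}(1-it)^{-1}$ --- and recover the middle term by transitivity. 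Both verifications are accurate, and the index shift (the derivative in the EGF route, the prefactor $t^{\nu+1}$ in the OGF route) is handled correctly. Your route has the advantage of being self-contained: it proves the Stirling identity that the paper only quotes. What it loses is the explicit combinatorial formula for $\Delta^{\nu}.l^n$ as a polynomial in $l$, which the paper's Leibniz iteration produces along the way. Your closing remark about divided differences --- the classical formula expressing the divided difference of $x^n$ at nodes $a_0,\dots,a_\nu$ as $h_{n-\nu}(a_0,\dots,a_\nu)$, specialised to equally spaced nodes --- carries the same content as the paper's intermediate identity, so that alternative essentially reconstructs the paper's argument.
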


\begin{proof}
	Apply iteratively the Leibniz rule for the forward difference operator
	\begin{equation*}
		(\Delta.(f\cdot g))(l) = (\Delta.f)(l) \cdot g(l) + f(l + 1) (\Delta.g)(l)
	\end{equation*}
	to obtain
	\begin{equation*}
		\frac{1}{\nu!}\Delta^{\nu}.l^n   = \sum_{0 \leq i_1 \leq \dots \leq i_{n-\nu} \leq \nu} \prod_{j=1}^{n-\nu} (l + i_j),
	\end{equation*}
	which by definition is the symmetric complete homogeneous polynomial $h_{n-\nu}(x_1, x_2, \dots, x_{\nu+1})$ evaluated at $x_i = l+i-1$. The further evaluation at $l=1$ yields $h_{n-\nu}(1, \dots, \nu+1)$, which is well-known in the literature to be the Stirling number of second kind $S(n+1,\nu+1)$.
\end{proof}

We obtain:

\begin{proposition}\label{prop:one:part:single:spin:HNs}
	For $r = 2s$, the free energy $F_{g,1}^{r,\theta}$ is given by 
	\begin{equation}
		F_{g,1}^{r,\theta}(e^x)
		= 
		\sum_{b = \ceil{g/s}}^\infty \frac{1}{2^g(2sb - 2g + 1)^2b!} \left( \sum_{a=0}^{2g} C_a S(2sb-a + 1, 2sb-2g + 1) \right) e^{(2sb - 2g+1) x},
	\end{equation}
	where the $S(p,q)$ are Stirling numbers of second kind, and the coefficients $C_a$ are expressed in \eqref{eqn:Ca}. Equivalently, the one-part spin Hurwitz numbers with $(r+1)$-completed cycles of genus $g$ and degree $\mu$ have the following closed formula:
	\begin{equation}
		h^{r,\theta}_{g;\mu}
		=
		\frac{1}{2^g \mu^2 b!} \left( \sum_{a=0}^{2g} C_a S(\mu + (2g-a), \mu) \right),
	\end{equation}
	where by Riemann--Hurwitz $b = \frac{2g - 1 + \mu}{r}$.
\end{proposition}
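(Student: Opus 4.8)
The plan is to assemble the three ingredients that the preceding discussion has already prepared: the expansion of $F_{g,1}^{r,\theta}$ as a $b$-sum whose summand is proportional to $\Delta^{2sb-2g}(P_{2sb-2g+1}^{2s})^b(1)$, the polynomial expansion $(P_{2sb-2g+1}^{2s})^b(l) = \sum_{a=0}^{2sb} C_a l^{2sb-a}$ with the coefficients $C_a$ of \eqref{eqn:Ca}, and the forward-difference lemma above identifying $\Delta^{\nu}.l^n\big|_{l=1}$ with a Stirling number of the second kind. First I would substitute the polynomial expansion into the displayed $b$-sum and use the linearity of the forward difference operator to write
\begin{equation*}
	\Delta^{2sb-2g}(P_{2sb-2g+1}^{2s})^b(1)
	=
	\sum_{a=0}^{2sb} C_a \, \Delta^{2sb-2g}.l^{2sb-a}\big|_{l=1}.
\end{equation*}

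Next I would apply the lemma with $\nu = 2sb-2g$ and $n = 2sb-a$, obtaining $\Delta^{2sb-2g}.l^{2sb-a}\big|_{l=1} = (2sb-2g)! \, S(2sb-a+1,\,2sb-2g+1)$. The crucial simplification is that $\Delta^{\nu}.l^n$ vanishes whenever $\nu > n$, so the $a$-th summand is zero unless $2sb-a \ge 2sb-2g$, i.e.\ $a \le 2g$; the inner sum therefore truncates at $2g$ independently of $b$. Collecting the factor $(2sb-2g)!$, this collapses the forward difference to $(2sb-2g)! \sum_{a=0}^{2g} C_a\, S(2sb-a+1,\,2sb-2g+1)$. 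Substituting back into the expansion of $F_{g,1}^{r,\theta}$, the factor $(2sb-2g)!$ cancels against the $(2sb-2g)!$ in the denominator, leaving exactly the claimed series for $F_{g,1}^{r,\theta}$.

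Finally, to extract the closed formula for the individual numbers I would read off the coefficient of $e^{(2sb-2g+1)x}$ in $F_{g,1}^{r,\theta}(e^x) = \sum_{\mu} h_{g;\mu}^{r,\theta} e^{\mu x}$ and set $\mu = 2sb-2g+1$. Then $(2sb-2g+1)^2 = \mu^2$, the second argument of the Stirling number is $2sb-2g+1 = \mu$, and the first argument is $2sb-a+1 = \mu + (2g-a)$; the Riemann–Hurwitz relation $b = (2g-1+\mu)/r$ is precisely what makes $\mu = 2sb-2g+1$ consistent. This yields $h^{r,\theta}_{g;\mu} = \frac{1}{2^g \mu^2 b!}\sum_{a=0}^{2g} C_a\, S(\mu+(2g-a),\mu)$, as stated. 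I do not expect a genuine obstacle here: all the combinatorial substance is carried by the already-established expansion \eqref{eqn:Ca} of the coefficients $C_a$ and by the Stirling-number lemma, and the only points demanding care are the index bookkeeping that forces the truncation $a \le 2g$ and the clean cancellation of the factorial $(2sb-2g)!$.
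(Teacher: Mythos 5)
Your proposal is correct and follows exactly the route the paper intends: the paper states this proposition with "We obtain:" immediately after establishing the $b$-sum expansion of $F_{g,1}^{r,\theta}$, the coefficient expansion \eqref{eqn:Ca}, the truncation remark $\Delta^{\nu}.l^n = 0$ for $\nu > n$, and the Stirling-number lemma, and your write-up simply assembles these ingredients in the same way. The index bookkeeping (truncation at $a \le 2g$, cancellation of $(2sb-2g)!$, and the identification $\mu = 2sb-2g+1$ consistent with Riemann--Hurwitz) is handled correctly.
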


\begin{example}
	For instance, for $g=1$ we obtain for $\mu = 2sb - 1$:
	\begin{equation}
		h^{r,\theta}_{1;\mu}
		=
		\frac{1}{2 \mu^2 b!}
		\left(
			\mu^{b} S(\mu+2,\mu)
			- sb \mu^{b+1} S(\mu+1,\mu)
			+ \left(b \frac{s(2s-1)}{3} + \frac{b(b-1)}{2} s^2 \right) \mu^{b+2} S(\mu, \mu)
		\right).
	\end{equation}
	Stirling numbers of the form $S(\mu + \ell, \mu)$ for $\ell \in \N$ are related to the Stirling polynomials $S_l(x)$ by
	\begin{equation}
		S(\mu + \ell, \mu) = (-1)^\ell \cdot \binom{\mu + \ell}{\mu} S_{\ell}(-(\mu + 1)),
		\qquad\quad
		\left( \frac{t}{1 - e^{-t}} \right)^{x+1} = \sum_{\ell=0}^{\infty} S_{\ell}(x) \frac{t^\ell}{\ell!},
	\end{equation}
	from which we obtain the following closed formula:
	\begin{equation}\label{eqn:g1:n1:closed}
		h^{r,\theta}_{1;\mu}
		=
		\frac{s^2}{12} \frac{\mu^{b-1}}{(b-1)!} \left( \mu + \frac{1}{s} \right) \,.
	\end{equation}
\end{example}

In the following we compute the genus one spin Hurwitz numbers via topological recursion, implemented through the software \texttt{Mathema\-tica}. Employing the symmetric properties of the spectral curve \eqref{eqn:spin:Hurwitz:SC}, the Taylor expansion of the Galois involutions around each ramification points (cf. \cref{sec:TR:CohFTs}) reads:
\begin{equation}
	\sigma_i(z) = J^i \sigma(J^{-i} z),
	\qquad
	\sigma_0(z)
	=
	a_0 -(z-a_0)
	-(2s)^{\frac{1}{2s}} \tfrac{2s-3}{3}(z-a_0)^2
	-(2s)^{\frac{2}{2s}} \tfrac{(2s-3)^2}{9}(z-a_0)^3
	+\mc{O}\bigl((z-a_0)^4\bigr),
\end{equation}
where $a_i \coloneqq \frac{J^i}{(2s)^{\frac{1}{2s}}}$, $J \coloneqq e^{\frac{\iu\pi}{s}}$. As a consequence, the $(1,1)$-correlators turn out to be
\begin{equation}
	\omega_{1,1}^{r,\theta}(z)
	=
	d \sum_{i=0}^{s-1} \frac{1}{a_i} \, F\left( r;\frac{z}{a_i} \right),
	\qquad
	F(r;z)
	\coloneqq
	\frac{ z (z^4 -s z^2 + 1+s) }{24s (1 - z^2)^3}.
\end{equation}
A simple computation shows that
\begin{equation}
\begin{split}
	\omega_{1,1}^{r,\theta}(z)
	& =
	d \left(
		\left( \frac{sz}{1 - 2s z^{2s}} \frac{d}{dz} + 1 \right) \frac{s z^{2s-1}}{12(1-2s z^{2s})}
	\right) =
	d \sum_{b \ge 1} \frac{s^2}{12} \frac{\mu^{b-1}}{(b-1)!} \left( \mu + \frac{1}{s} \right) e^{\mu x(z)},
\end{split}
\end{equation}
where we set $\mu = 2sb -1$. In the last equation, we used the expansion of the Lambert function \eqref{eqn:Lambert:expansion} and some algebraic manipulation that we omit. In particular, this confirms \cref{conj:spin:HNs:TR} for $(g,n) = (1,1)$.

\section{Intermezzo: equivariant topological recursion}
\label{sec:equiv:TR}

In this section, we develop a theory of topological recursion for spectral curves with symmetries. The spectral curve \cref{eqn:spin:Hurwitz:SC} is a specific example of such a spectral curve with symmetries, and we will use the theory developed in this section to find the ELSV-type formula associated with this spectral curve in \cref{sec:spin:HNs:ELSV}. Note that the spectral curves considered here do not fit the mold of \cref{defn:spectral:curve}, nor of any other extension of topological recursion we are aware of.

\subsection{Global to local spectral curves with symmetries}

\begin{definition}\label{defn:equiv:SC}
	Let $G$ be a finite group. A \emph{$G$-equivariant spectral curve} $\mathcal{S} = (\Sigma,\phi, x,y,B, \chi, \upsilon, \beta)$ consists of
	\begin{itemize}
		\item a Riemann surface $\Sigma$ (not necessarily compact or connected) with a free action $ \phi \colon G \times \Sigma \to \Sigma $, which we will often write $\phi (\gamma,x) = \phi_\gamma x = \gamma x$;
		\item a function $x \colon \Sigma \to \C$ such that its differential $dx$ is meromorphic and has finitely many zeros $a_1,\dots,a_r$ that are simple;
		\item a meromorphic function $y \colon \Sigma \to \C$ such that $d y$ does not vanish at the zeros of $d x$;
		\item a symmetric bidifferential $B$ on $\Sigma \times \Sigma$;
		\item three one-dimensional representations $ \chi, \upsilon, \beta \colon G \to \C^\times $,
	\end{itemize}
	such that for any $ \gamma \in G$
	\begin{equation}
		dx (\gamma z) = \chi_\gamma \, dx (z) \,, \qquad y (\gamma z) = \upsilon_\gamma \, y(z)\,, \qquad B(\gamma z_1,z_2) = \beta_\gamma \, B(z_1,z_2)\,,
	\end{equation}
	and $ B(z_1,z_2) - B^{G,\beta}(z_1,z_2) $ is regular as $ z_1 \to \gamma z_2$, where
	\begin{equation}
		B^{G,\beta}(z_1,z_2) \coloneqq \frac{1}{|G|} \sum_{\eta \in G} \beta_\eta^{-1} \frac{d (\eta z_1) dz_2}{(\eta z_1 - z_2)^2}\,.
	\end{equation}

	The \emph{topological recursion} is defined by the usual \cref{eqn:TR,eqn:TR:kernel}.
\end{definition}

\begin{example}
	The spectral curve in \cref{conj:spin:HNs:TR} is a $ \Z/2\Z$-equivariant spectral curve, with $ \phi_1 z = -z$, $ \chi $ the trivial representation, and $ \upsilon = \beta $ the sign representation. 
\end{example}

\begin{remark}
	For $ \Sigma = \P^1$, the usual $B(z_1,z_2)= \frac{dz_1 \, dz_2}{(z_1-z_2)^2} $ is invariant under simultaneous transformation of $ z_1$ and $z_2$ under $\Aut (\P^1) =  \mathord{PGL}(2,\C )$. The $B^{G,\beta}$ is only invariant under the centraliser of $ G$ in $ \Aut (\P^1)$.
\end{remark}

\begin{lemma}
	For any $G$-equivariant spectral curve as in \cref{defn:equiv:SC}, we have $\beta^2 = 1$, i.e. $ \beta \colon G \to \{ \pm 1\}$.
\end{lemma}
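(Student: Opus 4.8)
The plan is to compare the biresidue of $B$ along the two ``twisted diagonal'' divisors $D_\gamma = \{(z_1,z_2) \mid z_1 = \gamma z_2\}$ and $D_{\gamma^{-1}} = \{(z_1,z_2) \mid z_2 = \gamma z_1\}$ in $\Sigma \times \Sigma$, exploiting the symmetry of $B$. Since $B$ is symmetric, the involution $S \colon (z_1,z_2) \mapsto (z_2,z_1)$ of $\Sigma \times \Sigma$ satisfies $S^\ast B = B$. The map $S$ interchanges $D_\gamma$ and $D_{\gamma^{-1}}$, and the biresidue of a bidifferential along a smooth divisor is invariant under biholomorphisms; hence the biresidue of $B$ along $D_\gamma$ equals its biresidue along $D_{\gamma^{-1}}$. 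It then suffices to compute both biresidues from the local model and observe that they differ by $\beta_\gamma$ versus $\beta_\gamma^{-1}$.

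First I would read off the biresidue along $D_\gamma$ from $B^{G,\beta}$. By hypothesis $B - B^{G,\beta}$ is regular as $z_1 \to \gamma z_2$, so near $D_\gamma$ the bidifferential $B$ has the same singular part as $B^{G,\beta}$. Among the summands of $B^{G,\beta}$, the $\eta$-term $\frac{1}{|G|}\beta_\eta^{-1}\frac{d(\eta z_1)\,dz_2}{(\eta z_1 - z_2)^2}$ is singular precisely where $\eta z_1 = z_2$; on $D_\gamma$ this reads $\eta \gamma z_2 = z_2$, and since the action $\phi$ is free, this forces $\eta\gamma = e$, i.e.\ $\eta = \gamma^{-1}$. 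Thus the only contribution is $\frac{1}{|G|}\beta_{\gamma^{-1}}^{-1}\frac{d(\gamma^{-1}z_1)\,dz_2}{(\gamma^{-1}z_1 - z_2)^2}$, and in the local coordinate $u = \gamma^{-1}z_1$ (together with $z_2$) this becomes $\frac{\beta_{\gamma^{-1}}^{-1}}{|G|}\frac{du\,dz_2}{(u - z_2)^2}$. Hence the biresidue of $B$ along $D_\gamma$ is $\beta_{\gamma^{-1}}^{-1}/|G| = \beta_\gamma/|G|$, using that $\beta$ is a homomorphism so $\beta_{\gamma^{-1}} = \beta_\gamma^{-1}$. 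Running the identical computation with $\gamma$ replaced by $\gamma^{-1}$ gives the biresidue along $D_{\gamma^{-1}}$ as $\beta_{\gamma^{-1}}/|G| = \beta_\gamma^{-1}/|G|$.

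Equating the two biresidues through the symmetry of $B$ then yields $\beta_\gamma/|G| = \beta_\gamma^{-1}/|G|$, so that $\beta_\gamma^2 = 1$ for every $\gamma \in G$, which is exactly the claim. (As a sanity check, this is consistent with the example preceding the statement, where $G = \Z/2\Z$ and $\beta$ is the sign representation.)

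The one point requiring genuine care is the justification that the biresidue is a well-defined, coordinate-independent invariant of $B$ along each $D_\gamma$, so that the passage from the coordinate $u = \gamma^{-1}z_1$ back to the computation on the honest diagonal is legitimate. This rests on the standard fact that under a change of local coordinate $t = \phi(s)$ one has $\frac{dt_1\,dt_2}{(t_1 - t_2)^2} = \frac{ds_1\,ds_2}{(s_1 - s_2)^2} + (\text{regular})$, so the coefficient of the double pole is intrinsic and unchanged; the rest of the argument is bookkeeping, namely isolating the unique singular summand of $B^{G,\beta}$ via freeness of the action and tracking the homomorphism identity $\beta_{\gamma^{-1}} = \beta_\gamma^{-1}$.
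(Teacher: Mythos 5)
Your proposal is correct and is essentially the paper's own argument in different packaging: both proofs derive $\beta_\gamma = \beta_\gamma^{-1}$ by comparing the double-pole coefficient of the symmetric bidifferential $B$ (equivalently, of $B^{G,\beta}$, whose polar parts agree with those of $B$ by hypothesis) along $z_1 = \gamma z_2$ with that along $z_1 = \gamma^{-1} z_2$, using freeness of the action to isolate the unique singular summand and the coordinate-invariance of the double pole under simultaneous reparametrisation. The paper organises this as a chain of identities on the kernel $B^{G,\beta}$ itself, whereas you extract the two scalar biresidues $\beta_\gamma/|G|$ and $\beta_\gamma^{-1}/|G|$ and equate them via the swap $(z_1,z_2) \mapsto (z_2,z_1)$; the mechanism is the same.
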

\begin{proof}
	It is sufficient to consider the polar part $ B^{G,\beta}$ as $z_1$ approaches any element of the $G$-orbit of $ z_2$. Let $ \gamma \in G$.
	\begin{equation*}
	\begin{split}
		\beta_\gamma B^{G,\beta}(z_1,z_2) &= B^{G,\beta}(\phi_\gamma z_1,z_2) = \frac{1}{|G|} \sum_{\eta \in G} \beta_\eta^{-1} \frac{d (\phi_\eta \phi_\gamma z_1) dz_2}{(\phi_\eta \phi_\gamma z_1 - z_2)^2}
		\\
		&= \frac{1}{|G|} \sum_{\eta \in G} \beta_\eta^{-1} \frac{d (\phi_\gamma \phi_{\gamma^{-1}\eta \gamma} z_1) dz_2}{(\phi_\gamma \phi_{\gamma^{-1}\eta\gamma} z_1 - z_2)^2}
		\\
		&\sim \frac{1}{|G|} \sum_{\eta' \in G} \beta_{\gamma \eta' \gamma^{-1}}^{-1} \frac{d (\phi_{\eta'} z_1) d(\phi_\gamma^{-1} z_2)}{(\phi_{\eta'} z_1 - \phi_\gamma^{-1}z_2)^2}
		\\
		&= B^{G,\beta}(z_1, \phi_\gamma^{-1} z_2) = \beta_\gamma^{-1} B^{G,\beta}(z_1,z_2)\,,
	\end{split}
	\end{equation*}
	using that $\beta$ is one-dimension, and where $ \sim$ means `equality up to regular terms'. From this it follows that $ \beta_\gamma^2 = 1$ for all $\gamma \in G$, so $ \beta^2 = 1$.
\end{proof}

Let $ a$ be a ramification point. Then $ dx(\gamma a) = \chi_\gamma dx(a) = 0$, so $\gamma a $ is a ramification point as well. Choose a local coordinate $ \zeta_e $ such that $ \zeta_e(a) = 0$ and $ x(z) = \zeta_e(z)^2 + x(a)$ around $z=a$ (we call such a coordinate \emph{adapted to $x$ at $a$}). Also choose a square root $ \sqrt{\chi}$, i.e. a function $ \sqrt{\chi} \colon G \to \C^\times $ such that $ (\sqrt{\chi})_\gamma^2 = \chi_\gamma$. We also require $ (\sqrt{\chi})_e = 1$.

\smallskip

Then define, near $\gamma a$,  $ \zeta_\gamma (z) = (\sqrt{\chi})_\gamma \zeta_e(\gamma^{-1}z)$. We get $ \zeta_\gamma(\gamma a) = 0$ and, as $ x(z) = \chi_\gamma x(\gamma^{-1} z) + C(z)$ for some locally constant $ C $, around $ z= \gamma a$ we get 
\begin{equation}
	x(z) = \chi_\gamma x(\gamma^{-1}z) + C(z) = \chi_\gamma \zeta_e(\gamma^{-1} z)^2 + x(a) +C(z) = \zeta_\gamma (z)^2 + x(\gamma a) \,.
\end{equation}
Therefore, $ \zeta_\gamma$ is a local coordinate adapted to $ x$ at $\gamma a$. Note that the ambiguity in the choice of $ \sqrt{\chi}$ corresponds to the fact that $ -\zeta_\gamma $ is also a local coordinate adapted to $x$ at $ \gamma a$.

\smallskip

If we define the \emph{times} as the Taylor coefficients of $ y$ in these coordinates via 
\begin{equation}
	y(z )\eqqcolon \sum_{k \geq 0} h^\gamma_k \zeta_\gamma^k(z) \quad \textup{near } z = \gamma a\,,
\end{equation}
then around $ z = \gamma a$,
\begin{equation}
	\sum_{k\geq 0} h^\gamma_k \zeta_-^k(z) = y(z)
	=
	\upsilon_\gamma y( \gamma^{-1} z)
	=
	\upsilon_\gamma \sum_{k \geq 0} h^e_k \zeta^k_e (\gamma^{-1}z)
	=
	\sum_{k \geq 0} (\upsilon_\gamma (\sqrt{\chi})_\gamma^{-k} h^e_k) \zeta_-^k (z)\,,
\end{equation}
so $ h^\gamma_k = \upsilon_\gamma  (\sqrt{\chi})_\gamma^{-k} h^e_k$.

\smallskip

Similarly, define \emph{jumps} as the Taylor coefficients of $B$ in these coordinates via 
\begin{equation}
	B(z_1, z_2)
	\eqqcolon
	\frac{\beta_{\gamma \eta}}{|G|} \frac{d\zeta_\gamma (z_1) \, d\zeta_\eta (z_2)}{(\zeta_\gamma (z_1) - \zeta_\eta (z_2))^2}
	+
	\sum_{k,l\geq 0} B^{\gamma,\eta}_{k,l} \zeta_\gamma^k(z_1) \zeta_\eta^l(z_2) d\zeta_\gamma(z_1) d\zeta_\eta (z_2)
	\quad \textup{near } z_1 = \gamma a, z_2 = \eta a\,.
\end{equation}
This expansion is justified by the assumption $ B = B^{G,\beta} + \textup{reg}$. Then by expanding around $ z_1 = \gamma a$ and $z_2 = \eta a$, we find $ B^{\gamma,\eta}_{k,l} = \beta_{\gamma\eta} (\sqrt{\chi})_\gamma^{-k-1} (\sqrt{\chi})_\eta^{-l-1} B^{e,e}_{k,l}$.

\smallskip

This analysis is local around $G$-orbits of ramification points. If we have $s|G|$ ramification points $ \{a_i \}_{i \in I}$ such that the index set $I$ has a free $G$-action $ G \times I \to I \colon (\gamma,i) \mapsto \gamma i$ and $ \gamma a_i = a_{\gamma i}$, then in the same way we can choose a system of representatives $ \{ a_i \}_{i \in \bar{I}} $ of the $G$-action and local coordinates $ \zeta_i$ adapted to $x$ around $ a_i$ such that $ \zeta_{\gamma i}(z) = (\sqrt{\chi})_\gamma \zeta_i (\phi_\gamma^{-1}z)$ for $ i \in \bar{I}$. And then with the expansions
\begin{align}
	y(z)
	&\eqqcolon
	\sum_{k \geq 0} h^i_k \zeta_i^k(z) \qquad\text{near } z = a_i; \\
	B(z_1, z_2)
	&\eqqcolon
	\frac{ \sum_{g \in G} \beta_{g} \delta_{i,gj} }{|G|}
		\frac{d\zeta_i(z_1) \, d\zeta_j(z_2)}{(\zeta_i(z_1) - \zeta_j(z_2))^2}
	+\!\!
	\sum_{k,l\geq 0} B^{i,j}_{k,l} \zeta_i^k(z_1) \zeta_j^l(z_2) d\zeta_i(z_1) d\zeta_j(z_2)
	\quad\text{near } z_1 = a_i, z_2 = a_j\, 
\end{align}
we get the relations
\begin{equation}\label{eqn:equiv:time:and:jump}
	h^{\gamma i}_k = \upsilon_\gamma (\sqrt{\chi})_\gamma^{-k} h^i_k\,,
	\qquad 
	B^{\gamma i,\eta j}_{k,l} = B^{\eta j, \gamma i}_{l,k} = \beta_{\gamma\eta} (\sqrt{\chi})_\gamma^{-k-1} (\sqrt{\chi})_\eta^{-l-1} B^{i,j}_{k,l}\,,
\end{equation}
for $ i,j \in \bar{I}$. This is the structure for the local $G$-equivariant spectral curve, as defined in the next subsection.

\subsection{The quotient of a local equivariant curve}

\begin{definition}\label{defn:local:equiv:SC}
	For a fixed $s \ge 1$, a \emph{local $G$-equivariant spectral curve $\mc{S}_G$} is given by the following data: we fix an index set $I $ of size $s|G|$, with free $G$-action and system of representatives $ \bar{I}$, and as above three one-dimensional representations $ \chi, \upsilon, \beta $ of $G$ such that $ \beta^2 = 1$, along with a square root $ \sqrt{\chi}$. The curve is $ \bigsqcup_{i \in I} \Spec \C \bbraket{\zeta_i}$, with $ \zeta_{\gamma i} = (\sqrt{\chi})_\gamma \zeta_i$ for $ i \in \bar{I}$, and
	\begin{equation}\label{eqn:Blocal:x:y}
		x(\zeta_i ) = \zeta_i^2 + \alpha_i,
		\qquad
		y(\zeta_i ) = \sum_{k \ge 0} h_{k}^i \zeta_i^k\,,
	\end{equation}
	and
	\begin{equation}\label{eqn:Blocal:B}
		B(\zeta_{i,1},\zeta_{j,2})
		=
		\frac{\sum_{\gamma \in G} \beta_\gamma \delta_{i, \gamma j}}{|G|} \frac{d\zeta_{i,1} d\zeta_{j,2}}{(\zeta_{i,1} - \zeta_{j,2})^2}
		+
		\sum_{k,l \ge 0} B^{i,j}_{k,l} (\zeta_{i,1})^k (\zeta_{j,2})^l d\zeta_{i,1} d\zeta_{j,2}
	\end{equation}
	with the conditions given in \cref{eqn:equiv:time:and:jump}.
\end{definition}

To lighten notation, we may write $ \zeta $ for any $ \zeta_i$ if the index is not important. For multidifferentials, we may also use an index in $ \{ 0,\dotsc, n\}$ to denote which argument we mean, so we explicitly choose $I \cap \N = \emptyset$ to avoid confusion. We define the topological recursion correlators as usual:
\begin{equation}\label{eqn:BTR}
	\omega_{g,n+1}(\zeta_0, \zeta_{\bbraket{n}})
	=
	\sum_{i \in I} \Res_{\zeta_i \to 0} K(\zeta_0,\zeta_i )
	\bigg(
		\omega_{g-1,n+2}(\zeta_i,-\zeta_i, \zeta_{\bbraket{n}})
		+ \!\!\!\!\!
		\sum_{\substack{g_1+g_2=g\\ J_1 \sqcup J_2 = \bbraket{n}}}^{\text{no }(0,1)} \!\!\!\!\!\!
			\omega_{g_1,1+|J_1|}(\zeta_i,\zeta_{J_1})
			\omega_{g_2,1+|J_2|}(-\zeta_i,\zeta_{J_2})\!
	\bigg),
\end{equation}
where
\begin{equation}\label{eqn:B:recursion:kernel}
	K(\zeta_0,\zeta_i) \coloneqq \frac{ \frac{1}{2} \int_{-\zeta_i}^{\zeta_i} B(\zeta_0, \mathord{\cdot})}{\bigl( y(\zeta_i) - y(-\zeta_i) \bigr) dx(\zeta_i)}\,.
\end{equation}

\begin{lemma}\label{lem:antisymmetry:correlators}
	All of the $ \omega_{g,n}$ constructed via \eqref{eqn:BTR} (i.e those with $ 2 g -2 + n >0$) are $G$-equivariant in any of the individual variables: for any $ 1 \leq m \leq n$, 
	\begin{equation}
		\omega_{g,n}(\zeta_{\bbraket{m-1}}, \zeta_{\gamma i,m}, \zeta_{\bbraket{m+1,n}}) = \beta_\gamma \omega_{g,n}(\zeta_{\bbraket{m-1}}, \zeta_{i,m}, \zeta_{\bbraket{m+1,n}})\,.
	\end{equation}
\end{lemma}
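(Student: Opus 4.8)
The plan is to prove this by induction on the Euler characteristic $2g-2+n > 0$, exploiting the recursive structure of \eqref{eqn:BTR} together with the transformation properties of the building blocks $x$, $y$, and $B$ recorded in \cref{eqn:equiv:time:and:jump}. The key observation is that the topological recursion kernel $K(\zeta_0, \zeta_i)$ and the summation over ramification points both interact cleanly with the free $G$-action on the index set $I$, so that a change of representative $i \mapsto \gamma i$ can be absorbed into a relabelling of the residue sum at the cost of exactly one factor of $\beta_\gamma$.

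First I would establish the base cases. For $(g,n) = (0,3)$ and $(1,1)$, the correlators are computed directly from the local data via one application of \eqref{eqn:BTR}, and the claim reduces to checking the equivariance of the kernel and of $\omega_{0,2}$-type inputs under the relabelling. The essential input is how $K$ transforms: since $\zeta_{\gamma i} = (\sqrt{\chi})_\gamma \zeta_i$, we have $dx(\zeta_{\gamma i}) = \chi_\gamma \, dx(\zeta_i)$ and $y(\zeta_{\gamma i}) - y(-\zeta_{\gamma i}) = \upsilon_\gamma (y(\zeta_i) - y(-\zeta_i))$ by \cref{eqn:equiv:time:and:jump}, while the integral of $B$ in the numerator picks up a factor $\beta_\gamma (\sqrt{\chi})_\gamma$ in the $\zeta_0$-slot. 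Tracking these factors through \eqref{eqn:B:recursion:kernel} gives a clean transformation rule for $K$ under simultaneous action on its two arguments.

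The inductive step is the heart of the argument. Suppose the equivariance holds for all correlators of smaller $2g-2+n$. To prove it for $\omega_{g,n}$ in the distinguished variable $\zeta_{\gamma i, m}$, I would substitute into \eqref{eqn:BTR} and perform the substitution $\eta \mapsto \gamma^{-1}\eta$ in the summation index over ramification points $I$, using the freeness of the $G$-action to ensure this is a bijection of $I$. The residue $\Res_{\zeta_\eta \to 0}$ is invariant under this relabelling because the local coordinates rescale by the unit $(\sqrt{\chi})_\gamma$. Each $\omega$-factor inside the bracket then either carries the distinguished variable in its argument list—where the inductive hypothesis supplies a factor $\beta_\gamma$—or does not, in which case it is unchanged; the Galois involution $\zeta \mapsto -\zeta$ commutes with the $G$-action since $-\zeta_{\gamma i} = (\sqrt{\chi})_\gamma (-\zeta_i)$.

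The main obstacle I anticipate is the careful bookkeeping of the representation factors $\chi$, $\upsilon$, $\beta$, and the chosen square root $\sqrt{\chi}$, ensuring that all of them cancel except for a single net factor of $\beta_\gamma$. In particular one must verify that the factors of $(\sqrt{\chi})_\gamma$ arising from $dx$, from the kernel's integration variable, and from the rescaling of local coordinates in the residue combine to exactly $1$, leaving only the $\beta_\gamma$ contributed by the bidifferential $B$. The relation $\beta^2 = 1$ established in the preceding lemma, together with the symplectic-type compatibility already built into \cref{eqn:equiv:time:and:jump}, should make this cancellation exact; the delicate point is simply to confirm that $B^{G,\beta}$'s transformation $B(\gamma z_1, z_2) = \beta_\gamma B(z_1, z_2)$ propagates correctly through the residue pairing that defines each recursion step.
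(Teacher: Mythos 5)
Your overall skeleton --- induction on $2g-2+n$, base case coming from the equivariance of $\omega_{0,2}=B$, the factor $\beta_\gamma$ entering through $B$ in the recursion kernel, and the induction hypothesis handling variables inside the bracket --- matches the paper's proof, and those ingredients do suffice. The gap is in the mechanism you propose for the inductive step: the substitution $\eta \mapsto \gamma^{-1}\eta$ in the sum over ramification points, together with the claim that this relabelling can absorb the change of representative "at the cost of exactly one factor of $\beta_\gamma$". That claim is false for a general $G$-equivariant curve, and the lemma is stated (and needed) without the assumption $\chi\cdot\upsilon=\beta$. The distinguished variable in the statement is an \emph{external} variable of $\omega_{g,n}$; it never interacts with the internal summation index, so no relabelling is needed at all. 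For $m>1$ the distinguished variable sits only inside the $\omega$-factors of the bracket in \eqref{eqn:BTR}, and the induction hypothesis supplies $\beta_\gamma$ directly; for $m=1$ it sits only in the first slot of the kernel \eqref{eqn:B:recursion:kernel}, whose numerator is linear in $B(\zeta_{\gamma i,0},\cdot)=\beta_\gamma B(\zeta_{i,0},\cdot)$ while the denominator does not involve $\zeta_0$, giving $K(\zeta_{\gamma i,0},\zeta_j)=\beta_\gamma K(\zeta_{i,0},\zeta_j)$. That two-case split is the entire proof.

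If you instead carry out the relabelling and then re-express every factor at $\gamma^{-1}\eta$ in terms of factors at $\eta$ using \eqref{eqn:equiv:time:and:jump}, the bookkeeping does not come out as you anticipate: the kernel's \emph{second} slot transforms as $K(\zeta_0,\zeta_{\gamma i}) = \beta_\gamma\chi_\gamma^{-1}\upsilon_\gamma^{-1}K(\zeta_0,\zeta_i)$, the two internal slots of the bracket contribute $\beta_\gamma^2=1$, and the net result is $\omega_{g,n}(\ldots,\zeta_{\gamma i,m},\ldots)=\chi_\gamma\upsilon_\gamma\,\omega_{g,n}(\ldots,\zeta_{i,m},\ldots)$. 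This is a genuinely different identity from the one you are trying to prove; the two agree only when $\chi\cdot\upsilon=\beta$ or when the correlators vanish. Indeed, comparing these two identities is precisely the mechanism of \cref{prop:B:equal:reduceded} (reduction to the quotient curve and vanishing unless $\chi\cdot\upsilon=\beta$), not of the present lemma, so the cancellation "leaving only the $\beta_\gamma$" that you hope for cannot occur in general. Dropping the relabelling entirely, and replacing your base cases $(0,3)$ and $(1,1)$ by the actual base case $\omega_{0,2}=B$ (which is what feeds the induction), turns your argument into the paper's proof.
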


\begin{proof}
	This is a standard induction argument. Noting that $ \omega_{0,1} $ does not occur in \cref{eqn:BTR}, and the base case $ \omega_{0,2} = B$ follows immediately from \cref{eqn:Blocal:x:y,eqn:Blocal:B,eqn:equiv:time:and:jump}.\par
	For the induction step, there are two cases: if $m > 1$, it follows immediately from the induction hypothesis and the shape of \eqref{eqn:BTR}. If $ m =1$, it follows from \eqref{eqn:BTR} together with
	\begin{equation*}
		K(\zeta_{\gamma i,0},\zeta_j) 
		\coloneqq \frac{\int_{-\zeta_j}^{\zeta_j} B(\zeta_{\gamma i,0}, \mathord{\cdot})}{2(\omega_{0,1}(\zeta_j) - \omega_{0,1}(-\zeta_j))}
	 	= \frac{\int_{-\zeta_j}^{\zeta_j} \beta_\gamma B(\zeta_{i,0}, \mathord{\cdot})}{2(\omega_{0,1}(\zeta_j) - \omega_{0,1}(-\zeta_j))}
		= \beta_\gamma K(\zeta_{i,0},\zeta_j )\,.
	\end{equation*}
\end{proof}

\begin{definition}
	Let $ \mc{S}_G$ be a $G$-equivariant spectral curve. The \emph{$G$-quotient spectral curve} or \emph{reduced local spectral curve} $\mc{S}_G/G = \mc{S}_{\textup{red}}$ associated to $\mc{S}_G$ is the curve $ \bigsqcup_{i \in \bar{I}} \Spec \C \bbraket{\zeta_i}$, with the same $x$, $ y$, and $B$ restricted to this curve. 
	Then define the \emph{reduced correlators} $ \omega^{\textup{red}}_{g,n}$ via the usual (non-equivariant) local topological recursion on $\mc{S}^\textup{red}$. The reduced correlators can be extended to the full index range $ I$ by $G$-equivariance as in \cref{lem:antisymmetry:correlators}; we will denote these extended reduced correlators by the same symbols.
\end{definition}

\begin{proposition}\label{prop:B:equal:reduceded}
	For a local $G$-spectral curve $\mc{S}_G$, the correlators $\omega_{g,n}$ defined via \cref{eqn:Blocal:x:y,eqn:Blocal:B,eqn:equiv:time:and:jump,eqn:BTR,eqn:B:recursion:kernel} are zero, unless $ \chi \cdot \upsilon = \beta$, in which case they are equal to the extended reduced correlators $ \omega^{\textup{red}}_{g,n}$ defined via $\mc{S}_{\textup{red}}$, up to powers of $|G|$:
	\begin{equation}
		\omega_{g,n}(\zeta_{\bbraket{n}}) = |G|^{2g-2+n}\omega^{\textup{red}}_{g,n}(\zeta_{\bbraket{n}}) \,.
	\end{equation}
\end{proposition}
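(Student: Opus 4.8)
The plan is to argue by strong induction on $2g-2+n$, using the free $G$-action on the index set $I$ to fold the full topological recursion onto the reduced one. The base case is $\omega_{0,2}=B$, for which the claimed identity holds with the trivial power $|G|^{0}$: since $B$ is $\beta$-equivariant in each argument and the pole coefficient in \cref{eqn:Blocal:B} over the diagonal of a single representative chart is $\beta_e/|G|=1/|G|$ (the off-diagonal, same-orbit poles being reproduced by the equivariant extension), $B$ on $I\times I$ is exactly the $G$-equivariant extension of its restriction to $\mc{S}_{\textup{red}}$.

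For the inductive step I would compare, orbit by orbit, the contribution $\mathcal{R}_i$ of a ramification point $i\in I$ to \cref{eqn:BTR} with that of $\gamma i$. Writing the adapted coordinates as $\zeta_{\gamma i}=(\sqrt{\chi})_\gamma\zeta_i$ and substituting into \cref{eqn:B:recursion:kernel}, the scaling relations $dx(\zeta_{\gamma i})=\chi_\gamma\,dx(\zeta_i)$, $y(\zeta_{\gamma i})-y(-\zeta_{\gamma i})=\upsilon_\gamma\bigl(y(\zeta_i)-y(-\zeta_i)\bigr)$, and $\int_{-\zeta_{\gamma i}}^{\zeta_{\gamma i}}B(\zeta_0,\cdot)=\beta_\gamma\int_{-\zeta_i}^{\zeta_i}B(\zeta_0,\cdot)$ (the last following from \cref{defn:equiv:SC,eqn:equiv:time:and:jump} and the symmetry of $B$) give $K(\zeta_0,\zeta_{\gamma i})=\tfrac{\beta_\gamma}{\chi_\gamma\upsilon_\gamma}K(\zeta_0,\zeta_i)$. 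By the induction hypothesis every correlator in the bracket of \cref{eqn:BTR} is $G$-equivariant (\cref{lem:antisymmetry:correlators}); since each carries one or two arguments at the ramification orbit point and $\beta^2=1$, the bracket is unchanged under $i\mapsto\gamma i$ once the differentials are matched through the constant rescaling $\zeta_{\gamma i}=(\sqrt{\chi})_\gamma\zeta_i$. As a residue is invariant under such a rescaling, this yields $\mathcal{R}_{\gamma i}=\tfrac{\beta_\gamma}{\chi_\gamma\upsilon_\gamma}\mathcal{R}_i$, with the scalar independent of $i$.

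Summing over the orbit and then over representatives gives
\begin{equation*}
	\omega_{g,n+1}=\sum_{i\in I}\mathcal{R}_i=\Biggl(\sum_{\gamma\in G}\frac{\beta_\gamma}{\chi_\gamma\upsilon_\gamma}\Biggr)\sum_{i\in\bar{I}}\mathcal{R}_i.
\end{equation*}
The prefactor is the sum over $G$ of the one-dimensional character $\beta\cdot(\chi\upsilon)^{-1}$, which by orthogonality equals $|G|$ if $\chi\upsilon=\beta$ and $0$ otherwise, immediately producing the claimed vanishing when $\chi\upsilon\neq\beta$. In the remaining case I would observe that the kernel $K$ is literally the same for $\mc{S}_G$ and $\mc{S}_{\textup{red}}$ when all external variables lie on representative charts, since the only pole of $B$ relevant to $K$ is the diagonal one with biresidue $1/|G|$, common to both curves. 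Hence for $i\in\bar{I}$ the hypothesis $\omega_{g',n'}=|G|^{2g'-2+n'}\omega^{\textup{red}}_{g',n'}$ applied to each factor of the bracket --- both the $\omega_{g-1,n+2}$ term and the stable products $\omega_{g_1}\omega_{g_2}$ carry the common exponent $2g-2+n$ --- gives $\mathcal{R}_i=|G|^{2g-2+n}\mathcal{R}_i^{\textup{red}}$, whence $\omega_{g,n+1}=|G|^{2g-2+n+1}\sum_{i\in\bar{I}}\mathcal{R}_i^{\textup{red}}=|G|^{2g-2+(n+1)}\omega^{\textup{red}}_{g,n+1}$ on representative charts; the two sides then agree on all of $I$ by $G$-equivariance.

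The step I expect to be most delicate is the bookkeeping of the coordinate rescaling $\zeta_{\gamma i}=(\sqrt{\chi})_\gamma\zeta_i$: one must check that the factors of $(\sqrt{\chi})_\gamma$ coming from the differentials $d\zeta_{\gamma i}$ in the correlators exactly cancel those produced by $dx$, so that the only surviving scalar is the character value $\beta_\gamma/(\chi_\gamma\upsilon_\gamma)$ rather than a spurious power of $(\sqrt{\chi})_\gamma$. Verifying that the kernel is genuinely identical --- not merely proportional --- on the two curves, which is what makes the single extra power of $|G|$ appear cleanly, is the other point requiring care.
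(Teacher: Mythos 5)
Your proposal is correct and follows essentially the same route as the paper's proof: induction on $2g-2+n$, the kernel relation $K(\zeta_0,\zeta_{\gamma i})=\beta_\gamma\chi_\gamma^{-1}\upsilon_\gamma^{-1}K(\zeta_0,\zeta_i)$, equivariance of the bracket terms via \cref{lem:antisymmetry:correlators} combined with $\beta^2=1$, the character sum $\sum_{\gamma\in G}\beta_\gamma(\chi_\gamma\upsilon_\gamma)^{-1}=|G|\,\delta_{\chi\upsilon=\beta}$, and the identification $K=K^{\textup{red}}$ on representative charts together with the bookkeeping of powers of $|G|$. The two "delicate points" you flag are exactly the ones the paper's computation handles implicitly, and your treatment of them is sound.
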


\begin{proof}
	This is again proved by induction. The main step is the reduction of the sum over $s|G|$ ramification points in \cref{eqn:BTR} to a sum over $s$ ramification points. For this, we need
	\begin{equation*}
		K(\zeta_0,\zeta_{\gamma i})
		=
		\beta_\gamma \chi_\gamma^{-1} \upsilon_\gamma^{-1}K(\zeta_0,\zeta_i )\,.
	\end{equation*}
	By \cref{lem:antisymmetry:correlators}, we may as well prove equality on $ \mc{S}_{\textup{red}}$, i.e. with indices restricted to $ \bar{I}$. Then, from the definitions, we see
	\begin{align*}
	\omega_{0,2}(\zeta_1,\zeta_2) &= \omega^{\textup{red}}_{0,2}(\zeta_1,\zeta_2)\,;\\
	K(\zeta_0,\zeta) &= K^{\textup{red}}(\zeta_0,\zeta)\,.
	\end{align*}
	For the induction step, we calculate
	\begingroup
	\allowdisplaybreaks
	\begin{align*}
		\omega_{g,n+1}(\zeta_0, \zeta_{\bbraket{n}})
		&=
		\sum_{i \in I} \Res_{\zeta_i \to 0}  K(\zeta_0,\zeta_i )
		\bigg(
			\omega_{g-1,n+2}(\zeta_i,-\zeta_i,\zeta_{\bbraket{n}})
			\\
			& \hspace{5cm}+ \!\!\!
			\sum_{\substack{g_1+g_2=g \\ J_1 \sqcup J_2 = \bbraket{n}}}
				\omega_{g_1,1+|J_1|}(\zeta_i,\zeta_{J_1})
				\omega_{g_2,1+|J_2|}(-\zeta_i,\zeta_{J_2})
		\bigg)\\
		&=
		\sum_{\substack{\gamma \in G\\ i\in \bar{I}}}   \Res_{\zeta_i \to 0} K(\zeta_0,\zeta_i ) \beta_\gamma \chi_\gamma^{-1} \upsilon_\gamma^{-1}
		\bigg(
			\beta_\gamma^2 \omega_{g-1,n+2}(\zeta_i,-\zeta_i,\zeta_{\bbraket{n}})
			\\
			&\hspace{5cm} + \!\!\!
			\sum_{\substack{g_1+g_2=g \\ J_1 \sqcup J_2 = \bbraket{n}}}
				\beta_\gamma \omega_{g_1,1+|J_1|}(\zeta_i,\zeta_{J_1})
				\beta_\gamma \omega_{g_2,1+|J_2|}(-\zeta_i,\zeta_{J_2})
		\bigg)\\
		&=
		\sum_{i \in \bar{I}} |G| \delta_{\chi \upsilon = \beta} \Res_{\zeta_i \to 0} K(\zeta_0,\zeta_i)
		\bigg(
			\omega_{g-1,n+2}(\zeta_i,-\zeta_i,\zeta_{\bbraket{n}})
			\\
			&\hspace{5cm} + \!\!\!
			\sum_{\substack{g_1+g_2=g \\ J_1 \sqcup J_2 = \bbraket{n}}}
				\omega_{g_1,1+|J_1|}(\zeta_i,\zeta_{J_1})
				\omega_{g_2,1+|J_2|}(-\zeta_i,\zeta_{J_2})
		\bigg)\\
		&=
		|G|^{2g-2 + (n+1)} \delta_{\chi \upsilon = \beta} \sum_{i \in \bar{I}} \Res_{\zeta_i \to 0} K^{\textup{red}}(\zeta_0, \zeta_i )
		\bigg(
		\omega^{\textup{red}}_{g-1,n+2}(\zeta_i,-\zeta_i,\zeta_{\bbraket{n}})
		\\
		&\hspace{5cm} + \!\!\!
		\sum_{\substack{g_1+g_2=g \\ J_1 \sqcup J_2 = \bbraket{n}}}
			\omega^{\textup{red}}_{g_1,1+|J_1|}(\zeta_i,\zeta_{J_1})
			\omega^{\textup{red}}_{g_2,1+|J_2|}(-\zeta_i,\zeta_{J_2})
		\bigg)\\
		&= 
		\delta_{\chi \upsilon = \beta} |G|^{2g-2 + (n+1)} \omega^{\textup{red}}_{g,n+1} (\zeta_0, \zeta_{\bbraket{n}})\,.
	\end{align*}
	\endgroup
	The first equality is the definition, the second applies equivariance, the third gathers characters. The fourth equality applies the induction hypothesis and gathers factors of $|G|$, and the last is the definition of the reduced correlators.
\end{proof}

\begin{corollary}
	The correlators calculated on $ \mc{S}_G$ are invariant under permutation of the variables.
\end{corollary}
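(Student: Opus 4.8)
The plan is to deduce the symmetry directly from \cref{prop:B:equal:reduceded}, thereby reducing it to the well-known symmetry of ordinary topological recursion. The subtlety to keep in mind is that the recursion \eqref{eqn:BTR} treats the first variable $\zeta_0$ on a different footing from the remaining ones, so permutation invariance---in particular symmetry between $\zeta_0$ and any $\zeta_m$---is not visible from the defining formula. For an ordinary local spectral curve this symmetry is a standard theorem (recalled at the end of \cref{defn:spectral:curve}), but it cannot be invoked directly for $\mc{S}_G$, since $B$ violates the usual axiom by carrying poles as the two arguments approach distinct points of a common $G$-orbit. The resolution is to transport the whole question to the quotient curve $\mc{S}_{\textup{red}}$, which \emph{is} an ordinary local spectral curve.

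First I would observe that $\mc{S}_{\textup{red}}$ satisfies the usual axioms. Restricting $B$ to indices $i,j \in \bar{I}$ and using that $\bar{I}$ meets each $G$-orbit exactly once together with freeness of the action, the only surviving term $\delta_{i,\gamma j}$ in \cref{eqn:Blocal:B} is the one with $i=j$ and $\gamma = e$; since $\beta_e = 1$, the singular part of $B$ on $\mc{S}_{\textup{red}}$ is $\tfrac{1}{|G|}\,\delta_{i,j}\,\tfrac{d\zeta_{i,1}\,d\zeta_{j,2}}{(\zeta_{i,1}-\zeta_{j,2})^2}$. Thus $B$ has a double pole of biresidue $\tfrac{1}{|G|}$ on the diagonal and no cross poles, which is admissible by \cref{rem:B:rescaling:local:TR}. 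Consequently the reduced correlators $\omega^{\textup{red}}_{g,n}$, being produced by ordinary topological recursion on $\mc{S}_{\textup{red}}$, are symmetric multidifferentials in their arguments restricted to $\bar{I}$.

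Next I would propagate this symmetry to the full index set $I$. By construction the extended reduced correlators are defined on $I$ by $G$-equivariance as in \cref{lem:antisymmetry:correlators}: writing each index $i_k \in I$ as $i_k = \gamma_k j_k$ with $j_k \in \bar{I}$ and $\gamma_k \in G$ (uniquely, by freeness), one has $\omega^{\textup{red}}_{g,n}(\zeta_{i_1},\dots,\zeta_{i_n}) = \bigl(\prod_k \beta_{\gamma_k}\bigr)\,\omega^{\textup{red}}_{g,n}(\zeta_{j_1},\dots,\zeta_{j_n})$. Under a permutation $\pi \in \mf{S}_n$ the scalar $\prod_k \beta_{\gamma_k}$ is merely reordered, hence unchanged, while the representative-indexed correlator on the right is symmetric by the previous step; therefore the extended $\omega^{\textup{red}}_{g,n}$ is symmetric on all of $I$.

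Finally, \cref{prop:B:equal:reduceded} gives $\omega_{g,n} = |G|^{2g-2+n}\,\omega^{\textup{red}}_{g,n}$ (and $\omega_{g,n} \equiv 0$, trivially symmetric, unless $\chi\cdot\upsilon = \beta$). Since the prefactor $|G|^{2g-2+n}$ does not depend on the variables, the symmetry of $\omega^{\textup{red}}_{g,n}$ transfers verbatim to $\omega_{g,n}$, which is the claim. The only genuinely nontrivial input is the standard symmetry of ordinary topological recursion on $\mc{S}_{\textup{red}}$; everything else is bookkeeping of the characters $\beta_\gamma$, and the main conceptual step---recognising that the asymmetric-looking recursion on the equivariant curve descends to an honest spectral curve, where symmetry is already known---has effectively been carried out in \cref{prop:B:equal:reduceded}.
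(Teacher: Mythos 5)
Your proof is correct and follows essentially the same route as the paper: the paper likewise deduces the corollary from \cref{prop:B:equal:reduceded} together with the known symmetry of ordinary Eynard--Orantin topological recursion (citing \cite{EO07,ABCO17}) applied to the reduced curve. Your additional verifications --- that $\mc{S}_{\textup{red}}$ has only a diagonal double pole of biresidue $\tfrac{1}{|G|}$ (admissible by \cref{rem:B:rescaling:local:TR}) and that the $\beta_\gamma$ prefactors are permutation-invariant when extending by equivariance --- are details the paper leaves implicit, but the argument is the same.
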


\begin{proof}
	This symmetry is well-known for usual Eynard--Orantin topological recursion \cite{EO07}, for an algebraic proof, see \cite{ABCO24}. Therefore, it follows immediately from \cref{prop:B:equal:reduceded}.
\end{proof}

Because of \cref{prop:B:equal:reduceded}, we propose the following definition:

\begin{definition}
	Let $G$ be a finite group. A \emph{(local) $G$-spectral curve} is a (local) $G$-equivariant spectral curve as in \cref{defn:equiv:SC}, respectively \cref{defn:local:equiv:SC}, such that $ \chi \cdot \upsilon = \beta$.
\end{definition}
	
Thus, $ G$-spectral curves are those $G$-equivariant curves for which the stable correlators are not necessarily trivial due to \cref{prop:B:equal:reduceded}.

\section{An ELSV-type formula for spin Hurwitz numbers}
\label{sec:spin:HNs:ELSV}

The aim of this section is to apply the Eynard--DOSS correspondence to express spin single Hurwitz numbers with completed cycles as a certain CohFT on the moduli space of curves, intersected with powers of $\psi$-classes. Secondly, we will identify such CohFT as a 'signed' Chiodo class, or equivalently as the Chiodo class twisted by the Witten $2$-spin class. Throughout this section, we consider $r = 2s$ to be an even, positive integer.

\subsection{A CohFT for spin single Hurwitz numbers}
\label{subsec:spin:CohFT}
Recall the conjectural spectral curve on $\P^1$ for spin single Hurwitz numbers with $(r+1)$-completed cycles:
\begin{equation}
	x(z) = \log(z) -z^{2s},
	\qquad
	y(z) = z,
	\qquad
	B(z_1,z_2) = \frac{1}{2}\left( \frac{1}{(z_1 - z_2)^2} + \frac{1}{(z_1 + z_2)^2} \right) d z_1 d z_2.
\end{equation}
It is $\Z/2\Z$-spectral curve, with $\chi$ the trivial representation, and $\upsilon = \beta$ the sign representation. The ramification points of the full spectral curve are given by
\begin{equation}
	a_i = \frac{J^i}{(2s)^{\frac{1}{2s}}},
	\qquad
	J \coloneqq e^{\frac{\iu\pi}{s}},
	\qquad
	i = 0,1,\dots,2s-1,
\end{equation}
with $\Z/2\Z$-action given by $(-1).i \equiv i + s \pmod{s}$. Thus, we can choose the system of representative ramification points to be $\bar{I} \coloneqq \set{0,\dots,s-1}$. As a consequence of \cref{prop:B:equal:reduceded}, together with a rescaling of $B$ (cf. \cref{rem:B:rescaling:local:TR}), we get the following result.

\begin{lemma}\label{thm:spin:HNs:SC:red:norm}
	Consider the spectral curve on $\P^1$ given by
	\begin{equation}\label{eqn:spin:HNs:SC:red:norm}
		x(z) = \log(z) - z^{2s},
		\qquad
		y(z) = z,
		\qquad
		\hat{B}(z_1,z_2) = \left( \frac{1}{(z_1 - z_2)^2} + \frac{1}{(z_1 + z_2)^2} \right) d z_1 d z_2.
	\end{equation}
	Denote by $\hat{\omega}_{g,n}^{r,\theta}$ the multidifferentials obtained by summing over the ramification points indexed by $\{0,\dots,s-1\}$. Then
	\begin{equation}
		\omega_{g,n}^{r,\theta}(z_1,\dots,z_n)
		=
		2^{1-g-n} \hat{\omega}_{g,n}^{r,\theta}(z_1,\dots,z_n).
	\end{equation}
\end{lemma}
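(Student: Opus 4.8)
The strategy is to realize the conclusion as the composition of two facts established earlier in the paper: the reduction of equivariant correlators to quotient correlators in \cref{prop:B:equal:reduceded}, and the transformation of the correlators under a rescaling of the bidifferential recorded in \cref{rem:B:rescaling:local:TR}. Since topological recursion depends only on the local data around the ramification points (again \cref{rem:B:rescaling:local:TR}), I would first pass to the associated local curve and check that the conjectural spectral curve is a local $\Z/2\Z$-spectral curve in the sense of \cref{defn:local:equiv:SC}. As in the example following \cref{defn:equiv:SC}, we take the action $\phi_1 z = -z$, with $\chi$ the trivial representation and $\upsilon = \beta$ the sign representation, so that $\chi\cdot\upsilon = \beta$ and the curve is a genuine $\Z/2\Z$-spectral curve.

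The compatibility that must be recorded is that the prescribed $B$ is exactly the equivariant polar kernel $B^{\Z/2\Z,\beta}$ of \cref{defn:equiv:SC}, with no regular correction. Indeed, with $|G|=2$ and $\beta_{-1}=-1$ one computes
\[
	B^{\Z/2\Z,\beta}(z_1,z_2)
	=
	\tfrac12\Bigl( \frac{dz_1\,dz_2}{(z_1-z_2)^2} + (-1)\frac{-\,dz_1\,dz_2}{(z_1+z_2)^2} \Bigr)
	=
	\tfrac12\Bigl( \frac{1}{(z_1-z_2)^2} + \frac{1}{(z_1+z_2)^2}\Bigr)dz_1\,dz_2,
\]
which matches \cref{eqn:spin:Hurwitz:SC}. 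In particular $B$ has biresidue $\tfrac12$ on the diagonal, and the same holds for its restriction to the reduced curve $\mc{S}_{\textup{red}}$ indexed by $\bar I=\{0,\dots,s-1\}$, because freeness of the action forces only the identity element to contribute to the diagonal term.

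Next I would apply \cref{prop:B:equal:reduceded} with $G=\Z/2\Z$, hence $|G|=2$. Since $\chi\cdot\upsilon=\beta$, the full correlators do not vanish and $\omega_{g,n}^{r,\theta} = 2^{2g-2+n}\,\omega_{g,n}^{\textup{red}}$, where $\omega_{g,n}^{\textup{red}}$ are the reduced correlators computed on $\mc{S}_{\textup{red}}$ with the restricted bidifferential $B$ of biresidue $\tfrac12$. The lemma's $\hat{\omega}_{g,n}^{r,\theta}$ is computed on the same reduced index set $\{0,\dots,s-1\}$ but with $\hat B = 2B$, which has the standard biresidue $1$; so by the rescaling principle of \cref{rem:B:rescaling:local:TR} with $\lambda=2$ we get $\hat{\omega}_{g,n}^{r,\theta} = 2^{3g-3+2n}\,\omega_{g,n}^{\textup{red}}$. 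Combining the two identities yields
\[
	\omega_{g,n}^{r,\theta}
	= 2^{2g-2+n}\,\omega_{g,n}^{\textup{red}}
	= 2^{(2g-2+n)-(3g-3+2n)}\,\hat{\omega}_{g,n}^{r,\theta}
	= 2^{1-g-n}\,\hat{\omega}_{g,n}^{r,\theta},
\]
which is the claim.

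The only genuine subtlety, and the step I would be most careful about, is the bookkeeping of the normalization of $B$: the reduced correlators produced by \cref{prop:B:equal:reduceded} inherit the biresidue $\tfrac12$ of the equivariant kernel $B^{\Z/2\Z,\beta}$, whereas the normalized curve \cref{eqn:spin:HNs:SC:red:norm} is set up with biresidue $1$. It is precisely this factor-of-two discrepancy that the rescaling exponent $3g-3+2n$ converts into the stated prefactor $2^{1-g-n}$; conflating the two normalizations would shift the final power of $2$. Everything else is a direct substitution of the two cited results.
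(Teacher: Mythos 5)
Your proposal is correct and follows essentially the same route as the paper's own (much terser) proof: combine \cref{prop:B:equal:reduceded} with $|G|=2$, giving the factor $2^{2g-2+n}$ from reducing the ramification points to $\bar I$, with the rescaling rule of \cref{rem:B:rescaling:local:TR} applied to $B=\tfrac12\hat B$, giving the factor $2^{-(3g-3+2n)}$, whose product is $2^{1-g-n}$. Your additional checks — that the curve is a genuine $\Z/2\Z$-spectral curve with $\chi\cdot\upsilon=\beta$ and that $B$ coincides exactly with the equivariant kernel $B^{\Z/2\Z,\beta}$ — are details the paper leaves implicit, and your bookkeeping of the two biresidue normalisations is exactly the point on which the paper's one-line proof is (slightly misleadingly) worded.
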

\begin{proof}
	Reducing the set of ramification points to $\bar{I}$ gives a factor of $2^{2g-2+n}$, while the rescaling $B \mapsto \hat{B} = \frac{1}{2}B$ gives a factor of $2^{-3g+3-2n}$. 
\end{proof}

The main consequence of the above lemma is that we can now apply the Eynard--DOSS correspondence, \cref{thm:Eyn:DOSS}. Indeed, although we do not sum over all zeros of $dx$, one can easily check that this does not spoil the argument of \cite{Eyn11,DOSS14}: the computation to express the multidifferentials as a sum over stable graphs is local, and the expression of the edge weights in terms of the $R$-matrix of \cref{eqn:R:matrix:TR} only requires a compact curve and $dx$ meromorphic.

\smallskip

Let us choose $C_i = - \frac{\iu}{\sqrt{4s}}$ and $C = (2s)^{\frac{1}{2s}+1}$, so that we have local coordinates $x = - \frac{1}{2}\frac{\zeta_i^2}{2s} + x(a_i)$ and $y = a_i + \frac{a_i}{2s} \zeta_i + \mathcal{O}(\zeta_i^2)$. With this choice, $\Delta_i = \frac{a_i}{2s}$ and $t_i = \frac{J^i}{2s}$. Moreover, the underlying topological field theory on $V = \C\braket{e_0,\dots,e_{s-1}}$ is given by
\begin{equation}
	\eta(e_i,e_j) = \delta_{i,j},
	\qquad
	\mathbb{1} = \sum_{i=0}^{s-1} \frac{J^i}{2s} e_i,
	\qquad
	\varpi_{g,n}^{r,\theta}(e_{i_1} \otimes \cdots \otimes e_{i_n})
	=
	\delta_{i_1,\ldots,i_n} \Bigl( \frac{J^{i}}{2s} \Bigr)^{-2g+2-n}.
\end{equation}
Let us compute now the other ingredients for the Eynard--DOSS formula (cf. \cite{LPSZ17} for similar computations).

\begin{lemma}
	The auxiliary functions are given by
	\begin{equation}
		\xi_i(z) = 2\Delta_i \frac{z}{a_i^2 - z^2},
		\qquad
		i = 0,\dots,s-1.
	\end{equation}
\end{lemma}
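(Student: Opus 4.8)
The plan is to unwind the definition $\xi_i(z) = \int^z \bigl.\frac{\hat B(\zeta_i,\cdot)}{d\zeta_i}\bigr|_{\zeta_i=0}$ directly. Writing the second argument as the global coordinate $w$ and expressing the first argument through the coordinate $\zeta_i$ adapted to $x$ at $a_i$, the reduced bidifferential becomes
\begin{equation*}
	\hat B(z,w) = \Bigl(\tfrac{1}{(z-w)^2} + \tfrac{1}{(z+w)^2}\Bigr)\,dz\,dw,
\end{equation*}
so that setting $\zeta_i = 0$ (i.e. $z = a_i$) yields the meromorphic one-form
\begin{equation*}
	\Bigl.\frac{\hat B(\zeta_i,\cdot)}{d\zeta_i}\Bigr|_{\zeta_i=0}
	= \Bigl(\tfrac{1}{(a_i-w)^2} + \tfrac{1}{(a_i+w)^2}\Bigr)\,\Bigl.\frac{dz}{d\zeta_i}\Bigr|_{\zeta_i=0}\,dw,
\end{equation*}
with double poles at $w = \pm a_i$. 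Thus the whole statement reduces to two elementary tasks: evaluating the Jacobian $\frac{dz}{d\zeta_i}\big|_{\zeta_i=0}$, and integrating an explicit rational one-form in $w$.

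The first is the only delicate point, since at $z=a_i$ the defining relation $x(z)-x(a_i) = C_i^2\zeta_i^2 = -\frac{1}{4s}\zeta_i^2$ degenerates (both sides vanish to second order), so the Jacobian is an a priori indeterminate limit. I would resolve this by Taylor expanding $x(z)-x(a_i) = \tfrac12 x''(a_i)(z-a_i)^2 + \mc{O}\bigl((z-a_i)^3\bigr)$ and matching the quadratic terms. Using $x'(z) = \tfrac1z - 2s z^{2s-1}$ together with the defining relation $a_i^{2s} = \frac{1}{2s}$ of the ramification points, one computes
\begin{equation*}
	x''(a_i) = -\tfrac{1}{a_i^2} - 2s(2s-1)\,a_i^{2s-2} = -\tfrac{2s}{a_i^2},
\end{equation*}
whence $\frac{dz}{d\zeta_i}\big|_{\zeta_i=0} = \bigl(\tfrac{-1/(4s)}{\tfrac12 x''(a_i)}\bigr)^{1/2} = \tfrac{a_i}{2s}$ for the appropriate branch. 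Since $y = z$, this is exactly $\Delta_i = \frac{dy}{d\zeta_i}(0) = \frac{a_i}{2s}$, consistent with the value fixed just before the lemma; the branch/sign choice is pinned down precisely by this consistency with the sign of $C_i$.

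Finally, I would integrate termwise using $\int \frac{dw}{(a_i-w)^2} = \frac{1}{a_i-w}$ and $\int\frac{dw}{(a_i+w)^2} = -\frac{1}{a_i+w}$, fixing the additive constant by the base point $w=0$ (equivalently, by observing that the answer is odd in $z$). This gives
\begin{equation*}
	\xi_i(z) = \Delta_i\Bigl(\tfrac{1}{a_i-z} - \tfrac{1}{a_i+z}\Bigr) = 2\Delta_i\,\frac{z}{a_i^2-z^2},
\end{equation*}
which is the claim. No step constitutes a genuine obstacle; the computation is routine once the $0/0$ Jacobian at the ramification point is handled by the second-order expansion above, and the only care needed is in the sign of the square root, which is dictated by the earlier normalisation of $\Delta_i$.
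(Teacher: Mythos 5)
Your proposal is correct and follows essentially the same route as the paper: unwind the definition of $\xi_i$, pick up the Jacobian $\frac{dz}{d\zeta_i}\big|_{\zeta_i=0} = \Delta_i$ (which equals $\frac{dy}{d\zeta_i}(0)$ since $y=z$), and integrate the resulting rational one-form with base point $z=0$. The only difference is that you re-derive $\Delta_i = \frac{a_i}{2s}$ from the second-order expansion of $x$ at $a_i$, a value the paper had already fixed just before the lemma via its choice of $C_i$ and local coordinates, so your write-up is simply a more self-contained version of the same computation.
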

\begin{proof}
	We have
	\begin{equation*}
		\xi_i(z)
		=
		\int^z \frac{\hat{B}(\zeta_i,z)}{d \zeta_i} \bigg|_{\zeta_i = 0}
		=
		\Delta_i 
		\int^z
		\left(
			\frac{1}{(a_i - z)^2}
			+
			\frac{1}{(a_i + z)^2}
		\right) d z
		=
		2\Delta_i \frac{z}{a_i^2 - z^2}\,.\tag*{\qedhere}
	\end{equation*}
\end{proof}

\begin{lemma}\label{Rmatrix}
	The $R$-matrix is given by
	\begin{equation}
		R^{-1}(u)_{i}^{j}
		=
		\frac{1}{s}
		\sum_{k=0}^{s-1} J^{(2k+1)(j-i)} \exp\Biggl(
				- \sum_{m=1}^{\infty} \frac{B_{m+1}(\frac{2k + 1}{2s})}{m(m+1)} (-u)^{m}
			\Biggr)\,,
		\qquad
		i,j = 0,\dots,s-1\,,
	\end{equation}
	where $B_{n}(a)$ is the $n$-th Bernoulli polynomial.
\end{lemma}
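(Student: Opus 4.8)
The plan is to follow the strategy of \cite{LPSZ17}: exploit the enhanced cyclic symmetry of the curve to diagonalise $R^{-1}(u)$, and then evaluate a single eigenvalue as the asymptotic expansion of a Gamma function. The curve \eqref{eqn:spin:HNs:SC:red:norm} admits the symmetry $z \mapsto Jz$ with $J = e^{\iu\pi/s}$, under which $x(Jz) = x(z) + \iu\pi/s$, $y(Jz) = Jy(z)$ and $\hat B(Jz_1,Jz_2) = \hat B(z_1,z_2)$; it permutes the ramification points by $a_i \mapsto a_{i+1}$ and is compatible with the adapted coordinates and auxiliary differentials, $\zeta_{i+1}(Jz) = \zeta_i(z)$ and $d\xi_{i+1}(Jz) = d\xi_i(z)$. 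Substituting into \eqref{eqn:R:matrix:TR} gives $R^{-1}(u)^{j+1}_{i+1} = R^{-1}(u)^{j}_{i}$ for indices in range, with a sign $\beta = -1$ incurred whenever an index wraps around $a_s = -a_0$. Hence $R^{-1}(u)$ is an anti-periodic circulant matrix on $\C^s$, diagonalised by the characters $v_k = (J^{(2k+1)i})_{i=0}^{s-1}$, $k=0,\dots,s-1$; here the odd powers $J^{2k+1}$ are precisely the $s$-th roots of $-1$, carrying the required antiperiodicity $(v_k)_{i+s} = -\,(v_k)_i$. By the orthogonality $\tfrac1s\sum_{k=0}^{s-1} J^{(2k+1)m}$, which equals $J^m$ if $s \mid m$ and $0$ otherwise, the claimed formula is exactly the inverse discrete Fourier transform of the eigenvalues, so it remains to identify the eigenvalue $\hat R_k(u)$ attached to $v_k$ with $\exp\!\big({-}\sum_{m\ge1}\tfrac{B_{m+1}((2k+1)/2s)}{m(m+1)}(-u)^m\big)$.

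Taking $j=0$, the eigenvalue is $\hat R_k = \sum_{i=0}^{s-1} J^{(2k+1)i} R^{-1}(u)^0_i = -\sqrt{u/2\pi}\int_{\R}\big(\sum_i J^{(2k+1)i}\,d\xi_i(\zeta_0)\big)\,e^{-\zeta_0^2/2u}$. Rewriting the previous lemma as $\xi_i(z) = \tfrac{1}{2s}\big(\tfrac{1}{1-z/a_i} - \tfrac{1}{1+z/a_i}\big)$ and using $a_i = J^i a_0$, the root-of-unity orthogonality makes the Fourier-weighted sum telescope to the closed form
\[
	\sum_{i=0}^{s-1} J^{(2k+1)i}\,\xi_i(z) = \frac{(z/a_0)^{2k+1}}{1-(z/a_0)^{2s}} = \frac{(z/a_0)^{2k+1}}{1-2s\,z^{2s}}.
\]
Writing $\zeta := z/a_0$ and using $x - x(a_0) = \log\zeta - \tfrac{1}{2s}(\zeta^{2s}-1)$, the Gaussian weight turns into the Gamma-type kernel $e^{-\zeta_0^2/2u} = e^{1/u}\,\zeta^{2s/u}\,e^{-\zeta^{2s}/u}$.

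The key computational point is that the $i=0$ term gives $g(\zeta):=\tfrac{\zeta^{2k+1}}{1-\zeta^{2s}}$ a simple pole exactly at the saddle $\zeta=1$; this diagonal pole is the usual feature of such integrals, and it is removed by integrating by parts, because $d\big(\zeta^{2s/u}e^{-\zeta^{2s}/u}\big) = \tfrac{2s}{u}\zeta^{2s/u-1}(1-\zeta^{2s})e^{-\zeta^{2s}/u}\,d\zeta$ carries a factor $(1-\zeta^{2s})$ cancelling the denominator of $g$. After this cancellation the integrand is regular and, substituting $p = \zeta^{2s}$, the integral collapses (as an asymptotic series, the contour becoming $(0,\infty)$) to $\tfrac{1}{2s}\,u^{1/u+(2k+1)/2s}\,\Gamma\!\big(\tfrac1u + \tfrac{2k+1}{2s}\big)$ up to elementary prefactors, where $\Gamma$ denotes Euler's Gamma function. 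Finally I would invoke the generalised Stirling expansion
\[
	e^{1/u}\,u^{1/u+a}\,\Gamma\!\big(\tfrac1u + a\big) = \sqrt{2\pi u}\,\exp\!\Big({-}\sum_{m\ge1}\tfrac{B_{m+1}(a)}{m(m+1)}(-u)^m\Big)
\]
with $a = \tfrac{2k+1}{2s}$; the remaining factors $\sqrt{u/2\pi}$, $\tfrac{2s}{u}\cdot\tfrac1{2s}$ and $\sqrt{2\pi u}$ combine to $1$, producing exactly $\hat R_k(u)$, and the Fourier inversion of the first paragraph then yields the stated expression for $R^{-1}(u)^j_i$.

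The main obstacle is the rigorous treatment of the diagonal pole inside the asymptotic-expansion definition of \eqref{eqn:R:matrix:TR}: one must be careful that the Laplace integral is read as a steepest-descent series through $\zeta=1$, that the boundary terms in the integration by parts genuinely vanish, and that the partial-fraction and geometric manipulations (whose domain of convergence only touches the saddle $\zeta=1$) are legitimate at the level of formal asymptotic series. The other point demanding care is the clean formulation of the twisted-circulant symmetry, including the sign bookkeeping at the wrap-around $a_s = -a_0$ that encodes $\beta$; once both are settled, the surviving steps are the routine root-of-unity and Stirling computations sketched above.
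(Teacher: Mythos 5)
Your proof is correct — at the same level of formal rigour as the paper's own argument — but it is organised differently, and the comparison is instructive. The paper works entry by entry: it integrates by parts so that the factor $\zeta_j$ produced by differentiating the Gaussian cancels the diagonal pole of $\xi_j$, changes variables to $t=(z/a_j)^{2s}$, and the sum over $k$ then appears through the finite geometric expansion $\frac{1-t}{1-J^{2(j-i)}t^{1/s}}=\sum_{k=0}^{s-1}J^{2k(j-i)}t^{k/s}$; the endgame is the Hankel-contour representation of $1/\Gamma(a-u^{-1})$, its Bernoulli-polynomial asymptotics, and the reflection $B_{m+1}(1-a)=(-1)^{m+1}B_{m+1}(a)$. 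You diagonalise first: your identity $\sum_{i}J^{(2k+1)i}\xi_i(z)=\zeta^{2k+1}/(1-\zeta^{2s})$ is precisely the discrete Fourier transform of the family $\{\xi_i\}$, of which the paper's geometric-series step is the inverse transform, so the mechanism creating the $k$-sum is the same in both proofs, just applied in opposite directions. The genuine differences are (i) your pole removal moves the derivative onto the weight, whose differential conveniently carries the factor $(1-\zeta^{2s})$, rather than onto the Gaussian, and (ii) your endgame reduces to the real Euler integral $u^{1/u+a}\,\Gamma(1/u+a)$ and the standard Stirling expansion, which lands directly on $\exp\bigl(-\sum_{m\ge 1}\tfrac{B_{m+1}(a)}{m(m+1)}(-u)^m\bigr)$ with $a=\tfrac{2k+1}{2s}$, needing neither the Hankel contour nor the reflection formula. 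This is arguably cleaner: the steepest-descent image of the real $\zeta_0$-line is indeed the positive real axis in your variable $p$, whereas the paper's Hankel-contour prescription yields the same formal power series only because the discrepancy between $\Gamma(1/u+a)$ and $1/\Gamma(1-a-1/u)$ is non-perturbative (again the reflection formula). What the paper's route buys is that it never needs the equivariance and sign bookkeeping of the anti-circulant structure (your wrap-around sign at $a_s=-a_0$). Both arguments leave the same points informal — the vanishing of boundary terms in the integration by parts and the reading of the Laplace integrals as asymptotic expansions — which you correctly identify as the delicate steps.
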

\begin{proof}
	Inserting the expression for $\xi_i$ in the definition of the $R$-matrix and integrating by parts, we get
	\begin{equation*}
	\begin{split}
		R^{-1}(u)_{i}^{j}
		& =
		- \sqrt{\frac{u}{2\pi}} \int_{\R} d\xi_i(\zeta_j) e^{-\frac{1}{2u} \zeta_{j}^2} \\
		& =
		- \frac{2\Delta_i}{\sqrt{2\pi u}} \int_{\R}
			\frac{z(\zeta_j)}{a_i^2 - z^2(\zeta_j)} \, e^{-\frac{1}{2u} \zeta_{j}^2} \, \zeta_j \, d \zeta_j.
	\end{split}
	\end{equation*}
	We perform now the change of variable $\zeta_j \mapsto t$ determined by $z = \frac{J^j}{(2s)^{\frac{1}{2s}}} t^{\frac{1}{2s}}$. We find
	\begin{equation*}
		-\frac{1}{2}\frac{\zeta_j^2}{2s}
		=
		x - x(a_j)
		=
		- \frac{t}{2s} + \log\biggl( \frac{J^j}{(2s)^{\frac{1}{2s}}} t^{\frac{1}{2s}} \biggr) + \frac{1}{2s} - \log\biggl(\frac{J^j}{(2s)^{\frac{1}{2s}}}\biggr) 
		=
		\frac{1}{2s}\bigl( 1 - t + \log(t) \bigr),
	\end{equation*}
	and differentiating both sides we get $-\zeta d\zeta = (\frac{1}{t} - 1) d t$. Note also that $t$ runs along the Hankel contour $C$ when $\zeta_j$ runs from $-\infty$ to $+\infty$. As a consequence,
	\begin{equation*}
	\begin{split}
		R^{-1}(u)_{i}^{j}
		& =
		\frac{2\Delta_i}{\sqrt{2\pi u}} \int_{C}
			\frac{
				J^j (2s)^{-\frac{1}{2s}} t^{\frac{1}{2s}}
			}{
				J^{2i}(2s)^{-\frac{1}{s}} - J^{2j}(2s)^{-\frac{1}{s}} t^{\frac{1}{s}}
			}
			\,
			e^{\frac{1}{u}( 1 - t + \log(t))} \left( \frac{1}{t} - 1 \right)d t \\
		& =
		\frac{J^{j-i}}{s \sqrt{2\pi u}} \int_{C}
			t^{\frac{1}{2s} - 1}
			\frac{1 - t}{1 - J^{2(j-i)} t^{\frac{1}{s}}}
			\,
			e^{\frac{1}{u}( 1 - t + \log(t))} d t.
	\end{split}
	\end{equation*}
	On the other hand, we have the geometric progression formula
	\begin{equation*}
	\begin{split}
		\frac{1 - t}{1 - J^{2(j-i)} t^{\frac{1}{s}}}
		& =
		\sum_{k=0}^{s-1} J^{2k(j-i)} t^{\frac{k}{s}}
	\end{split}
	\end{equation*}
	and the integral expression for the reciprocal of the Gamma function, together with its asymptotic expansion as $u \to 0$ involving Bernoulli polynomials:
	\begin{equation*}
		\frac{1}{\Gamma(a - u^{-1})}
		=
		\frac{\iu}{2\pi} \int_{C} (-\tau)^{\frac{1}{u} - a} e^{-\tau} \, d\tau
		\sim
		\frac{(-u)^{-\frac{1}{u} + a - \frac{1}{2}} e^{-\frac{1}{u}}}{\sqrt{2\pi}} \exp\Biggl(
			\sum_{m=1}^{\infty} \frac{B_{m+1}(a)}{m(m+1)} u^{m}
		\Biggr).
	\end{equation*}
	Thus, we get
	\begin{equation*}
	\begin{split}
		R^{-1}(u)_{i}^{j}
		& =
		\frac{e^{\frac{1}{u}}}{s \sqrt{2\pi u}} \sum_{k=0}^{s-1} J^{(2k+1)(j-i)} \int_{C}
			t^{\frac{2k + 1}{2s} - 1 + \frac{1}{u}}
			e^{-\frac{t}{u}} d t \\
		& =
		\frac{1}{s}\sqrt{\frac{u}{2\pi}}
		(-u)^{\frac{2k + 1}{2s} - 1 + \frac{1}{u}} e^{\frac{1}{u}}
		\sum_{k=0}^{s-1} J^{(2k+1)(j-i)} \int_{C}
			\tau^{\frac{2k + 1}{2s} - 1 + \frac{1}{u}}
			e^{-\tau} d \tau
		\qquad\quad\text{changing } t = u \tau \\
		& \sim
		\frac{1}{s}
		\sum_{k=0}^{s-1} J^{(2k+1)(j-i)} \exp\Biggl(
				\sum_{m=1}^{\infty} \frac{B_{m+1}(1 - \frac{2k + 1}{2s})}{m(m+1)} u^{m}
			\Biggr).
	\end{split}
	\end{equation*}
	We conclude using the property $B_{m+1}(1-a) = (-1)^{m+1} B_{m+1}(a)$.
\end{proof}

\begin{lemma}
	The translation is given by
	\begin{equation}
		\hat{T}^i(u) = \sum_{m=1}^\infty \frac{B_{m+1}(\tfrac{1}{2s})}{m(m+1)} (-u)^m,
		\qquad
		i = 0,\dots,s-1.
	\end{equation}
	Moreover, $y$ and $B$ are compatible in the sense that \cref{eqn:y:B} is satisfied.
\end{lemma}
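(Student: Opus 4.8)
The plan is to read this off as a simplification of the $R$-matrix computation in \cref{Rmatrix}. For the first assertion I would start from the defining integral \cref{eqn:translation:TR}, insert $y = z$, and apply the same change of variables $z = \frac{J^i}{(2s)^{1/2s}}t^{1/2s}$ used there, under which $x - x(a_i) = \frac{1}{2s}(1 - t + \log t)$, so that $e^{-\zeta_i^2/2u} = e^{(1-t+\log t)/u}$ and $t$ runs over the Hankel contour $C$. The key point is that, unlike the $R$-matrix integrand, the translation integrand contains the exact differential $dy = dz$ rather than the rational function $\tfrac{z}{a_i^2 - z^2}$; hence there is neither an integration by parts nor a geometric-series expansion, and the prefactor $\frac{1}{\Delta_i}\cdot\frac{J^i}{(2s)^{1/2s}}\cdot\frac{1}{2s}$ produced by $dz = \frac{J^i}{(2s)^{1/2s}}\frac{1}{2s}t^{1/2s-1}dt$ collapses to $1$. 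This already accounts for the independence of the answer on $i$.

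What remains is $\exp(-\hat{T}^i(u)) = \frac{e^{1/u}}{\sqrt{2\pi u}}\int_C t^{\frac{1}{2s}-1+\frac{1}{u}}e^{-t/u}\,dt$, which after $t = u\tau$ is precisely the reciprocal-Gamma integral of \cref{Rmatrix} with parameter $a = 1-\tfrac{1}{2s}$ (equivalently, the $k=0$ instance of that computation). Feeding in the same asymptotic expansion of $1/\Gamma$ in Bernoulli polynomials and the reflection identity $B_{m+1}(1-a) = (-1)^{m+1}B_{m+1}(a)$ produces the claimed series in $B_\bullet(\tfrac{1}{2s})$, with the leftover powers of $u$ and the sign and normalisation prefactors cancelling exactly as they do in \cref{Rmatrix}.

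For the compatibility statement I would substitute the $R$-matrix from \cref{Rmatrix}, the values $t_i = \frac{J^i}{2s}$, and the translation just obtained into \cref{eqn:y:B}. The right-hand side becomes $\frac{1}{2s^2}\sum_{k=0}^{s-1}J^{(2k+1)j}\,\mathrm{e}_k\sum_{i=0}^{s-1}(J^{-2k})^i$, where $\mathrm{e}_k$ denotes the Bernoulli exponential factor at argument $\frac{2k+1}{2s}$. Since $J^{-2k} = e^{-2\pi\iu k/s}$, discrete orthogonality of roots of unity gives $\sum_{i=0}^{s-1}(J^{-2k})^i = s\,\delta_{k,0}$ for $0\le k\le s-1$, collapsing the double sum to its $k=0$ term. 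This leaves $\frac{J^j}{2s}\,\mathrm{e}_0$, and $\mathrm{e}_0$ is exactly $\exp(-\hat{T}^j(u))$ by the first part, so the right-hand side equals $t_j\exp(-\hat{T}^j(u))$, which is \cref{eqn:y:B}.

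The main obstacle is purely analytic bookkeeping rather than anything structural: one must track the branch of $(-\tau)^{1/u-a}$, the orientation of the Hankel contour, and the cancellation of the $(-u)^{\bullet}$ and $e^{\pm 1/u}$ prefactors in the reciprocal-Gamma asymptotic. All of this was already carried out in the proof of \cref{Rmatrix}, so the present lemma is genuinely the $k=0$ specialisation of that argument, and the only new ingredient is the elementary roots-of-unity collapse in the compatibility step.
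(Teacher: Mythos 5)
Your proposal is correct. For the compatibility half it is essentially the paper's own argument: both of you substitute the $R$-matrix of \cref{Rmatrix} into \cref{eqn:y:B} and collapse the double sum with the orthogonality $\frac{1}{s}\sum_{i=0}^{s-1} J^{-2ki} = \delta_{k,0}$ for $0 \le k \le s-1$; that you phrase it with $t_i$ while the paper uses $\Delta_i$ is immaterial, since $t_i = (2s)^{1/2s}\Delta_i$ with a constant of proportionality independent of $i$. The genuine difference is in the first half. The paper does not recompute anything: it observes that the translation \eqref{eqn:translation:TR} depends only on $x$ and $y$, which coincide with the non-spin completed-cycles spectral curve, and cites \cite[lemma~4.1]{LPSZ17} for the asymptotic expansion. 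You instead rerun the Hankel-contour computation internally, as the $k=0$ specialisation of the proof of \cref{Rmatrix}, correctly noting the two simplifications (no integration by parts and no geometric-series step, because the integrand is the exact differential $dy = dz$) and the collapse of the prefactor $\frac{1}{\Delta_i}\cdot\frac{J^i}{(2s)^{1/2s}}\cdot\frac{1}{2s}$ to $1$, which also explains the $i$-independence. Your route buys self-containedness, so the lemma rests only on ingredients already proved in this paper, at the cost of repeating the branch and orientation bookkeeping that the citation outsources.

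One remark on the outcome: your computation yields $\hat{T}^i(u) = \sum_{m \ge 1} \frac{B_{m+1}(\frac{1}{2s})}{m(m+1)}(-u)^m$, with $B_{m+1}$ rather than the $B_m$ printed in the statement. This is not a defect of your argument: $B_{m+1}$ is what the paper's own compatibility display ends with, and it is the indexing that enters the vertex contribution $\exp\bigl(\sum_{m\ge1}\frac{(-1)^m B_{m+1}(\frac{1}{2s})}{m(m+1)}\kappa_m\bigr)$ in \cref{prop:CohFT:sum:stable:graphs}. The $B_m$ in the lemma as stated is a typo, which your direct derivation detects and which the citation-based proof silently propagates.
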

\begin{proof}
	The translation is given by \eqref{eqn:translation:TR} and therefore does not depend on the $(0,2)$-correlator, but only on $x$ and $y$. Therefore, we can use the result of \cite[lemma~4.1]{LPSZ17}, which, with the special case and change of notation $s= 1$ and $r = 2s$, computes the relevant integral for the spectral curve for non-spin Hurwitz number with completed cycles, with the same $x$ and $y$. The result is
	\begin{equation}
		\frac{1}{\sqrt{2 \pi u}} \int_{\R} dy e^{-\frac{\zeta_i^2}{2u}}
		\sim
		\Delta_i \exp \biggl( - \sum_{m=1}^\infty \frac{B_{m+1}(\tfrac{1}{2s})}{m(m+1)} (-u)^m  \biggr)\,,
	\end{equation}
	which, when divided by $\Delta_i$, gives the result.\par
	For the compatibility, with our choice of $C_i$ we need to check that
	\begin{equation}
		\frac{1}{\sqrt{2 \pi u}} \int_{\R} dy e^{-\frac{\zeta_j^2}{2u}}
		\sim
		\sum_{i=0}^{s-1} R^{-1}(u)^j_i \, \Delta_i.
	\end{equation}
	The \textsc{lhs} is given above, while the \textsc{rhs} follows from \cref{Rmatrix}:
	\begin{equation*}
	\begin{split}
		\sum_{i=0}^{s-1} R^{-1}(u)^j_i \, \Delta_i
		& \sim
		\frac{1}{s} \sum_{i=0}^{s-1} 
		\sum_{k=0}^{s-1} J^{(2k+1)(j-i)} \exp\Biggl(
				- \sum_{m=1}^{\infty} \frac{B_{m+1}(\frac{2k + 1}{2s})}{m(m+1)} (-u)^{m}
			\Biggr)
			\frac{J^{i}}{(2s)^{\frac{1}{2s}+1}} \\
		& = 
		\frac{1}{(2s)^{\frac{1}{2s}+1}} \sum_{k=0}^{s-1}
			\bigg( \frac{1}{s} \sum_{i=0}^{s-1}  J^{-2ki} \bigg)
			J^{(2k+1)j} \exp\Biggl(
					- \sum_{m=1}^{\infty} \frac{B_{m+1}(\frac{2k + 1}{2s})}{m(m+1)} (-u)^{m}
				\Biggr) \\
		& =
		\frac{J^j}{(2s)^{\frac{1}{2s}+1}} \exp\Biggl(
				- \sum_{m=1}^{\infty} \frac{B_{m+1}(\frac{1}{2s})}{m(m+1)} (-u)^{m}
			\Biggr) .
		\end{split}
	\end{equation*}
	So the two sides are indeed equal.
\end{proof}

Consider now the change of basis on the vector space underlying the cohomological field theory, from $(e_0,\dots,e_{s-1})$ to $(v_0,\dots,v_{s-1})$:
\begin{equation}
	v_a = \sum_{i=0}^{s-1} \frac{J^{(2a+1)i}}{2s} e_i,
	\qquad\qquad
	e_i = 2 \sum_{a=0}^{s-1} J^{-(2a+1)i} v_a.
\end{equation}
In the following lemma, the indices of the Kronecker deltas are taken modulo $s$.

\begin{lemma}\label{lem:CohFT:flat:basis}
	In the basis $(v_0,\dots,v_{s-1})$, the following holds.
	\begin{itemize}
		\item The underlying topological field theory is given by
		\begin{equation}\label{eqn:TFT}
			\eta(v_a,v_b) = \frac{1}{4s} \delta_{a+b+1}
			\qquad
			\varpi^{r,\theta}_{g,n}(v_{a_1} \otimes \cdots \otimes v_{a_n}) =
			\frac{(2s)^{2g-1}}{2} \, \delta_{a_1+\cdots+a_n-g+1}.
		\end{equation}
		Moreover, the unit is given by $v_0$.
		\item The $R$-matrix is given by
		\begin{equation}\label{eqn:R:matrix}
		 	R^{-1}(u) = \exp\Biggl(
				- \sum_{m=1}^{\infty} \frac{\mathrm{diag}_{a=0}^{s-1} \, B_{m+1}(\frac{2a + 1}{2s})}{m(m+1)} (-u)^{m}
			\Biggr).
		\end{equation}
		\item The auxiliary functions $\Xi_a \coloneqq 2 \sum_{i=0}^{s-1} J^{-(2a+1)i} \xi_i$ are given by
		\begin{equation}
			\Xi_a(z)
			=
			2 (2s)^{\frac{2s-2a-1}{2s}} \sum_{m = 0}^{\infty}
				\frac{(2sm + 2s-2a-1)^{m}}{m!} e^{(2sm+2s-2a-1)x(z)}.
		\end{equation}
	\end{itemize}
\end{lemma}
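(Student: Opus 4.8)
The three assertions are of two different natures, and the plan is to dispatch the first two by a discrete Fourier transform on $\Z/s\Z$ and to reserve the real work for the auxiliary functions.

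\emph{Topological field theory and unit.} First I would substitute $v_a = \sum_{i=0}^{s-1}\frac{J^{(2a+1)i}}{2s}e_i$ into the data $\eta(e_i,e_j)=\delta_{i,j}$, $\mathbb{1}=\sum_i\frac{J^i}{2s}e_i$, and the diagonal TFT $\varpi^{r,\theta}_{g,n}(e_{i_1}\otimes\cdots\otimes e_{i_n})=\delta_{i_1,\ldots,i_n}(J^i/2s)^{-2g+2-n}$. Every resulting expression collapses to a geometric sum of the form $\sum_{i=0}^{s-1}(J^2)^{mi}$, which equals $s$ when $s\mid m$ and vanishes otherwise, since $J^2=e^{2\iu\pi/s}$ is a primitive $s$-th root of unity. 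For the pairing this yields $\eta(v_a,v_b)=\frac{1}{(2s)^2}\sum_i J^{2(a+b+1)i}=\frac{1}{4s}\delta_{a+b+1}$; for the TFT, after collecting the exponent $2i(a_1+\cdots+a_n-g+1)$ of $J$ and the power $(2s)^{2g-2+n}/(2s)^n\cdot s=(2s)^{2g-1}/2$, one obtains $\frac{(2s)^{2g-1}}{2}\delta_{a_1+\cdots+a_n-g+1}$, i.e. \eqref{eqn:TFT}. The unit is immediate, since $v_0=\sum_i\frac{J^i}{2s}e_i=\mathbb{1}$.

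\emph{The $R$-matrix.} Next I would conjugate the $R$-matrix of \cref{Rmatrix} by the same transformation, using $e_i=2\sum_a J^{-(2a+1)i}v_a$. Writing $E_k(u)=\exp\!\big(-\sum_m \frac{B_{m+1}((2k+1)/2s)}{m(m+1)}(-u)^m\big)$, so that $R^{-1}(u)^j_i=\frac1s\sum_k J^{(2k+1)(j-i)}E_k(u)$, the transformed entries become $\frac{1}{s^2}\sum_k E_k(u)\big(\sum_i J^{2(a-k)i}\big)\big(\sum_j J^{2(k-b)j}\big)$. The two root-of-unity sums collapse to $s\,\delta_{a,k}$ and $s\,\delta_{k,b}$, leaving the diagonal matrix with entries $E_a(u)$, which is exactly \eqref{eqn:R:matrix}. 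Conceptually, this is the statement that the Vandermonde matrix $(J^{(2a+1)i})$ diagonalizes the circulant-type kernel $R^{-1}(u)^j_i$.

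\emph{The auxiliary functions.} This is where the main difficulty lies, and I would prove the formula by computing both sides in closed form and matching them. On the one hand, inserting $\xi_i(z)=2\Delta_i\frac{z}{a_i^2-z^2}$ with $\Delta_i=a_i/2s$, $a_i=\beta J^i$, $\beta=(2s)^{-1/2s}$, into $\Xi_a=2\sum_i J^{-(2a+1)i}\xi_i$ and setting $\zeta=J^{2i}$, $p=z^2/\beta^2$, turns the defining sum into $\frac{2z}{s\beta}\sum_{\zeta^s=1}\frac{\zeta^{-a}}{\zeta-p}$. The elementary identity $\sum_{\zeta^s=1}\frac{\zeta^{-a}}{\zeta-p}=\frac{s\,p^{s-a-1}}{1-p^s}$ (valid for $0\le a\le s-1$, proved by expanding $\frac{1}{\zeta-p}$ geometrically and summing $\sum_\zeta\zeta^{-a-n-1}$), together with $\beta^{-2s}=2s$, yields the closed form $\Xi_a(z)=\frac{2(2s)^{(2s-2a-1)/2s}\,z^{2s-2a-1}}{1-2sz^{2s}}$. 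On the other hand, writing $c=2s-2a-1$ and $\alpha=c/2s$, I would sum the claimed series: applying $t\frac{d}{dt}$ to \eqref{eqn:Lambert:expansion} gives $\sum_{m\ge0}\frac{(m+\alpha)^m}{m!}t^m=\frac{1}{1+W(-t)}\big(\frac{W(-t)}{-t}\big)^\alpha$, so $\sum_m\frac{(2sm+c)^m}{m!}e^{(2sm+c)x}=e^{cx}\sum_m\frac{(m+\alpha)^m}{m!}(2se^{2sx})^m$. Since $W(-2se^{2sx})=-2sz^{2s}$, the branch used in \cref{prop:omega01}, one gets $1+W(-t)=1-2sz^{2s}$ and $\big(\frac{W(-t)}{-t}\big)^\alpha=e^{cz^{2s}}$, whence $e^{cx}\cdot\frac{e^{cz^{2s}}}{1-2sz^{2s}}=\frac{z^c}{1-2sz^{2s}}$ after using $e^{cx}=z^ce^{-cz^{2s}}$. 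This matches the closed form above. The main obstacle is precisely this last identification: recognizing that the summation of the stated series is governed by the tree/Lambert function, deriving the generating identity for $\sum_m\frac{(m+\alpha)^m}{m!}t^m$ from \eqref{eqn:Lambert:expansion}, and checking that $W(-2se^{2sx})=-2sz^{2s}$ is indeed the branch relevant near $e^x=0$, that is, as $z\to0$.
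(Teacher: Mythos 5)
Your proposal is correct and follows essentially the same route as the paper: root-of-unity (discrete Fourier) sums handle the pairing, the TFT, and the $R$-matrix, and the third bullet is settled by combining the closed rational form $\Xi_a(z) = 2(2s)^{\frac{2s-2a-1}{2s}}z^{2s-2a-1}/(1-2sz^{2s})$ with the Lambert-$W$ expansion \eqref{eqn:Lambert:expansion} through the spectral curve relation $W(-2se^{2sx}) = -2sz^{2s}$. The only difference is directional and cosmetic: the paper expands the rational form into the series by writing $\Xi_a \propto \frac{d}{dx}z^{2s-2a-1}$ and differentiating \eqref{eqn:Lambert:expansion} term by term, whereas you sum the claimed series via the derived identity $\sum_{m\ge 0}\frac{(m+\alpha)^m}{m!}t^m=\frac{1}{1+W(-t)}\bigl(\frac{W(-t)}{-t}\bigr)^{\alpha}$ and then match closed forms --- the same computation read in the opposite direction.
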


\begin{proof}
	The pairing is given by
	\[
	\begin{split}
		\eta(v_a,v_b)
		=
		\frac{1}{(2s)^2} \sum_{i,j = 0}^{s-1} J^{(2a+1)i + (2b+1)j} \eta(e_i,e_j)
		=
		\frac{1}{(2s)^2} \sum_{i = 0}^{s-1} J^{2(a+b+1)i}
		=
		\frac{1}{4s} \delta_{a+b+1}.
	\end{split}
	\]
	Similarly for the TFT and the $R$-matrix elements. 
	To conclude, let us compute the expansion of the auxiliary functions after the change of basis:
	\[
		\Xi_a(z)
		=
		2 \sum_{i=0}^{s-1} J^{-(2a+1)i} \xi_i(z)
		=
		\frac{4z}{(2s)^{\frac{1}{2s}+1}} \sum_{i=1}^{s-1} J^{-2ai} \frac{1}{a_i^2 - z^2}
		=
		\frac{2 ( (2s)^{\frac{1}{2s}} z )^{2s-2a-1}}{1-2s z^{2s}}.
	\]
	On the other hand, the spectral curve equation express $y = z$ in terms of $x$ through the Lambert $W$ function:
	\[
		z = \left( \frac{W(-2s e^{2s x})}{-2s} \right)^{\frac{1}{2s}}.
	\]
	In particular, from the relation $\frac{dz^{\alpha}}{dx} = \frac{\alpha \, z^{\alpha}}{1 - 2s z^{2s}}$ with $\alpha = 2s-2a-1$, we find that
	\[
		\Xi_a
		=
		\frac{2 (2s)^{\frac{2s-2a-1}{2s}}}{2s-2a-1} \frac{dz^{2s-2a-1}}{dx}
		=
		\frac{2}{(2s-2a-1)}
			\frac{d}{dx} \left( -W(-2s e^{2s x}) \right)^{\frac{2s-2a-1}{2s}} .
	\]
	We can now use the expansion given in \cref{eqn:Lambert:expansion} to finally get
	\begin{equation*}
	\begin{split}
		\Xi_a
		& =
		\frac{2}{s} \frac{d}{dx} \sum_{m = 0}^{\infty}
			\frac{(m + \frac{2s-2a-1}{2s})^{m-1}}{m!} (2s e^{2s x})^{m+\frac{2s-2a-1}{2s}} \\
		& =
		2 (2s)^{\frac{2s-2a-1}{2s}} \sum_{m = 0}^{\infty} \frac{(2sm + 2s-2a-1)^{m}}{m!} e^{(2sm+2s-2a-1)x}.
	\end{split}	\tag*{\qedhere}
	\end{equation*}
\end{proof}

We can now express the cohomological field theory $\Omega_{g,n}^{r,\theta} = R.\varpi_{g,n}^{r,\theta}$ as a sum over stable graphs. Because of the simple expression of the underlying topological field theory and $R$-matrix, the result can be expressed in a concise way through weighted stable graphs. The following definition is an adaptation of \cite[subsection~1.1]{JPPZ17} to our setting.

\begin{definition}
	Let $\Gamma \in \mc{G}_{g,n}$ be a stable graph of type $(g,n)$, and consider some weights $a = (a_1,\dots,a_n)$, $ 0 \leq a_i \leq s-1$, satisfying the modular constraint $\sum_{i=1}^n a_i \equiv g-1 \pmod{s}$. A \emph{spin weighting modulo $s$} of $\Gamma$ is a map $w \colon H_{\Gamma} \to \set{0,\dots,s-1}$ satisfying the following modular constraints.
	\begin{itemize}
		\item
		\textsc{Vertex conditions.} For every vertex $v \in V_{\Gamma}$,
		\begin{equation}
			\sum_{h \in H_{\Gamma}(v)} w(h) \equiv g(v) - 1 \pmod{s}.
		\end{equation}
		
		\item
		\textsc{Edge conditions.} For every edge $e = (h,h') \in E_{\Gamma}$,
		\begin{equation}
			w(h) + w(h') \equiv - 1 \pmod{s}.
		\end{equation}

		\item
		\textsc{Leaf conditions.} For every leaf $\lambda_i \in \Lambda_{\Gamma}$ corresponding to the marking $i \in \set{1,\dots,n}$,
		\begin{equation}
			w(\lambda_i) \equiv a_i \pmod{s}.
		\end{equation}
	\end{itemize}
	Denote by $W_{\Gamma}^{s,\theta}(a)$ the set of spin weighting modulo $s$ of $\Gamma$.
\end{definition}

\begin{proposition}\label{prop:CohFT:sum:stable:graphs}
	The CohFT $\Omega_{g,n}^{r,\theta} = R.\varpi_{g,n}^{r,\theta}$ is given by the following sum over stable graphs
	\begin{equation}\label{eqn:CohFT:sum:stable:graphs}
	\begin{split}
		\Omega_{g,n}^{r,\theta}(v_{a_1} \otimes \cdots \otimes v_{a_n})
		=
		2^{2g-2} \sum_{\Gamma \in \mc{G}_{g,n}} & \sum_{w \in W_{\Gamma}^{s,\theta}(a)}
		\frac{s^{2g-1-h^1(\Gamma)}}{|\mathrm{Aut}{(\Gamma)}|} \xi_{\Gamma,\ast}
		\prod_{\mathclap{v \in V_{\Gamma}}} \;
			\exp\Biggl(
				\sum_{m \ge 1} \frac{(-1)^{m}B_{m+1}(\frac{1}{2s})}{m(m+1)} \kappa_m(v)
			\Biggr) \\
		\times \; & \;
		\prod_{\mathclap{\substack{e \in E_{\Gamma} \\ e = (h,h')}}} \;
			\frac{1 - \exp\Bigl(
				- \sum_{m \ge 1} \frac{(-1)^m B_{m+1}(\frac{2w(h)+1}{2s})}{m(m+1)} \bigl( (\psi_{h})^m - (-\psi_{h'})^m \bigr)
			\Bigr)}{\psi_h + \psi_{h'}} \\
		\times \; & \;
		\prod_{\mathclap{\lambda_i \in \Lambda_{\Gamma}}} \;
			\exp\Biggl(
				- \sum_{m \ge 1} \frac{(-1)^m B_{m+1}(\frac{2a_i+1}{2s})}{m(m+1)} \psi_{\lambda_i}^m
			\Biggr)
	\end{split}
	\end{equation}
	if $\sum_{i=1}^n a_i \equiv g-1 \pmod{s}$, and zero otherwise.
\end{proposition}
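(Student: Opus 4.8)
The plan is to apply the Givental stable-graph formula for the unit-preserving action directly to the explicit data assembled in \cref{lem:CohFT:flat:basis}, and to verify term by term that it reproduces \cref{eqn:CohFT:sum:stable:graphs}. The crucial simplification is that, in the flat basis $(v_0,\dots,v_{s-1})$, the $R$-matrix of \cref{eqn:R:matrix} is \emph{diagonal}, so every contraction is immediate and the only remaining content is the bookkeeping of indices and scalar prefactors. I would therefore write $\Omega_{g,n}^{r,\theta} = R.\varpi^{r,\theta} = RT_{\textup{R}}\varpi^{r,\theta}$ and expand over stable graphs.

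First I would handle the vertices. Since $\mathbb{1}=v_0$ and $R^{-1}(u)v_0 = \exp(-\sum_m \frac{B_{m+1}(1/2s)}{m(m+1)}(-u)^m)\,v_0$, \cref{lem:translation:hat} identifies $\hat T$ as the scalar series $\hat T_m = \frac{(-1)^m B_{m+1}(1/2s)}{m(m+1)}$ times the unit, so each vertex acquires the factor $\exp(\sum_{m\ge 1}\hat T_m\kappa_m(v))$, which is exactly the vertex exponential in the claim. The underlying TFT value $\varpi^{r,\theta}_{g(v),n(v)}(v_{\cdots}) = \tfrac{(2s)^{2g(v)-1}}{2}\delta_{\sum w(h)-g(v)+1}$ of \cref{eqn:TFT} simultaneously produces the \textsc{vertex condition} (from the Kronecker delta, indices read modulo $s$) and the scalar $\tfrac{(2s)^{2g(v)-1}}{2}$.

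Next I would compute the edge and leaf insertions. For a leaf the diagonal $R^{-1}(\psi_{\lambda_i})$ acts without mixing indices, forcing $w(\lambda_i)=a_i$ and contributing precisely the leaf exponential of the claim. For an edge I would use $E^{a,a'}(u,v)=\eta^{a,a'}\frac{1-(R^{-1}(u))^a_a (R^{-1}(v))^{a'}_{a'}}{u+v}$ with $\eta^{a,a'}=4s\,\delta_{a+a'+1}$, the inverse of the pairing in \cref{eqn:TFT}; the Kronecker delta in $\eta^{a,a'}$ is the \textsc{edge condition}, and a scalar $4s$ is extracted. To match the numerator I would invoke the symmetry $B_{m+1}(1-x)=(-1)^{m+1}B_{m+1}(x)$: when $a'\equiv -1-a$, the factor $(R^{-1}(v))^{a'}_{a'}$ equals $\exp(\sum_m \frac{B_{m+1}(\frac{2a+1}{2s})}{m(m+1)}v^m)$, and a short rearrangement of signs turns $1 - (R^{-1}(u))^a_a(R^{-1}(v))^{a'}_{a'}$ into the exact numerator displayed in \cref{eqn:CohFT:sum:stable:graphs}.

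Finally I would collect all scalar prefactors. Using $h^1(\Gamma)=|E_\Gamma|-|V_\Gamma|+1$ and $\sum_{v}g(v)=g-h^1(\Gamma)$, the product $\prod_v \tfrac{(2s)^{2g(v)-1}}{2}$ over vertices together with $\prod_e 4s$ over edges collapses to $2^{2g-2}\,s^{2g-1-h^1(\Gamma)}$, matching the global prefactor; and summing the vertex and edge congruences shows a weighting exists only when $\sum_i a_i\equiv g-1\pmod s$, giving the vanishing in the remaining cases. The factor $1/|\mathrm{Aut}(\Gamma)|$ and the pushforward $\xi_{\Gamma,\ast}$ are inherited verbatim from the definition of the action. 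I expect the main obstacle to be purely organisational: keeping the sign conventions in the Bernoulli arguments and the powers of $2$ and $s$ exactly aligned, since a single misplaced sign in $B_{m+1}(1-x)=(-1)^{m+1}B_{m+1}(x)$ or in the substitution $(-u)^m\mapsto(-\psi)^m$ would corrupt the edge term; the conceptual content is entirely contained in the diagonality of $R$ established in \cref{lem:CohFT:flat:basis}.
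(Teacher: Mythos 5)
Your proposal is correct and follows essentially the same route as the paper's own proof: both expand $R.\varpi^{r,\theta}$ via the Givental stable-graph formula in the flat basis of Lemma~\ref{lem:CohFT:flat:basis}, identify the vertex/edge/leaf conditions with the Kronecker deltas of the TFT, pairing and (diagonal) $R$-matrix, and then collect the powers of $2$ and $s$ using $h^1(\Gamma)=|E_\Gamma|-|V_\Gamma|+1$. Your write-up simply makes explicit the bookkeeping (the Bernoulli symmetry $B_{m+1}(1-x)=(-1)^{m+1}B_{m+1}(x)$ for the edge term, the derivation of $\hat T_m$ from $R^{-1}(u)\mathbb{1}$, and the congruence $\sum_i a_i \equiv g-1 \pmod{s}$) that the paper leaves implicit.
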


\begin{proof}
	From \cref{lem:CohFT:flat:basis} and the definition of unit-preserving $R$-matrix action, we get
	\begin{equation*}
	\begin{split}
		\Omega_{g,n}^{r,\theta}(v_{a_1} \otimes \cdots \otimes v_{a_n})
		= \!
		\sum_{\Gamma \in \mc{G}_{g,n}} & \sum_{w \in W_{\Gamma}^{s,\theta}(a)}\!
		\frac{1}{|\mathrm{Aut}{(\Gamma)}|} \xi_{\Gamma,\ast}
		\prod_{\mathclap{v \in V_{\Gamma}}} \; \frac{(2s)^{2g(v)-1}}{2}
			\exp\! \Biggl(
				\sum_{m \ge 1} \frac{(-1)^{m}B_{m+1}(\frac{1}{2s})}{m(m+1)} \kappa_m(v) \!
			\Biggr) \\
		\times \; & \;
		\prod_{\mathclap{\substack{e \in E_{\Gamma} \\ e = (h,h')}}} \; (4s) \cdot 
			\frac{1 - \exp\Bigl(
				- \sum_{m \ge 1} \frac{(-1)^m B_{m+1}(\frac{2w(h)+1}{2s})}{m(m+1)} \bigl( (\psi_{h})^m - (-\psi_{h'})^m \bigr)
			\Bigr)}{\psi_h + \psi_{h'}} \\
		\times \; & \;
		\prod_{\mathclap{\lambda_i \in \Lambda_{\Gamma}}} \;
			\exp\Biggl(
				- \sum_{m \ge 1} \frac{(-1)^m B_{m+1}(\frac{2a_i+1}{2s})}{m(m+1)} \psi_{\lambda_i}^m
			\Biggr).
	\end{split}
	\end{equation*}
	if $\sum_{i=1}^n a_i \equiv g-1 \pmod{s}$, and zero otherwise. The spin weightings modulo $s$ are simply keeping track of the Kronecker deltas in the TFT and the $R$-matrix. Collecting the powers of $s$, we get the exponent
	\[
		|E_{\Gamma}| + \sum_{v \in V_{\Gamma}}(2g(v) - 1) = 2g-1-h^1(\Gamma),
	\]
	and collecting the powers of $2$, we find the exponent
	\[
		2|E_{\Gamma}| + 2\sum_{v \in V_{\Gamma}}(g(v) - 1) = 2g-2.\tag*{\qedhere}
	\]
\end{proof}

We have all the ingredients now to state the main result of the section.

\begin{theorem}[Spin ELSV formula]\label{thm:ELSV:CohFT}
	Conjecture~\ref{conj:spin:HNs:TR} is equivalent to the following statement. For $r=2s$ and $\mu = (\mu_1,\dots,\mu_n) \in \mc{OP}(d)$, the spin Hurwitz numbers are given by
	\begin{equation}\label{eqn:ELSV:CohFT}
		h_{g;\mu}^{r,\theta}
		=
		2^{1-g} r^{\frac{(r+1)(2g-2+n)+d}{r}}
		\left( \prod_{i=1}^n \frac{\left( \frac{\mu_i}{r} \right)^{[\mu_i]}}{[\mu_i]!}\right)
		\int_{\overline{\mathcal{M}}_{g,n}}
		\frac{\Omega_{g,n}^{r,\theta}(v_{\braket{\mu_1}} \otimes \cdots \otimes v_{\braket{\mu_n}})}{\prod_{i=1}^n(1 - \frac{\mu_i}{r}\psi_i)} \, .
	\end{equation}
	Here we wrote $\mu_i = r[\mu_i] + r - (2 \braket{\mu_i} + 1)$, with $0 \le \braket{\mu_i} \le s-1$.
\end{theorem}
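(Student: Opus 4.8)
The plan is to treat the Eynard--DOSS correspondence (\cref{thm:Eyn:DOSS}) as an \emph{unconditional} identity and to read the theorem off by extracting Taylor coefficients near $e^{x_i}=0$. Combining \cref{thm:spin:HNs:SC:red:norm} with \cref{thm:Eyn:DOSS} applied to the reduced curve \eqref{eqn:spin:HNs:SC:red:norm} (whose locality guarantees the correspondence even though we sum only over the representatives $\bar I=\{0,\dots,s-1\}$), one expresses the equivariant correlators in the flat basis $(v_a)$ of \cref{lem:CohFT:flat:basis} as
\begin{equation}
	\omega_{g,n}^{r,\theta}(z_1,\dots,z_n)
	=
	2^{1-g-n}\, C^{2g-2+n}
	\sum_{a_1,\dots,a_n=0}^{s-1}
	\int_{\M_{g,n}}
		\Omega_{g,n}^{r,\theta}(v_{a_1}\otimes\cdots\otimes v_{a_n})
		\prod_{j=1}^n \sum_{k_j\ge 0}\psi_j^{k_j}\, d\Xi_{k_j,a_j}(z_j),
\end{equation}
where $C=(2s)^{1/2s+1}=r^{1/r+1}$, and where $\Omega^{r,\theta}$ and $\Xi_a$ are exactly the data of \cref{lem:CohFT:flat:basis,prop:CohFT:sum:stable:graphs}; the change of basis $e_i\leftrightarrow v_a$ is self-dual and produces no extra constant, since the matrices $2J^{-(2a+1)i}$ and $\tfrac{1}{2s}J^{(2a+1)i}$ are mutually inverse on the index range $\{0,\dots,s-1\}$. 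Since \cref{conj:spin:HNs:TR} asserts that the left-hand side equals $d_1\cdots d_n F_{g,n}^{r,\theta}$, and since the free energies determine and are determined by the numbers $h_{g;\mu}^{r,\theta}$, the conjecture is \emph{equivalent} to the equality of the coefficient of $\prod_i \mu_i e^{\mu_i x(z_i)}\,dx(z_i)$ on both sides; the right-hand coefficient is precisely \eqref{eqn:ELSV:CohFT}.

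The heart of the argument is the coefficient extraction. With the choice $C_i=-\tfrac{\iu}{\sqrt{4s}}$ one has $x-x(a_i)=-\zeta_i^2/(4s)$ at every ramification point, so the operator $-\tfrac{1}{\zeta_i}\tfrac{d}{d\zeta_i}$ acts on functions of $x$ as the single global operator $\tfrac1r\tfrac{d}{dx}$. Applying it to the Lambert expansion of $\Xi_a$ from \cref{lem:CohFT:flat:basis} and summing the resulting geometric series gives
\begin{equation}
	\sum_{k\ge 0}\psi^k\, d\Xi_{k,a}(z)
	=
	\sum_{\mu}\, 2\,r^{\frac{r-(2a+1)}{r}}\,\frac{\mu^{[\mu]}}{[\mu]!}\,
	\frac{\mu}{1-\frac{\mu}{r}\psi}\,e^{\mu x(z)}\,dx(z),
\end{equation}
where the inner sum runs over $\mu\equiv r-(2a+1)\pmod r$, i.e. over those $\mu$ with $\langle\mu\rangle=a$ and $[\mu]$ fixed by $\mu=r[\mu]+r-(2\langle\mu\rangle+1)$. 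This single mechanism simultaneously produces the ELSV prefactor $\mu^{[\mu]}/[\mu]!$, the denominators $1/(1-\tfrac{\mu_i}{r}\psi_i)$, and forces the leaf weight $a_i=\langle\mu_i\rangle$.

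Matching the coefficient of $\prod_i \mu_i e^{\mu_i x_i}\,dx_i$ then yields
\begin{equation}
	h_{g;\mu}^{r,\theta}
	=
	2^{1-g-n}\,C^{2g-2+n}\,2^{n}\prod_{i=1}^n r^{\frac{r-(2\langle\mu_i\rangle+1)}{r}}
	\left(\prod_{i=1}^n \frac{\mu_i^{[\mu_i]}}{[\mu_i]!}\right)
	\int_{\M_{g,n}}
		\frac{\Omega_{g,n}^{r,\theta}(v_{\langle\mu_1\rangle}\otimes\cdots\otimes v_{\langle\mu_n\rangle})}{\prod_{i=1}^n(1-\frac{\mu_i}{r}\psi_i)},
\end{equation}
and it remains to collapse the scalars to those of \eqref{eqn:ELSV:CohFT}. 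Using $r-(2\langle\mu_i\rangle+1)=\mu_i-r[\mu_i]$ one gets $\prod_i r^{(r-(2\langle\mu_i\rangle+1))/r}=r^{d/r-\sum_i[\mu_i]}$, and with $C^{2g-2+n}=r^{(1/r+1)(2g-2+n)}$ the total power of $r$ becomes $\tfrac{(r+1)(2g-2+n)+d}{r}-\sum_i[\mu_i]$, exactly what converts $\prod_i\mu_i^{[\mu_i]}/[\mu_i]!$ into $\prod_i(\mu_i/r)^{[\mu_i]}/[\mu_i]!$ while leaving the advertised $r^{\frac{(r+1)(2g-2+n)+d}{r}}$; the powers of two combine as $2^{1-g-n}\cdot 2^n=2^{1-g}$. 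The modular constraint $\sum_i\langle\mu_i\rangle\equiv g-1\pmod s$ is automatic, since otherwise both $\Omega_{g,n}^{r,\theta}$ (by \cref{prop:CohFT:sum:stable:graphs}) and $h_{g;\mu}^{r,\theta}$ vanish.

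The main difficulty is bookkeeping rather than conceptual. One must track the basis change $e_i\leftrightarrow v_a$ together with its compatible effect on both the CohFT argument and the auxiliary differentials, and justify the telescoping step, in particular that $(-\tfrac{1}{\zeta_i}\tfrac{d}{d\zeta_i})^k$ intertwines with $(\tfrac1r\tfrac{d}{dx})^k$ uniformly at all ramification points; these are precisely the places where the explicit forms of \cref{lem:CohFT:flat:basis} and the Lambert expansion \eqref{eqn:Lambert:expansion} are indispensable. A secondary point, already flagged in the text preceding \cref{thm:spin:HNs:SC:red:norm}, is that \cref{thm:Eyn:DOSS} remains valid despite summing only over $\bar I$, which follows from the locality of the Eynard--DOSS construction.
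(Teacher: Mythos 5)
Your proposal is correct and takes essentially the same route as the paper's own proof: combine \cref{thm:spin:HNs:SC:red:norm} with the Eynard--DOSS formula in the flat basis $(v_a)$, use $-\tfrac{1}{\zeta_i}\tfrac{d}{d\zeta_i}=\tfrac{1}{r}\tfrac{d}{dx}$ together with the Lambert-type expansion of $\Xi_a$ from \cref{lem:CohFT:flat:basis} to sum the $\psi$-geometric series, substitute $\mu_i=r[\mu_i]+r-(2\braket{\mu_i}+1)$, and collect the powers of $2$ and $r$. Your per-coefficient extraction, the explicit collapse of constants, and the remark that the modular constraint $\sum_i\braket{\mu_i}\equiv g-1\pmod{s}$ makes both sides vanish simultaneously are just a slightly more itemised rendering of the paper's single generating-series computation.
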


\begin{corollary}
	By \cref{thm:AS}, \cref{eqn:ELSV:CohFT} holds unconditionally.
\end{corollary}

\begin{proof}[Proof of \cref{thm:ELSV:CohFT}]
	Combining \cref{thm:spin:HNs:SC:red:norm} and the Eynard--DOSS formula of \cref{thm:Eyn:DOSS} in the basis $(v_0,\dots,v_{s-1})$, we find that the topological recursion amplitudes computed on the spin Hurwitz numbers spectral curve are given by
	\[
		\omega^{r,\theta}_{g,n}(z_{\bbraket{n}})
		=
		2^{1-g-n} (2s)^{\frac{(2s+1)(2g-2+n)}{2s}} \sum_{a_1,\dots,a_n=0}^{s-1}
		\int_{\overline{\mathcal{M}}_{g,n}}
			\Omega^{r,\theta}_{g,n}(v_{a_1}\otimes \cdots \otimes v_{a_n})
			\prod_{j=1}^n \sum_{k_j \ge 0} \psi_j^{k_j} d\Xi_{k_j,a_j}(z_j).
	\]
	With our choice of constants $C_i$, we have $- \frac{1}{\zeta_i}\frac{d}{d\zeta_i} = \frac{1}{2s} \frac{d}{dx}$. Thus,
	\[
	\begin{split}
		\omega^{r,\theta}_{g,n}(z_{\bbraket{n}})
		& =
		d_1 \cdots d_n \!\!\!\!\! \sum_{\substack{0 \le a_1,\dots,a_n < s \\ k_1,\dots,k_n \ge 0}}
		\int_{\overline{\mathcal{M}}_{g,n}}\!\!\!\!\!
		\Omega^{r,\theta}_{g,n}(v_{a_1}\! \otimes \cdots \otimes v_{a_n})
		2^{1-g-n} (2s)^{\frac{(2s+1)(2g-2+n)}{2s}} \!
			\prod_{j=1}^n \left( \frac{\psi_j}{2s} \frac{d}{dx} \right)^{k_j} \!\!\! \Xi_{a_j}(z_j) \\
		& =
		d_1 \cdots d_n \!\!\! \sum_{\substack{0 \le a_1,\dots,a_n < s \\ k_1,\dots,k_n \ge 0}}
		\int_{\overline{\mathcal{M}}_{g,n}}\!\!\!\!
		\Omega^{r,\theta}_{g,n}(v_{a_1}\otimes \cdots \otimes v_{a_n})
			2^{1-g} (2s)^{\frac{(2s+1)(2g-2+n)+\sum_i(2s-2a_i-1)}{2s}} \\
		&\qquad\qquad\qquad\qquad\qquad \times
			\prod_{j=1}^n \left(\frac{\psi_j}{2s}\right)^{k_j} \sum_{m_j \ge 0} \frac{(2sm_j+2s-2a_j-1)^{m_j+k_j}}{m_j!} e^{(2sm_j+2s-2a_j-1)x}.
	\end{split}
	\]
	For $\mu$ odd, let us write $\mu = 2s[\mu] + 2s - (2\braket{\mu} + 1)$. Then, after substituting $\mu_j = 2sm_j+2s-2a_j-1$, the above formula reads
	\[
	\begin{split}
		\omega^{r,\theta}_{g,n}(z_{\bbraket{n}})
		& =
		d_1 \cdots d_n \!\!\!\! \sum_{\substack{\mu_1,\dots,\mu_n > 0 \\ \mu_i \text{ odd}}}
		\int_{\overline{\mathcal{M}}_{g,n}}
			\frac{
				\Omega^{r,\theta}_{g,n}(v_{\braket{\mu_1}}\otimes \cdots \otimes v_{\braket{\mu_n}})
			}{
				\prod_{j=1}^n(1 - \frac{\mu_j}{2s} \psi_j)
			}
			2^{1-g} (2s)^{\frac{(2s+1)(2g-2+n)+|\mu|}{2s}}
			\prod_{j=1}^n \frac{\left(\frac{\mu_j}{2s}\right)^{[\mu_j]}}{[\mu_j]!} e^{\mu_j x}.
	\end{split}
	\]
\end{proof}


\begin{example}[{The case $(g,n)=(1,1)$}]
	Let us explicitly write the CohFT for $g = n = 1$. The only non-trivial leaf decoration is $a = 0$. Moreover, there are two stable graphs of type $(1,1)$, namely 
	$\Gamma = \!\adjustbox{valign=c}{\begin{tikzpicture}[scale=.6]
		\draw (0,0) -- (-.8,0);
		\node at (0,0) [draw, shape=circle, inner sep=.4, fill=black, text=white] {\tiny$1$};
	\end{tikzpicture}}$ 
	and 
	$\Gamma' = \!\adjustbox{valign=c}{\begin{tikzpicture}[scale=.6]
		\draw (0,0) -- (-.8,0);
		\draw (.25,0) circle (.25cm);
		\node at (0,0) [draw, shape=circle, inner sep=.4, fill=black, text=white] {\tiny$0$};
	\end{tikzpicture}}$
	with automorphism groups of order $1$ and $2$ respectively. For $\Gamma$ there is no weighting to consider. For $\Gamma'$, the only admissible weighting for the edge $e=(h,h')$ is of the form $(a,s-1-a)$ for some $a = 0,\dots,s-1$. Thus, \cref{eqn:CohFT:sum:stable:graphs} reads
	\begin{equation}
		\Omega^{r,\theta}_{1,1}(v_a)
		=
		\delta_{a,0} \left[
			s
			\Bigl(1 - \tfrac{1}{2} B_{2}\bigl(\tfrac{1}{2s}\bigr) \kappa_1 \Bigr)
			\Bigr(1 +  \tfrac{1}{2} B_{2}\bigl(\tfrac{1}{2s}\bigr) \psi_1 \Bigr)
			+
			\frac{\xi_{\Gamma',\ast}(1)}{2} \Biggl( - \sum_{a=0}^{s-1} \tfrac{1}{2} B_{2}\bigl(\tfrac{2a + 1}{2s} \bigr) \Biggr)
		\right].
	\end{equation}
	Using the Bernoulli polynomials property $\sum_{a=0}^{s-1} B_{2}(\tfrac{2a + 1}{2s}) = -\frac{1}{12s}$ and the relations $\psi_1 = \kappa_1 = \frac{1}{24} \xi_{\Gamma',\ast}(1)$ valid in $H^{2}(\overline{\mathcal{M}}_{1,1})$, we obtain
	\begin{equation}
		\Omega^{r,\theta}_{1,1}(v_a)
		=
		\delta_{a,0} \left( s + \frac{\psi_1}{2s} \right).
	\end{equation}
	In particular, the ELSV for $g = n = 1$, $\mu = 2sb - 1$ reads
	\begin{equation}
	\begin{split}
		h_{1;\mu}^{r,\theta}
		&=
		4s^2 \frac{\mu^{b-1}}{(b-1)!}
		\int_{\overline{\mathcal{M}}_{1,1}}
			\left( s + \frac{\psi_1}{2s} \right) \left( 1 + \frac{\mu}{2s} \psi_1 \right) \\
		&= 
		4s^2 \frac{\mu^{b-1}}{(b-1)!}
		\left( \int_{\overline{\mathcal{M}}_{1,1}} \psi_1 \right)
		\left( \frac{\mu}{2} + \frac{1}{2s} \right)\\
		&=
		\frac{s^2}{12}\frac{\mu^{b-1}}{(b-1)!} \left( \mu + \frac{1}{s} \right)
	\end{split}
	\end{equation}
	which coincides with the expression of \cref{eqn:g1:n1:closed} obtained via Fock space computations.
\end{example}

\begin{remark}
	For the special case $r = 2$, i.e. $s = 1$, we can use the Bernoulli polynomials property $B_{m+1}(\tfrac{1}{2}) = (2^{-m} - 1) B_{m+1}$ and Mumford's formula to rewrite the CohFT $\Omega_{g,n}^{2,\theta}(\mathbb{1}^{\otimes n})$ as a product of two Hodge classes:
	\begin{equation}
		2^{2g-2} \,
		\exp\Biggl(
			- \sum_{m \ge 1} \frac{B_{m+1}(\tfrac{1}{2})}{m(m+1)} \biggl(
				\kappa_m - \sum_{i=1}^n \psi_i^m + \tfrac{1}{2}\iota_{\ast} \delta_m
			\biggr)
		\Biggr)
		=
		2^{2g-2} \, \Lambda (1) \Lambda (-\tfrac{1}{2}) \, ,
	\end{equation}
	where $\delta_m = \sum_{k + \ell = m-1} \psi^{k} (\psi')^{\ell}$, and $\iota$ is the boundary map. In particular, the ELSV for $\mu_i = 2b_i - 1$ reads
	\begin{equation}
		h_{g;\mu}^{2,\theta}
		=
		2^{4g-4+2n} \left( \prod_{i=1}^n \frac{\mu_i^{b_i-1}}{(b_i-1)!} \right)
		\int_{\overline{\mathcal{M}}_{g,n}}
		\frac{\Lambda (1) \Lambda (-\tfrac{1}{2})}{\prod_{i=1}^n \left( 1 - \frac{\mu_i}{2} \psi_i \right)}\,,
	\end{equation}
	expressing spin single Hurwitz numbers with $3$-completed cycles in terms of double Hodge integrals.
\end{remark}

\subsection{The CohFT as spin Chiodo class}
\label{sec:sign:Chiodo}

The (non-spin) $(r+1)$-completed cycles single Hurwitz numbers can be expressed in terms of intersection numbers of the Chiodo class \cite{Chi08}, by Zvonkine's $r$-ELSV formula \cite{Zvo06}, proved to be equivalent to topological recursion for these numbers in \cite{SSZ15}. The Chiodo class is a cohomological field theory defined via the universal bundle of the moduli space of curves with an $r$-spin bundle, i.e. an $r$-th root of the canonical bundle. Recently Chiodo class has been shown to be involved as cohomological representation, sometimes unexpectedly, of over fifteen enumerative problems arising from different branches of mathematics and physics. Several of these problems belong to Hurwitz theory, other concern for instance Masur--Veech volumes \cite{ABCDGLW23, CMS23}, double ramification cycle \cite{JPPZ17}, the Norbury class and volumes of moduli spaces of hyperbolic structures on super Riemann surfaces in relation with Jackiw--Teitelboim gravity \cite{SW20}. Several properties of the Chiodo class are investigated in \cite{GLN23}.

\smallskip

In this subsection, we show that the cohomological field theory found in \cref{prop:CohFT:sum:stable:graphs} can be interpreted in a similar way. The main difference, as more often in this paper, is the introduction of a sign. Let us recall the basic setup, which we specialise to our setting. It should be noted that there are three different, but equivalent, constructions of the compactified moduli space of spin bundles: by Jarvis \cite{Jar98,Jar00} using curves with $A_{r}$-singularities at the nodes and relatively torsion-free sheaves, by Caporaso--Casagrande--Cornalba \cite{CCC07} using iterated blowups at nodes (bubbled curves), and by Abramovich--Jarvis and Chiodo \cite{AJ03,Chi08a} using stacky curves with $ \mu_{r}$-automorphisms at the nodes. For an excellent introduction into this theory, see Chiodo--Zvonkine \cite{CZ09}.

\smallskip

We fix a positive integer $r$ as above. In \cite{Ols07,Chi08a}, the authors construct alternative compactifications of $\mc{M}_{g,n}$ adapted to spin structures. In particular, we will need the moduli space of $r$-stable curves, for which the main result for us is stated below.


\begin{theorem}[{\cite{Ols07,Chi08a}}]
	The moduli functor $ \M_{g,n}(r)$ of $r$-stable $n$-pointed curves of genus $g$ forms a proper, smooth, and irreducible Deligne--Mumford stack of dimension $3g - 3+ n$, and $ \M_{g,n}(1) = \M_{g,n}$. If $r' = l r$, there is a natural surjective, finite, and flat morphism $f^{r'}_r \colon \M_{g,n}(r') \to \M_{g,n}(r)$ that is invertible on the open dense substack of smooth $n$-pointed genus $g$ curves and yields an isomorphism between coarse spaces (and hence on Chow and cohomology). The morphism is not injective: indeed, the restriction to the substack of singular $r$-stable curves has degree $l$ as a morphism between stacks.
\end{theorem}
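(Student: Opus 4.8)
The plan is to realize $\M_{g,n}(r)$ as the moduli stack of balanced $r$-twisted stable curves in the sense of Abramovich--Vistoli and Olsson: a curve whose coarse space is an $n$-pointed stable curve of genus $g$, carrying trivial stabilizers away from the nodes and a balanced $\mu_r$-stabilizer at each node. For $r = 1$ no twisting is possible, so $\M_{g,n}(1) = \M_{g,n}$ is immediate. Algebraicity and the Deligne--Mumford property I would obtain from the general theory of twisted curves: the stack of twisted curves is algebraic with finite, reduced stabilizers, and the locus where the coarse curve is stable is open; imposing it cuts out $\M_{g,n}(r)$ as a Deligne--Mumford stack. The dimension count reduces to the deformation theory of a balanced twisted node, which carries no infinitesimal automorphisms deforming the gerbe band and no smoothing directions beyond the coarse one, so that $\dim \M_{g,n}(r) = \dim \M_{g,n} = 3g-3+n$.

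The heart of the argument is the \'etale-local description of the forgetful morphism $\M_{g,n}(r) \to \M_{g,n}$ at the boundary. Deforming a balanced node $[\{xy=0\}/\mu_r]$ gives the family $[\{xy = \tau\}/\mu_r]$ with $\mu_r$ acting by $(x,y) \mapsto (\zeta x, \zeta^{-1} y)$; passing to coarse spaces one finds that the coarse smoothing parameter is $t = \tau^r$. Thus, near each boundary divisor, $\M_{g,n}(r) \to \M_{g,n}$ is the $r$-th root stack of the (normal-crossings) boundary, with local chart $[\mathbb{A}^1_\tau/\mu_r] \to \mathbb{A}^1_t$, $t = \tau^r$. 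From this local model I would deduce smoothness and irreducibility (the root stack of the smooth, irreducible $\M_{g,n}$ along a normal-crossings divisor is smooth and irreducible, with the dense interior $\mathcal{M}_{g,n}$ unchanged), as well as properness, either by ``the root stack of a proper stack is proper'' or directly through stable reduction together with the uniqueness of the balanced twist.

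For the comparison morphism, when $r' = lr$ I would define $f^{r'}_r$ by partial coarsening of the node gerbes along the surjection $\mu_{r'} \onto \mu_r$, $\zeta' \mapsto (\zeta')^l$; in the local charts above this is simply $\tau' \mapsto \tau = (\tau')^l$, compatible with $t = \tau^r = (\tau')^{r'}$ on coarse spaces. Finiteness, flatness and surjectivity are then read off from this explicit monomial model. Over the interior there is no twisting, so $f^{r'}_r$ is an isomorphism on the dense open substack of smooth curves; since root stacks leave the coarse space unchanged, both $\M_{g,n}(r')$ and $\M_{g,n}(r)$ have coarse space $\M_{g,n}$ and $f^{r'}_r$ induces the identity there, hence an isomorphism on Chow groups and cohomology with $\mathbb{Q}$-coefficients. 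Restricted to the singular locus the map is, fibrewise, the induced morphism $B\mu_{r'} \to B\mu_r$ of classifying stacks; its kernel $\mu_l$ accounts for the failure of injectivity and for the degree $l$ of the restriction.

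The main obstacle is precisely the deformation-theoretic input underlying the second paragraph: establishing that the stack of balanced $r$-twisted stable curves is a smooth, proper Deligne--Mumford stack and that the boundary is a normal-crossings root-stack locus. This is the substance of the work of Olsson and of Chiodo (building on Abramovich--Vistoli), which I would cite rather than reprove; the remaining points---the monomial local model, the coarse-space comparison, and the $\mu_l$-gerbe structure of $f^{r'}_r$ over the singular locus---are then local verifications.
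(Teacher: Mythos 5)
The paper gives no proof of this statement at all---it is imported verbatim as background, with the citation to Olsson and Chiodo standing in for the argument---so the ``paper's approach'' is precisely deferral to the references you yourself invoke. Your sketch (balanced $r$-twisted stable curves \`a la Abramovich--Vistoli/Olsson, the \'etale-local root-stack model $[\mathbb{A}^1_\tau/\mu_r] \to \mathbb{A}^1_t$ with $t = \tau^r$ along the boundary, the definition of $f^{r'}_r$ by partial coarsening along $\mu_{r'} \twoheadrightarrow \mu_r$, and the resulting $\mu_l$-gerbe structure over the singular locus explaining both the failure of injectivity and the degree statement) is an accurate outline of how those references actually prove the theorem, and you correctly cite rather than reprove the foundational deformation-theoretic input, so the proposal is consistent with the paper's treatment.
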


The exact definition of an $r$-stable curve is given in \cite[definition 2.1.1]{Chi08}. Informally, an $r$-stable curve is a nodal one-dimensional stack, whose coarse space is stable, whose smooth locus is an algebraic space, and whose nodes are described by
\begin{equation}
	\big[\big(\Spec (A[z,w]/(zw-t)\big) /\mu_{r} \big] \to \Spec A\,,
\end{equation}
where $ \mu_{r}$ is the group scheme of $ r$-th roots of unity, and an element $ \xi \in \mu_{r}$ acts by $ (z,w) \mapsto (\xi z, \xi^{-1} w)$. Fix $ k \in \Z$ and $n$ numbers $ 0 \leq b_i \leq r-1$ satisfying the modular constraint
\begin{equation}
	\sum_{i=1}^n b_i \equiv (2g-2+n)k \pmod{r}.
\end{equation}
For a genus $g$ curve with $n$ markings, define
\begin{equation}
	\mc{K} \coloneqq \omega_{\textup{log}}^k \Biggl( - \sum_{i=1}^n b_i [x_i] \Biggr) = \omega^k \Biggl( \sum_{i=1}^n (k-b_i) [x_i] \Biggr)\,,
\end{equation}
where $ \omega $ is the dualising sheaf. We will be interested in the moduli space of $ r$-th roots of $ \mc{K}$: line bundles $ \mc{L}$ such that $ \mc{L}^{r} \cong \mc{K}$. Here, and in the rest of this section, all powers of line bundles are tensor powers. In the smooth case, there is a natural torsor on $\M_{g,n}$ of curves with an $r$-th root of $\mc{K}$. There are different ways of compactifying these, but the most natural for us is the construction of \cite{Chi08a}: his compactification parametrises $r$-th roots of $\mc{K}$ on $r$-stable curves. The main properties of this moduli stack are given below.

\begin{theorem}[{\cite[theorem~4.6]{Chi08a}}] ~
	\begin{enumerate}
		\item
		The moduli functor $\M_{g,\underline{b}}^{r,k}$ of $r$-th roots of $\mc{K}$ on $r$-stable curves forms a proper and smooth stack of Deligne--Mumford type of dimension $3g-3+ n$. For $k = b_1 = \dotsb = b_n =0$, $\mc{K}$ is $\mc{O}$ and the stack $\M_{g,\underline{b}}^{r,k}$ is a group stack $\mc{G}$ on $\M_{g,n}(r )$. In general $\M_{g,\underline{b}}^{r,k}$ is a finite torsor on $ \M_{g,n}(r)$ under $\mc{G}$.

		\item
		The morphism $ p \colon \M_{g,\underline{b}}^{r,k} \to \M_{g,n}(r )$ is étale. It factors through a morphism locally isomorphic to the classifying stack $ B\mu_{r} \to \Spec \C$ (a $ \mu_{r}$-gerbe) and a representable étale $r^{2g}$-fold cover; therefore it is has degree $r^{2g-1}$.
	\end{enumerate}
\end{theorem}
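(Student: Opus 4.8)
The plan is to realise $\M_{g,\underline{b}}^{r,k}$ as a fibre of the ``$r$-th power'' morphism on the relative Picard stack of the universal $r$-stable curve, and then to read off each claimed property from the corresponding property of that morphism. Let $\pi \colon \mc{C} \to \M_{g,n}(r)$ denote the universal $r$-stable curve and $\Pic_{\pi}$ its relative Picard stack; the line bundle $\mc{K} = \omega^{k}(\sum_i (k-b_i)[x_i])$ defines a section of $\Pic_{\pi}$. Raising a line bundle to its $r$-th tensor power gives a morphism $[r] \colon \Pic_{\pi} \to \Pic_{\pi}$, and I would define $\M_{g,\underline{b}}^{r,k}$ as the pullback of the section $\mc{K}$ along $[r]$, that is, the stack of pairs $(C,\mc{L})$ with $\mc{L}^{r} \cong \mc{K}$.

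For the torsor structure, I would first treat the case $k = b_1 = \dotsb = b_n = 0$, where $\mc{K} \cong \mc{O}$. In this case the $r$-th roots are precisely the objects of $\ker[r]$, which inherits a group stack structure from tensor product; this is the group stack $\mc{G}$. For general $\mc{K}$, the modular constraint $\sum_i b_i \equiv (2g-2+n)k \pmod{r}$ is exactly the divisibility condition $r \mid \deg \mc{K}$ (since $\deg \mc{K} = k(2g-2+n) - \sum_i b_i$) needed to guarantee that at least one root $\mc{L}_0$ exists; given such an $\mc{L}_0$, any other root $\mc{L}$ satisfies $(\mc{L} \otimes \mc{L}_0^{-1})^{r} \cong \mc{O}$, so $\M_{g,\underline{b}}^{r,k}$ is a $\mc{G}$-torsor over $\M_{g,n}(r)$. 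Smoothness and the dimension $3g-3+n$ then follow at once: in characteristic zero the multiplication-by-$r$ map on the Picard stack is étale, hence $p$ is étale, and an étale morphism over the smooth stack $\M_{g,n}(r)$ is itself smooth of the same dimension.

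The degree is computed by splitting the étale morphism $p$ into its gerbe and representable parts. Every root $\mc{L}$ carries a canonical $\mu_{r}$ of automorphisms acting by scalar multiplication, and these fix $\mc{L}^{r}$; hence $p$ factors through a $\mu_{r}$-gerbe, locally isomorphic to $B\mu_{r} \to \Spec \C$, which contributes a stacky factor $r^{-1}$. Rigidifying these automorphisms yields a representable étale cover whose degree is the number of roots lying over a fixed smooth curve. On a smooth genus $g$ curve the roots of a fixed bundle form a torsor under the $r$-torsion $J[r] \cong (\Z/r\Z)^{2g}$, so this cover has degree $r^{2g}$, and the total degree is $r^{2g}\cdot r^{-1} = r^{2g-1}$.

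The one genuinely hard point is properness. The relative Picard stack is not proper, since a line bundle can degenerate as the curve acquires a node, and the entire purpose of working with $r$-stable rather than merely stable curves is to repair this: the $\mu_{r}$-stabiliser at each node supplies exactly the room needed for an $r$-th root to extend across a nodal degeneration. I would therefore verify the valuative criterion over a trait $\Spec R$ with $R$ a discrete valuation ring, analysing the limit of a one-parameter family of roots and showing that the stacky node structure forces a unique $r$-stable limit root to exist and be separated. This is the technical heart of the construction, and it is the step that genuinely uses the $r$-stable compactification rather than the formal properties of $[r]$ on $\Pic_{\pi}$.
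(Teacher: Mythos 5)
First, a point of comparison that matters here: the paper does not prove this statement at all — it is quoted verbatim as background from Chiodo's work (\cite[theorem~4.6]{Chi08a}), so there is no internal proof to measure your proposal against. Judged on its own merits, your sketch follows the standard Picard-stack strategy and gets the formal skeleton right: realising $\M_{g,\underline{b}}^{r,k}$ as the fibre of $[r]$ on $\Pic_{\pi}$ (strictly, as the stack of triples $(C,\mc{L},\phi)$ with $\phi\colon\mc{L}^{r}\iso\mc{K}$, which is what makes the automorphism group $\mu_{r}$ rather than $\mb{G}_m$), identifying $\mc{G}$ with $\ker[r]$, deducing étaleness of $p$ from étaleness of $[r]$ in characteristic zero, and computing the degree as $r^{2g}\cdot r^{-1}$ via rigidification and the count of $J[r]$-torsors on the smooth locus.

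However, there are two genuine gaps, and they sit exactly where the mathematical content of Chiodo's theorem lies. First, your existence argument for a root $\mc{L}_0$ — divisibility $r \mid \deg\mc{K}$ plus divisibility of the Jacobian — is valid only over the locus of smooth curves. Over the boundary, $r$-th roots of a line bundle on the underlying coarse nodal curve need not exist in the expected number: for a curve with a non-separating node, the $r$-torsion of the semi-abelian Jacobian has order $r^{2g-1}$, not $r^{2g}$, so without the stacky nodes the morphism $p$ would fail to be finite étale of constant degree, and the fibres would not be torsors. Hence the torsor claim, the finiteness of $\mc{G}$, and the claim that the representable part of $p$ is an étale $r^{2g}$-fold cover over \emph{all} of $\M_{g,n}(r)$ each require an analysis of roots at the stacky nodes — the local picture $\bigl[\Spec\bigl(A[z,w]/(zw-t)\bigr)/\mu_{r}\bigr]$ and the characters $q(y)$ — which your argument never carries out; this is precisely where the extra roots with nontrivial $\mu_{r}$-action at the nodes restore the count to $r^{2g}$. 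Second, you explicitly defer properness to an unexecuted valuative-criterion argument; but properness, together with the boundary existence just mentioned, is the heart of the theorem (it is what the $r$-stable compactification was designed to achieve), so the proposal as written establishes the statement only over the open substack of smooth curves, where it is classical.
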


\begin{remark}\label{rem:Chiodo:index:restriction}
	Note that we need not restrict the $b_i$ to lie between $ 0$ and $r-1$. However, there are canonical equivalences
	\begin{equation}
		i_{\underline{c}} \colon
		\M_{g,\underline{b}+r\underline{c}}^{r,k} \to \M_{g,\underline{b}}^{r,k} \colon
		(C,\underline{x},\mc{L}) \mapsto \Bigl(C, \underline{x}, \mc{L}\big({\textstyle\sum_{i=1}^n} c_i [x_i]\big) \Bigr)\,.
	\end{equation}
	We will use these maps implicitly to always reduce to $ 0 \leq b_i\leq r-1$. Similarly, there are canonical equivalences
	\begin{equation}
		\sigma \colon
		\M_{g,\underline{b}}^{r,k+ r} \to \M_{g,\underline{b}}^{r,k} \colon (C,\underline{x},\mc{L}) \mapsto (C,\underline{x}, \mc{L} \otimes \omega^{-1} )\,.
	\end{equation}
\end{remark}

\begin{remark}\label{rem:partial:spin:restriction}
	If $ r' = lr$, there is a map $ \epsilon^{r'}_{r} \colon \M_{g,\underline{b}}^{r'} \to \M_{g,\underline{b} \pmod{r}}^{r}$, which takes the $l$-th power of the line bundle and the $ B\mu_{r'}$-structure at the nodes (and incorporates the index restricting from \cref{rem:Chiodo:index:restriction}). This factors through a $ \mu_l$-gerbe and a representable $l^{2g}$-fold cover. We get $ p \circ \epsilon^{lr}_{r} = f^{l r}_r \circ p$, and in particular $ \epsilon^{r}_1 = f^{r}_1 \circ p$.
\end{remark}

These moduli spaces have a universal curve $\pi \colon \mc{C} \to \M_{g,\underline{b}}^{r,k} $ and a universal $r$-th root $ \mc{S} \to \mc{C}$. Moreover, there is a natural substack $ Z \subset \mc{C}$ consisting of the nodes of the singular curves. This has a double cover, $ j \colon Z' \to Z \subset \mc{C}$, parametrising nodes of singular curves with a choice of branch at the node. We denote its deck transformation, i.e. the map sending a node on a curve with chosen branch to the same node with opposite branch, by $ y \mapsto \bar{y}$. On $ Z'$, there are two natural line bundles whose fibres are the cotangent lines at the two branches of the node; we denote them $ \psi $ and $ \hat{\psi}$.

\smallskip

A node with chosen branch $y \in Z'$ has two natural lines over it: $\mc{S}|_y$ and $T_y^*\tilde{C}$, where $\tilde{C}$ is the normalisation of $C$, both of which are $ \mu_{r}$-representations (by the local structure of $r$-stable curves at a node). Let $ q(y) \in \Z/r\Z$ be the number defined by $ \mc{S}|_y = (T_y^*\tilde{C})^{q(y)}$ as representations. This is locally constant on $ Z'$, and we get a decomposition
\begin{equation}\label{eqn:sing:locus:first:decomp}
	Z' = \bigsqcup_{q \in \Z/r\Z} Z'_q \,,
	\qquad
	j_q = j|_{Z'_q} \colon Z'_q \to \M_{g,\underline{b}}^{r,k}\,.
\end{equation}

\begin{theorem}[{\cite[theorem~1.1.1]{Chi08}}]
	With notation as above,
	\begin{equation}
		(d+1)!\, \ch_d (R^\bullet \pi_*\mc{S})
		=
		B_{d+1}\Bigl(\frac{k}{r}\Bigr) \kappa_d
		- \sum_{i=1}^n B_{d+1}\Bigl(\frac{b_i}{r}\Bigr) \psi_i^d
		+ \frac{r}{2} \sum_{q=0}^{r-1} B_{d+1}\Bigl(\frac{q}{r}\Bigr) (j_q)_* ( \gamma_{d-1}) \, .
	\end{equation}
	Here $ \gamma_{d-1} = \sum_{i+j=d} (-\psi)^i \hat{\psi}^j$.
\end{theorem}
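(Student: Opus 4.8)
The plan is to reproduce Chiodo's proof, whose engine is Grothendieck--Riemann--Roch (GRR) applied to the universal curve $\pi \colon \mc{C} \to \M_{g,\underline{b}}^{r,k}$, combined with a local analysis of the universal $r$-th root $\mc{S}$ at the markings and at the $\mu_r$-stacky nodes. Since $\pi$ is a representable morphism of smooth Deligne--Mumford stacks, GRR is available (in Toën's form for DM stacks, or after passing to a finite cover) and gives
\begin{equation*}
	\ch(R^\bullet \pi_* \mc{S}) = \pi_*\bigl( \ch(\mc{S}) \cdot \mathrm{Td}(T_\pi) \bigr),
\end{equation*}
where $T_\pi \cong \omega_\pi^\vee$ away from the nodes. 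Writing $D = c_1(\mc{S})$ and $K = c_1(\omega_\pi)$, the entire computation is the extraction of the degree-$d$ part of $\pi_*$ applied to $e^{D}$ times the Todd series of $\omega_\pi^\vee$, corrected along the nodal locus.

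First I would pin down $D$. The defining isomorphism $\mc{S}^{r} \cong \mc{K} = \omega^{k}(\sum_i (k-b_i)[x_i])$ forces $r D = k\,K - \sum_i b_i[x_i]$ on the smooth locus; equivalently $D$ carries the fractional twist $k/r$ along the fibres, $b_i/r$ at the $i$-th section, and $q(y)/r$ along the branch $y$ of a node, which is exactly the data recorded by the decomposition $Z' = \bigsqcup_q Z'_q$ of \eqref{eqn:sing:locus:first:decomp}. The Bernoulli polynomials then enter through the generating series
\begin{equation*}
	\frac{t\,e^{a t}}{e^{t}-1} = \sum_{m \ge 0} B_m(a)\,\frac{t^m}{m!},
\end{equation*}
which is precisely the shape produced by multiplying $\ch(\mc{S}) = e^{D}$ by $\mathrm{Td}(\omega_\pi^\vee) = \tfrac{-K}{1-e^{K}}$ whenever $\mc{S}$ is twisted by a fraction $a$ relative to $\omega_\pi$.

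Next I would treat the bulk and marking terms. Pushing forward the regular part of $e^{D}\mathrm{Td}(\omega_\pi^\vee)$ over the fibres converts the powers $K^{d+1}$ into $\kappa_d$ and, via the fractional degree $k/r$ carried by $D$, produces the coefficient $B_{d+1}(k/r)$; the localisation of the same class at the sections $x_i$, where $\mc{S}$ has order $b_i/r$, produces the terms $-B_{d+1}(b_i/r)\,\psi_i^d$, the sign reflecting the $-\sum_i b_i[x_i]$ in $\mc{K}$. These two families are the routine part of the argument, being Mumford's GRR computation for the Hodge class adapted to the orbifold twist.

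The hard part will be the nodal contribution. At a node the curve is locally $[\Spec(A[z,w]/(zw-t))/\mu_r]$, and $\mc{S}$ is a $\mu_r$-eigenline of weight $q$ on one branch and $r-q$ on the other; the relative tangent complex is not a line bundle there, so the Todd factor acquires an extra contribution supported on $Z'$. I would compute it by restricting to $Z'_q$, where the two branch cotangent lines give the classes $\psi$ and $\hat{\psi}$, and expanding the local Todd/Chern-character factor as a double geometric series in $\psi$ and $\hat{\psi}$ weighted by the $\mu_r$-character; summing this series over the gerbe and symmetrising over the two branches is exactly what collapses the answer to $\tfrac{r}{2}\sum_{q} B_{d+1}(q/r)\,(j_q)_*\gamma_{d-1}$ with $\gamma_{d-1} = \sum_{i+j=d}(-\psi)^i\hat{\psi}^j$. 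Getting the overall normalisation right---the factor $\tfrac{r}{2}$, where the $r$ comes from the degree of the $\mu_r$-gerbe at the node and the $\tfrac12$ from the two-to-one cover $j\colon Z'\to Z$, and the precise Bernoulli polynomial $B_{d+1}(q/r)$ emerging from the stacky local model---is the main obstacle and the only genuinely delicate bookkeeping in the proof.
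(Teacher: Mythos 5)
This statement is not proved in the paper at all: it is quoted verbatim from Chiodo \cite{Chi08}, so the only meaningful comparison is with Chiodo's original argument. Your outline does follow the skeleton of that argument: Grothendieck--Riemann--Roch on the universal curve, the determination of $c_1(\mc{S})$ from $\mc{S}^{r}\cong\mc{K}$, the Bernoulli generating series $t\,e^{at}/(e^{t}-1)$, and the splitting of the pushforward into bulk, marking, and nodal contributions.

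Two points, however, need repair. First, $\pi$ is \emph{not} representable: its fibres are $r$-stable curves, which carry $\mu_{r}$-stabilisers at the nodes, so the fibre product of $\mc{C}$ with a scheme mapping to $\M_{g,\underline{b}}^{r,k}$ is a stack rather than an algebraic space (and this cannot be cured by passing to a finite cover of the base, since the stackiness is along the fibres). This is not a pedantic point, because it forces the use of To\"en's GRR, in which the Chern character is evaluated on the \emph{inertia stack} of $\mc{C}$; the ordinary (untwisted) sector reproduces only the Mumford-type terms. Second, and as a consequence, the fractional arguments $q/r$ in the nodal Bernoulli polynomials do not arise from a ``Todd correction supported on $Z'$'' expanded as a geometric series --- that mechanism is already present for non-stacky nodal curves and produces only integer-argument Bernoulli numbers, as in Mumford's formula for the Hodge class. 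In Chiodo's proof they arise from the \emph{twisted sectors} of the inertia stack lying over the nodes: there one sums, over the nontrivial characters $\zeta\in\mu_{r}$, the eigenvalue $\zeta^{q}$ of the $\mu_{r}$-action on $\mc{S}|_{y}$ against factors of the form $\bigl(1-\zeta e^{-\psi}\bigr)^{-1}\bigl(1-\zeta^{-1}e^{-\hat{\psi}}\bigr)^{-1}$, and then invokes the finite character-sum identity converting $\sum_{\zeta\neq 1}\zeta^{q}(\cdots)$ into $B_{d+1}(q/r)$. Your sketch gestures at this (``weighted by the $\mu_{r}$-character'') but then simply asserts that the sum ``collapses'' to $\tfrac{r}{2}\sum_{q}B_{d+1}\bigl(\tfrac{q}{r}\bigr)(j_{q})_{*}\gamma_{d-1}$; that collapse, including the normalisation $\tfrac{r}{2}$, is precisely the content of the theorem, so as written the hardest step is assumed rather than derived.
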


Recalling that generally
\begin{equation}\label{eqn:Chern:class:from:char}
	c(-E^\bullet ) = \exp \biggl( \sum_{d=1}^\infty (-1)^d(d-1)! \; \ch_d (E^\bullet )\biggr)\,,
\end{equation} 
it can be seen that the Chiodo class
\begin{equation*}
	C(r,k;b_1, \dotsc, b_n) \coloneqq (\epsilon^r_1)_* c(-R^\bullet \pi_* \mc{S}) \in H^{\bullet}(\M_{g,n})
\end{equation*}
is a cohomological field theory (for this it is important to restrict the $b_i$, cf. \cref{rem:Chiodo:index:restriction}). This is the cohomological field theory in \cref{thm:Zvonkine:conj}.

\smallskip

In our current setting, we want to understand what the correct geometric adaptation to the construction of some Chern class should be in order to obtain the sum over stable graphs \eqref{eqn:CohFT:sum:stable:graphs}. As more often in this paper, we have to introduce a sign, again given by the parity of a theta characteristic. For this, we assume $k=2l+1$ is odd, $r = 2s$ is even, and moreover $b_i = 2a_i +1$ is odd for all $i$. Recall the notion of theta characteristic from \cref{sec:spin:HNs}, and in particular its parity, \cref{defn:spin:structure:parity}.

\smallskip

Theta characteristics on a nodal curve were considered by Cornalba \cite{Cor89}, and are a particular case of $2$-spin curves in the bubbled curves framework of \cite{CCC07}, with all the $b_i = 1$. In \cite[section~6]{Cor89}, Cornalba indicates how to extend the proof of deformation invariance of the parity to nodal curves (still defined as $h^0(C;\vartheta ) \pmod{2}$) in this case. In our language, this shows that parity is a locally constant function on $ \M_{g,\underline{1}}^{2,1}$.

\begin{definition}\label{defn:spin:Chiodo}
	Let $ g,n \in \N$ such that $2g - 2 + n >0$, $ r = 2s \in \N_+$, $ k = 2l+1 \in \Z$, and $ 0 \leq  b_1, \dotsc, b_n \leq r-1 $ such that $ b_i = 2a_i+1$ and  
	\begin{equation}
		\sum_{i=1}^n (a_i - l) \equiv (g-1)(2l+1) \pmod{s}.
	\end{equation}
	For an $r$-spin curve $(C, \underline{x},\mc{L}) \in \M_{g,\underline{b}}^{r,k}$, we get that $\sigma^l\epsilon^{r}_2(\mc{L})$ is a theta characteristic, and we define its \emph{parity} to be the parity of this theta characteristic. As the parity is locally constant, we get a decomposition
	\begin{equation}
		\M^{r,k}_{g,\underline{b}} = \M_{g,\underline{b}}^{r,k,+} \sqcup \M_{g,\underline{b}}^{r,k,-}\,.
	\end{equation}
	Here, we identify $ \Z/2\Z \cong \{ \pm 1\}$ for clearer notation. We will use these superscripts more often to denote objects restricted to these subspaces. The \emph{spin Chiodo class} is defined as
	\begin{equation}
		C^{\theta}(r,k;\underline{a}) \coloneqq (\epsilon^{r,+}_1)_* c(- R^\bullet \pi^+_* \mc{S}^+) - (\epsilon^{r,-}_1)_* c(-R^\bullet \pi^-_* \mc{S}^-) \in H^{\bullet} (\M_{g,n})\,.
	\end{equation}
\end{definition}

\begin{remark}\label{rem:parity:index:restriction}
	Recall that we restricted the indices $ b_i$ to lie between $ 0$ and $ r-1$ in \cref{rem:Chiodo:index:restriction}. This is essential: the map $\epsilon^{r}_2$ `untwists' $ \mc{L}^{r}$ to a theta characteristic. In particular, it also `untwists' at the nodes: the $Z'_q$ from \cref{eqn:sing:locus:first:decomp} get mapped to $ Z'_{q \pmod{2}}$ in $ \M_{g,\underline{1}}^{2,1}$. We may then extend the $ a_i$ to all of $\Z$ in this definition, but we get canonical equivalences between the spaces for $ a_i$ and $ a_i+s$, which preserve the parity by construction. Note that we require $ b_i = 2a_i +1$ and $ k = 2l+1$ to be odd. This is necessary to obtain an honest theta characteristic, whose parity is well-behaved. For even $ b_i$ or $k$, there is no clear map to $ \mc{M}_{g, \underline{1}}^{2,1}$, where the bundles with theta characteristic live.
\end{remark}

As suggested by A.~Chiodo, the class $C^{\theta}$ can be expressed as a product of the usual Chiodo class with Witten $2$-spin class. In other words, spin Chiodo classes naturally `live' on $ \M_{g,\underline{1}}^{2,1}$, which makes some sense, as it is related via the ELSV formula \eqref{eqn:ELSV:CohFT} to spin Hurwitz numbers.

\begin{proposition}\label{prop:spin:Chiodo:as:Chiodo:times:Witten}
	In the situation of \cref{defn:spin:Chiodo}, the spin Chiodo class can be given by multiplying the usual Chiodo class with Witten $2$-spin class on the moduli space of $2$-spin curves:
	\begin{equation}\label{eqn:spin:Chiodo:as:Chiodo:times:Witten}
		C^{\theta}(r,k;\underline{a})
		=
		(\epsilon^2_1)_* \Big( c_{W,2} \cdot (\epsilon^{r}_2)_*c(-R^\bullet \pi_*\mc{S}) \Big)\,.
	\end{equation}
\end{proposition}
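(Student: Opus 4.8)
The plan is to reduce the statement to two facts. First, by \cref{defn:spin:Chiodo} the parity decomposition $\M_{g,\underline{b}}^{r,k} = \M_{g,\underline{b}}^{r,k,+} \sqcup \M_{g,\underline{b}}^{r,k,-}$ is the pullback along $\sigma^l \circ \epsilon^r_2$ of the parity decomposition $\M_{g,\underline{1}}^{2,1} = \M_{g,\underline{1}}^{2,1,+} \sqcup \M_{g,\underline{1}}^{2,1,-}$ of the space of theta characteristics: indeed, we \emph{defined} the parity of an $r$-spin structure $(C,\underline{x},\mc{L})$ to be the parity of the honest theta characteristic $\sigma^l\epsilon^r_2(\mc{L})$, so that $\epsilon^r_2$ carries $\M_{g,\underline{b}}^{r,k,\pm}$ exactly into the $\pm$ parity locus of $\M_{g,\underline{1}}^{2,1}$ (the well-definedness at the nodes being precisely \cref{rem:parity:index:restriction}). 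Second, the Witten $2$-spin class $c_{W,2}$ on $\M_{g,\underline{1}}^{2,1}$ is concentrated in degree zero, where it equals the parity sign $(-1)^p$; this is the classical fact that for the $A_1$ (i.e.\ $2$-spin) theory Witten's class computes the parity of the theta characteristic, resting on the Mumford--Cornalba deformation invariance of the parity recalled before \cref{defn:spin:Chiodo}, with the normalisation pinned down in \cref{cor:spin:ELSV:with:Chiodo}.

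Granting these, I would write $c_{W,2} = \big[\M_{g,\underline{1}}^{2,1,+}\big] - \big[\M_{g,\underline{1}}^{2,1,-}\big]$ as a degree-zero class, so that multiplication by $c_{W,2}$ restricts a class to each parity component with the corresponding sign. Applying this to $(\epsilon^r_2)_* c(-R^\bullet \pi_* \mc{S})$ and using that $\epsilon^r_2$ carries $\M_{g,\underline{b}}^{r,k,\pm}$ into the $\pm$ parity locus (so that restriction to the even/odd locus commutes with $(\epsilon^r_2)_*$), one obtains
\begin{equation*}
	c_{W,2} \cdot (\epsilon^r_2)_* c(-R^\bullet \pi_* \mc{S})
	=
	(\epsilon^{r,+}_2)_* c(-R^\bullet \pi^+_* \mc{S}^+)
	-
	(\epsilon^{r,-}_2)_* c(-R^\bullet \pi^-_* \mc{S}^-).
\end{equation*}
Pushing forward by $\epsilon^2_1$ and invoking the factorisation $\epsilon^r_1 = \epsilon^2_1 \circ \epsilon^r_2$ from \cref{rem:partial:spin:restriction} on each parity locus collapses the two pushforwards into one, reproducing verbatim the defining expression of $C^{\theta}(r,k;\underline{a})$ in \cref{defn:spin:Chiodo}. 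This is exactly \cref{eqn:spin:Chiodo:as:Chiodo:times:Witten}.

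The main obstacle is the second fact: making rigorous that $c_{W,2}$ is precisely the degree-zero parity sign, with no spurious prefactor, so that the identity holds on the nose. This is where one must use the precise definition/normalisation of the Witten $2$-spin class together with the parity being locally constant on $\M_{g,\underline{1}}^{2,1}$. A secondary, expectedly routine, technical point is verifying that the projection formula and the pushforward--restriction compatibility apply at the level of Deligne--Mumford stacks despite the $\mu_r$-gerbe and $l^{2g}$-cover structure of $\epsilon^r_2$ and $\epsilon^2_1$; here the compatibility $\epsilon^r_1 = \epsilon^2_1 \circ \epsilon^r_2$ and the fact that the parity decomposition is pulled back (hence respected by all these finite morphisms) are what guarantee that the gerbe factors and degrees cancel consistently between the two sides.
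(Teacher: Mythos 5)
Your proposal is correct and follows essentially the same route as the paper's proof: the projection formula along $\epsilon^{r}_2$, the factorisation $\epsilon^{r}_1 = \epsilon^{2}_1 \circ \epsilon^{r}_2$, and the identification of $c_{W,2}$ with the degree-zero parity sign on the components of $\M_{g,\underline{1}}^{2,1}$. The one point you flag as the main obstacle---that $c_{W,2}$ equals $+1$ on the even and $-1$ on the odd parity component with no spurious prefactor, and vanishes unless all $b_i$ are odd---is exactly what the paper settles by citing \cite[theorem~4.6]{JKV01} (cf.\ also \cite[section~6.3]{Chi06}), so it is a known input rather than a gap (and note your appeal to \cref{cor:spin:ELSV:with:Chiodo} for the normalisation would be circular, since that corollary is a consequence of this proposition).
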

\begin{proof}
	By \cite[theorem~4.6]{JKV01}, cf. also \cite[section 6.3]{Chi06}, the $2$-spin Witten class on $ \M_{g,\underline{\mf{b}}}^{2,1}$ is non-zero if and only if $b_i$ is odd, and in that case it is given by $1$ on the component of even spin curves and $-1$ on the component of odd spin curves. Therefore
	\begin{equation}
	\begin{split}
		(\epsilon^2_1)_* \Big( c_{W,2} \cdot (\epsilon^{r}_2)_*c(-R^\bullet \pi_*\mc{S}) \Big) &= (\epsilon^2_1)_* (\epsilon^{r}_2)_* \Big( ( \epsilon^{r}_2)^* c_{W,2} \cdot c(-R^\bullet \pi_*\mc{S}) \Big)\\
		&= (\epsilon^{r}_1)_* \Big( c(-R^\bullet \pi_*\mc{S})^+ - c(-R^\bullet \pi_* \mc{S})^-\Big)\,,
	\end{split}
	\end{equation}
	which is the spin Chiodo class.
\end{proof}

\begin{remark}\label{p-SpinChiodo?}
	Proposition \ref{prop:spin:Chiodo:as:Chiodo:times:Witten} allows us to extend the definition of the spin Chiodo class to even $b_i$ by taking the right-hand side of \cref{eqn:spin:Chiodo:as:Chiodo:times:Witten} as the definition, as the class naturally vanishes in that case.	It is also natural to define a `$p$-spin Chiodo class' analogously, as was also suggested to us by A.~Chiodo, by
	\begin{equation}
		C^{p\textup{-spin}}(ps,k; \underline{b}) = (\epsilon^p_1)_*\Big( c_{W,p} \cdot (\epsilon^{ps}_p)_* c(-R^\bullet \pi_*\mc{S}) \Big)\,,
	\end{equation}
	which vanishes if any of the $ b_i $ is divisible by $p$. This class, or a related construction, could be useful for the local Gromov--Witten invariants of Lee--Parker \cite{LP07,LP13}, relating the Gromov--Witten invariants of Kähler surfaces to spin Hurwitz numbers. Note that for $ p>2$, Witten's class is not zero-dimensional anymore. We will not pursue this in this paper.
\end{remark}

As suggested by D.~Zvonkine, the CohFT of the previous subsection coincides with the spin Chiodo class.

\begin{proposition}
	Let $r$ be a positive even integer. The cohomological field theory of \cref{prop:CohFT:sum:stable:graphs} is equal to the spin Chiodo class for $k=1$:
	\begin{equation}
	\Omega_{g,n}^{r,\theta}(v_{a_1} \otimes \dotsb \otimes v_{a_n}) = 2^{g-1} C^{\theta}(r,1;a_1, \dotsc, a_n)\,.
	\end{equation}
\end{proposition}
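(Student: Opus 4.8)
The plan is to expand both sides as sums over stable graphs in $\mc{G}_{g,n}$ and to match them vertex-by-vertex, leaf-by-leaf, and edge-by-edge. The left-hand side is already in this form by \cref{prop:CohFT:sum:stable:graphs}. For the right-hand side, I start from \cref{prop:spin:Chiodo:as:Chiodo:times:Witten}, which writes $C^{\theta}(r,1;\underline{a})$ as $(\epsilon^2_1)_*\bigl(c_{W,2}\cdot(\epsilon^r_2)_*c(-R^\bullet\pi_*\mc{S})\bigr)$, and then I compute a stable-graph expansion of the ordinary Chiodo class $(\epsilon^r_2)_*c(-R^\bullet\pi_*\mc{S})$ directly from Chiodo's Chern-character formula \cite{Chi08} together with the exponentiation \eqref{eqn:Chern:class:from:char}. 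Separating Chiodo's three terms — the $\kappa_d$-term with argument $k/r=\tfrac{1}{2s}$, the $\psi_i^d$-terms with arguments $b_i/r=\tfrac{2a_i+1}{2s}$, and the nodal term $\tfrac{r}{2}\sum_q B_{d+1}(q/r)(j_q)_*\gamma_{d-1}$ — the exponential factorises over the dual graph, producing vertex, leaf, and edge contributions respectively.

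Next I deal with the Witten class and the pushforwards. By the same input used in the proof of \cref{prop:spin:Chiodo:as:Chiodo:times:Witten} (namely \cite[theorem~4.6]{JKV01}), the class $c_{W,2}$ equals $+1$ on the even-parity and $-1$ on the odd-parity component of $\M^{2,1}_{g,\underline{1}}$ and is supported on odd twists; multiplying by it and applying $(\epsilon^2_1)_*$ therefore converts the plain Chiodo expansion into the signed sum defining $C^{\theta}$, and the sign is exactly the parity $(-1)^{p}$ that also enters $\varpi^{r,\theta}$ through Gunningham's formula (\cref{thm:Gunningham:formula}). The untwisting map $\epsilon^r_2$ reduces the nodal index $q\in\Z/r\Z$ modulo $s$ (cf. \cref{rem:partial:spin:restriction} with $l=s$), turning Chiodo's edge datum into a spin weighting modulo $s$ with the edge condition $w(h)+w(h')\equiv-1\pmod s$ and the leaf/vertex conditions of \cref{prop:CohFT:sum:stable:graphs}, because $k=1$.

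The core of the argument is then the term-by-term comparison. The vertex factors agree on the nose, both being $\exp\bigl(\sum_{m\ge1}\tfrac{(-1)^mB_{m+1}(1/2s)}{m(m+1)}\kappa_m\bigr)$; the leaf factors agree, both being $\exp\bigl(-\sum_{m\ge1}\tfrac{(-1)^mB_{m+1}((2a_i+1)/2s)}{m(m+1)}\psi^m\bigr)$. For the edges I use that $\gamma_{d-1}=\sum_{i+j=d}(-\psi)^i\hat\psi^j$ so that summing the exponentiated nodal term over $d$ telescopes into the numerator $1-\exp\bigl(-\sum_m\tfrac{(-1)^mB_{m+1}((2w(h)+1)/2s)}{m(m+1)}((\psi_h)^m-(-\psi_{h'})^m)\bigr)$ divided by $\psi_h+\psi_{h'}$, which is precisely the edge weight in \eqref{eqn:CohFT:sum:stable:graphs}; this is the spin analogue of the standard identification of the $E(u,v)$-propagator with the nodal Chern-character contribution.

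Finally I reconcile the scalar prefactors, and this bookkeeping is where I expect the main difficulty. I must combine the degree $2^{2g-1}$ of $\epsilon^2_1$, the factorisation of $\epsilon^r_2$ through a $\mu_s$-gerbe and an $s^{2g}$-fold cover, the $\tfrac{r}{2}=s$ appearing in each nodal term, and the self-edge factor $\tfrac12$, and check that after extracting $2^{1-g}$ the remaining constant per stable graph is exactly $2^{2g-2}\,s^{2g-1-h^1(\Gamma)}/|\Aut(\Gamma)|$, matching \cref{prop:CohFT:sum:stable:graphs}. The delicate points are (i) correctly matching the mod-$2s$ nodal decorations of Chiodo's formula with the mod-$s$ spin weightings after applying $\epsilon^r_2$, including the parity constraint $b_i=2a_i+1$; and (ii) verifying that the Witten-class sign coincides with the parity sign built into $\varpi^{r,\theta}$ on every boundary stratum, which ultimately rests on Cornalba's deformation-invariance of the parity on nodal spin curves \cite{Cor89}. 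Once these are settled, the two stable-graph expansions coincide and the identity $\Omega_{g,n}^{r,\theta}=2^{g-1}C^{\theta}(r,1;\underline{a})$ follows.
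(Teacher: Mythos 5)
Your overall skeleton --- expand both sides over stable graphs and match vertex, leaf, and edge contributions, starting from Chiodo's Chern-character formula and \cref{prop:spin:Chiodo:as:Chiodo:times:Witten} --- is the same as the paper's, which runs the expansion via \cite[proposition~4]{JPPZ17} and Chiodo's refined formula \cite[corollary~3.1.8]{Chi08}. However, there is a genuine gap at the edges, which is precisely where the content of the proof sits. Chiodo's nodal term sums over \emph{all} twists $q \in \Z/2s\Z$, while the edge weights in \cref{eqn:CohFT:sum:stable:graphs} only involve $B_{m+1}\bigl(\tfrac{2w(h)+1}{2s}\bigr)$, i.e.\ odd residues; so the even-$q$ edge contributions must disappear, and your argument never shows this. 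Your proposed mechanism --- that $\epsilon^{r}_2$ ``reduces the nodal index modulo $s$'' --- is incorrect: by \cref{rem:parity:index:restriction}, $\epsilon^{r}_2$ sends $Z'_q$ to $Z'_{q \bmod 2}$, so both parities of $q$ survive the untwisting. What actually kills the even-$q$ terms is the \emph{sign}: at a non-separating node with even twist, the pullback $\tilde\vartheta$ to the normalisation satisfies $\tilde\vartheta^{2} \cong \omega_{\tilde{C}}([y]+[\bar y])$, and by Cornalba's argument \cite[example~6.2]{Cor89} its two gluings to theta characteristics on $C$ have \emph{opposite} parity, so their contributions cancel pairwise in the signed pushforward $(\epsilon^{r,+}_1)_* - (\epsilon^{r,-}_1)_*$; at odd twist the parity is preserved under normalisation and the strata genuinely decompose by parity. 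Without this local analysis the two expansions simply have different sets of edge decorations and cannot be matched term by term.

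The second gap is the prefactor. You plan to obtain the constant per stratum by composing degrees of maps ($2^{2g-1}$ for $\epsilon^2_1$, the $s^{2g}$-fold cover, gerbe factors, etc.), but an unsigned degree count gives the wrong answer because the pushforward is signed and the cancellations change the effective degree. On a stratum $(\Gamma,w)$ with all twists odd, parities multiply over vertices, so the relevant constant is the signed count of theta characteristics: using that a genus-$h$ vertex carries $2^{h-1}(2^h + p)$ spin structures of parity $p$, one computes $\sum_{\{p(v)\}} \prod_{v} p(v)\, s^{2g(v)} 2^{g(v)-1}\bigl(2^{g(v)} + p(v)\bigr)$, divided by the automorphism factor $r^{|V_\Gamma|}$, which collapses to $\prod_{v} s^{2g(v)-1} 2^{g(v)-1}$ and yields exactly the exponents $2g-1-h^1(\Gamma)$ of $s$ and $g-1$ of $2$. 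This signed count is absent from your outline. Finally, a minor conceptual point: the parity sign here has nothing to do with Gunningham's formula or with $\varpi^{r,\theta}$; the proposition is an identity between cohomology classes and is proved entirely on the moduli-space side.
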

\begin{proof}
	In the non-spin case, the corresponding expansion in stable graphs is given by \cite[corollary~4]{JPPZ17}. We emphasise here the changes coming from the inserted sign. Let us start from a slightly more refined formula for $\ch (R^\bullet \pi_* \mc{S})$, given in \cite[corollary~3.1.8]{Chi08}:
	\begin{equation}\label{eqn:Chern:char:alt}
	\begin{split}
		(d+1)! \, \ch_d (R^\bullet \pi_* \mc{S})
		=
		p^* \bigg(
			B_{d+1} \Bigl( \frac{1}{r} \Bigr) \kappa_d
			- \sum_{i=1}^n B_{d+1}\Bigl( \frac{b_i}{r} \Bigr) \psi_i^d
			+ \frac{r}{2} \sum_{\substack{0 \leq l \leq g\\ I \subseteq \bbraket{n}}}
				B_{d+1}\Bigl( \frac{q(l,I)}{r} \Bigr) (i_{(l,I)})_* (\gamma_{d-1}) \bigg) & \\
		+ \frac{r}{2}\sum_{q=0}^{r-1} B_{d+1}\Bigl( \frac{q}{r} \Bigr) (j_{(\textup{irr},q)})_* (\gamma_{d-1}) & \,.
	\end{split}
	\end{equation}
	In this formula, the singular locus $Z'$ has been decomposed more than in \cref{eqn:sing:locus:first:decomp}: consider a spin curve $(C,\underline{x},\mc{L})$ with a node $y$, denote by $\nu \colon \tilde{C} \to C$ the normalisation at $y$ and $\tilde{\mc{L}} \coloneqq \nu^{\ast}\mc{L}$.
	
	\smallskip

	If the node $y$ is \emph{separating}, denote $\tilde{C} = C_1 \sqcup C_2$ with the chosen first branch $C_1$ of genus $l$ and marked points $\underline{x}_I$ for some $I \subseteq \bbraket{n}$. Then the value of $q (l,I)$ is determined by $ 2l -2 + (|I|+1) - \sum_{i \in I} b_i + q (l,I)\in r\Z$ (and in particular must be odd), so $ Z'_{(l,I)} \subset Z'_{q(l,I)}$ and 
	\begin{equation*}
		\tilde{\mc{L}}^{r}|_{C_1} \cong \omega_{C_1,\textup{log}}\Big((q-r) [y] -\sum_{i\in I} b_i[x_i] \Big)\,.
	\end{equation*}
	We denote by $i_{(l,I)} \colon V'_{(l,I)} \to \M_{g,n}(r)$ the double cover that fits into the following diagram.
	\begin{equation*}
	\begin{tikzcd}
		Z'_{(\textup{l,I})} \arrow{r}{j_{(l,I)}}\arrow{d}[swap]{p_{(l,I)}} & \M^{r,k}_{g,\underline{b}} \arrow{d}{p} \\
		V'_{(l,I)} \arrow{r}[swap]{i_{(l,I)}} & \M_{g,n}(r)
	\end{tikzcd}
	\end{equation*}
	If the node $y$ is \emph{non-separating}, then the moduli point lies in $Z'_{(\textup{irr},q)} \subset Z'_q$ for some $q$, and we have
	\begin{equation*}
		\tilde{\mc{L}}^{r} \cong \omega_{\tilde{C},\textup{log}}\Big((q-r) [y] - q[\bar{y}] -\sum_{i=1}^n b_i[x_i] \Big) \, .
	\end{equation*}
	We denote by $j_{(\textup{irr},q)} \colon Z'_{(\textup{irr},q)} \to \M^{r,k}_{g,\underline{b}}$ the double cover of the relevant singular locus.
	
	\smallskip

	Now, let us start with the local picture, near a spin curve $ (C,\underline{x},\mc{L})$ with a node $y$, and let us write $ \tilde{\mc{L}} \coloneqq \nu^*\mc{L}$, $ \vartheta \coloneqq \epsilon^{r}_2 \mc{L}$ (which is a twist of $ \mc{L}^r$, recall \cref{rem:parity:index:restriction}) and $ \tilde{\vartheta} \coloneqq \nu^* \vartheta = \epsilon^{r}_2 \tilde{\mc{L}}$. The analysis is similar to \cite[examples 6.1 \& 6.2]{Cor89}, which is in the bubbled curve formalism.
	
	\smallskip

	If $y$ is separating, then $(C_1,\underline{x}_I,\tilde{\mc{L}} |_{C_1})$ and $(C_2,\underline{x}_{I^c},\tilde{\mc{L}} |_{C_2})$ are $r$-spin curves, and have parities. Moreover, although the $ \tilde{\mc{L}}|_{C_i}$ do not determine $\mc{L}$ (there are $r$ choices of glueing at the node), it is clear that $ H^0 (C; \vartheta ) \to H^0(C_1; \tilde{\vartheta}|_{C_1}) \oplus H^0(C_2;\tilde{\vartheta}|_{C_2}) $ is an isomorphism, so the parities add. \par
	If $y$ is non-separating, and the arithmetic genus of $ C$ is $g$, then that of $ \tilde{C}$ is $ g-1$. There are two distinct cases, given by the parity of $q$.
	\begin{itemize}
		\item
		If $q = 2u+1$ is odd, then $ \vartheta $ is a theta characteristic on $C$ such that $ \tilde{\vartheta} $ is a theta characteristic on $ \tilde{C}$, and in the same way as the previous case, $ h^0 (C;\vartheta ) = h^0 (\tilde{C};\tilde{\vartheta})$. Moreover, we can glue $ \tilde{\mc{L}}$ on $ \tilde{C}$ in $r$ ways.

		\item
		If $ q = 2u$ is even, we see that $ \tilde{\vartheta}$ is not quite a theta characteristic: in fact $ \tilde{\vartheta}^2 = \omega_{\tilde{C}} ([y] + [\bar{y}])$. By the argument of \cite[example 6.2]{Cor89}, such line bundles on $ \tilde{C}$ can be glued to theta characteristics on $C$ in two ways, and the resulting theta characteristics have opposite parity.
	\end{itemize}
	Rewriting this, we get that if $q$ is odd, then $ Z'_{(\textup{irr},q)} = (Z'_{(\textup{irr},q)})^+ \sqcup (Z'_{(\textup{irr},q)})^-$, and the maps $ j_{(\textup{irr},q)}$ preserve parity. However, if $ q$ is even, then $Z'_{(\textup{irr},q)}$ does not decompose, and the maps $ j_{(\textup{irr},q)}^{\pm} \colon Z_{(\textup{irr},q)} \to \M_{g,\underline{b}}^{r,1,\pm}$ are such that $ p_*^+ (j_{(\textup{irr},q)}^+)_* =  p_*^- (j_{(\textup{irr},q)}^-)_*$.

	\smallskip

	For the global calculation, we use \cite[proposition~4]{JPPZ17}: the class $ c(- R^\bullet \pi_* \mc{S})$ equals
	\begin{equation*}
	\begin{split}
		\sum_{\Gamma \in \mc{G}_{g,n}} & \sum_{w \in W_{\Gamma}^{r,1}(a)}
		\frac{r^{|E(\Gamma)|}}{|\Aut (\Gamma)|} (\xi_{\Gamma,w})_*
		\prod_{\mathclap{v \in V_{\Gamma}}}  C_V(v) 
		\prod_{\mathclap{\lambda_i \in \Lambda_{\Gamma}}} \;
			C_\Lambda (\lambda )\\
		\times \; & \;
		\prod_{\mathclap{\substack{e \in E_{\Gamma} \\ e = (h,h')}}} \;
			\frac{1 - \exp\Bigl(
				- \sum_{m \ge 1} \frac{(-1)^m B_{m+1}(\frac{w(h)}{r})}{m(m+1)} \bigl( (\psi_{h})^m - (-\psi_{h'})^m \bigr)
			\Bigr)}{\psi_h + \psi_{h'}}\,,
	\end{split}
	\end{equation*}
	where $C_V$ and $C_\Lambda$ are the same contributions as in \cref{eqn:CohFT:sum:stable:graphs}, and $W_{\Gamma}^{r,1}(a)$ is the set of $1$-weightings modulo $r$ of $\Gamma$ (see \cite[subsection~1.1]{JPPZ17}). The sum over weights $w$ encodes the values of $q$ at the different nodes. All of the terms of this formula are pulled back from the base, so in order to calculate the spin Chiodo class, we only need to analyse per pair $ (\Gamma, w)$ what part of the contribution lies in $ \M_{g,n}^{r,1,+}$ and what part in $\M_{g,n}^{r,1,-}$. We will use the local analysis for this.

	\smallskip

	If $w$ takes an even value at a half-edge $h$ (and therefore also at $ h'$), then we know this can only be on a non-separating edge, and by our previous argument, it can be glued in two ways with opposite parity. Hence, all of these contributions cancel, and we may restrict to odd values.\par
	If $w$ only takes odd values, then each of the $\M_{g(v),n(v)}^{r,1}$ actually splits according to parity, and taking parities $ \{ p(v) \}$ results in a spin curve with parity $ \prod_v p(v)$. So for fixed parities $ \{ p(v)\}$, we get
	\begin{equation*}
		\prod_{v \in V( \Gamma)} s^{2g(v)} 2^{g(v)-1} (2^{g(v)} + p(v))
	\end{equation*}
	curves. As each $r$-spin bundle on the stratum $ \Gamma $ has automorphism size $ r^{|V_{\Gamma}|}$, subtracting the number of odd-parity curves from the number of even-parity curves, we get
	\begin{equation*}
	\begin{split}
		\sum_{p(v) \in \{ \pm 1\} : v \in V_{\Gamma}} \prod_{v \in V_{\Gamma}} p(v) s^{2g(v)-1} 2^{g(v)-2} (2^{g(v)} + p(v)) 
		&= \prod_{v \in V_{\Gamma}} \sum_{p(v) \in \{ \pm 1\}} s^{2g(v)-1} 2^{g(v)-2} (1 + p(v) 2^{g(v)}) 
		\\
		&= \prod_{v \in V_{\Gamma}} s^{2g(v)-1} 2^{g(v)-1} 
	\end{split}
	\end{equation*}
	as the degree for $ (p^+ - p^-) $ on the stratum $(\Gamma,w)$. After pushforward, $s$ and $2$ occur a total of
	\begin{equation*}
		|E_{\Gamma}| + \sum_{v \in V_{\Gamma}} (2g(v) -1) = |E_{\Gamma}| - |V_{\Gamma}| + 2 \sum_{v \in V_{\Gamma}} g(v) = \big(h^1(\Gamma ) -1\big) + 2\big(g-h^1(\Gamma) \big) = 2g - 1 - h^1(\Gamma) 
	\end{equation*}
	and
	\begin{equation*}
		|E_{\Gamma}| + \sum_{v \in V_{\Gamma}} (g(v) -1) = |E_{\Gamma}| - |V_{\Gamma}| + \sum_{v \in V_{\Gamma}} g(v) = \big(h^1(\Gamma ) -1\big) + \big(g-h^1(\Gamma) \big) = g - 1
	\end{equation*}
	times, respectively. Comparing this with \cref{eqn:CohFT:sum:stable:graphs} yields the result.
\end{proof}

\begin{corollary}\label{cor:spin:ELSV:with:Chiodo}
	Conjecture~\ref{conj:spin:HNs:TR}, i.e. \cref{thm:AS}, is equivalent to the following spin ELSV formula: for $r=2s$ and for $\mu = (\mu_1,\dots,\mu_n) \in \mc{OP}(d)$, the spin Hurwitz numbers are given by
	\begin{equation}\label{SpinELSVFormula}
	\begin{split}
		h_{g;\mu}^{r,\theta}
		&=
		r^{\frac{(r+1)(2g-2+n)+d}{r}}
		\left( \prod_{i=1}^n \frac{\left( \frac{\mu_i}{r} \right)^{[\mu_i]}}{[\mu_i]!}\right)
		\int_{\overline{\mathcal{M}}_{g,n}}
		\frac{C^{\theta}(r,1;\braket{\mu_1}, \dotsc, \braket{\mu_n})}{\prod_{i=1}^n(1 - \frac{\mu_i}{r}\psi_i)}
		\\ 
		&=
		r^{\frac{(r+1)(2g-2+n)+d}{r}}
		\left( \prod_{i=1}^n \frac{\left( \frac{\mu_i}{r} \right)^{[\mu_i]}}{[\mu_i]!}\right)
		\int_{\overline{\mathcal{M}}_{g,\underline{1}}^{2,1}}
		\frac{c_{W,2} \cdot (\epsilon^{r}_2)_*c (-R^\bullet \pi_* \mc{S})}{\prod_{i=1}^n(1 - \frac{\mu_i}{r}\psi_i)}
	\end{split}
	\end{equation}
	Here we wrote $\mu_i = r[\mu_i] + r - (2 \braket{\mu_i} + 1)$, with $0 \le \braket{\mu_i} \le s-1$, and the domain of $ \epsilon^{r}_2$ is $ \M_{g,\underline{r-(2 \braket{\mu}+1)}}^{r,1}$.
\end{corollary}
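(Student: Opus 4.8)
The plan is to obtain the corollary as a direct consequence of three results already established in this section: the ELSV-type formula of \cref{thm:ELSV:CohFT}, which expresses $h^{r,\theta}_{g;\mu}$ through the CohFT $\Omega^{r,\theta}_{g,n}$; the identification $\Omega^{r,\theta}_{g,n}(v_{a_1}\otimes\cdots\otimes v_{a_n}) = 2^{g-1}\, C^{\theta}(r,1;a_1,\dots,a_n)$ of the preceding proposition; and the factorisation of \cref{prop:spin:Chiodo:as:Chiodo:times:Witten}. Since \cref{thm:ELSV:CohFT} already phrases the conjecture as an \emph{equivalence}, and each of the substitutions below is a genuine equality of cohomology classes, the equivalence will propagate automatically to both lines of \eqref{SpinELSVFormula}.

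First I would substitute the CohFT identity into the formula of \cref{thm:ELSV:CohFT}. That theorem carries a prefactor $2^{1-g}$, while the identification contributes a factor $2^{g-1}$, so the two powers of $2$ cancel, $2^{1-g}\cdot 2^{g-1}=1$, leaving exactly the first line of \eqref{SpinELSVFormula} with $C^{\theta}(r,1;\langle\mu_1\rangle,\dots,\langle\mu_n\rangle)$ in the numerator. The index conventions match because \cref{thm:ELSV:CohFT} already records $\mu_i = r[\mu_i] + r - (2\langle\mu_i\rangle+1)$ with $0\le\langle\mu_i\rangle\le s-1$, so the leaf decorations $v_{\langle\mu_i\rangle}$ of $\Omega^{r,\theta}_{g,n}$ are precisely the arguments $\langle\mu_i\rangle$ of the spin Chiodo class.

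For the second equality, I would insert the expression of \cref{prop:spin:Chiodo:as:Chiodo:times:Witten} at $k=1$, namely $C^{\theta}(r,1;\underline a) = (\epsilon^2_1)_*\big(c_{W,2}\cdot(\epsilon^r_2)_* c(-R^\bullet\pi_*\mc{S})\big)$, into the integral over $\M_{g,n}$ and apply the projection formula along the étale morphism $\epsilon^2_1\colon \M_{g,\underline 1}^{2,1}\to\M_{g,n}$. The one point requiring care is the denominator $\prod_i(1-\tfrac{\mu_i}{r}\psi_i)$: since the $\psi_i$ are compatible along $\epsilon^2_1$ (the map being an isomorphism on coarse spaces up to the gerbe and étale-cover factors already absorbed into the pushforward), one has $(\epsilon^2_1)^*\psi_i=\psi_i$, and the projection formula turns $\int_{\M_{g,n}}\tfrac{(\epsilon^2_1)_*(\cdots)}{\prod_i(1-\frac{\mu_i}{r}\psi_i)}$ into $\int_{\M_{g,\underline 1}^{2,1}}\tfrac{c_{W,2}\cdot(\epsilon^r_2)_* c(-R^\bullet\pi_*\mc{S})}{\prod_i(1-\frac{\mu_i}{r}\psi_i)}$, with $\epsilon^r_2$ having domain $\M^{r,1}_{g,\underline{r-(2\langle\mu\rangle+1)}}$ by \cref{rem:Chiodo:index:restriction,rem:parity:index:restriction}, since $b_i=r-(2\langle\mu_i\rangle+1)$ is odd and reduces to $1$ modulo $2$.

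The bulk of the work, the construction of the CohFT, its stable-graph expansion, and its identification with the spin Chiodo class, has already been carried out, so the remaining task is essentially bookkeeping: tracking the cancelling powers of $2$, confirming the index dictionary $b_i = 2a_i+1 = r-(2\langle\mu_i\rangle+1)$ against the reductions of \cref{rem:Chiodo:index:restriction,rem:parity:index:restriction}, and checking that the projection formula applies verbatim despite $\M_{g,\underline 1}^{2,1}$ being a stack with nontrivial automorphisms, so that all degree factors land where the definition of $(\epsilon^{r,\pm}_1)_*$ in \cref{defn:spin:Chiodo} places them. I expect the $\psi$-class compatibility and the stacky projection formula to be the only genuinely delicate step.
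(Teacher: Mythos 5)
Your proposal is correct and is exactly the argument the paper intends: the corollary is stated without proof precisely because it follows by substituting the identification $\Omega^{r,\theta}_{g,n}(v_{a_1}\otimes\cdots\otimes v_{a_n}) = 2^{g-1}\,C^{\theta}(r,1;a_1,\dots,a_n)$ into \cref{thm:ELSV:CohFT} (cancelling $2^{1-g}\cdot 2^{g-1}=1$), and then unwinding $C^{\theta}$ via \cref{prop:spin:Chiodo:as:Chiodo:times:Witten} together with the projection formula along $\epsilon^2_1$, using that the $\psi$-classes on $\overline{\mathcal{M}}_{g,\underline{1}}^{2,1}$ are pullbacks from $\overline{\mathcal{M}}_{g,n}$. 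Your flagging of the stacky projection formula and the $\psi$-class compatibility as the only delicate points matches the implicit content of the paper's (omitted) proof.
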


\begin{remark}
	Analogously to the non-spin case, cf. \cite[conjecture~6.1]{KLPS19}, \cite[theorem~1.18]{DKPS23}, these formulae may be generalised to spin $q$-orbifold Hurwitz numbers $h^{r,q,\theta}_{g;\mu}$ with $(r+1)$-completed cycles, i.e. spin Hurwitz numbers with one ramification profile $ (q,q, \dotsc, q)$, one given by a partition $\mu$, and all other partitions being spin completed $(r+1)$-cycles. Then $q$, $r + 1 = 2s + 1$, and $\mu$ would need to be odd, and the cohomological field theory would be
	\begin{equation}
		C^{\theta}(rq,q;\braket{\mu_1}, \dotsc, \braket{\mu_n})\,,
	\end{equation}
	with now $ \mu_i = rq[\mu_i] +rq-(2\braket{\mu_i}+1)$, and $ 0 \leq \braket{\mu_i} \leq sq-1$. The spectral curve for this problem should be
	\begin{equation}
		x(z) = \log(z) - z^{rq}, 
		\qquad\quad 
		y(z) = z^q\,,
		\qquad\quad
		B(z_1, z_2) 
		=
		\frac{1}{2} \bigg( \frac{1}{(z_1-z_2)^2} + \frac{1}{(z_1+z_2)^2} \bigg) dz_1 dz_2 \,. 
	\end{equation}
	We do not pursue this generalisation here, although we do not expect any theoretical complications.
\end{remark}

\newpage

\appendix
\addtocontents{toc}{\protect\setcounter{tocdepth}{1}}

\section{Proof of fermion calculations}
\label{app:fermion:calcs}

In this appendix, we gather long calculations in the neutral fermion formalism.

\begin{proof}[Proof of \cref{prop:conj:operator}]
	 We compute iterated commutations
	\begin{align*}
		\ad_{\alpha_{1}^B} \mathcal{O}_{-\mu}^{B, r}(u) &= \sum_{k>-1}\sum_{l > \mu/2} (-1)^{l+1} f(l) \big(  \delta_{l-k-1} \phi_k \phi_{\mu-l} + \delta_{-k-1 + \mu - l} \phi_l \phi_k + \delta_{k + \mu-l} \phi_l\phi_{-k-1}  \big)
		 \\
		& = \sum_{l > \mu/2} (-1)^{l+1} f(l) \phi_{l-1} \phi_{\mu-l}
		+ \!\!\!  \sum_{ \mu/2 < l < \mu } \!\!\! (-1)^{l+1} f(l) \phi_l \phi_{\mu-l-1}
 		+ \!\!\! \sum_{l \geq \mu} (-1)^{l+1} f(l) \phi_l \phi_{\mu-l-1}
		\\
 		& = \!\!\! \sum_{ l > \mu/2}  (-1)^{l+1} [f(l) - f(l+1)] \phi_{l}\phi_{-l + (\mu-1)}  + (-1)^{\frac{\mu - 1}{2}}f\Big(\frac{\mu+1}{2}\Big) \phi_{\frac{\mu-1}{2}} \phi_{\frac{\mu-1}{2}}
		\\
  		& = \!\!\! \sum_{ l > \mu/2}  (-1)^{l+1} (-\Delta )f(l) \phi_{l} \phi_{-l + (\mu-1)}   + \frac{f(1)}{2}\delta_{\mu,1}\,,
	\end{align*}
	where we used that $ \phi_k \phi_k = \frac{\delta_{k,0}}{2}$. As the last term is central, we get for the second commutation
	\begin{align*}
		(\ad_{\alpha_{1}^B})^2 \mathcal{O}_{-\mu}^{B, r}(u)&= \sum_{k>-1}\sum_{l > \mu/2} (-1)^{l+1} (-\Delta )f(l) \big(  \delta_{l-k-1} \phi_k \phi_{\mu-1-l} + \delta_{\mu -k - 2 - l} \phi_l \phi_k + \delta_{k+\mu-1-l} \phi_l\phi_{-k-1}  \big)
		\\
		& = - \!\!\! \sum_{l > \mu/2 - 1} (-1)^{l+1} (-\Delta )f(l+1)\phi_l \phi_{-l + (\mu-2)} 
 		+\!\!\! \sum_{l > \mu/2 } \!\!\! (-1)^{l+1} (-\Delta )f(l)\phi_{l} \phi_{-l + (\mu-2)} 
		\\
 		& = \!\!\! \sum_{ l > \mu/2}  (-1)^{l+1} (-\Delta )^2 f(l) \phi_{l}\phi_{-l + (\mu-2)}  
 		+  (-1)^{\frac{\mu -1}{2}} (-\Delta )f\Big(\frac{\mu+1}{2}\Big) \phi_{\frac{\mu-1}{2}}\phi_{\frac{\mu-3}{2} }.
	\end{align*}
	Now, in the same way as before, the action of $\ad_{\alpha_{1}^B}$ on the first term of the latter expression is easy to compute:
	\begin{align*}
		& \ad_{\alpha_{1}^B}\sum_{ l > \mu/2}  (-1)^{l+1} (-\Delta )^2 f(l)\phi_{l} \phi_{-l + (\mu-2)}   
		\\
		\qquad
		&=  \sum_{ l > \mu/2}  (-1)^{l+1} (-\Delta )^3 f(l) \phi_{l}\phi_{-l + (\mu-3)}  
		+ (-1)^{\frac{\mu - 1}{2}}(-\Delta )^2 f\Big(\frac{\mu+1}{2}\Big)\phi_{\frac{\mu-1}{2}} \phi_{\frac{\mu-5}{2} } .
	\end{align*}
	However, the action of $\ad_{\alpha_{1}^B}$ on the second term this time is not trivial. In fact
	\begin{align*}
		& \ad_{\alpha_{1}^B}  (-1)^{\frac{\mu - 1}{2}}(-\Delta ) f\Big(\frac{\mu+1}{2}\Big) \phi_{\frac{\mu-1}{2}}\phi_{\frac{\mu-3}{2}} 
		\\
		&\qquad = (-1)^{\frac{\mu - 1}{2}} (-\Delta ) f\Big(\frac{\mu+1}{2}\Big) 
		\Big( \phi_{\frac{\mu-3}{2} }\phi_{\frac{\mu-3}{2}} (\delta_{\mu \notin \{1\}} + \delta_{\mu \in \{1\}})
		+
		\phi_{\frac{\mu-1}{2} } \phi_{\frac{\mu-5}{2}}(\delta_{\mu \notin \{1,3\}} + \delta_{\mu \in \{1,3\} }) \Big)
		\\
		&\qquad =
 		(-1)^{\frac{\mu - 1}{2}} (-\Delta ) f\Big(\frac{\mu+1}{2}\Big) 
		\Big(
		\frac{\delta_{\mu,3}}{2} + \phi_{\frac{\mu-1}{2} } \phi_{\frac{\mu-5}{2}}
		\Big).
	\end{align*}
	Putting the two contributions together, we get:
	\begin{align*}
		(\ad_{\alpha_{1}^B})^3 \mathcal{O}_{-\mu}^{B, r}(u)
		&= \sum_{ l > \mu/2}  (-1)^{l+1} (-\Delta )^3 f(l)  \phi_{l} \phi_{-l + (\mu-3)}
		\\
		&+
		 (-1)^{\frac{\mu - 1}{2}}
		\left(
 		(-\Delta )^2 f\Big(\frac{\mu+1}{2}\Big)\phi_{\frac{\mu-1}{2}} \phi_{\frac{\mu-5}{2}} 
 		+ 
 		(-\Delta ) f\Big(\frac{\mu+1}{2}\Big) 
		\Big(
		\frac{\delta_{\mu,3}}{2}
		+
		\phi_{\frac{\mu-1}{2} } \phi_{\frac{\mu-5}{2}}
		\Big)
		\right).
	\end{align*}
	In general
	\begin{align*}
		& \ad_{\alpha_{1}^B}\sum_{ l > \mu/2}  (-1)^{l+1} (-\Delta )^{t-1} f(l) \phi_{l}\phi_{-l + (\mu-(t-1))}   
		\\
		&=  \sum_{ l > \mu/2}  (-1)^{l+1} (-\Delta )^t f(l)  \phi_{l} \phi_{-l + (\mu-t)}
		 + 
 		(-1)^{\frac{\mu - 1}{2}} (-\Delta )^{t-1} f\Big(\frac{\mu+1}{2}\Big)\phi_{\frac{\mu-1}{2}} \phi_{\frac{\mu-1}{2} - (t-1)}
	\end{align*}
	and the commutation with the other terms is of the type
	\begin{equation*}
		\sum_{k>-1}(-1)^{k+1} \bigl[
			\phi_k\phi_{-k-1}, \phi_{\frac{\mu - 1}{2} - A}\phi_{\frac{\mu - 1}{2} - B}
		\bigr]
		=
		\phi_{\frac{\mu - 1}{2} - (A+1)}\phi_{\frac{\mu - 1}{2} - B} 
		+
		\phi_{\frac{\mu - 1}{2} - A}\phi_{\frac{\mu - 1}{2} - (B+1)} ,
	\end{equation*}
	where we should keep in mind that $ \phi_k \phi_k = \frac{\delta_{k,0}}{2}$. Hence in general we obtain
	\begin{equation}\label{AdjointActionIterated}
		\begin{split}
		(\ad_{\alpha_{1}^B})^t \mathcal{O}_{-\mu}^{B, r}(u)&= \sum_{ l > \mu/2}  (-1)^{l+1} (-\Delta )^t f(l) 			\phi_{l} \phi_{-l + (\mu-t)} 
		\\
		+ &
		(-1)^{\frac{\mu - 1}{2}}
		\sum_{v=0}^{t-1}
		(-\Delta )^{v} f\Big(\frac{\mu+1}{2}\Big) 
		\Big[
		\sum_{j=0}^{\lfloor t/2-1 \rfloor} c^t_{v,j} 
		\phi_{\frac{\mu-1}{2} - j}\phi_{\frac{\mu-1}{2} - (t-1) + j}  + \frac{\delta_{t,\mu}}{2} c^{t-1}_{v,\frac{t-3}{2}}
		\Big],
		\end{split}
	\end{equation}
	where $c^t_{v,j} = C_v\big( (t-1)-j-v,j \big) $ with the $C_m(n,k)$ elements of Catalan's trapezoids \cite{Reu14}. These are a generalisation of Catalan's triangle (which is $m=1$), and $ C_m(n,k)$ gives the number of paths in the two-dimensional square lattice, that start at the origin, have $n$ steps to the right, $k$ upwards, and do not cross $ y = x+ m-1$. Explicitly, they are given by
	\begin{equation*}
		C_m(n,k) = \begin{cases} \binom{n+k}{k} & k < m \\ \binom {n+k}{k} - \binom{n+k}{k-m} & m \leq k < n+m \\ 0 & \text{else} \end{cases}.
	\end{equation*}
	In our situation, the lattice coordinates are the arguments of the $ \phi$'s. They start at $ (l, \mu-l)$, in every commutation move either down or to the left, and cannot cross the diagonal (as $ \phi_k^2 $ is central). Explicitly, $ c^0_{v, -1} = C_v(-v,-1)= \binom{-v-1}{-1} = \delta_{v,0} $ is the only surviving term on the second line for $t=1$.

	\begin{lemma}
		For $ 0 \leq j \leq \lfloor t/2 \rfloor -1$, we have
		\begin{equation}
			\sum_{v=0}^{t-1} c^t_{v,j} (-\Delta )^v f\Bigl( \frac{\mu+1}{2} \Bigr) = (-1)^{j+1} (-\Delta )^t f \Bigl( \frac{\mu-1}{2} -j \Bigr).
		\end{equation}
	\end{lemma}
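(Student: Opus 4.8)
The plan is to reduce the claimed operator identity to a single binomial convolution identity for the Catalan trapezoid numbers, the crucial input being a reflection symmetry of $f$. Since $r$ is even, $r+1$ is odd, so $(l-\mu)^{r+1}=-(\mu-l)^{r+1}$, and the exponent of $f$ satisfies $\frac{l^{r+1}-(l-\mu)^{r+1}}{r+1}=\frac{l^{r+1}+(\mu-l)^{r+1}}{r+1}$, which is invariant under $l\mapsto\mu-l$. Hence $f(l)=f(\mu-l)$; abbreviating $g(p)\coloneqq f\bigl(\tfrac{\mu+1}{2}+p\bigr)$, this reads $g(p)=g(-1-p)$ for all $p\in\Z$, i.e. the sequence $g$ is symmetric about $-\tfrac12$.

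First I would expand both sides in the values $\{g(p)\}_{p\in\Z}$ using $-\Delta=1-E$ with $E$ the unit shift operator. The left-hand side contributes the coefficient $(-1)^i\sum_{v\ge i}c^t_{v,j}\binom{v}{i}$ to $g(i)$ for $i\ge 0$, and nothing to $g(p)$ with $p<0$; the right-hand side $(-1)^{j+1}(-\Delta)^t f\bigl(\tfrac{\mu-1}{2}-j\bigr)$ contributes $(-1)^p\binom{t}{p+1+j}$ to $g(p)$, now also for negative $p$. Because these supports genuinely differ, the symmetry of $g$ must be used: grouping both sides into the symmetric pairs $\{p,-1-p\}$, the identity becomes equivalent, for every $i\ge 0$, to
\begin{equation*}
	\sum_{v=i}^{t-1}c^t_{v,j}\binom{v}{i}=\binom{t}{i+j+1}-\binom{t}{j-i},
\end{equation*}
where as usual $\binom{N}{k}=0$ outside $0\le k\le N$.

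To prove this purely combinatorial statement, I would first simplify the trapezoid entries. The hypothesis $0\le j\le\lfloor t/2\rfloor-1$ forces $2j\le t-2$, so $c^t_{v,j}=C_v(t-1-j-v,\,j)$ never meets the condition $k\ge n+m$ of the vanishing outer branch; a check of the two remaining cases gives the uniform closed form
\begin{equation*}
	c^t_{v,j}=\binom{t-1-v}{j}-\binom{t-1-v}{\,j-v\,},
\end{equation*}
valid with the combinatorial convention (the second term is absent for $v>j$, and both terms vanish when $c^t_{v,j}=0$). Substituting this and using $\binom{t-1-v}{j-v}=\binom{t-1-v}{t-1-j}$, the sum splits into two Vandermonde-type convolutions, each evaluated by the parallel-summation identity $\sum_v\binom{v}{a}\binom{N-v}{b}=\binom{N+1}{a+b+1}$ with $N=t-1$. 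The first gives $\binom{t}{i+j+1}$ and the second $\binom{t}{i+(t-1-j)+1}=\binom{t}{j-i}$, whose difference is exactly the right-hand side above.

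The only delicate point is the reduction in the second paragraph: since the naive supports of the two sides do not coincide, one has to invoke the reflection symmetry of $f$ precisely once and reorganize the value-coefficients into the symmetric pairs, keeping careful track of the boundary conventions for the binomial coefficients (and of the diagonal contributions $\phi_k\phi_k=\tfrac{\delta_{k,0}}{2}$ already accounted for in \eqref{AdjointActionIterated}). Once the problem is recast as the displayed binomial identity, the rest is routine. As an alternative to the convolution argument, the same identity can be obtained by induction on $t$ from the Pascal-type recursion $c^t_{v,j}=c^{t-1}_{v,j}+c^{t-1}_{v,j-1}$, though the convolution route is cleaner.
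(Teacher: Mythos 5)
Your proof is correct and is essentially the paper's own argument reorganized: you use the same closed form $c^t_{v,j}=\binom{t-1-v}{j}-\binom{t-1-v}{j-v}$, the same Chu--Vandermonde convolution, and the same reflection symmetry $f(l)=f(\mu-l)$, only phrased as matching coefficients of the values $g(i)$ after folding by the symmetry, rather than transforming the left-hand side directly into the right-hand side. Indeed, your intermediate binomial identity $\sum_{v=i}^{t-1}c^t_{v,j}\binom{v}{i}=\binom{t}{i+j+1}-\binom{t}{j-i}$ is exactly the third displayed line of the paper's computation.
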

	\begin{proof}
		In this range, we may write $ c^t_{v,j} = \binom{t-1-v}{j} - \binom{t-1-v}{j-v}$, where the second term is zero if $ j <v$.
		Hence, using the Chu-Vandermonde identity,
		\begin{align*}
			\sum_{v=0}^{t-1} c^t_{v,j} (-\Delta )^v f\Bigl( \frac{\mu+1}{2} \Bigr) &= \sum_{v=0}^{t-1} \biggl( \binom{t-1-v}{j} - 			\binom{t-1-v}{j-v}\biggr) \sum_{k=0}^v (-1)^k  \binom{v}{k} f\Bigl( \frac{\mu+1}{2} +k\Bigr)
			\\
			&= \sum_{k=0}^{t-1} (-1)^k \sum_{v=k}^{t-1} \biggl( \binom{t-1-v}{j} - \binom{t-1-v}{j-v}\biggr)   \binom{v}{k} f\Bigl( \frac{\mu+1}{2} +k\Bigr)
			\\
			&= \sum_{k=0}^{t-1} (-1)^k \biggl( \binom{t}{j+k+1} - \binom{t}{j-k}\biggr) f\Bigl( \frac{\mu+1}{2} +k\Bigr)
			\\
			&= \sum_{m=j+1}^t \! \binom{t}{m} (-1)^{m+j+1} f\Bigl(\frac{\mu-1}{2}\!+ \! m\!- \! j\Bigr) + \! \sum_{m=0}^j \!\binom{t}{m} (-1)^{m+j+1} f\Bigl( \frac{\mu+1}{2} \! +\!  j \! - \! m\Bigr)
			\\
			&= \sum_{m=0}^t \binom{t}{m} (-1)^{m+j+1} f\Bigl(\frac{\mu-1}{2}\!+ \! m\!- \! j\Bigr) =(-1)^{j+1} (-\Delta )^t f \Bigl( \frac{\mu-1}{2} -j \Bigr)\,,
		\end{align*}
		where from the fourth to the fifth line, we used that $ f(\mu -l) =f(l)$.
	\end{proof}

	Substituting this into \cref{AdjointActionIterated}, we get
	\begin{align*}
		(\ad_{\alpha_{1}^B})^t \mathcal{O}_{-\mu}^{B, r}(u)&= \sum_{ l > \mu/2}  (-1)^{l+1} (-\Delta )^t f(l) \phi_{l} \phi_{-l + (\mu-t)}
		\\ 
 		&+  \sum_{j=0}^{\lfloor t/2-1 \rfloor} (-1)^{\frac{\mu+1}{2} + j} (-\Delta )^t f \Bigl( \frac{\mu-1}{2} -j \Bigr) 
		\phi_{\frac{\mu-1}{2} - j}\phi_{\frac{\mu-1}{2} - (t-1) + j}  + \frac{\delta_{t,\mu}}{2} (-\Delta )^{\mu-1} f ( 1 )
		\\
 		&= \sum_{ l = \frac{\mu+1}{2} - \lfloor \frac{t}{2} \rfloor }^\infty  (-1)^{l+1} (-\Delta )^t f(l) \phi_{l} \phi_{-l + (\mu-t)} + \frac{\delta_{t,\mu}}{2} (-\Delta )^{\mu-1} f ( 1 ).
	\end{align*}
	Summing over $t$ proves the proposition.
\end{proof}

\subsection{The \texorpdfstring{$(0,2)$}{(0,2)}-free energies}

The $(0,2)$-free energies is given in \cref{cor:free:energy:genus:zero}, from comparison with the non-spin case. In this appendix, we give an alternative proof via the operator formalism.

\smallskip

In this case, the difference between connected and non-connected Hurwitz numbers is relevant. The general inclusion-exclusion formula reads
\begin{equation}
	\langle \mc{O}_1 \cdots \mc{O}_n \rangle = \sum_{M \vdash [n]} \prod_{i=1}^{|M|} \langle \mc{O}_{M_i} \rangle^{\circ}, 
	\qquad \qquad 
	\langle \mc{O}_1 \cdots \mc{O}_n \rangle^{\circ} = \sum_{M \vdash [n]} (-1)^{|M| - 1} (|M| - 1)! \prod_{i=1}^{|M|} \langle \mc{O}_{M_i} \rangle
\end{equation}
where the correlators $\langle \cdot \rangle$ are possibly disconnected and the correlators are $\langle \cdot \rangle^{\circ} $ connected. For $n=2$ we simply have
\begin{equation}
	\langle{\mathcal{O}_1 \mathcal{O}_2\rangle}^{\circ}
	=
	\langle{\mathcal{O}_1 \mathcal{O}_2\rangle}
	-
	\langle{\mathcal{O}_1\rangle}\langle{ \mathcal{O}_2\rangle},
	\qquad
	\langle{ \mathcal{O}_i\rangle} = \langle{ \mathcal{O}_i\rangle}^{\circ}.
\end{equation}
Therefore, the relation between the two counting in the $(0,2)$-case can be expressed in terms of free energies and vacuum expectation values as
\begin{equation}
\begin{split}
	F_{0,2}^{r,\theta}(e^{x_1},e^{x_2}) 
	& =
	\sum_{\substack{\mu_1,\mu_2 > 0 \\ \textup{odd}}}
		h^{r,\theta}_{0;(\mu_1,\mu_2)} e^{\mu_1 x_1+\mu_2x_2}
	\\
	& =
	\sum_{\substack{\mu_1,\mu_2 > 0 \\ \textup{odd}}}
		\frac{2}{\mu_1\mu_2} [u^{\mu_1+\mu_2}] \bigg( \corr{\Big( e^{ \ad_{\alpha^B_{1}}} \mathcal{O}_{-\mu_1}^{B, r}(u)\Big)\Big( e^{ \ad_{\alpha^B_{1}}} \mathcal{O}_{-\mu_2}^{B, r}(u)\Big) }  
	\\
	& \qquad \qquad \qquad \qquad \qquad
	-\corr{e^{ \ad_{\alpha^B_{1}}} \mathcal{O}_{-\mu_1}^{B, r}(u) } \corr{e^{ \ad_{\alpha^B_{1}}} \mathcal{O}_{-\mu_2}^{B, r}(u) } \bigg) e^{\mu_1 x_1+\mu_2x_2}.
\end{split}
\end{equation}
In this case, Wick's theorem states
\begin{equation}
	\big\< \phi_j\phi_k\phi_l\phi_m \big\>
	=
	\< \phi_j\phi_k \big\> \big\< \phi_l\phi_m \big\> - \big\< \phi_j \phi_l \big\> \big\< \phi_k \phi_m \big\> + \big\< \phi_j \phi_m \big\> \big\< \phi_k \phi_l \big\> .
\end{equation}
The first of these three terms drops out against the purely disconnected term, along with any contributions from identity components. By \cref{lem:VEV:Fock:basis}, we can write
\begin{align}
	\big\< \phi_j\phi_k\phi_l\phi_m \big\>^\circ 
	&= - \big\< \phi_j \phi_l \big\> \big\< \phi_k \phi_m \big\> + \big\< \phi_j \phi_m \big\> \big\< \phi_k \phi_l \big\> \notag 
	\\
	&=(-\delta_{j+l} \delta_{k+m} + \delta_{j+m} \delta_{k+l} ) (-1)^{l+m} \Big(\delta_{l >0} + \frac{\delta_l}{2}\Big) \Big( \delta_{m>0}+ \frac{\delta_m}{2}\Big) \notag 
	\\
	&=\delta_{j+k+l+m} ( \delta_{j+m} - \delta_{j+l} )  (-1)^{l+m} \Big(\delta_{l >0} + \frac{\delta_l}{2}\Big) \Big( \delta_{m>0}+ \frac{\delta_m}{2}\Big).\label{ConnWickFourterm}
\end{align}
So
\begin{equation}
\begin{split}
	F_{0,2}^{r,\theta}(e^{x_1},e^{x_2}) 
	&=
	\!\! \sum_{\substack{\mu_1,\mu_2>0\\ \textup{odd}}} \!\!\! \frac{2[u^{\mu_1+\mu_2}]}{\mu_1\mu_2}
	\!\!\!\sum_{t_1,t_2=0}^\infty \sum_{ l_1 = \frac{\mu_1+1}{2} - \lfloor \frac{t_1}{2} \rfloor }^\infty
	\!\!\!\!\!\! (-1)^{l_1} \left( \frac{(-\Delta )^{t_1}}{t_1!} f_{\mu_1}\right)\!\!(l_1)
	\!\!\! \sum_{ l_2 = \frac{\mu_2+1}{2} - \lfloor \frac{t_2}{2} \rfloor }^\infty
	\!\!\!\!\!\! (-1)^{l_2} \left( \frac{(-\Delta )^{t_2}}{t_2!} f_{\mu_2} \right)\!\!(l_2) \\
	&\hspace{5cm}  \big\< \phi_{l_1} \phi_{-l_1 + (\mu_1-t_1)} \phi_{l_2} \phi_{-l_2 + (\mu_2-t_2)} \big\>^\circ e^{\mu_1 x_1+\mu_2x_2}\,.
\end{split}
\end{equation}
The first factor from \cref{ConnWickFourterm} becomes $ \delta_{\mu_1+\mu_2-t_1-t_2}$, so we can rewrite $ t_2 = \mu_1 + \mu_2 -t_1$. Furthermore, in the second factor, $ \delta_{j+l}$ becomes $ \delta_{l_1 + l_2}$, which is identically zero in the summation range. Hence, we can impose the other $\delta$, which is $ \delta_{l_1 - l_2 +t_1 -\mu_1}$, i.e. we can rewrite $ l_2 = l_1 +t_1 - \mu_1$. We also rename $ t_1 \to t$ and $ l_1 \to l$ to lighten notation.
\begin{equation}
\begin{split}
	F_{0,2}^{r,\theta}(e^{x_1},e^{x_2}) 
	&=\!\! \sum_{\substack{\mu_1,\mu_2>0\\ \textup{odd}}} \!\!\! \frac{2[u^{\mu_1+\mu_2}]}{\mu_1\mu_2}  \sum_{t=0}^{\mu_1+\mu_2} \sum_{ l = \frac{\mu_1+1}{2} - \lfloor \frac{t}{2} \rfloor }^\infty \!\!\!\!
	\left( \frac{(-\Delta )^{t}}{t!} f_{\mu_1}\right)\!\!(l)
	\left( \frac{(-\Delta )^{\mu_1+\mu_2-t}}{(\mu_1+\mu_2-t)!} f_{\mu_2} \right)\!\!(l+t-\mu_1) \\
	& \qquad \qquad \qquad
	\Big(\delta_{l + t - \mu_1 >0} + \frac{\delta_{l+t-\mu_1}}{2} \Big) \Big(\delta_{l < 0} + \frac{\delta_l}{2} \Big) e^{\mu_1 x_1+\mu_2x_2}\,.
\end{split}
\end{equation}
As in the case of the $(0,1)$-free energies, the number of $\Delta$'s equals the power of $ u$, which equals the sum of degrees of the polynomials the $\Delta$'s act on. Therefore, again only the leading terms are relevant, and we infer that $r = 2s$ divides $t$. Hence, we write $ t= 2sj$.
\begin{equation}
	F_{0,2}^{r,\theta}(e^{x_1},e^{x_2}) 
	=\!\!\!\! \sum_{\substack{\mu_1,\mu_2>0\\ \textup{odd}\\ 2s \mid \mu_1 +\mu_2}} \!\! \frac{2}{\mu_1\mu_2}  \!\! \sum_{j=0}^{\frac{\mu_1+\mu_2}{2s}} \sum_{ l = \frac{\mu_1+1}{2} - sj}^\infty \!\! \frac{\mu_1^j}{j!} \frac{\mu_2^{\frac{\mu_1+\mu_2}{2s}-j}}{(\frac{\mu_1+\mu_2}{2s}-j)!} \Big(\delta_{l + 2sj - \mu_1 >0} + \frac{\delta_{l+2sj-\mu_1}}{2} \Big) \Big(\delta_{l < 0} + \frac{\delta_l}{2} \Big) e^{\mu_1 x_1+\mu_2x_2}.
\end{equation}
In this formula, the only $l$-dependence of the terms is in the Kronecker $ \delta$'s. Hence we can calculate that part of the sum independently:
\begin{equation}\label{eqn:l:sum:deltas}
	\sum_{ l = \frac{\mu_1+1}{2} - sj}^\infty \Big(\delta_{l + 2sj - \mu_1 >0} + \frac{\delta_{l+2sj-\mu_1}}{2} \Big) \Big(\delta_{l < 0} + \frac{\delta_l}{2} \Big).
\end{equation}
Assuming $ l \leq 0$ by the second factor, we get that $ l \geq 2l \geq \mu_1+1 -2sj$, so the first factor equals $1$. Hence \cref{eqn:l:sum:deltas} reduces to
\begin{equation}
	\sum_{ l = \frac{\mu_1+1}{2} - sj}^\infty \Big(\delta_{l < 0} + \frac{\delta_l}{2} \Big) = \max \big\{ 0,sj-\tfrac{\mu_1+1}{2} \big\} + \frac{\delta_{2sj-\mu_1-1\geq0}}{2}.
\end{equation}
Plugging this back in, we get
\begin{equation}
\begin{split}
	F_{0,2}^{r,\theta}(e^{x_1},e^{x_2}) 
	&= \!\! \sum_{\substack{\mu_1,\mu_2>0\\ \textup{odd}\\ 2s \mid \mu_1 +\mu_2}} \sum_{j=\frac{\mu_1+1}{2s}}^{\frac{\mu_1+\mu_2}{2s}} \frac{\mu_1^{j-1}}{j!} \frac{\mu_2^{\frac{\mu_1+\mu_2}{2s}-j-1}}{(\frac{\mu_1+\mu_2}{2s}-j)!} (2sj-\mu_1) e^{\mu_1 x_1+\mu_2x_2}
	\\
	&= \!\!\!\! \sum_{\substack{\mu_1,\mu_2>0\\ \textup{odd}\\ 2s \mid \mu_1 +\mu_2}} \sum_{j=\frac{\mu_1+1}{2s}}^{\frac{\mu_1+\mu_2}{2s}} \bigg(2s \frac{\mu_1^{j-1}}{(j-1)!} \frac{\mu_2^{\frac{\mu_1+\mu_2}{2s}-j-1}}{(\frac{\mu_1+\mu_2}{2s}-j)!} -\frac{\mu_1^j}{j!} \frac{\mu_2^{\frac{\mu_1+\mu_2}{2s}-j-1}}{(\frac{\mu_1+\mu_2}{2s}-j)!}  \bigg) e^{\mu_1x_1+\mu_2x_2}\\
	&= \!\!\!\! \sum_{\substack{\mu_1,\mu_2>0\\ \textup{odd}\\ 2s \mid \mu_1 +\mu_2}} \frac{1}{(\frac{\mu_1+\mu_1}{2s} )!}\sum_{j=\frac{\mu_1+1}{2s}}^{\frac{\mu_1+\mu_2}{2s}} \bigg( \binom{\frac{\mu_1+\mu_2}{2s}\! -\! 1}{j\! -\! 1} \Big( \mu_1^j \mu_2^{\frac{\mu_1+\mu_2}{2s}-j-1} + \mu_1^{j-1}\mu_2^{\frac{\mu_1+\mu_2}{2s}-j}\Big) 
	\\[-.3cm]
	& \hspace{4cm} -  \binom{\frac{\mu_1+\mu_2}{2s}}{j}  \mu_1^j \mu_2^{\frac{\mu_1+\mu_2}{2s}-j-1} \bigg) e^{\mu_1 x_1+\mu_2x_2}\\
	&= \!\!\!\! \sum_{\substack{\mu_1,\mu_2>0\\ \textup{odd}\\ 2s \mid \mu_1 +\mu_2}} \frac{1}{(\frac{\mu_1+\mu_1}{2s} )!}\sum_{j=\frac{\mu_1+1}{2s}}^{\frac{\mu_1+\mu_2}{2s}} \bigg( 	\binom{\frac{\mu_1+\mu_2}{2s}-1}{j-1}  \mu_1^{j-1}\mu_2^{\frac{\mu_1+\mu_2}{2s}-j}
	\\[-.3cm]
	& \hspace{4cm} - \binom{\frac{\mu_1+\mu_2}{2s}-1}{j}  \mu_1^j \mu_2^{\frac{\mu_1+\mu_2}{2s}-j-1} \bigg) e^{\mu_1 x_1+\mu_2x_2}
	\\
	&=\!\!\!\! \sum_{\substack{\mu_1,\mu_2>0\\ \textup{odd}\\ 2s \mid \mu_1 +\mu_2}} \frac{2s}{\mu_1+\mu_1} \frac{\mu_1^{\lfloor \frac{\mu_1}{2s} \rfloor}}{\lfloor \frac{\mu_1}{2s} \rfloor !} \frac{\mu_2^{\lfloor \frac{\mu_2}{2s}\rfloor}}{\lfloor \frac{\mu_2}{2s} \rfloor!} e^{\mu_1 x_1+\mu_2x_2}.
\end{split}
\end{equation}
Comparing this to \cite[lemma~5.1]{KLPS19}, we see that this is the odd/antisymmetrised part of the non-spin case, i.e.
\begin{equation}
	F_{0,2}^{r,\theta}(e^{x_1},e^{x_2}) = \frac{1}{4} \big( F_{0,2}^r (e^{x_1},e^{x_2}) - F_{0,2}^r(-e^{x_1},e^{x_2}) - F_{0,2}^r(e^{x_1},-e^{x_2}) + F_{0,2}^r(-e^{x_1},-e^{x_2}) \big).
\end{equation}
Note that changing the sign of $e^x$ induces a sign change of $y=z$ as well, via the spectral curve equation $e^x = y e^{-y^{2s}}$. By \cite[theorem~5.2]{KLPS19}, we then get the following.

\begin{proposition}\label{prop:omega02}
	Conjecture~\ref{conj:spin:HNs:TR} holds for $(g,n) = (0,2)$:
	\begin{equation}
		d_1d_2F_{0,2}^{r,\theta} (e^{x_1},e^{x_2}) 
		-
		\frac{1}{2} \bigg( \frac{1}{(e^{x_1}-e^{x_2})^2} + \frac{1}{(e^{x_1} + e^{x_2})^2} \bigg) de^{x_1} de^{x_2}
		=
		\frac{1}{2} \bigg( \frac{1}{(z_1-z_2)^2} + \frac{1}{(z_1+z_2)^2} \bigg) dz_1 dz_2 \, .
	\end{equation}
\end{proposition}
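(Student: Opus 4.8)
The plan is to deduce the statement from its non-spin analogue, established in \cite{KLPS19}, via the $\Z/2\Z$-antisymmetrization that relates the spin and non-spin genus-zero data. The first ingredient is the identity
\begin{equation}
	F_{0,2}^{r,\theta}(e^{x_1},e^{x_2}) = \tfrac{1}{4}\!\!\sum_{\epsilon_1,\epsilon_2 \in \{\pm 1\}}\!\! \epsilon_1\epsilon_2\, F_{0,2}^r(\epsilon_1 e^{x_1},\epsilon_2 e^{x_2}),
\end{equation}
which exhibits $F_{0,2}^{r,\theta}$ as the part of the non-spin free energy that is odd in each of the two variables. I would establish this by expanding the connected two-point vacuum expectation of \cref{prop:spin:single:HNs:VEV}: writing each factor via \cref{prop:conj:operator}, applying Wick's theorem to the connected four-fermion correlator, and using \cref{lem:VEV:Fock:basis} to collapse the resulting double sum to a closed expression over odd $\mu_1,\mu_2$ with $2s \mid \mu_1+\mu_2$. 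Comparing this closed form term-by-term with the non-spin formula of \cite[lemma~5.1]{KLPS19} yields the displayed antisymmetrization.

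Next I would apply $d_1 d_2$ to this identity and feed in the non-spin $(0,2)$-topological recursion relation \cite[theorem~5.2]{KLPS19}, which in the present conventions reads
\begin{equation}
	d_1 d_2 F_{0,2}^r(e^{x_1},e^{x_2}) - \frac{de^{x_1}\,de^{x_2}}{(e^{x_1}-e^{x_2})^2} = \frac{dz_1\,dz_2}{(z_1-z_2)^2},
\end{equation}
an identity of meromorphic bidifferentials, hence stable under the analytic continuation implicit in the sign flips. Since $d_1 d_2$ commutes with the pullbacks $e^{x_i} \mapsto \epsilon_i e^{x_i}$, antisymmetrizing the whole relation turns its left-hand side into $d_1 d_2 F_{0,2}^{r,\theta}$ minus the antisymmetrization of the $e^x$-pole, and its right-hand side into the antisymmetrization of the curve pole. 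The crucial compatibility is that on the spectral curve $e^x = z\,e^{-z^{2s}}$ the involution $e^x \mapsto -e^x$ is exactly $z \mapsto -z$, so the antisymmetrization acts simultaneously and consistently on both pole terms.

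The last step is the sign bookkeeping for a single double pole. For a variable $\xi$ (either $e^x$ or $z$) the single flip sends $\frac{d\xi_1\,d\xi_2}{(\xi_1-\xi_2)^2} \mapsto -\frac{d\xi_1\,d\xi_2}{(\xi_1+\xi_2)^2}$, the minus arising from $d(-\xi)=-d\xi$; combined with the weight $\epsilon_1\epsilon_2$, the two signs conspire so that
\begin{equation}
	\tfrac{1}{4}\!\!\sum_{\epsilon_1,\epsilon_2}\!\! \epsilon_1\epsilon_2\, \frac{d(\epsilon_1\xi_1)\,d(\epsilon_2\xi_2)}{(\epsilon_1\xi_1-\epsilon_2\xi_2)^2} = \frac{1}{2}\Bigl( \frac{1}{(\xi_1-\xi_2)^2} + \frac{1}{(\xi_1+\xi_2)^2} \Bigr) d\xi_1\,d\xi_2.
\end{equation}
Applying this to both $\xi = e^x$ and $\xi = z$ converts the antisymmetrized relation into precisely the claimed equation. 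The one place a naive argument fails is this very sign: dropping the $d(-\xi)=-d\xi$ factor (or the $\epsilon_1\epsilon_2$ weight) would cancel the symmetric term or flip its sign, destroying the match with the $B$ of the conjectural spectral curve. Beyond this, the only remaining point to verify is that the non-spin free energy and its $(0,2)$-identity remain valid on the domain reached by the sign flips, so that the substitution is legitimate for each of the four terms.
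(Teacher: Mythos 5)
Your proposal is correct and follows essentially the same route as the paper's proof in Appendix~\ref{app:fermion:calcs}: expand the connected two-point vacuum expectation via Wick's theorem and \cref{prop:conj:operator}, collapse the sums to identify $F_{0,2}^{r,\theta}$ as the antisymmetrisation in both variables of the non-spin $F_{0,2}^{r}$ of \cite[lemma~5.1]{KLPS19}, and then conclude from \cite[theorem~5.2]{KLPS19} using that $e^x \mapsto -e^x$ corresponds to $z \mapsto -z$ on the curve $e^x = z e^{-z^{2s}}$. Your explicit sign bookkeeping for the antisymmetrised double poles is exactly the step the paper leaves implicit, and it checks out.
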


\section{Numerics}
\label{app:numerics}

We append some numerics, computed via topological recursion. The numbers for $g=0$ and for $n=1$ agree with the neutral fermion computations of \cref{cor:free:energy:genus:zero} and \cref{prop:one:part:single:spin:HNs}.

\begin{table}[h]
  {
  \renewcommand{\arraystretch}{1.3}
  \hfill
  \begin{tabular}[t]{c|c|c}
    \toprule
    $g$ & $\mu$ & $h_{g;\mu}^{2,\theta}$ \\
    \midrule
    \multirow{5}{*}{$0$} & $1$ & $1$ \\
    & $3$ & $\tfrac{1}{3}$ \\
    & $5$ & $\tfrac{1}{2}$ \\
    & $7$ & $\tfrac{7}{6}$ \\
    & $9$ & $\tfrac{27}{8}$ \\
    \midrule
    \multirow{5}{*}{$1$} & $1$ & $\tfrac{1}{6}$ \\
    & $3$ & $1$ \\
    & $5$ & $\tfrac{25}{4}$ \\
    & $7$ & $\tfrac{343}{9}$ \\
    & $9$ & $\tfrac{2645}{13}$ \\
    \midrule
    \multirow{5}{*}{$2$} & $1$ & $\tfrac{1}{72}$ \\
    & $3$ & $\tfrac{13}{8}$ \\
    & $5$ & $\tfrac{5975}{144}$ \\
    & $7$ & $\tfrac{1409387}{2160}$ \\
    & $9$ & $\tfrac{2556603}{320}$ \\
    \bottomrule
  \end{tabular}
  \hfill
  \begin{tabular}[t]{c|c|c}
    \toprule
    $g$ & $\mu$ & $h_{g;\mu}^{4,\theta}$ \\
    \midrule
    \multirow{5}{*}{$0$} & $1$ & $1$ \\
    & $5$ & $\tfrac{1}{5}$ \\
    & $9$ & $\tfrac{1}{2}$ \\
    & $13$ & $\tfrac{13}{6}$ \\
    & $17$ & $\tfrac{289}{24}$ \\
    \midrule
    \multirow{5}{*}{$1$} & $3$ & $\tfrac{7}{6}$ \\
    & $7$ & $\tfrac{35}{2}$ \\
    & $11$ & $\tfrac{2783}{12}$ \\
    & $15$ & $\tfrac{11625}{4}$ \\
    & $19$ & $\tfrac{1694173}{48}$ \\
    \midrule
    \multirow{5}{*}{$2$} & $1$ & $\tfrac{1}{20}$ \\
    & $5$ & $\tfrac{451}{8}$ \\
    & $9$ & $\tfrac{84987}{20}$ \\
    & $13$ & $\tfrac{12793131}{80}$ \\
    & $17$ & $\tfrac{416853311}{96}$ \\
    \bottomrule
  \end{tabular}
  \hfill
  \begin{tabular}[t]{c|c|c}
    \toprule
    $g$ & $\mu$ & $h_{g;\mu}^{6,\theta}$ \\
    \midrule
    \multirow{5}{*}{$0$} & $1$ & $1$ \\
    & $7$ & $\tfrac{1}{7}$ \\
    & $13$ & $\tfrac{1}{2}$ \\
    & $19$ & $\tfrac{19}{6}$ \\
    & $25$ & $\tfrac{625}{24}$ \\
    \midrule
    \multirow{5}{*}{$1$} & $5$ & $4$ \\
    & $11$ & $\tfrac{187}{2}$ \\
    & $17$ & $\tfrac{3757}{2}$ \\
    & $23$ & $\tfrac{425845}{12}$ \\
    & $29$ & $\tfrac{7780091}{12}$ \\
    \midrule
    \multirow{5}{*}{$2$} & $3$ & $\tfrac{49}{12}$ \\
    & $9$ & $\tfrac{11109}{4}$ \\
    & $15$ & $\tfrac{2134515}{8}$ \\
    & $21$ & $14054082$ \\
    & $27$ & $\tfrac{88146516681}{160}$ \\
    \bottomrule
  \end{tabular}
  \hfill
  }
  \bigskip
  \caption{Some spin single Hurwitz numbers for $\ell(\mu) = 1$.}
  \label{table:spin:single:HNs:one:part}
\end{table}

\begin{table}
  {
  \renewcommand{\arraystretch}{1.3}
  \hfill
  \begin{tabular}[t]{c|c|c}
    \toprule
    $g$ & $(\mu_1,\mu_2)$ & $h_{g;\mu}^{2,\theta}$ \\
    \midrule
    \multirow{6}{*}{$0$}
    & $(1,1)$ & $1$ \\
    & $(3,1)$ & $\tfrac{9}{4}$ \\
    & $(5,1)$ & $\tfrac{125}{18}$ \\
    & $(3,3)$ & $\tfrac{27}{4}$ \\
    & $(5,3)$ & $\tfrac{375}{16}$ \\
    & $(5,5)$ & $\tfrac{3125}{36}$ \\
    \midrule
    \multirow{6}{*}{$1$}
    & $(1,1)$ & $\tfrac{5}{6}$ \\
    & $(3,1)$ & $\tfrac{17}{2}$ \\
    & $(5,1)$ & $\tfrac{925}{12}$ \\
    & $(3,3)$ & $\tfrac{99}{2}$ \\
    & $(5,3)$ & $\tfrac{1425}{4}$ \\
    & $(5,5)$ & $\tfrac{53125}{24}$ \\
    \bottomrule
  \end{tabular}
  \hfill
  \begin{tabular}[t]{c|c|c}
    \toprule
    $g$ & $(\mu_1,\mu_2)$ & $h_{g;\mu}^{4,\theta}$ \\
    \midrule
    \multirow{6}{*}{$0$}
    & $(3,1)$ & $3$ \\
    & $(7,1)$ & $\tfrac{49}{4}$ \\
    & $(5,3)$ & $\tfrac{75}{4}$ \\
    & $(9,3)$ & $\tfrac{243}{2}$ \\
    & $(7,5)$ & $\tfrac{1225}{12}$ \\
    & $(9,7)$ & $\tfrac{11907}{16}$ \\
    \midrule
    \multirow{6}{*}{$1$}
    & $(1,1)$ & $\tfrac{3}{2}$ \\
    & $(5,1)$ & $\tfrac{115}{2}$ \\
    & $(3,3)$ & $40$ \\
    & $(7,3)$ & $784$ \\
    & $(5,5)$ & $\tfrac{4025}{6}$ \\
    & $(7,7)$ & $\tfrac{30184}{3}$ \\
    %
    \bottomrule
  \end{tabular}
  \hfill
  \begin{tabular}[t]{c|c|c}
    \toprule
    $g$ & $(\mu_1,\mu_2,\mu_3)$ & $h_{g;\mu}^{2,\theta}$ \\
    \midrule
    \multirow{10}{*}{$0$}
    & $(1,1,1)$ & $4$ \\
    & $(3,1,1)$ & $12$ \\
    & $(5,1,1)$ & $50$ \\
    & $(3,3,1)$ & $36$ \\
    & $(3,3,3)$ & $108$ \\
    & $(5,3,1)$ & $150$ \\
    & $(5,3,3)$ & $450$ \\
    & $(5,5,1)$ & $625$ \\
    & $(5,5,3)$ & $1875$ \\
    & $(5,5,5)$ & $\tfrac{15625}{2}$ \\
    \bottomrule
  \end{tabular}
  \hfill
  }
  \bigskip
  \caption{Some spin single Hurwitz numbers for $\ell(\mu) = 2$ and $\ell(\mu) = 3$.}
  \label{table:spin:single:HNs:two:three:parts}
\end{table}

\bigskip
\printbibliography

@inproceedings{AJ03,
	title = {Moduli of twisted spin curves},
	author = {Abramovich, Dan and Jarvis, Tyler J.},
	booktitle = {Proc. Amer. Math. Soc.},
	volume = {131},
	number = {3},
	pages = {685--699},
	year = {2003},
	doi = {10.1090/S0002-9939-02-06562-0},
	eprint = {math/0104154},
	archivePrefix = {arXiv},
	primaryClass = {math.AG},
}

@article{AS23,
	author = {Alexandrov, Alexander and Shadrin, Sergey},
	title = {Elements of spin {H}urwitz theory: closed algebraic formulas, blobbed topological recursion, and a proof of the {G}iacchetto--{K}ramer--{L}ewa\'{n}ski conjecture},
	journal = {Selecta Math. (N.S.)},
	fjournal = {Selecta Mathematica. New Series},
	volume = {29},
	year = {2023},
	number = {2},
	pages = {Paper No. 26, 44},
	doi = {10.1007/s00029-023-00834-1},
	eprinttype = {arXiv},
	eprint = {2105.12493},
	primaryClass = {math-ph},
}

@article{ABCDGLW23,
	author = {Andersen, Jørgen Ellegaard and Borot, Ga{\"{e}}tan and Charbonnier, S{\'{e}}verin and Delecroix, Vincent and Giacchetto, Alessandro and Lewa{\'{n}}ski, Danilo and Wheeler, Campbell},
	title = {Topological recursion for Masur--Veech volumes},
	year = {2023},
	journal = {J. Lond. Math. Soc.},
	fjournal = {Journal of the London Mathematical Society},
	volume = {107},
	number = {1},
	pages = {254–332},
	eprinttype = {arXiv},
	eprint = {1905.10352},
	primaryClass = {math.GT},
}

@article {ABCO24,
	author = {Andersen, J\o rgen Ellegaard and Borot, Ga\"{e}tan and Chekhov, Leonid O. and Orantin, Nicolas},
	title = {The {ABCD} of topological recursion},
	journal = {Adv. Math.},
	fjournal = {Advances in Mathematics},
	volume = {439},
	year = {2024},
	pages = {Paper No. 109473, 105},
	doi = {10.1016/j.aim.2023.109473},
	eprinttype = {arXiv},
	eprint = {1703.03307},
	primaryClass = {math-ph},
}

@article {Ati71,
	author = {Atiyah, Michael F.},
	title = {Riemann surfaces and spin structures},
	journal = {Ann. Sci. \'{E}cole Norm. Sup. (4)},
	fjournal = {Annales Scientifiques de l'\'{E}cole Normale Sup\'{e}rieure. Quatri\`eme S\'{e}rie},
	volume = {4},
	number = {1},
	year = {1971},
	pages = {47--62},
	doi = {10.24033/asens.1205},
}

@article {BO00,
	author = {Bloch, Spencer and Okounkov, Andrei},
	title = {The character of the infinite wedge representation},
	journal = {Adv. Math.},
	fjournal = {Advances in Mathematics},
	volume = {149},
	year = {2000},
	number = {1},
	pages = {1--60},
	doi = {10.1006/aima.1999.1845},
	eprint = {alg-geom/9712009},
	archivePrefix = {arXiv}
}

@article{BEMS11,
	title = {A matrix model for simple Hurwitz numbers, and topological recursion},
	author = {Borot, Gaëtan and Eynard, Bertrand and Mulase, Motohico and Safnuk, Brad},
	journal = {J. Geom. Phys.},
	fjournal = {Journal of Geometry and Physics},
	volume = {61},
	number = {2},
	pages = {522-540},
	year = {2011},
	doi = {10.1016/j.geomphys.2010.10.017},
	eprint = {0906.1206},
	archivePrefix = {arXiv},
	primaryClass = {math-ph}
}

@article{BKLPS21,
	title = {Special cases of the orbifold version of Zvonkine's $r$-ELSV formula},
	author = {Borot,Gaëtan and Kramer, Reinier and Lewa{\'{n}}ski, Danilo and Popolitov, Alexandr and Shadrin, Sergey},
	year = {2021},
	journal = {Michigan Math. J.},
	volume = {Advance Publication},
	pages = {1-34},
	eprint = {1705.10811},
	archivePrefix = {arXiv},
	primaryClass = {math.AG},
	doi = {10.1307/mmj/1592877614}
}

@inproceedings{BM08,
	title = {Hurwitz numbers, matrix models and enumerative geometry},
	author = {Bouchard, Vincent and Mari\~{n}o, Marcos},
	booktitle = {From Hodge theory to integrability and TQFT: $tt^{\ast}$-geometry},
	journal = {Proc. Sympos. Pure Math.},
	publisher = {Amer. Math. Soc., Providence, RI},
	volume = {78},
	pages = {263–283},
	year = {2008},
	eprint = {0709.1458},
	archivePrefix = {arXiv},
	primaryClass = {math.AG},
	doi = {10.1090/pspum/078/2483754}
}

@article {CCC07,
	author = {Caporaso, Lucia and Casagrande, Cinzia and Cornalba, Maurizio},
	title = {Moduli of roots of line bundles on curves},
	journal = {Trans. Amer. Math. Soc.},
	fjournal = {Transactions of the American Mathematical Society},
	volume = {359},
	year = {2007},
	number = {8},
	pages = {3733--3768},
	doi = {10.1090/S0002-9947-07-04087-1},
	eprint = {math/0404078},
	archivePrefix = {arXiv},
	primaryClass = {math.AG},
}

@article {CMS23,
	author = {Chen, Dawei and M\"{o}ller, Martin and Sauvaget, Adrien},
	title = {Masur--{V}eech volumes and intersection theory: the principal strata of quadratic differentials},
	journal = {Duke Math. J.},
	fjournal = {Duke Mathematical Journal},
	volume = {172},
	year = {2023},
	number = {9},
	pages = {1735--1779},
	doi = {10.1215/00127094-2022-0063},
	eprinttype = {arXiv},
	eprint = {1912.02267},
	primaryClass = {math.AG},
}

@article{Chi06,
	title = {The Witten top Chern class via $K$-theory},
	author = {Chiodo, Alessandro},
	journal = {J. Algebraic Geom.},
	volume = {15},
	number = {4},
	pages = {681--707},
	year = {2006},
	eprint = {math/0210398},
	archivePrefix = {arXiv},
	primaryClass = {math.AG},
	doi = {10.1090/S1056-3911-06-00444-9}
}

@article{Chi08a,
	title = {Stable twisted curves and their \texorpdfstring{$r$}{r}-spin structures},
	author = {Chiodo, Alessandro},
	journal = {Ann. Inst. Fourier},
	volume = {58},
	number = {5},
	pages = {1635--1689},
	year = {2008},
	eprint = {math/0603687},
	archivePrefix = {arXiv},
	primaryClass = {math.AG},
	doi = {10.5802/aif.2394}
}

@article{Chi08,
	author = {Chiodo, Alessandro},
	title = {Towards an enumerative geometry of the moduli space of twisted curves and \texorpdfstring{$r$}{r}th roots},
	journal = {Compos. Math.},
	number = {6},
	pages = {1461--1496},
	volume = {144},
	year = {2008},
	eprinttype = {arXiv},
	eprint = {math/0607324},
	primaryClass = {math.AG},
	doi = {10/dnt2wg},
}

@article {CZ09,
	author = {Chiodo, Alessandro and Zvonkine, Dimitri},
	title = {Twisted {$r$}-spin potential and {G}ivental's quantization},
	journal = {Adv. Theor. Math. Phys.},
	fjournal = {Advances in Theoretical and Mathematical Physics},
	volume = {13},
	year = {2009},
	number = {5},
	pages = {1335--1369},
	doi = {10.4310/ATMP.2009.v13.n5.a3},
	eprinttype = {arXiv},
	eprint = {0711.0339},
	primaryClass = {math.AG},
}

@article{CSS21,
	author = {Costantini, Matteo and Sauvaget, Adrien and Schmitt, Johannes},
	title = {Integrals of {$\psi$}-classes on twisted double ramification cycles and spaces of differentials},
	year = {2021},
	eprinttype = {arXiv},
	eprint = {2112.04238},
	primaryClass = {math.AG},
}

@article {CGHJK96,
	author = {Corless, Robert M. and Gonnet, Gaston H. and Hare, David E.G. and Jeffrey, David J. and Knuth, Donald E.},
	title = {On the {L}ambert {$W$} function},
	journal = {Adv. Comput. Math.},
	fjournal = {Advances in Computational Mathematics},
	volume = {5},
	year = {1996},
	number = {4},
	pages = {329--359},
	doi = {10.1007/BF02124750},
}

@incollection {Cor89,
	author = {Cornalba, Maurizio},
	title = {Moduli of curves and theta-characteristics},
	booktitle = {Lectures on {R}iemann surfaces},
	pages = {560--589},
	publisher = {World Sci. Publ., Teaneck, NJ},
	year = {1989},
	doi = {10.1142/9789814503365_0012}
}

@article{DJKM82,
	author = {Date, Etsur\B{o} and Jimbo, Michio and Kashiwara, Masaki and Miwa, Tetsuji},
	title = {Transformation groups for soliton equations {IV}. {A} new hierarchy of soliton equations of {KP}-type},
	journal = {Phys. D},
	fjournal = {Physica D. Nonlinear Phenomena},
	voluem = {4},
	year = {1982},
	number = {3},
	pages = {343--365},
	doi = {10.1016/0167-2789(82)90041-0},
}

@article{DKM81,
	author = {Date, Etsur\B{o} and Kashiwara, Masaki and Miwa, Tetsuji},
	title = {Transformation groups for soliton equations {II}. {V}ertex operators and \texorpdfstring{$\tau $}{tau} functions},
	journal = {Proc. Japan Acad. Ser. A Math. Sci.},
	fjournal = {Japan Academy. Proceedings. Series A. Mathematical Sciences},
	volume = {57},
	year = {1981},
	number = {8},
	pages = {387--392},
	doi = {10.3792/pjaa.57.387}
}

@article {DKPS23,
	author = {Dunin-Barkowski, Petr and Kramer, Reinier and Popolitov, Alexandr and Shadrin, Sergey},
	title = {Loop equations and a proof of {Z}vonkine's {$qr$}-{ELSV} formula},
	journal = {Ann. Sci. \'{E}c. Norm. Sup\'{e}r. (4)},
	fjournal = {Annales Scientifiques de l'\'{E}cole Normale Sup\'{e}rieure. Quatri\`eme S\'{e}rie},
	volume = {56},
	year = {2023},
	number = {4},
	pages= {1199--1229},
	doi = {10.24033/asens.2553},
	eprinttype = {arXiv},
	eprint = {1905.04524},
	primaryclass = {math.AG}
}

@inproceedings{DNOPS18,
	title = {Primary invariants of Hurwitz Frobenius manifolds},
	author = {{Dunin-Barkowski}, Petr and Norbury, Paul and Orantin, Nicolas and Popolitov, Alexandr and Shadrin, Sergey},
	booktitle = {2016 AMS von Neumann Symposium, Topological Recursion and its Influence in Analysis, Geometry and Topology},
	journal = {Proc. Sympos. Pure Math.},
	volume = {100},
	pages = {297--331},
	year = {2018},
	publisher = {Amer. Math. Soc., Providence, RI},
	doi = {10.1090/pspum/100},
	archiveprefix = {arXiv},
	eprint = {1605.07644},
	primaryclass = {math.AG}
}

@article{DOSS14,
	title = {Identification of the Givental formula with the spectral curve topological recursion procedure},
	author = {{Dunin-Barkowski}, Petr and Orantin, Nicolas and Shadrin, Sergey and Spitz, Loek},
	journal = {Commun. Math. Phys.},
	fjournal = {Communications in Mathematical Physics},
	volume = {328},
	number = {2},
	pages = {669--700},
	year = {2014},
	publisher = {Springer},
	doi = {10.1007/s00220-014-1887-2},
	archiveprefix = {arXiv},
	eprint = {1211.4021},
	primaryclass = {math-ph}
}

@article{ELSV01,
	title = {Hurwitz numbers and intersections on moduli spaces of curves},
	author = {Ekedahl, Torsten and Lando, Sergei and Shapiro, Michael and Vainshtein, Alek},
	journal = {Invent. Math.},
	fjournal = {Inventiones mathematicae},
	volume = {146},
	number = {2},
	pages = {297--327},
	year = {2001},
	publisher = {Springer},
	doi = {10.1007/s002220100164},
	archivePrefix = {arXiv},
	primaryClass = {math.AG},
	eprint = {math/0004096}
}

@article{EOP08,
	title = {The theta characteristic of a branched covering},
	author = {Eskin, Alex and Okounkov, Andrei and Pandharipande, Rahul},
	journal = {Adv. Math.},
	fjournal = {Advances in Mathematics},
	volume = {217},
	number = {3},
	pages = {873--888},
	year = {2008},
	publisher = {Elsevier},
	doi = {10.1016/j.aim.2006.08.001},
	archivePrefix = {arXiv},
	primaryClass = {math.AG},
	eprint = {math/0312186}
}

@article{Eyn11,
	author = {Eynard, Bertrand},
	Journal = {Commun. Number Theory Phys.},
	fjournal = {Communications in Number Theory and Physics},
	pages = {541--588},
	title = {Invariants of spectral curves and intersection theory of moduli spaces of complex curves},
	volume = {8},
	year = {2011},
	doi = {10.4310/CNTP.2014.v8.n3.a4},
	eprint = {1110.2949},
	archivePrefix = {arXiv},
	primaryClass = {math-ph}
}

@article{EO07,
	title = {Invariants of algebraic curves and topological expansion},
	author = {Eynard, Bertrand and Orantin, Nicolas},
	journal = {Commun. Number Theory Phys.},
	fjournal = {Communications in Number Theory and Physics},
	volume = {1},
	number = {2},
	pages = {347--452},
	year = {2007},
	publisher = {International Press of Boston},
	doi = {10.4310/CNTP.2007.v1.n2.a4},
	archivePrefix = {arXiv},
	eprint = {math-ph/0702045},
}

@article{FWZ09,
	title = {On free fermions and plane partitions},
	author = {Foda, Omar and Wheeler, Michael and Zuparic, Mathew},
	journal = {J. Algebra},
	fjournal = {Journal of Algebra},
	volume = {321},
	number = {11},
	pages = {3249--3273},
	year = {2009},
	publisher = {Elsevier},
	eprinttype = {arXiv},
	eprint = {0808.2737},
	primaryclass = {math-ph},
	doi = {10.1016/j.jalgebra.2008.08.021}
}

@article{Fro00,
	title = {{Über die Charaktere der symmetrischen Gruppe}},
	author = {Frobenius, Ferdinand Georg},
	journal = {Sitzungsberichte K. Preuss. Ak. Wiss.},
	fjournal = {Sitzungsberichte der Königlich Preussischen Akademie der Wissenschaften zu Berlin},
	volume = {1900},
	pages = {516-534},
	year = {1900},
}

@article{GKLS22,
	author = {Giacchetto, Alessandro and Kramer, Reinier and Lewański, Danilo and Sauvaget, Adrien},
	title = {The Spin {Gromov--Witten/Hur\-witz} correspondence for \texorpdfstring{$\mathbb{P}^1$}{P^1}},
	year = {2022},
	eprinttype = {arXiv},
	eprint = {2208.03259},
	primaryclass = {math.AG},
}

@article{GLN23,
	title={An intersection-theoretic proof of the {Harer--Zagier} formula}, 
	author={Giacchetto, Alessandro and Lewa{\'n}ski, Danilo and Norbury, Paul},
	journal={Algebraic Geom.},
	fjournal = {Algebraic Geometry},
	volume={10},
	number={2},
	year={2023},
	pages={130–147},
	eprint={2112.11137},
	archivePrefix={arXiv},
	primaryClass={math.AG},
	doi={10.14231/AG-2023-004}
}

@article{Giv01,
	title = {Gromov--Witten invariants and quantization of quadratic {Hamiltonians}},
	author = {Givental, Aleksandr Borisovich},
	journal = {Mosc. Math. J.},
	fjournal = {Moscow Mathematical Journal},
	volume = {1},
	number = {4},
	pages = {551--568},
	year = {2001},
	archivePrefix = {arXiv},
	eprint = {math/0108100},
	primaryClass = {math.AG},
	publisher = {Независимый Московский университет--МЦНМО},
	doi = {10.17323/1609-4514-2001-1-4-551-568}
}

@article {Gun16,
	author = {Gunningham, Sam},
	title = {Spin {H}urwitz numbers and topological quantum field theory},
	journal = {Geom. Topol.},
	fjournal = {Geometry \& Topology},
	volume = {20},
	year = {2016},
	number = {4},
	pages = {1859--1907},
	doi = {10.2140/gt.2016.20.1859},
	archivePrefix = {arXiv},
	eprint = {1201.1273},
	primaryClass = {math.QA},
}

@article{HKL18,
	author = "Hahn, Marvin Anas and Kramer, Reinier and Lewa{\'{n}}ski, Danilo",
	title = "{Wall-crossing formulae and strong piecewise polynomiality for mixed Grothendieck dessins d'enfant, monotone, and double simple Hurwitz numbers}",
	doi = {10.1016/j.aim.2018.07.028},
	journal = {Adv. Math.},
	volume = {336},
	pages = {38--69},
	year = {2018},
	archivePrefix = {arXiv},
	eprint = {1710.01047},
	primaryClass = {math.AG},
}

@article{Hur91,
	author = {{Hurwitz}, Adolf},
	title = {{\"{U}}ber Riemann'sche Fl{\"{a}}chen mit gegebenen Verzweigungspunkten},
	fjournal = {{Mathematische Annalen}},
	journal = {{Math. Ann.}},
	volume = {39},
	pages = {1--61},
	year = {1891},
	publisher = {Springer, Berlin/Heidelberg},
	doi = {10.1007/BF01199469}
}

@article{Hur01,
	author = {Hurwitz, Adolf},
	doi = {10.1007/BF01448116},
	fjournal = {{Mathematische Annalen}},
	journal = {{Math. Ann.}},
	pages = {53--66},
	title = {{{\"{U}}ber die Anzahl der Riemann'schen Fl{\"{a}}chen mit gegebenen Verzweigungs\-punkten}},
	volume = {55},
	year = {1901},
}

@article{Iva04,
	title = {Gaussian limit for projective characters of large symmetric groups},
	author = {Ivanov, Vladimir Nikolaevich},
	journal = {J. Math. Sci.},
	fjournal = {Journal of Mathematical Sciences},
	volume = {121},
	number = {3},
	pages = {2330--2344},
	year = {2004},
	publisher = {Springer},
	doi = {10.1023/B:JOTH.0000024615.07311.fe}
}

@article{JPPZ17,
	title = {Double ramification cycles on the moduli spaces of curves},
	author = {Janda, Felix and Pandharipande, Rahul and Pixton, Aaron and Zvonkine, Dimitri},
	journal = {Publ. Math. Inst. Hautes Études Sci.},
	volume = {125},
	number = {1},
	pages = {221--266},
	year = {2017},
	eprint = {1602.04705},
	eprinttype = {arXiv},
	primaryClass = {math.AG},
	publisher = {Springer},
	doi = {10.1007/s10240-017-0088-x}
}

@article {Jar98,
	author = {Jarvis, Tyler J.},
	title = {Torsion-free sheaves and moduli of generalized spin curves},
	journal = {Compos. Math.},
	fjournal = {Compositio Mathematica},
	volume = {110},
	year = {1998},
	number = {3},
	pages = {291--333},
	doi = {10.1023/A:1000209527158},
	eprint = {alg-geom/9502022},
	archivePrefix = {arXiv}
}

@article {Jar00,
	author = {Jarvis, Tyler J.},
	title = {Geometry of the moduli of higher spin curves},
	journal = {Internat. J. Math.},
	fjournal = {International Journal of Mathematics},
	volume = {11},
	year = {2000},
	number = {5},
	pages = {637--663},
	doi = {10.1142/S0129167X00000325},
	eprint = {math/9809138},
	eprinttype = {arXiv},
	primaryClass = {math.AG},
}

@article{JKV01,
	author = {Jarvis, Tyler J. and Kimura, Takashi and Vaintrob, Arkady},
	title = {Moduli spaces of higher spin curves and integrable hierarchies},
	journal = {Compos. Math.},
	fjournal = {Compositio Mathematica},
	volume = {126},
	year = {2001},
	number = {2},
	pages = {157--212},
	doi = {10.1023/A:1017528003622},
	eprint = {math/9905034},
	eprinttype = {arXiv},
	primaryClass = {math.AG},
}

@article{JM83,
	title = {Solitons and infinite dimensional Lie algebras},
	author = {Jimbo, Michio and Miwa, Tetsuji},
	journal = {Publ. Res. Inst. Math. Sci.},
	fjournal = {Publications of the Research Institute for Mathematical Sciences},
	volume = {19},
	number = {3},
	pages = {943--1001},
	year = {1983},
	publisher = {Research Institute forMathematical Sciences},
	doi = {10.2977/prims/1195182017}
}

@article{Joh15,
	author = {{Johnson}, Paul D.},
	title = {{Double Hurwitz numbers via the infinite wedge}},
	fjournal = {{Transactions of the American Mathematical Society}},
	journal = {{Trans. Am. Math. Soc.}},
	volume = {367},
	number = {9},
	pages = {6415--6440},
	year = {2015},
	publisher = {American Mathematical Society (AMS), Providence, RI},
	doi = {10.1090/S0002-9947-2015-06238-2},
	eprinttype = {arXiv},
	eprint = {1008.3266},
	primaryClass = {math.CO},
}

@incollection {Joz88,
	author = {J{\'{o}}zefiak, Tadeusz},
	title = {Semisimple superalgebras},
	booktitle = {Algebra Some Current Trends},
	series = {Lecture Notes in Math.},
	volume = {1352},
	pages = {96--113},
	publisher = {Springer, Berlin},
	year = {1988},
	doi = {10.1007/BFb0082020},
}

@article {Joz89,
	author = {J{\'{o}}zefiak, Tadeusz},
	title = {Characters of projective representations of symmetric groups},
	journal = {Exposition. Math.},
	fjournal = {Expositiones Mathematicae. International Journal for Pure and Applied Mathematics},
	volume = {7},
	year = {1989},
	number = {3},
	pages = {193--247},
}

@incollection {Joz90,
	author = {J{\'{o}}zefiak, Tadeusz},
	title = {A class of projective representations of hyperoctahedral groups and {S}chur {$Q$}-functions},
	booktitle = {Topics in algebra},
	number = {2},
	series = {Banach Center Publ.},
	volume = {26},
	pages = {317--326},
	publisher = {PWN, Warsaw},
	year = {1990},
	doi = {10.4064/-26-2-317-326}
}

@article{Joz00,
	author = {J{\'{o}}zefiak, Tadeusz},
	title = {Relating spin representations of symmetric and hyperoctahedral groups},
	journal = {J. Pure Appl. Algebra},
	fjournal = {Journal of Pure and Applied Algebra},
	volume = {152},
	year = {2000},
	number = {1-3},
	pages = {187--193},
	doi = {10.1016/S0022-4049(99)00143-7},
}

@article{KVdL19,
	title = {Polynomial tau-functions of BKP and DKP hierarchies},
	volume = {60},
	DOI = {10.1063/1.5085310},
	number = {7},
	journal = {J. Math. Phys.},
	fjournal = {Journal of Mathematical Physics},
	publisher = {AIP Publishing},
	author = {Kac, Victor and van de Leur, Johan},
	year = {2019},
	pages = {071702},
	eprint = {1811.08733},
	eprinttype = {arXiv},
	primaryClass = {math-ph},
}

@article{KM94,
	author= {Kontsevich, Maxim and Manin, Yuri},
	title= {Gromov--Witten classes, quantum cohomology, and enumerative geometry},
	journal = {Commun. Math. Phys.},
	fjournal = {Communications in Mathematical Physics},
	year= {1994},
	volume= {164},
	number= {3},
	pages= {525-562},
	eprint = {hep-th/9402147},
	archivePrefix = {arXiv},
	doi = {10.1007/BF02101490}
}

@article{KLS19,
	author = "Kramer, Reinier and Lewa{\'{n}}ski, Danilo and Shadrin, Sergey",
	title = "{Quasi-polynomiality of monotone orbifold Hurwitz numbers and Grothendieck's dessins d'enfants}",
	journal = {Doc. Math.},
	fjournal = {Documenta Mathematica},
	number = {24},
	pages = {857--898},
	doi = {10.25537/dm.2019v24.857-898},
	year = {2019},
	eprint = {1610.08376},
	archivePrefix = {arXiv},
	primaryClass = {math.CO}, 
}

@article{KLPS19,
	author = {Kramer, Reinier and Lewa{\'{n}}ski, Danilo and Popolitov, Alexandr and Shadrin, Sergey},
	title = {{Towards an orbifold generalization of Zvonkine's \texorpdfstring{\( r\)}{r}-ELSV formula}},
	journal = {Trans. Amer. Math. Soc.},
	fjournal = {Transactions of the American Mathematical Society},
	volume = {372},
	number = {6},
	pages = {4447--4469},
	year = {2019},
	doi = {10.1090/tran/7793},
	eprint = {1703.06725},
	eprinttype = {arXiv},
	primaryClass = {math.CO},
}

@article {Lee18,
	author = {Lee, Junho},
	title = {A note on {G}unningham's formula},
	journal = {Bull. Aust. Math. Soc.},
	fjournal = {Bulletin of the Australian Mathematical Society},
	volume = {98},
	year = {2018},
	number = {3},
	pages = {389--401},
	doi = {10.1017/S0004972718000618},
	eprint = {1407.0055},
	eprinttype = {arXiv},
	primaryClass = {math.SG},
}

@article{Lee19,
	title = {A square root of Hurwitz numbers},
	author = {Lee, Junho},
	journal = {Manuscripta Math.},
	fjournal = {Manuscripta Mathematica},
	pages = {1--15},
	year = {2019},
	publisher = {Springer},
	doi = {10.1007/s00229-019-01113-0},
	eprint = {1807.03631},
	eprinttype = {arXiv},
	primaryClass = {math.AG},
}

@article {LP07,
	author = {Lee, Junho and Parker, Thomas H.},
	title = {A structure theorem for the {G}romov-{W}itten invariants of {K}\"{a}hler surfaces},
	journal = {J. Differential Geom.},
	fjournal = {Journal of Differential Geometry},
	volume = {77},
	year = {2007},
	number = {3},
	pages = {483--513},
	eprint = {math/0610570},
	eprinttype = {arXiv},
	primaryClass = {math.SG},
	doi = {10.4310/jdg/1193074902}
}

@article {LP13,
	author = {Lee, Junho and Parker, Thomas H.},
	title = {Spin {H}urwitz numbers and the {G}romov-{W}itten invariants of {K}\"{a}hler surfaces},
	journal = {Comm. Anal. Geom.},
	fjournal = {Communications in Analysis and Geometry},
	volume = {21},
	year = {2013},
	number = {5},
	pages = {1015--1060},
	doi = {10.4310/CAG.2013.v21.n5.a6},
}

@article{LPSZ17,
	title = {Chiodo formulas for the $r$-th roots and topological recursion},
	author = {Lewa{\'{n}}ski, Danilo and Popolitov, Alexandr and Shadrin, Sergey and Zvonkine, Dimitri},
	journal = {Lett. Math. Phys.},
	fjournal = {Letters in Mathematical Physics},
	volume = {107},
	number = {5},
	pages = {901--919},
	year = {2017},
	publisher = {Springer},
	doi = {10.1007/s11005-016-0928-5},
	archivePrefix = {arXiv},
	primaryClass = {math-ph},
	eprint = {1504.07439},
}

@book{McD98,
	author = {MacDonald, Ian G.},
	title = {{Symmetric Functions and Hall Polynomials}},
	year = {1998},
	publisher = {Oxford University Press},
	pages = {xii+\ 488},
	shorthand = {McD98}
}

@article{MP08,
	title = {New calculations in Gromov--Witten theory},
	author = {Maulik, Davesh and Pandharipande, Rahul},
	journal = {Pure Appl. Math. Q.},
	fjournal = {Pure and Applied Mathematics Quarterly},
	volume = {4},
	number = {2},
	pages = {469--500},
	year = {2008},
	publisher = {International Press of Boston},
	doi = {10.4310/PAMQ.2008.v4.n2.a7},
	archivePrefix = {arXiv},
	primaryClass = {math.AG},
	eprint = {math/0601395},
}

@book{MJD00,
	author = {Miwa, Tetsuji and Jimbo, Michio and Date, Etsuro},
	title = {Solitons: Differential equations, symmetries and infinite dimensional algebras},
	series = {Cambridge Tracts in Mathematics},
	volume = {135},
	publisher = {Cambridge University Press},
	pages = {ix+\ 108},
	year = {2000},
	translator = {Reid, Miles},
}

@article{MMN20,
	title = {Cut-and-join structure and integrability for spin Hurwitz numbers},
	author = {Mironov, Andrey and Morozov, Alexei and Natanzon, Sergey},
	journal = {Eur. Phys. J. C},
	fjournal = {The European Physical Journal C},
	volume = {80},
	number = {2},
	pages = {97},
	year = {2020},
	publisher = {Springer},
	doi = {10.1140/epjc/s10052-020-7650-2},
	archivePrefix = {arXiv},
	primaryClass = {hep-th},
	eprint = {1904.11458},
 
}

@article {MMNO21,
	author = {Mironov, Andrey and Morozov, Alexei and Natanzon, Sergey and Orlov, Alexandr Yu.},
	title = {Around spin {H}urwitz numbers},
	journal = {Lett. Math. Phys.},
	fjournal = {Letters in Mathematical Physics},
	volume = {111},
	year = {2021},
	number = {5},
	pages = {Paper No. 124, 39},
	doi = {10.1007/s11005-021-01457-3},
	eprinttype = {arXiv},
	eprint = {2012.09847},
	primaryClass = {math-ph},
}

@article{MSS13,
	author = {Mulase, Motohico and Shadrin, Sergey and Spitz, Loek},
	doi = {10.4310/CNTP.2013.v7.n1.a4},
	journal = {Commun. Number Theory Phys.},
	fjournal = {Communications in Number Theory and Physics},
	number = {1},
	pages = {125--143},
	title = {The spectral curve and the {S}chr{\"{o}}dinger equation of double {H}urwitz numbers and higher spin structures},
	volume = {7},
	year = {2013},
	archivePrefix = {arXiv},
	eprint = {1301.5580},
	primaryClass = {math.AG},
}

@article {Mum71,
	author = {Mumford, David},
	title = {Theta characteristics of an algebraic curve},
	journal = {Ann. Sci. \'{E}cole Norm. Sup. (4)},
	fjournal = {Annales Scientifiques de l'\'{E}cole Normale Sup\'{e}rieure. Quatri\`eme S\'{e}rie},
	volume = {4},
	number = {2},
	year = {1971},
	pages = {181--192},
	doi = {10.24033/asens.1209},
}

@article{Oko00,
	title = {Toda equations for Hurwitz numbers},
	author = {Okounkov, Andrei},
	journal = {Math. Res. Lett.},
	fjournal = {Mathematical Research Letters},
	volume = {7},
	number = {4},
	pages = {447--453},
	year = {2000},
	publisher = {International Press of Boston},
	doi = {10.4310/MRL.2000.v7.n4.a10},
	archivePrefix = {arXiv},
	primaryClass = {math.AG},
	eprint = {math/0004128},
}

@article{OP06,
	author = {Okounkov, Andrei and Pandharipande, Rahul},
	doi = {10.4007/annals.2006.163.517},
	journal = {Ann. of Math.},
	number = {2},
	pages = {517--560},
	title = {Gromov-{W}itten theory, {H}urwitz theory, and completed cycles},
	volume = {163},
	year = {2006},
	eprinttype = {arXiv},
	eprint = {math/0204305},
	primaryClass = {math.AG},
}

@article {Ols07,
	author = {Olsson, Martin C.},
	title = {({L}og) twisted curves},
	journal = {Compos. Math.},
	fjournal = {Compositio Mathematica},
	volume = {143},
	year = {2007},
	number = {2},
	pages = {476--494},
	doi = {10.1112/S0010437X06002442},
}

@article{PPZ15,
	title = {Relations on $\overline{\mathcal{M}}_{g,n}$ via $3$-spin structures},
	author = {Pandharipande, Rahul and Pixton, Aaron and Zvonkine, Dimitri},
	journal = {J. Amer. Math. Soc.},
	fjournal = {Journal of the American Mathematical Society},
	volume = {28},
	number = {1},
	pages = {279--309},
	year = {2015},
	eprint = {1303.1043},
	archivePrefix = {arXiv},
	primaryClass = {math.AG},
	doi = {10.1090/S0894-0347-2014-00808-0}
}

@article {Reu14,
	author = {Reuveni, Shlomi},
	title = {Catalan's trapezoids},
	journal = {Probab. Engrg. Inform. Sci.},
	fjournal = {Probability in the Engineering and Informational Sciences},
	volume = {28},
	year = {2014},
	number = {3},
	pages = {353--361},
	doi = {10.1017/S0269964814000047},
}

@article {Ros08,
	author = {Rossi, Paolo},
	title = {Gromov-{W}itten invariants of target curves via symplectic field theory},
	journal = {J. Geom. Phys.},
	fjournal = {Journal of Geometry and Physics},
	volume = {58},
	year = {2008},
	number = {8},
	pages = {931--941},
	doi = {10.1016/j.geomphys.2008.02.012},
	eprint = {0709.2860},
	archivePrefix = {arXiv},
	primaryClass = {math.SG}
}

@article{Sch01,
	author = {Schur, Issai},
	title = {{Ü}ber eine Klasse von Matrizen die sich einer gegebenen Matrix zuordnen lassen},
	journal = {Ges. Abhandlungen},
	fjournal = {Gesammelte Abhandlungen },
	volume = {1},
	year = {1901},
	pages = {1-72},
}

@article{Sch11,
	author = {Schur, Issai},
	title = {{Ü}ber die Darstellung der symmetrischen und der alternierenden Gruppe durch gebrochene lineare Substitutionen},
	journal = {J. Reine Angew. Math.},
	fjournal = {Journal f\"{u}r die Reine und Angewandte Mathematik. [Crelle's Journal]},
	volume = {139},
	year = {1911},
	pages = {155–250},
	doi = {10.1515/crll.1911.139.155}
}

@article {Ser84,
	author = {Sergeev, Aleksandr Nikolaevich},
	title = {The tensor algebra of the identity representation as a module over the {L}ie superalgebras \texorpdfstring{$\mathfrak{Gl}(n,m)$}{GL(n,m)} and \texorpdfstring{$Q(n)$}{Q(n)}},
	journal = {Mat. Sb. (N.S.)},
	fjournal = {Matematicheski\u{\i} Sbornik. Novaya Seriya},
	volume = {123(165)},
	year = {1984},
	number = {3},
	pages = {422--430},
	doi = {10.1070/sm1985v051n02abeh002867},
}

@article{Sha09,
	title = {BCOV theory via Givental group action on cohomological fields theories},
	author = {Shadrin, Sergei},
	journal = {Mosc. Math. J.},
	fjournal = {Moscow Mathematical Journal},
	volume = {9},
	number = {2},
	pages = {411--429},
	year = {2009},
	eprint = {0810.0725},
	archivePrefix = {arXiv},
	primaryClass = {math.AG},
	publisher = {Независимый Московский университет--МЦНМО},
	doi = {10.17323/1609-4514-2009-9-2-411-429}
}

@article{SSZ12,
	title = {On double Hurwitz numbers with completed cycles},
	author = {Shadrin, Sergey and Spitz, Loek and Zvonkine, Dimitri},
	journal = {J. London Math. Soc.},
	fjournal = {Journal of the London Mathematical Society},
	volume = {86},
	number = {2},
	pages = {407--432},
	year = {2012},
	publisher = {Wiley Online Library},
	doi = {10.1112/jlms/jds010},
	archivePrefix = {arXiv},
	primaryClass = {math.CO},
	eprint = {1103.3120}
}

@article{SSZ15,
	title = {Equivalence of ELSV and Bouchard--Mai{\~n}o conjectures for $r$-spin Hurwitz numbers},
	author = {Shadrin, Sergey and Spitz, Loek and Zvonkine, Dimitri},
	journal = {Math. Ann,},
	fjournal = {Mathematische Annalen},
	volume = {361},
	number = {3-4},
	pages = {611--645},
	year = {2015},
	publisher = {Springer},
	doi = {10.1007/s00208-014-1082-y},
	eprint = {1306.6226},
	archivePrefix = {arXiv},
	primaryClass = {math.AG},
}

@article {SW20,
	author = {Stanford, Douglas and Witten, Edward},
	title = {J{T} gravity and the ensembles of random matrix theory},
	journal = {Adv. Theor. Math. Phys.},
	fjournal = {Advances in Theoretical and Mathematical Physics},
	volume = {24},
	year = {2020},
	number = {6},
	pages = {1475--1680},
	doi = {10.4310/ATMP.2020.v24.n6.a4},
	eprinttype = {arXiv},
	eprint = {1907.03363},
	primaryClass = {hep-th},
}

@article{Tel12,
	title = {The structure of $2D$ semi-simple field theories},
	author = {Teleman, Constantin},
	journal = {Invent. Math.},
	fjournal = {Inventiones Mathematicae},
	volume = {188},
	number = {3},
	pages = {525--588},
	year = {2012},
	publisher = {Springer},
	eprint = {0712.0160},
	archivePrefix = {arXiv},
	primaryClass = {math.AT},
	doi = {10.1007/s00222-011-0352-5}
}

@article {Yam99,
	author = {Yamaguchi, Manabu},
	title = {A duality of the twisted group algebra of the symmetric group and a {L}ie superalgebra},
	journal = {J. Algebra},
	fjournal = {Journal of Algebra},
	volume = {222},
	year = {1999},
	number = {1},
	pages = {301--327},
	doi = {10.1006/jabr.1999.8049},
	eprint = {math/9811090},
	archivePrefix = {arXiv},
	primaryClass = {math.RT},
}

@inproceedings{You89,
	booktitle = {Infinite-Dimensional Lie Algebras and Groups},
	title = {Polynomial solutions of the BKP hierarchy and projective representations of symmetric groups},
	author = {You, Yuching},
	journal = {Adv. Ser. Math. Phys},
	volume = {7},
	pages = {449--464},
	year = {1989},
	doi = {10.1142/9789812798343},
	publisher = {World Scientific}
}

@misc{Zvo06,
	author = {Zvonkine, Dimitri},
	note = {Pre\-{}print},
	title = {A preliminary text on the \texorpdfstring{\( r\)}{r}-{ELSV} formula},
	year = {2006},
}

\end{document}